\DeclareMathOperator*{\argmax}{arg\,max}
\newcommand{\R}{\mathbb{R}}
\newcommand{\N}{\mathbb{N}}
\newcommand{\bvarphi}{\boldsymbol{\varphi}}
\newcommand{\bu}{\mathbf{u}}
\newcommand{\bx}{\mathbf{x}}
\newtheoremstyle{mytheorem}
{6pt}
{6pt}
{\itshape}
{-0pt}
{\large \scshape}
{}
{1em}
{}
\newtheoremstyle{myremark}
{6pt}
{10pt}
{\rm}
{-0pt}
{\large \scshape}
{}
{1em}
{}
\theoremstyle{mytheorem}
\newtheorem{Theorem}{Theorem}[section]
\newtheorem{Definition}[Theorem]{Definition}
\newtheorem{Proposition}[Theorem]{Proposition}
\newtheorem{Corollary}[Theorem]{Corollary}
\newtheorem{Hypothesis}[Theorem]{Assumption}
\theoremstyle{myremark}
\newtheorem{Remark}[Theorem]{Remark}
\newcommand{\balpha}{\boldsymbol{\alpha}}
\begin{document}

\title[An integral transformation approach to differential games]{An integral transformation approach to differential games: a climate model application} 

\author[R. Boucekkine]{Raouf Boucekkine$^1$}
\address{$^1$Aix-Marseille School of Economics, Aix-Marseille University and CNRS, Marseille France.}
\email{raouf.boucekkine@univ-amu.fr}
\author[G. Fabbri]{Giorgio Fabbri$^2$}
\address{$^2$Univ. Grenoble Alpes, CNRS, INRA, Grenoble INP, GAEL, Grenoble, France.}
\email{giorgio.fabbri@univ-grenoble-alpes.fr}
\author[S. Federico]{Salvatore Federico$^3$}
\address{$^3$Dipartimento di Matematica, Università di Bologna, Bologna, Italy.}
\email{s.federico@unibo.it}
\author[F. Gozzi]{Fausto Gozzi$^4$}
\address{$^4$Dipartimento di Economia e Finanza, Libera Universit\`{a} Internazionale degli Studi Sociali ``Guido Carli'', Rome, Italy.}
\email{fgozzi@luiss.it}
\author[T. Loch-Temzelides]{Ted Loch-Temzelides$^5$}
\address{$^5$Rice University, Houston, USA.}
\email{tedt@rice.edu}
\author[C. Ricci]{Cristiano Ricci$^6$}
\address{$^6$Universita di Pisa, Italy}
\email{cristiano.ricci@unipi.it}

\date{\today}

\begin{abstract} 
We develop an Integral Transformation Method (ITM) for the study of suitable optimal control and differential game models. This allows for a solution to such dynamic problems to be found through solving a family of optimization problems parametrized by time. The method is quite flexible, and it can be used in several economic applications where the state equation and the objective functional are linear in a state variable. We illustrate the ITM in the context of a two-country integrated assessment climate model. We characterize emissions, consumption, transfers, and welfare by computing the Nash equilibria of the associated dynamic game. We then compare them to efficiency benchmarks. Further, we apply the ITM in a robust control setup, where we investigate how (deep) uncertainty affects climate outcomes. 
\end{abstract}

\hspace{-1cm}

\maketitle




\noindent\textbf{Keywords}: Integral Transformation Method, Analytical integrated assessment model, differential game, climate policy, robust control


\noindent\textbf{JEL classification}: 
C7, 
Q5, 
Q54, 
D62, 
H23. 


\noindent\textbf{AMS classification}: 
49N90, 
91A23, 
91B76, 
91B72. 

\thanks{\tiny We thank participants at the 2024 Conference on Dynamic Games and Applications at Paris, the 2024 INFER Conference at La Sapienza University of Rome,  the 2022 Viennese Conference on Optimal Control and Dynamic Games at TU Wien, the 2023 Political Economics of Environmental Sustainability Conference at Stanford University, the AMAMEF conference 2023 at Bielefeld, the 2023 IFORS congress at Santiago del Chile, the 2023 Workshop DATA at Università di Torino, the 2023 Viennese Vintage Workshop at TU Wien, the FIRE conference 2023 in Lyon, the Time-Space Evolution of Economic Activities workshop in Rome 2023, the 2024 UMI Climate Group, at Luiss, Roma, the 2024 SPOC Workshop at Politecnico di Milano, the 2024 Ecolysm workshop at Università di Roma La Sapienza, the 2024 Workshop on Analysis, control and inverse problems for evolution equations arising in climate sciences at Università di Roma Tor Vergata, the PET 2024 conference in Lyon, the EURO 2024 conference in Copenhagen, the FAERE 2024 conference in Strasbourg, the ASSET 2024 conference in Venice, the seminars at Université Bretagne Occidentale (FR), UDEC (Chile), XJLTU (China), Fudan (China), GERAD-HEC (Canada), AMSE (FR) Naples (Italy) for comments and suggestions.  The work of Giorgio Fabbri is partially supported by the French National Research Agency in the framework of the ``Investissements d'avenir'' program (ANR-15-IDEX-02) and of the center of excellence LABEX MME-DII (ANR-11-LBX-0023-01).}



\newpage

\section{Introduction}

\label{sec:introduction}
Differential games are subject to restrictive linearity assumptions which are considered necessary for analytical tractability. However, such assumptions have far-reaching economic implications, as they a priori eliminate economically meaningful non-linear effects, and can lead to implausible sets of equilibria. In this paper, we develop a convenient method for solving a class of optimal control and differential game models that relaxes some of these assumptions. This method allows for a solution to such dynamic problems to be found through solving a family of “temporary” optimization problems parametrized by time. The method is quite flexible, and it can be used in several economic applications where the state equation and the objective functional are linear in the state. No additional linearity restrictions are needed. We illustrate the solution method in the context of a two-country integrated assessment climate model. This allows us to characterize emissions, consumption, transfers, and welfare by computing the Nash equilibria of the associated dynamic game and comparing them to efficient benchmarks. 

Anthropogenic climate change is unprecedented. Its effects will be experienced over a long horizon, and are expected to affect different regions in disparate ways. Consequently, incorporating time, heterogeneity, and (deep) uncertainty considerations is at least as important in climate economics as in any other economics subfield. Differential games have been the
framework of choice in many early modeling attempts in the field in the 1990s and
early 2000s. However, certain limitations have led to this approach being less used in the more recent literature, which often also abstracts from dynamic considerations altogether. \cite{Harstad2012}, for example, points to
the economic implications of restrictive linear-quadratic (LQ) specifications.\footnote{See, for example, \cite{Dutta2004, Dutta2006, Dutta2009}.} Such assumptions can rule out economically meaningful non-linear
effects, and often lead to implausible equilibria driven by
bang-bang solutions. They can also result in multiplicity of equilibria, often leading researchers to concentrate on a (generally unique) linear Markov Perfect Equilibrium (MPE) even though more efficient non-linear MPE might co-exist. 
Studies related to the analytical tractability of differential games in the optimization literature trace back to the 1980s; see, for example, \cite{dockner1985} for an early contribution and \cite{zaccour2005} for a more recent discussion. In the former paper, conditions on the associated Hamiltonian system are stated for the differential system to obtain explicit or implicit-form solutions. In this paper, we build on this literature to develop a novel and tractable method for solving a certain class of differential games.  

We build on the game-theoretic setup in \cite{dockner1985}. Two important properties of their setup are that the corresponding Hamiltonians are linear in the state variables, and that the second-order cross-derivatives with respect to the state and control variables are nil (additive separability). Importantly, they allow for non-linearity in the control variables, thus reducing the scope for bang-bang solutions. Our main contribution involves the development of a solution method for the associated class of dynamic decision problems and differential games. Unlike \cite{dockner1985} and \cite{zaccour2005}, our approach does not employ the Pontryagin principle. Rather, using the linearity in the state, our method transforms the original intertemporal optimization problem into an equivalent class of \emph{temporary} optimization problems in which, at each time, the decision-makers take into account the marginal future contributions of the evolution of the state variables to their intertemporal payoffs.\footnote{Here we use the term ``temporary" in parallel with the concept of \emph{temporary equilibrium} in \cite{grandmont1977}.} Thus, despite the fact that the optimization involved is ``temporary" at each time $t$, the objective function at that date incorporates the full expected future dynamics of the state variables. We shall refer to this approach as the \emph{``Integral Transformation Method" (ITM)}. Our method has several mathematical and economic advantages over standard Hamiltonian or dynamic programming approaches, including the ability to conveniently handle necessary and sufficient conditions, discontinuities, infinIte-dimensional variables, non-linear-quadratic specifications, heterogeneity, and non-autonomous features, such as exogenous technological progress. We will discuss these features in some detail after the formal treatment of the ITM in the next section.\footnote{Our methodology does not speak to the issue of multiplicity of equilibria. The analysis of repeated games displaying a unique symmetric MPE often uses specific linearity and separability assumptions; see, for example, \cite{Harstad2012}. Differential games that are linear in the state variables can offer additional insights regarding nonlinear MPEs; 
see, for example, \cite{zaccour2005}, who study the credibility of equilibrium strategies.}

We illustrate the ITM in the context of a dynamic analytical integrated assessment model. Following the general approach in \cite{nordhaus2003warming} and \cite{nordhaus2018evolution}, the model accounts for climate damages created by economic activity.\footnote{See, for example, \cite{weyant2017some} and the references therein for a review and \cite{traeger2023ace} for a more recent discussion.} More precisely, we build on the simplified version of Nordhaus's three-reservoir model developed in \cite{Golosov2014}. We will impose a utility damage specification that is linear in the stock of greenhouse gas emissions (GHG), which will be the key state variable in the model. In order to investigate strategic interactions, we introduce a two-country extension, which we interpret as a stylized model of interactions between a ``global north" and a ``global south." We incorporate a general nonlinear abatement technology that allows for a reduction of GHG in the atmosphere. This captures, for example, investing in reforestation, as well as in carbon-capture technologies that can directly affect the GHG stock. We then investigate the role of a variety of transfer schemes, including technological transfers. We assume that input use in production creates a flow of GHG emissions, and the accumulated GHG emissions damage each country's payoff function. A country-specific climate sensitivity parameter is used to capture factors that can make it more vulnerable to climate change due to, for example, geography, or the ability to engage in adaptation. Each country chooses an abatement effort towards reducing the stock of
GHG emissions. As with the climate sensitivity parameter, the abatement technology can capture reforestation efforts, carbon capture and storage systems, etc. We model the interactions between North and South through transfers and standard nonlinear catching-up equations. We consider different transfers between the two countries, including transfers that can improve the abatement technology. The model can accommodate several kinds of heterogeneity, including in preferences, time discount rates, and damages resulting from the stock of accumulated GHG. 

When applied to this framework, our solution method allows us to reduce the computation of the Nash equilibria of the dynamic game to the solution of temporary games indexed by time. Open-loop Nash equilibria computed by our method are also MPE. The uniqueness of MPE in the class of affine feedbacks is also discussed. To obtain comparisons between equilibrium outcomes and the efficient frontier, we distinguish between a social planner problem without country sovereignty constraints, where a ``global planner" can relocate production from one country to another, as well as the more realistic case of a ``restricted planner," who is subject to a resource constraint for each country. In the special case of logarithmic utility, linear production function, and a non-linear abatement function, we derive various comparisons between the equilibrium and the efficient values of variables of interest, such as consumption, abatement effort, and transfers between the two countries. We then use a numerical example to illustrate the role of heterogeneity in time-discounting and climate vulnerability between the two countries, as well as the role of the intertemporal elasticity of substitution. Finally, we demonstrate how the ITM can be applied in a robust control framework; see, for example, \cite{hansen2008robustness}, as well as in a game-theoretic framework in order to investigate the effects of uncertainty on various non-cooperative equilibrium outcomes. We find that when the marginal abatement efficiency gains are small relative to the marginal emissions created by production, it is not efficient to subsidize abatement in the global south. Under logarithmic payoffs we find that in the Nash equilibrium there is over-consumption both in the global north and (provided that technological differences between the two are not too large) in the global south. The global south receives lower abatement-technology transfers and under-invests in abatement relative to the social optimum. Both global emissions and welfare are lower as a result. Our numerical example points to some interesting implications of heterogeneous climate vulnerability. If the global south is more vulnerable to climate-related damages, then the
north emits more than the south in the Nash outcome. In contrast, in the symmetric case, the emissions in the south are higher than those in the north. Total emissions are higher in the asymmetric case, pointing to the need for improving the south’s ability to adapt. As in the case without model
uncertainty, non-cooperative equilibria fall short of both planners’ solutions when model-uncertainty is introduced. However, in
the presence of model uncertainty, the planners’ and the non-cooperative solutions are closer, as more cautious behavior provides a form of insurance towards adverse climate outcomes.

Two final points are worth mentioning. Recent climate modeling in economics assumes a simplified cumulative emissions equation; see, for example, \cite{dietz2019cumulative}, \cite{hansel2020climate}, and \cite{dietz2021economists}. Some authors consider simplified dynamics of the GHG stock with a temperature law of motion as in \cite{vosooghi2022self}. It is worth emphasizing here that our modeling contribution and qualitative findings do not depend on the details of the climate model employed, and we use the climate modeling in \cite{Golosov2014} as an illustration. Lastly, provided that the main linearity and separability in the state assumptions remain in place, the ITM can be applied to discrete-time settings.

The paper proceeds as follows. After a brief literature review, Section 2 contains a formal treatment of the ITM. Section 3 introduces the application of the ITM to an analytical integrated assessment model. Section 4 studies the non-cooperative outcomes and two normative benchmarks, while Section 5 investigates a numerical example. In Section 6 we introduce Knightian uncertainty and apply the ITM in the context of robust control. A brief conclusion follows. The Appendices contain the details of the proofs, as well as additional findings derived for special cases of interest.

\smallskip

\subsection{Relation to the Literature}

There is extensive literature on international climate and environmental
agreements. Early papers include \cite{ploeg1992}, \cite{long1992pollution}, \cite{Tahvonen1994}, \cite{Xepapadeas1995}, and \cite{Hoel1997}. More recent surveys include, for example, \cite{Kolstad2005}, and \cite{Aldy2009}. 
\cite{zaccour2010} provide a detailed survey focusing on
the differential game approach to pollution control. \footnote{Another body of environmental differential games is devoted to natural resources; see, for example, \cite{dockner1989} and, more recently, \cite{colombo2019}.}

\cite{long1992pollution} investigated transnational emissions using a differential game between two countries. As in our model, the open-loop Nash equilibrium is time-consistent although in general not subgame perfect. \cite{Hoel1997} considered a dynamic game with asymmetric countries and global
emissions, and studied the feasibility of an international (uniform) carbon tax.\footnote{See \cite{insley2019climate} for a more recent reference.} \cite{Tahvonen1994} concentrated on numerical simulations and emphasized the need to balance short-term versus long-term costs and benefits in the evaluation of
climate agreements. In contrast, several recent papers abstract from dynamics; see, for example, \cite{Weitzman2017}, and \cite{McEvoy2018}.

Following \cite{Tahvonen1994} and \cite{Xepapadeas1995}, our work studies dynamic international climate
policy in a model where countries are heterogenous. As in \cite{Xepapadeas1995}, we will emphasize
the technological divergence across countries. Our framework is closer to the one in \cite{Tahvonen1994}. He
builds a differential game based on the DICE model; see \cite{nordhaus2018evolution}, which includes different geopolitical regions. To obtain closed-form solutions, \cite{Tahvonen1994} assumes that each region's objective is
linear in both temperature and the global stock of
emissions. In contrast to \cite{Xepapadeas1995}, Tahvonen does not consider endogenous technical progress or
optimally derived regional abatement efforts. His main finding, obtained from
simulations of his model, is that the cooperative
solution\footnote{%
The cooperative solution corresponds to the efficient solution when
all regions are given the same weight. We will follow the same convention in
what follows.} is beneficial for developing countries, but it is more costly
for the developed ones compared with the non-cooperative Nash solution. The study of
political processes related to climate agreements typically concentrates on
the design of efficient international climate negotiation schemes; see, for
example, \cite{Dutta2009} and \cite{Harstad2012, Harstad2016}. Other studies
emphasize coalition-formation and the stability of coalitions participating
in international agreements; see, for example, \cite{deZeeuw2008}, or related voting schemes; see \cite{Weitzman2015, Weitzman2017}. However, since \cite{Xepapadeas1995}, few researchers have
focused on the role of technological asymmetries between countries on
climate agreements.\footnote{Another stream in the differential game literature concerns the study of deforestation when the North corresponds to a set of nations
who wish to have as much tropical forest as possible, while the South has to arbitrage between the exploitation of its forests (timber production) and agricultural
activities. For example, \cite{zaccour2004} consider subsidy schemes, which are similar in spirit to the transfer policies we study in our paper.}

Of special note are several influential papers by Harstad, emphasizing different strategic and dynamic aspects of climate negotiations. \cite{Harstad2007} studies conditions under which side payments across countries are efficient in internalizing externalities. \cite{Harstad2012a} considers the possibility of countries purchasing other
countries' ``dirty" assets (or the right to develop such assets). 
In a dynamic setup closer to ours, \cite{Harstad2012} studies a
discrete-time dynamic game where players contribute to the provision of
public goods when contracts are incomplete. The paper assumes
linear abatement investment costs and additive benefits of technology. Under full commitment, the first-best can be
implemented. If countries cannot contract on their investments, a ``hold-up" problem emerges because if one country develops a better technology for cutting emissions,
it will be expected to pay a higher share of the burden to reduce collective
emissions in the future. 
\cite{Harstad2016} develops an intertemporal framework in which countries
pollute and can invest in green technologies. The basic model involves a
dynamic version of the common-pool problem. Without a climate treaty, the
countries over-pollute and invest too little. Short-term emission-reduction agreements can reduce participant
countries' payoffs since countries will under-invest if they anticipate
future negotiations. \cite{Chen2022} study a dynamic game version of Nordhaus's Regional Integrated model of Climate and the Economy (RICE). Their analysis is based on simulations used to compare Nash versus efficient outcomes. \cite{vosooghi2022self} use an integrated assessment model with heterogeneous countries to study climate coalition formation. They find that, given sufficient patience, the equilibrium coalition formation has a specific mathematical form, which involves participation by several countries, and comes close to internalizing the social cost of carbon. \cite{jaakkola2019non} introduce breakthrough clean technologies in a multi-country world under different degrees of international cooperation. They find that  spillover effects would lead to double free-riding, over-pollution, and underinvestment in clean technologies.




\section{The Integral Transformation Method}
\label{sec:generaltheory}

Here, we elaborate on the key methodological contribution of the Integral Transformation Method (ITM) to optimal control and differential game models. We will demonstrate that, under certain conditions, a solution to such dynamic problems can be found through the solution of a family of ``temporary" optimization problems parametrized by time and including at any time, $t$, the expected future evolution of the state variables and their marginal contribution to the respective objective functions. After describing the underlying problem in abstract form, in what follows we present a set of assumptions under which our integral transformation method
works for any deterministic continuous-time, infinite-horizon, finite-dimensional problem. This set of assumptions is sufficient for our method to apply, but in general not necessary. Although we will not treat these cases in this section, this approach applies more generally and can easily be adapted to cases involving discrete-time, finite-horizon, uncertainty, and infinite-dimensional variables; see \cite{Boucekkine2022} for an earlier application to a specific infinite-dimensional problem. We first develop our method for the single-player optimal control and for the $N$-player Nash cases. An extension to the case of Knightian uncertainty is discussed in Section 6. We will end with some comments on the potential advantages of our method.

\smallskip

In what follows, all finite-dimensional subsets and the functions involved are implicitly assumed to be Borel-measurable. We will use $\langle\cdot,\cdot\rangle$ to denote the inner product in $\R^n$ and $|\cdot|$ to denote the norm of (finite-dimensional) vector spaces.
Bold symbols will denote vector-valued or, more generally, operator-valued objects. The key ITM properties are given in Proposition \ref{pr:rewriting-general} and Theorem \ref{th:staticreduction} with their respective proofs given in the main text. The proofs of the subsequent results are shown in Appendix \ref{app:reduction}.

\subsection{The single-player case: optimal control}
\label{app:reductioncontrol}

Consider the following optimal control problem. Let
$\mathbf{X}= \mathbb{R}^n$ denote the state space and let $\mathbf{U}= \mathbb{R}^k$ denote the control space. Time is continuous and the time variable is denoted by $t\in\R_+$. 
We denote the state/control trajectories by $\mathbf{x}(\cdot)$ and $\mathbf{u}(\cdot)$, respectively; thus, $\mathbf{x}(t)\in \mathbf{X}$ and $\mathbf{u}(t)\in\mathbf{U}$. We assume that the  state equation is linear in the state variable, taking the form:
\begin{equation}\label{eq:stategeneral}
\mathbf{x}'(t)=\mathbf{A}(t)\mathbf{x}(t) +\mathbf{f}(t,\mathbf{u}(t)), \qquad \mathbf{x}(0)=\mathbf{x}_0\in \mathbf{X},
\end{equation}
where\footnote{We are using the following standard notation: given two finite dimensional  vector spaces $\mathbf{Y}$ and $\mathbf{Z}$, we denote by $\mathcal{L}(\mathbf{Y},\mathbf{Z})$ the space of all linear operators from $\mathbf{Y}$ to $\mathbf{Z}$. These can be identified with a spaces of matrices with suitable dimensions. When $\mathbf{Y}=\mathbf{X}$, then one simply writes $\mathcal{L}(\mathbf{X})$ fo $\mathcal{L}(\mathbf{X};\mathbf{X})$.} $\mathbf{A}:\R_+\to\mathcal{L} (\mathbf{X})$
and $\mathbf{f}:\R_+\times\mathbf{U}\to \mathbf{X}$.
When, for a given $\mathbf{u}(\cdot):\R_+\to \mathbf{U}$, the state equation is well posed, in the sense that it admits a unique global solution over $\R_+$, we denote the latter by $\mathbf{x}^{\mathbf{x}_0, \mathbf{u}(\cdot)}(\cdot)$, or simply by $\mathbf{x}(\cdot)$ when no confusion is possible.

We impose possibly time-inhomogenoeus constraints on the state and on the control variables as follows. Given $d,p\in \N\setminus\{0\}$, $\mathbf{g}:\R_+\times\mathbf{X}\to \mathbb{R}^d$, and $\mathbf{l}:\R_+\times\mathbf{U}\to \mathbb{R}^p$,
the state trajectory must satisfy the constraint
$$
\mathbf{g}(t,\mathbf{x}(t) )\le \mathbf{0}, \qquad \forall t \ge 0,
$$
whereas the control trajectory must satisfy the constraint:
$$
\mathbf{l}(t,\mathbf{u}(t) )\le \mathbf{0}, \qquad \forall t \ge 0.
$$

The objective functional is given in the following integral form \begin{equation}
\label{eq:OCPgeneral}
\mathcal{J}(\mathbf{x}_0;\mathbf{u}(\cdot)):=
\int_{0}^{\infty }e^{-\rho t} [
\langle \mathbf{a}(t),\mathbf{x}(t)\rangle + h(t,\mathbf{u}(t))]dt,
\end{equation}
where $\mathbf{a}:\R_+\to \mathbf{X}$ and 
${h}:\R_+\times\mathbf{U} \to\mathbb{R}$.

In order to guarantee well-posedness of the state equation, well-posedness and finiteness of objective functional, and the verification of the constraints, the set of admissible control strategies has to be chosen as a subset of\footnote{By $L_{\rho}^1(\mathbb{R}_+;\R)$ (or, more simply, by $L_{\rho}^1(\mathbb{R}_+)$), we hereafter denote the space of functions $f:\mathbb{R}_+\to \mathbb{R}$ such that $\int_0^\infty e^{-\rho t} |f(t)|dt<\infty$. By $L^1_{loc}(\mathbb{R}_+;\R)$ we denote the space of locally integrable functions; i.e., functions $f:\mathbb{R}_+\to \mathbb{R}$ such that $\int_0^M e^{-\rho t} |f(t)|dt<\infty$, for each $M>0$. Similar notation will be used for vector-valued functions.}
\begin{multline}
\label{eq:ADMCONTRgeneral}
\Bigg \{
\mathbf{u}(\cdot):\mathbb{R}_+ \to \mathbf{U} \; : \ \ 
t \mapsto  \mathbf{f}(t,\mathbf{u}(t))
\in L_{loc}^1(\mathbb{R}_+;\mathbf{X});
\\
t \mapsto \langle \mathbf{a}(t),\mathbf{x}(t)\rangle \in L_{\rho}^1(\mathbb{R}_+), \quad t \mapsto  h(t,\mathbf{u}(t)) \in L_{\rho}^1(\mathbb{R}_+);
\\
\mathbf{l}(t,\mathbf{u}(t)) \le \mathbf{0} \quad
\text{ and } \mathbf{g}(t,\mathbf{x}(t)) \le \mathbf{0}, \qquad \forall t \ge 0
\Bigg \}.
\end{multline}

We introduce the following.

\begin{Hypothesis}
\label{hp:reduction}
\begin{enumerate}[(i)]
\item[]
\item
The operator-valued map $\mathbf{A}:\R_+\to \mathcal{L}(\mathbf{X})$ is locally integrable. We denote the family of evolution operators\footnote{This family of operators is defined, for $t\geq s\geq 0$,  as  the unique solution to the operator-valued ODE 
$$
\begin{cases}
\frac{d}{dt} \Phi_{\mathbf{A}}(t, s) = \mathbf{A}(t) \Phi_{\mathbf{A}}(t, s), \\
\Phi_{\mathbf{A}}(s, s) = I.
\end{cases}$$.
see, e.g. \cite[Section 3.5]{bensoussan2007representation}.} generated by $\mathbf{A}$ by $\{\Phi_{\mathbf{A}}(t,s)\}_{t\geq s\geq 0}$.
\item
There exists $C>0$ such that
$$|\mathbf{f}(t,\mathbf{u})|\le C(1+|\mathbf{u}|),\quad \forall t\ge 0,\; \mathbf{u}\in \mathbf{U};$$
\item
The map
\begin{equation}
\label{eq:defbfb}
\R_+ \to \mathbf{X}, \qquad t \mapsto \mathbf{b}(t):= \int_{0}^{\infty }e^{-\rho \tau}\Phi^*_{\mathbf{A}}(t+\tau,t)
\mathbf{a}(t+\tau)d\tau,
\end{equation}
is well defined and bounded.
\end{enumerate}
\end{Hypothesis}

The above regularity assumptions are quite general. They allow us to rewrite the objective functional into a convenient form as the next proposition shows.

\begin{Proposition}
\label{pr:rewriting-general}
Suppose Assumption \ref{hp:reduction} holds and let 
$\mathbf{u}(\cdot) \in
L^1_{\rho}(\R_+,\mathbf{U})$. Then
\begin{equation}
\label{eq:reductionFubini}
\int_{0}^{\infty }e^{-\rho t} \langle\mathbf{a}(t), \mathbf{x}(t)\rangle dt
=
\left\langle
\mathbf{b}(0),\mathbf{x}_0
\right\rangle
+
\int_{0}^{\infty }e^{-\rho t}\left\langle
\mathbf{b}(t),\mathbf{f}(t,\mathbf{u}(t))\right\rangle dt,
\end{equation}
the right-hand side being well-defined and finite.
If, in addition, the map $t \mapsto  h(t,\mathbf{u}(t))$ belongs to $L_{\rho}^1(\mathbb{R}_+)$,
then the objective functional \eqref{eq:OCPgeneral} is well defined and finite, and it
can be written as
\begin{equation}
\label{eq:rewritingJ}
\mathcal{J}(\mathbf{x}_0;\mathbf{u}(\cdot))=
\left\langle
\mathbf{b}(0),\mathbf{x}_0
\right\rangle
+
\int_{0}^{\infty }e^{-\rho t}\left[
\left\langle
\mathbf{b}(t),\mathbf{f}(t,\mathbf{u}(t))\right\rangle +  h(t,\mathbf{u}(t))\right]dt.
\end{equation}
\end{Proposition}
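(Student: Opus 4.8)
\emph{Proof plan.} The plan is to start from the explicit variation-of-constants representation of the state trajectory and then reorganize the weighted integral of $\langle\mathbf{a}(t),\mathbf{x}(t)\rangle$ by exchanging the order of integration. First I would record the preliminary facts that make everything well posed: since $\mathbf{u}(\cdot)\in L^1_{\rho}(\R_+,\mathbf{U})$, Assumption \ref{hp:reduction}(ii) gives
$$
\int_0^\infty e^{-\rho t}\,|\mathbf{f}(t,\mathbf{u}(t))|\,dt\;\le\; C\int_0^\infty e^{-\rho t}\bigl(1+|\mathbf{u}(t)|\bigr)\,dt\;<\;\infty
$$
(here we use $\rho>0$, so $e^{-\rho\cdot}\in L^1(\R_+)$), whence $t\mapsto\mathbf{f}(t,\mathbf{u}(t))$ lies in $L^1_{\rho}(\R_+,\mathbf{X})$ and a fortiori in $L^1_{loc}(\R_+,\mathbf{X})$. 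Together with the local integrability of $\mathbf{A}$ in Assumption \ref{hp:reduction}(i), the linear state equation \eqref{eq:stategeneral} then admits a unique global solution, given by Duhamel's formula
\begin{equation*}
\mathbf{x}(t)=\Phi_{\mathbf{A}}(t,0)\mathbf{x}_0+\int_0^t\Phi_{\mathbf{A}}(t,s)\,\mathbf{f}(s,\mathbf{u}(s))\,ds.
\end{equation*}

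Next I would substitute this representation into $\int_0^\infty e^{-\rho t}\langle\mathbf{a}(t),\mathbf{x}(t)\rangle\,dt$ and treat the two resulting pieces separately. For the homogeneous piece I move $\Phi_{\mathbf{A}}(t,0)$ onto its adjoint and pull the constant vector $\mathbf{x}_0$ out of the integral, obtaining $\bigl\langle\int_0^\infty e^{-\rho t}\Phi^*_{\mathbf{A}}(t,0)\mathbf{a}(t)\,dt,\,\mathbf{x}_0\bigr\rangle$, and I recognize the vector in the first slot as $\mathbf{b}(0)$ by specializing \eqref{eq:defbfb} to $t=0$ (renaming the dummy variable $t\leftrightarrow\tau$). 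For the Duhamel piece I write it as a double integral over the wedge $\{0\le s\le t<\infty\}$, apply Fubini's theorem to rewrite it as $\int_0^\infty\!\!\int_s^\infty(\cdots)\,dt\,ds$, perform the change of variable $t=s+\tau$ in the inner integral, pass $\Phi_{\mathbf{A}}(s+\tau,s)$ onto its adjoint, and read off the inner integral $\int_0^\infty e^{-\rho\tau}\Phi^*_{\mathbf{A}}(s+\tau,s)\mathbf{a}(s+\tau)\,d\tau=\mathbf{b}(s)$ from \eqref{eq:defbfb}. This produces exactly \eqref{eq:reductionFubini}.

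The step that requires care — and the main obstacle — is justifying Fubini's theorem, i.e. the absolute integrability over the wedge of $(t,s)\mapsto e^{-\rho t}\langle\mathbf{a}(t),\Phi_{\mathbf{A}}(t,s)\mathbf{f}(s,\mathbf{u}(s))\rangle$. I would bound this integrand, by Cauchy--Schwarz and the factorization $e^{-\rho t}=e^{-\rho s}e^{-\rho(t-s)}$, by $e^{-\rho s}|\mathbf{f}(s,\mathbf{u}(s))|\cdot e^{-\rho(t-s)}|\Phi^*_{\mathbf{A}}(t,s)\mathbf{a}(t)|$; integrating in $t$ over $[s,\infty)$ (via $\tau=t-s$) the last factor becomes $\int_0^\infty e^{-\rho\tau}|\Phi^*_{\mathbf{A}}(s+\tau,s)\mathbf{a}(s+\tau)|\,d\tau$, which is the integral defining $\mathbf{b}(s)$ with the norm taken inside — finite for every $s$ and bounded uniformly in $s$ by Assumption \ref{hp:reduction}(iii) (it is this boundedness, rather than that of $|\mathbf{b}(\cdot)|$ alone, that the argument really invokes). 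The remaining $s$-integral of $e^{-\rho s}|\mathbf{f}(s,\mathbf{u}(s))|$ is finite by the preliminary estimate, so the wedge integrand is absolutely integrable and Fubini applies.

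Finally, the same two bounds show that the right-hand side of \eqref{eq:reductionFubini} is well defined and finite: $\langle\mathbf{b}(0),\mathbf{x}_0\rangle$ is finite since $\mathbf{b}(0)$ is well defined by Assumption \ref{hp:reduction}(iii), and $|\langle\mathbf{b}(t),\mathbf{f}(t,\mathbf{u}(t))\rangle|\le\|\mathbf{b}\|_\infty\,|\mathbf{f}(t,\mathbf{u}(t))|$ is $e^{-\rho t}$-integrable; in particular $t\mapsto\langle\mathbf{a}(t),\mathbf{x}(t)\rangle$ is automatically in $L^1_{\rho}(\R_+)$, so the left-hand side is finite as well. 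Adding $\int_0^\infty e^{-\rho t}h(t,\mathbf{u}(t))\,dt$ — a finite quantity whenever $t\mapsto h(t,\mathbf{u}(t))\in L^1_{\rho}(\R_+)$ — to both sides of \eqref{eq:reductionFubini} then yields the well-posedness and finiteness of $\mathcal{J}(\mathbf{x}_0;\mathbf{u}(\cdot))$ and the representation \eqref{eq:rewritingJ}, completing the proof.
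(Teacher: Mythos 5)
Your proposal is correct and follows essentially the same route as the paper's proof: Duhamel's formula, splitting into the homogeneous and inhomogeneous contributions, and Fubini--Tonelli with the change of variable $\tau=t-s$ to produce $\mathbf{b}(s)$. The only differences are that you carry out the argument directly for the general non-autonomous evolution operator $\Phi_{\mathbf{A}}(t,s)$ (the paper writes out only the constant-$\mathbf{A}$ case and declares the general case analogous) and you are more explicit about the absolute-integrability bound justifying Tonelli, correctly observing that what is really used is the uniform boundedness of $\int_0^\infty e^{-\rho\tau}|\Phi^*_{\mathbf{A}}(t+\tau,t)\mathbf{a}(t+\tau)|\,d\tau$ rather than of $|\mathbf{b}(\cdot)|$ alone.
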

\begin{proof}
We first demonstrate that \eqref{eq:reductionFubini} holds. For simplicity, here we deal with the case when $\mathbf{A}$ is constant; the general case is analogous.
Let $\mathbf{u}(\cdot)$ be as in \eqref{eq:ADMCONTRgeneral}.
By Assumption \ref{hp:reduction}(i), we have
$$
\mathbf{x}(t)=e^{\mathbf{A} t}\mathbf{x}_0+ \int_0^t e^{\mathbf{A}(t-s) }\mathbf{f}(s,\mathbf{u}(s))ds.
$$
Hence,
\[
\int_{0}^{\infty }e^{-\rho t} \langle\mathbf{a}(t), \mathbf{x}(t)\rangle dt
=
\int_{0}^{\infty }e^{-\rho t} \langle\mathbf{a}(t), e^{\mathbf{A} t}\mathbf{x}_0\rangle dt
+
\int_{0}^{\infty }e^{-\rho t} \bigg\langle\mathbf{a}(t),  \int_0^t e^{\mathbf{A}(t-s) }\mathbf{f}(s,\mathbf{u}(s))ds\bigg\rangle dt.
\]
Using the definition of $\mathbf{b}$ in \eqref{eq:defbfb}, the first term simply rewrites as 
$$
\left\langle \int_{0}^{\infty }e^{-\rho t} e^{\mathbf{A}^* t}\mathbf{a}(t)dt , \mathbf{x}_0\right\rangle
=
\left\langle
\mathbf{b}(0),\mathbf{x}_0
\right\rangle.
$$
Using Assumption \ref{hp:reduction}(iii), and applying the Fubini-Tonelli Theorem to the second term, we obtain that the map
$$\R_+ \to \R, \qquad t\mapsto \langle\mathbf{a}(t), \mathbf{x}(t)\rangle $$
belongs to $L^1_\rho (\mathbb{R}_+;  \mathbb{R})$ and that the second term may be rewritten as
\begin{multline*}
\int_{0}^{\infty }\int_{0}^{t}e^{-\rho t}\left\langle
\mathbf{a}(t),  e^{\mathbf{A}(t-s) }\mathbf{f}(s,\mathbf{u}(s))\right\rangle ds\,\, dt\\
=
\int_{0}^{\infty }\int_{0}^{t}e^{-\rho (t-s)}e^{-\rho s}\left\langle
e^{\mathbf{A}^*(t-s) }\mathbf{a}(t),  \mathbf{f}(s,\mathbf{u}(s))\right\rangle ds\,\, dt\\
=
\int_{0}^{\infty }e^{-\rho s}\left\langle
\int_{s}^{\infty}e^{-\rho (t-s)}   e^{\mathbf{A}^*(t-s) }\mathbf{a}(t) dt ,\mathbf{f}(s,\mathbf{u}(s))\right\rangle ds
=
\int_{0}^{\infty }e^{-\rho s}\left\langle \mathbf{b}(s),\mathbf{f}(s,\mathbf{u}(s))\right\rangle ds,
\end{multline*}
where, in the last step, we shifted the variable in the second integral.
The claim on the objective functional follows from the above arguments and from the additional integrability condition.
\end{proof}

\medskip

As the transformation of the objective functional \eqref{eq:OCPgeneral} leading to the representation in \eqref{eq:rewritingJ} is at the core of our methodology, here we offer some comments on Proposition \ref{pr:rewriting-general}. 
The power of this transformation comes from the fact that the state variable no longer appears in  \eqref{eq:rewritingJ}. All the relevant information about the evolution of the state is now encoded in the coefficient $\mathbf{b}(t)$, which captures the forward-looking component of the transformed problem. By construction, $\mathbf{b}(t)$ encompasses the relevant information on the future evolution of the state variables (through the current and future matrices $A$), as well as their marginal future impact on the intertemporal payoffs (through the current and future vectors $a$). As we shall see in Section 4.1., depending on the application, $\mathbf{b}(t)$ can admit a meaningful economic interpretation. The technique thus allows us to solve the dynamic optimization problem by rewriting the functional in a way that allows us to perform a pointwise optimization of the integrand (part iv in Theorem \ref{th:staticreduction} below). This approach is quite flexible, and the same simple transformation can be used in cases involving time-dependent systems, control and state constraints, as well as generalizations to differential games, which we will develop in the next subsections. On the other hand, we should emphasize that the applicability of our approach is limited to systems with a special structure: both the state equation and the objective functional must be linear in the state variable. 

\medskip

Proposition \ref{pr:rewriting-general} implies that the second condition in \eqref{eq:ADMCONTRgeneral}
is satisfied when Assumption \ref{hp:reduction} holds and $\mathbf{u}(\cdot)\in L^1_\rho(\R_+,\mathbf{U})$. Hence, under Assumption \ref{hp:reduction} it is reasonable to choose the following set of admissible control strategies:\footnote{An equivalent way to proceed would be to take the set of admissible strategies to be the set of all measurable
maps $\mathbf{u}(\cdot):\R_+\to \mathbf{U}$,
and to assign value $-\infty$ to all strategies
which are not in $\mathcal{U}(\mathbf{x}_0)$. } 
\begin{multline}
\label{eq:ADMCONTRgeneralreduced}
\mathcal{U}(\mathbf{x}_0):=
\Bigg \{
\mathbf{u}(\cdot) \in L_{\rho}^1(\mathbb{R}_+;  \mathbf{U}) \; : \\ 
t \mapsto  h(t,\mathbf{u}(t)) \in L_{\rho}^1(\mathbb{R}_+),
\; \mathbf{g}(t,\mathbf{x}(t))\le 0,\  \mathbf{l}(t,\mathbf{u}(t)) \le \mathbf{0}, \; \forall t \ge 0
\Bigg \}.
\end{multline}
In the remainder of this subsection  we assume that Assumption \ref{hp:reduction} holds and restrict attention to the set of admissible control strategies in \eqref{eq:ADMCONTRgeneralreduced}.
The standard notion of optimality is given in the following.
\begin{Definition}
An \emph{optimal control} $\mathbf{u}^*(\cdot)$ starting at $\mathbf{x}_0$ is a control strategy in the set $\mathcal{U}(\mathbf{x}_0)$ such that   
$$\mathcal{J}(\mathbf{x}_0;\mathbf{u}^*(\cdot))\geq \mathcal{J}(\mathbf{x}_0;\mathbf{u}(\cdot)), \ \ \ \forall \mathbf{u}(\cdot)\in\mathcal{U}(\mathbf{x}_0).$$ Given an optimal control $\mathbf{u}^*(\cdot)$, the associated  state trajectory $\mathbf{x}^*(\cdot):=\mathbf{x}^{\mathbf{x}_0,\mathbf{u}^*(\cdot)}(\cdot)$ is called an \emph{optimal state trajectory} starting at $\mathbf{x}_0$. We refer to the pair $\left(\mathbf{x}^*(\cdot),\mathbf{u}^*(\cdot)\right)$ as an \emph{optimal state/control trajectory} or an \emph{optimal couple} starting at $\mathbf{x}_0$.
\end{Definition}

Next, we will exploit the above result to demonstrate that the original dynamic problem is equivalent to solving a $t$-by-$t$ family of temporary optimization problems, where each time $t$ problem involves the expected future evolution of the state variables and their marginal contributions to the objective function. To this end, we use the result in Proposition \ref{pr:rewriting-general} and, for each fixed $t\ge 0$, we consider the following temporary optimization problem:
\begin{equation}
\label{eq:staticOCgeneral}
\max_{\mathbf{u\in \mathbf{U}}: \; \mathbf{l}(t, \mathbf{u})\le \mathbf{0}} \; \Big  [ \left\langle
\mathbf{b}(t),\mathbf{f}(t,\mathbf{u})\right\rangle +  h(t,\mathbf{u})
\Big ].
\end{equation}

We can then establish the following equivalence result for the ITM.

\begin{Theorem}[Optimal control]
\label{th:staticreduction}
\begin{itemize}
\item[]
\item[]
\item[(i)] \emph{(Sufficiency)}
Let  
$\mathbf{u}^*(\cdot)\in \mathcal{U}(\mathbf{x}_0)$ be
such that $\mathbf{u}^*(t)$ is a solution to the temporary optimization problem \eqref{eq:staticOCgeneral} for a.e. $t\in\R^{+}$. 
Then, $\mathbf{u}^*(\cdot)$ is optimal for the dynamic optimal control problem starting at $\mathbf{x}_0$. Moreover, in the absence of state constraints, $\mathbf{u}^*(\cdot)$ is optimal for every initial $\mathbf{x}_0$.
\item[(ii)] \emph{(Necessity)} Let $\mathbf{u}^*(\cdot)\in\mathcal{U}(\mathbf{x}_0)$ be as in \emph{(i)} above, and let $\overline{\mathbf{u}}(\cdot)\in\mathcal{U}(\mathbf{x}_0)$ be another optimal control starting at $\mathbf{x}_0$. Then, $\overline{\mathbf{u}}(\cdot)$ is a solution to the temporary optimization problem \eqref{eq:staticOCgeneral} for a.e. $t\in\R_{+}$. 

\item[(iii)] \emph{(Uniqueness)}
Suppose there exists  $\mathbf{u}^*(\cdot)\in\mathcal{U}(\mathbf{x}_0)$  as in item \emph{(i)} and the solution of \eqref{eq:staticOCgeneral} is unique for a.e. $t\ge 0$. Then  $\mathbf{u}^*(\cdot)\in\mathcal{U}(\mathbf{x}_0)$ is the a.e. unique optimal control problem starting at $\mathbf{x}_{0}$.

\item[(iv)] \emph{(Existence)}
Assume there are no state constraints and that, for a.e. $t\ge 0$, there exists a solution
to the temporary optimization problem \eqref{eq:staticOCgeneral}; that is, 
\begin{equation}
\label{eq:staticOCgeneralbis}
\arg \!\!\!\!\!\!\!\!\max_{\mathbf{u\in \mathbf{U}}: \; \mathbf{l}(t, \mathbf{u})\le \mathbf{0}} \; \Big  [ \left\langle
\mathbf{b}(t),\mathbf{f}(t,\mathbf{u})\right\rangle +  h(t,\mathbf{u})
\Big ]\neq \emptyset, \ \ \ \mbox{for a.e.} \ t\ge 0.
\end{equation} 
Then, all the single-valued (measurable) selections $\hat{\mathbf{u}}(\cdot)$ of the multivalued map 
\begin{equation}\label{selec}
\R_+\to\mathcal{P}(\mathbf{U}), \ \ \ \ \ t\mapsto \arg \!\!\!\!\!\!\!\!\max_{\mathbf{u\in \mathbf{U}}: \; \mathbf{l}(t, \mathbf{u})\le \mathbf{0}} \; \Big  [ \left\langle
\mathbf{b}(t),\mathbf{f}(t,\mathbf{u})\right\rangle +  h(t,\mathbf{u})
\Big ],
\end{equation} 
satisfying the integrability conditions
\begin{equation}\label{into}
\hat{\mathbf{u}}(\cdot)\in L^1_\rho(\R_+;\mathbf{U}), \ \ \ \ h(\cdot,\hat{\mathbf{u}}(\cdot)) \in L_{\rho}^1(\mathbb{R}_+)
\end{equation}
fulfill by construction the requirements of item \emph{(i)}, therefore,
are optimal for every  initial $\mathbf{x}_0$.\footnote{One can readily prove that if the map
$$
\R_+\times \mathbf{U} \to \R, \ \ \ \ \ \ (t,\mathbf{u})\mapsto \left\langle
\mathbf{b}(t),\mathbf{f}(t,\mathbf{u})\right\rangle +  h(t,\mathbf{u})
$$
is upper semicontinuous and the map $\mathbf{l}$ is lower semicontinuous, then there exists a single-valued measurable selection of the multivalued map \eqref{selec} verifying the integrability conditions \eqref{into}. This follows from Proposition 7.33, p. 153 of \cite{BertsekasShreveBOOK}.}
\end{itemize}

\end{Theorem}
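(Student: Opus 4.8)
The plan is to derive all four parts from the rewriting in Proposition \ref{pr:rewriting-general}, which expresses $\mathcal{J}(\mathbf{x}_0;\mathbf{u}(\cdot))$ as the control-independent term $\langle\mathbf{b}(0),\mathbf{x}_0\rangle$ plus the integral in \eqref{eq:rewritingJ}, whose integrand at time $t$ is exactly the objective $\langle\mathbf{b}(t),\mathbf{f}(t,\mathbf{u})\rangle+h(t,\mathbf{u})$ of the temporary problem \eqref{eq:staticOCgeneral}, evaluated at $\mathbf{u}=\mathbf{u}(t)$. First I would record two preliminary facts. (a) For any $\mathbf{u}(\cdot)\in\mathcal{U}(\mathbf{x}_0)$ one has $\mathbf{l}(t,\mathbf{u}(t))\le\mathbf{0}$ for all $t$, so $\mathbf{u}(t)$ is feasible for \eqref{eq:staticOCgeneral} for a.e.\ $t$. (b) On $\mathcal{U}(\mathbf{x}_0)$ the rewriting \eqref{eq:rewritingJ} applies and is finite; moreover the integrand $t\mapsto\langle\mathbf{b}(t),\mathbf{f}(t,\mathbf{u}(t))\rangle+h(t,\mathbf{u}(t))$ lies in $L^1_\rho(\R_+)$, since $\mathbf{b}$ is bounded (Assumption \ref{hp:reduction}(iii)), $|\mathbf{f}(t,\mathbf{u}(t))|\le C(1+|\mathbf{u}(t)|)$ is in $L^1_\rho$ by Assumption \ref{hp:reduction}(ii) and $\mathbf{u}(\cdot)\in L^1_\rho$, and $h(\cdot,\mathbf{u}(\cdot))\in L^1_\rho$ by admissibility.

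For (i), I would take $\mathbf{u}^*(\cdot)$ as in the hypothesis and an arbitrary competitor $\mathbf{u}(\cdot)\in\mathcal{U}(\mathbf{x}_0)$; by optimality of $\mathbf{u}^*(t)$ in \eqref{eq:staticOCgeneral} and feasibility of $\mathbf{u}(t)$ (fact (a)), the integrand evaluated at $\mathbf{u}^*(t)$ dominates that at $\mathbf{u}(t)$ for a.e.\ $t$; multiplying by $e^{-\rho t}>0$, integrating, and adding the common term $\langle\mathbf{b}(0),\mathbf{x}_0\rangle$ yields $\mathcal{J}(\mathbf{x}_0;\mathbf{u}^*(\cdot))\ge\mathcal{J}(\mathbf{x}_0;\mathbf{u}(\cdot))$. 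For the final clause of (i), I would observe that in the absence of state constraints the set $\mathcal{U}(\mathbf{x}_0)$ in \eqref{eq:ADMCONTRgeneralreduced} no longer depends on $\mathbf{x}_0$ (the only $\mathbf{x}_0$-dependence was through $\mathbf{g}(t,\mathbf{x}(t))\le\mathbf{0}$), and the term $\langle\mathbf{b}(0),\mathbf{x}_0\rangle$ does not depend on the control, so the comparison above is entirely insensitive to the choice of $\mathbf{x}_0$.

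For (ii), since $\mathbf{u}^*(\cdot)$ is optimal by (i), any other optimal $\overline{\mathbf{u}}(\cdot)$ satisfies $\mathcal{J}(\mathbf{x}_0;\overline{\mathbf{u}}(\cdot))=\mathcal{J}(\mathbf{x}_0;\mathbf{u}^*(\cdot))$; cancelling $\langle\mathbf{b}(0),\mathbf{x}_0\rangle$ and using \eqref{eq:rewritingJ}, the two integrals agree. The $L^1_\rho$ function $t\mapsto e^{-\rho t}\big[\langle\mathbf{b}(t),\mathbf{f}(t,\mathbf{u}^*(t))\rangle+h(t,\mathbf{u}^*(t))-\langle\mathbf{b}(t),\mathbf{f}(t,\overline{\mathbf{u}}(t))\rangle-h(t,\overline{\mathbf{u}}(t))\big]$ is $\ge 0$ a.e.\ (again by fact (a) and optimality of $\mathbf{u}^*(t)$) and integrates to $0$, hence vanishes a.e.; that is, $\overline{\mathbf{u}}(t)$ attains the maximum in \eqref{eq:staticOCgeneral} for a.e.\ $t$. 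Part (iii) is then immediate: combining (i) (to get existence/optimality of $\mathbf{u}^*(\cdot)$) with (ii), any optimal $\overline{\mathbf{u}}(\cdot)$ solves \eqref{eq:staticOCgeneral} a.e., and if that solution is a.e.\ unique then $\overline{\mathbf{u}}(t)=\mathbf{u}^*(t)$ a.e.

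For (iv), given a measurable selection $\hat{\mathbf{u}}(\cdot)$ of the multivalued map \eqref{selec} satisfying the integrability conditions \eqref{into}, I would verify $\hat{\mathbf{u}}(\cdot)\in\mathcal{U}(\mathbf{x}_0)$: membership in $L^1_\rho(\R_+;\mathbf{U})$ and $h(\cdot,\hat{\mathbf{u}}(\cdot))\in L^1_\rho(\R_+)$ are exactly \eqref{into}, the state constraint is vacuous by assumption, and $\mathbf{l}(t,\hat{\mathbf{u}}(t))\le\mathbf{0}$ holds because every point of the argmax set in \eqref{eq:staticOCgeneral} is feasible. Thus $\hat{\mathbf{u}}(\cdot)$ meets the hypotheses of (i) and is optimal for every $\mathbf{x}_0$. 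The one genuinely delicate point is the existence of such a measurable selection, which is not asserted in (iv) but assumed; it is in any event guaranteed under upper/lower semicontinuity by the measurable selection theorem of \cite{BertsekasShreveBOOK} recalled in the footnote. Everything else reduces to the pointwise comparison of integrands made available by Proposition \ref{pr:rewriting-general}, so I expect no substantive obstacle beyond keeping track of the $L^1_\rho$ integrability needed to pass from pointwise inequalities to integral ones.
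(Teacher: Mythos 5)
Your proposal is correct and follows essentially the same route as the paper's proof: apply the rewriting of Proposition \ref{pr:rewriting-general}, compare integrands pointwise against the supremum of the temporary problem for (i), deduce (ii) from the fact that a nonnegative integrable function with zero integral vanishes a.e., and obtain (iii) and (iv) as direct consequences. The only difference is that you spell out the $L^1_\rho$ integrability bookkeeping and the admissibility check in (iv) slightly more explicitly than the paper does, which is harmless.
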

\begin{proof}

\begin{enumerate}[(i)]
\item
By Proposition
\ref{pr:rewriting-general}, for any
$\mathbf{u}(\cdot) \in \mathcal{U}(\mathbf{x}_0)$,
\begin{align}
&\mathcal{J}(\mathbf{x}_0;\mathbf{u}(\cdot))=
\left\langle
\mathbf{b}(0),\mathbf{x}_0
\right\rangle
+
\int_{0}^{\infty }e^{-\rho t}\left[
\left\langle
\mathbf{b}(t),\mathbf{f}(t,\mathbf{u}(t))\right\rangle +  h(t,\mathbf{u}(t))\right]dt\\
&\nonumber
\le
\left\langle
\mathbf{b}(0),\mathbf{x}_0
\right\rangle
+
\int_{0}^{\infty }e^{-\rho t}
\sup_{\mathbf{u\in \mathbf{U}}: \; \mathbf{l}(t, \mathbf{u})\le \mathbf{0}}
\left[
\left\langle
\mathbf{b}(t),\mathbf{f}(t,\mathbf{u})\right\rangle +  h(t,\mathbf{u})\right]dt
=
\mathcal{J}(\mathbf{x}_0;\mathbf{u}^*(\cdot)).
\end{align}

Since
$\mathbf{u}^*(\cdot) \in \mathcal{U}(\mathbf{x}_0)$,
the above implies the optimality of
$\mathbf{u}^*(\cdot)$ starting at $\mathbf{x}_0$.

In the absence of state constraints, the set $\mathcal{U}(\bx_{0})$ does not depend on $\bx_{0}$. Therefore, $\mathbf{u}^*(\cdot)$  is optimal for every initial $\mathbf{x}_{0}$.

\medskip

\item 
Let $\mathbf{u}^*(\cdot)$ and $\overline{\mathbf{u}}(\cdot)$ be as in the statement.
From the above discussion,
it follows that
\begin{align*}
&\mathcal{J}(\mathbf{x}_0;
\overline{\mathbf{u}}(\cdot))
=
\mathcal{J}(\mathbf{x}_0;\mathbf{u}^*(\cdot))\\&
=
\left\langle
\mathbf{b}(0),\mathbf{x}_0
\right\rangle
+
\int_{0}^{\infty }e^{-\rho t}
\sup_{\mathbf{u\in \mathbf{U}}: \; \mathbf{l}(t, \mathbf{u})\le \mathbf{0}}\left[
\left\langle
\mathbf{b}(t),\mathbf{f}(t,\mathbf{u})\right\rangle +  h(t,\mathbf{u})\right]dt.
\end{align*}
This implies:
$$
\int_{0}^{\infty }e^{-\rho t}
\left[
\left\langle
\mathbf{b}(t),\mathbf{f}(t,\overline{\mathbf{u}}(t))\right\rangle +  h(t,\overline{\mathbf{u}}(t))\right]dt= \int_{0}^{\infty }e^{-\rho t}
\sup_{\mathbf{u\in \mathbf{U}}: \; \mathbf{l}(t, \mathbf{u})\le \mathbf{0}}\left[
\left\langle
\mathbf{b}(t),\mathbf{f}(t,\mathbf{u})\right\rangle +  h(t,\mathbf{u})\right]dt,
$$
hence, 
$$
\left\langle
\mathbf{b}(t),\mathbf{f}(t,\overline{\mathbf{u}}(t))
\right\rangle +  h(t,\overline{\mathbf{u}}(t))
=
\sup_{\mathbf{u\in \mathbf{U}}: \; \mathbf{l}(t, \mathbf{u})\le \mathbf{0}}\left[
\left\langle
\mathbf{b}(t),\mathbf{f}(t,\mathbf{u})\right\rangle +  h(t,\mathbf{u})\right], \ \ \ \mbox{for a.e.} \ t\geq 0,
$$
as claimed.

\medskip

\item Follows from (ii).

\medskip

\item Existence follows since, in the absence of state constraints, the control $\hat{\mathbf{u}}(\cdot)$ belongs to  $\mathcal{U}(\mathbf{x}_0)$. Thus, by construction, it verifies the requirements in (i).  

\end{enumerate}
\end{proof}

\subsubsection{ITM versus standard dynamic optimization methods} 
Here we build on Proposition \ref{pr:rewriting-general} to elaborate on our Integral Transformation Method. Transforming the objective functional \eqref{eq:OCPgeneral} into the representation in \eqref{eq:rewritingJ} allows us to rewrite the optimal control problem in an advantageous way, as the state variable no longer appears in  \eqref{eq:rewritingJ}, and all relevant information about the evolution of the state is encoded in the coefficient $\mathbf{b}(t)$. This makes solving the dynamic optimization problem straightforward, as it only requires performing a uniform optimization on the integrand (part (iv) in Theorem \ref{th:staticreduction}). 

Traditional approaches, such as Dynamic Programming and the use of the Maximum Principle, are applicable to this problem but the following points are worth mentioning.
\begin{enumerate}[(i)]
\item 
Following the Dynamic Programming approach, one could write the HJB equation for $(t,\mathbf{x})\in\mathbb{R}_+\times \mathbf{X}$ as follows: 
\begin{align}\label{HJB} \rho v(t,\mathbf{x}) -v_{t}(t,\mathbf{x})= &\  \langle \mathbf{a}(t)\mathbf{x},\,v_{\mathbf{x}}(t,\mathbf{x})\rangle\\&+ \sup_{\mathbf{u \in \mathbf{U}}: \; \mathbf{l}(t, \mathbf{u})\le \mathbf{0}} \{\langle \mathbf{f}(t,\mathbf{u}), v_{\mathbf{x}}(t,\mathbf{x})\rangle+h(t,\mathbf{x},\mathbf{u})\}.
\nonumber \end{align} 
Assuming a solution of the form $v(t,\mathbf{x})=\langle \balpha(t),\mathbf{x}\rangle +\beta(t)$, the following steps would be followed to solve the problem with the Dynamic Programming-HJB method: \smallskip

\begin{enumerate}[--] 

\item Determine $\balpha:\mathbb{R}_{+}\to \mathbb{R}^{n}$ and $\beta: \mathbb{R}_{+}\to \mathbb{R}$ that solve 
\eqref{HJB}. This step involves solving a system of  ODEs. 

\item Confirm, through a verification theorem, that $v$ is indeed the value function, and that the corresponding candidate optimal control is indeed optimal. This procedure passes through a transversality condition on $v$ for $t\to\infty$ and can be fairly complex when state constraints are present. Once completed, it provides \emph{sufficient conditions} for optimality. Under suitable conditions, e.g. uniqueness of solutions to the HJB equation, these conditions are also necessary.
\end{enumerate}

\item
Similarly, applying the classical Maximum Principle would lead to a forward-backward system of $2n$ ODEs with time-dependent coefficients. In addition, to identify the solution to the optimal control problem, appropriate transversality conditions for the adjoint variables are necessary. Ultimately, in the absence of state constraints, this approach yields \emph{necessary conditions} for optimality, which are also sufficient under concavity assumptions on the data. When state constraints are present, additional challenges arise, as one must carefully consider the case where the constraints are binding, which can be non-trivial. 
\end{enumerate}

Although the ITM applies only when the underlying problem has a specific structure (additive separability in state and control variables and linearity in the state variable in both the state equation and the objective functional), when these conditions are met, it has certain notable advantages:

\begin{enumerate}[(i)]

\item The ITM provides a direct approach that allows us to obtain in a single step both necessary and sufficient conditions for optimality in the optimal control problem, even in the absence of concavity, based on optimizing a family of temporary problems.

\item The ITM handles problems involving discontinuities more easily than traditional dynamic optimization methods. This includes not just discontinuous functions (e.g., piecewise constant) but also generalized distributions (e.g., Dirac delta functions) which arise in impulse control problems.
\item The ITM is applicable when $\mathbf{X}$ and $\mathbf{U}$ are infinite-dimensional. This is the case, for example, in applications of optimal control in a PDE setup; see \cite{boucekkine2022managing, Boucekkine2022}, or in dealing with delay equations. The method is particularly advantageous in these cases, as it does not involve the use of unbounded differential operators in the solution of the HJB equation and in the verification theorem; see \cite{boucekkine2019geographic}, and  \cite{boucekkine2022managing}. Moreover, the ITM allows for a useful economic interpretation of certain quantities; see, for example, the interpretation of the function $\alpha(\cdot)$ in \cite{boucekkine2022managing}, which corresponds to our $\mathbf{b}(0)$ in the autonomous case.

\item As we will demonstrate in the next section, the ITM is readily applicable in the $N$-player differential game context, where it provides necessary and sufficient conditions for Nash equilibria from corresponding necessary and sufficient equilibrium conditions for a family of temporary non-cooperative games. In this context, the Dynamic Programming approach involves solving a system of $N$ HJB equations, while the Maximum Principle approach requires solving a coupled forward-backward system of $2n\times N$ ODEs. While both approaches are valid, they are significantly more cumbersome.

\end{enumerate}

In the next subsection we demonstrate in particular how the ITM approach described above for the single-agent dynamic decision problem can be extended to the case of N-player dynamic games. We will begin with an exploration of open-loop Nash equilibria. Later, we will investigate existence and uniqueness of Markov Perfect equilibria in the special class of affine feedbacks.

\subsection{The $N$-player case: Nash equilibria}
\label{app:reductionNash}

Consider the following non-cooperative differential game. There are $N\geq 2$ players, labeled by $i=1,..,N$.
We denote by  $\mathbf{X}= \mathbb{R}^n$ the state space, which is assumed to be common to all players,
and by $\mathbf{U}^i=\mathbb{R}^{k_i}$ the control space of player $i$, for $i=1,\dots, N$. We set
$$
\mathbf{U}:=\mathbf{U}^1\times \dots \times \mathbf{U}^N
$$
and let $k:=k_1+\dots + k_N$ be the dimension of $\mathbf{U}$. The state path is denoted by $\mathbf{x}(\cdot)$, whereas the control path of player $i$ is denoted by $\mathbf{u}^i(\cdot)$.
Finally, we set
$$
\mathbf{u}(\cdot):=\left(\mathbf{u}^1(\cdot),\dots,
\mathbf{u}^N(\cdot) \right).
$$
The state equation is the same as in \eqref{eq:stategeneral}. However, now 
the constraints to be satisfied by the state may be different for each player. For player $i$, these are given by
$$
\mathbf{g}^i(t,\mathbf{x}(t)) \le \mathbf{0}, \qquad \forall t \ge 0,
$$
where $\mathbf{g}^i:\R_+\times\mathbf{X}\to \mathbb{R}^{d_i}$ for some $d_i \in \mathbb{N}$.
The constraints satisfied by the control strategies can also be different for each player. For player $i$, these are given by
$$
\mathbf{l}^i(t,\mathbf{u}(t)) \le \mathbf{0}, \qquad \forall t \ge 0,
$$
where $\mathbf{l}^i:\R_+\times\mathbf{U}\to \mathbb{R}^{p_i}$ for some $p_i \in \mathbb{N}$.


The objective functional of player $i$ is given by\footnote{Hereafter, we use $\{-i\}$ to denote the profile of every player except player $i$; i.e., $\{1,...,N\}\setminus \{i\}$ for $i\in\{1,...,N\}.$}
\begin{equation}
\label{eq:OCPgeneralNash}
\mathcal{J}^i(\mathbf{x}_0, \mathbf{u}^{-i}(\cdot);\mathbf{u}^i(\cdot)):=
\int_{0}^{\infty }e^{-\rho_i t} [\langle\mathbf{a}^i(t), \mathbf{x}(t)\rangle + h^i(t,\mathbf{u}^i(t),\mathbf{u}^{-i}(t))]dt,
\end{equation}
where $\mathbf{a}^i:\R_+\to\mathbf{X}$ and ${h^i}:\R_+\times\mathbf{U} \to\mathbb{R}$.

As in the previous subsection, we introduce the maps: 
$$\mathbf{b}^{i}(t):\R_+ \to\mathbf{X}, \qquad t \mapsto \mathbf{b}^{i}(t):= \int_{0}^{\infty }e^{-\rho \tau}\Phi^*_{\mathbf{A}}(t+\tau,t)
\mathbf{a}^{i}(t+\tau)d\tau, \ \ \ i=1,...,N.$$
For the remainder of the subsection, we impose the following.

\begin{Hypothesis}
\label{hp:reductionNash}
Hypotheses \ref{hp:reduction}\emph{(i)-(ii)} and \ref{hp:reduction}\emph{(iii)} hold, with $\mathbf{a},\mathbf{b}$ replaced by $\mathbf{a}^{i}$, $\mathbf{b}^{i}$; 
i.e., for all $i=1,\ldots,N$,
\begin{equation}
\label{eq:defbfbnash}
\mathbf{b}^{i}(t):\R_+ \to \mathbf{X}, \qquad t \mapsto \mathbf{b}^{i}(t):= \int_{0}^{\infty }e^{-\rho \tau}\Phi^*_{\mathbf{A}}(t+\tau,t)
\mathbf{a}^{i}(t+\tau)d\tau, 
\end{equation}
is well defined and bounded.

\end{Hypothesis}

Similarly to  the previous subsection, we have the following.
\begin{Proposition}
\label{pr:rewriting-generalNash}
Suppose Assumption \ref{hp:reductionNash} holds and, for every $i=1,\dots,N$, consider a control strategy of the $i$-th player:
$\mathbf{u}^i(\cdot) \in L^1_{\rho_i}(\R_+,\mathbf{U}^i)$. Let
$\mathbf{u}(\cdot):=(\mathbf{u}^1(\cdot),\dots, \mathbf{u}^N(\cdot))$,
and $\mathbf{x}(\cdot):=\mathbf{x}^\mathbf{x_0,\mathbf{u}(\cdot)}(\cdot)$ be the associated state trajectory and fix $i\in \{1,\dots N\}$. Then:
\begin{equation}
\label{eq:reductionFubiniNash}
\int_{0}^{\infty }e^{-\rho_i t} \langle\mathbf{a}^i(t), \mathbf{x}(t)\rangle dt
=
\left\langle
\mathbf{b}^i(0),\mathbf{x}_0
\right\rangle
+
\int_{0}^{\infty }e^{-\rho_i t}\left\langle
\mathbf{b}^i(t),\mathbf{f}(t,\mathbf{u}(t))\right\rangle dt,
\end{equation}
where the right-hand side is well-defined and finite. If, in addition the map $t \mapsto  h^i(t,\mathbf{u}^i(t))$ belongs  to $L_{\rho_i}^1(\mathbb{R}_+)$,
then the objective functional \eqref{eq:OCPgeneralNash} is well-defined and finite and 
can be written as:
\begin{equation}
\label{eq:rewritingJNash}
\mathcal{J}^i(\mathbf{x}_0,\mathbf{u}^{-i}(\cdot);\mathbf{u}^i(\cdot))=
\left\langle
\mathbf{b}^i(0),\mathbf{x}_0
\right\rangle
+
\int_{0}^{\infty }e^{-\rho_i t}\left[
\left\langle
\mathbf{b}^i(t),\mathbf{f}(t,\mathbf{u}(t))\right\rangle +  h^i(t,\mathbf{u}^i(t))\right]dt.
\end{equation}

\end{Proposition}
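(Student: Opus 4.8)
The plan is to deduce Proposition~\ref{pr:rewriting-generalNash} directly from the single-player result, Proposition~\ref{pr:rewriting-general}, applied one player at a time. Fix $i\in\{1,\dots,N\}$. The point is that, once the full profile $\mathbf{u}(\cdot)=(\mathbf{u}^1(\cdot),\dots,\mathbf{u}^N(\cdot))$ is regarded as a single $\mathbf{U}$-valued control, the trajectory $\mathbf{x}(\cdot)=\mathbf{x}^{\mathbf{x}_0,\mathbf{u}(\cdot)}(\cdot)$ solves exactly the linear state equation \eqref{eq:stategeneral} with forcing term $\mathbf{f}(t,\mathbf{u}(t))$, while the $i$-th objective \eqref{eq:OCPgeneralNash} has precisely the form \eqref{eq:OCPgeneral} with $(\mathbf{a},\rho,h)$ replaced by $(\mathbf{a}^i,\rho_i,h^i)$ and with $\mathbf{b}$ replaced by $\mathbf{b}^i$ as in \eqref{eq:defbfbnash}. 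So \eqref{eq:reductionFubiniNash} and \eqref{eq:rewritingJNash} are simply \eqref{eq:reductionFubini} and \eqref{eq:rewritingJ} read off for these data.

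To make the reduction rigorous I would first verify the integrability prerequisites of Proposition~\ref{pr:rewriting-general}. Each $\mathbf{u}^j(\cdot)\in L^1_{\rho_j}(\R_+,\mathbf{U}^j)$ is locally integrable (on $[0,M]$ one has $\int_0^M|\mathbf{u}^j|\,dt\le e^{\rho_j M}\int_0^M e^{-\rho_j t}|\mathbf{u}^j|\,dt<\infty$), hence so is the profile $\mathbf{u}(\cdot)$, and the sublinear growth bound in Hypothesis~\ref{hp:reduction}(ii) then gives $t\mapsto\mathbf{f}(t,\mathbf{u}(t))\in L^1_{loc}(\R_+,\mathbf{X})$; thus the state equation is well posed and $\mathbf{x}(t)=\Phi_{\mathbf{A}}(t,0)\mathbf{x}_0+\int_0^t\Phi_{\mathbf{A}}(t,s)\mathbf{f}(s,\mathbf{u}(s))\,ds$. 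Assumption~\ref{hp:reductionNash} supplies exactly the boundedness and well-definedness of $\mathbf{b}^i$ that plays the role of Hypothesis~\ref{hp:reduction}(iii).

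With these in place I would repeat the Fubini--Tonelli computation of the proof of Proposition~\ref{pr:rewriting-general} verbatim for the data $(\mathbf{a}^i,\rho_i)$: split $\int_0^\infty e^{-\rho_i t}\langle\mathbf{a}^i(t),\mathbf{x}(t)\rangle\,dt$ into the contribution of $\Phi_{\mathbf{A}}(t,0)\mathbf{x}_0$, which equals $\langle\mathbf{b}^i(0),\mathbf{x}_0\rangle$ upon passing the adjoint across the pairing and recognizing the definition of $\mathbf{b}^i$ at time $0$, and the double integral over $\{0\le s\le t\}$; the adjoint identity $\langle\mathbf{a}^i(t),\Phi_{\mathbf{A}}(t,s)\,\cdot\,\rangle=\langle\Phi^*_{\mathbf{A}}(t,s)\mathbf{a}^i(t),\,\cdot\,\rangle$, together with the boundedness of $\mathbf{b}^i$, justifies interchanging the order of integration, and the substitution $\tau=t-s$ collapses the inner integral to $\mathbf{b}^i(s)$, yielding \eqref{eq:reductionFubiniNash} along with the asserted joint finiteness. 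Adding $\int_0^\infty e^{-\rho_i t}h^i(t,\mathbf{u}^i(t))\,dt$, which is finite by the additional hypothesis $h^i(\cdot,\mathbf{u}^i(\cdot))\in L^1_{\rho_i}(\R_+)$, then produces \eqref{eq:rewritingJNash}.

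I do not expect a genuine obstacle: this is a bookkeeping extension of the single-player case. The two points needing a little care are that the players carry different discount rates $\rho_j$ --- so the profile cannot be treated uniformly, and one must instead use that well-posedness of the discount-free state equation requires only local integrability of each $\mathbf{u}^j$ --- and keeping track, when transcribing from Proposition~\ref{pr:rewriting-general}, of the distinct roles of $\mathbf{f}(t,\mathbf{u}(t))$ (depending on the whole profile) and of $h^i$. A tidier alternative, avoiding any repetition, is to restate Proposition~\ref{pr:rewriting-general} once in a ``parametrized'' form allowing an arbitrary triple $(\mathbf{a},\rho,h)$ and a product control space, and then invoke it $N$ times.
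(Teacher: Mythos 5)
Your proposal is correct and follows exactly the route the paper intends: the paper omits this proof precisely because, as you observe, one treats the full profile $\mathbf{u}(\cdot)$ as a single $\mathbf{U}$-valued control and reruns the Fubini--Tonelli computation of Proposition~\ref{pr:rewriting-general} with $(\mathbf{a}^i,\rho_i,h^i,\mathbf{b}^i)$ in place of $(\mathbf{a},\rho,h,\mathbf{b})$. Your added verification of local integrability of the profile (needed because the players carry different discount rates) is a sensible piece of bookkeeping that the paper leaves implicit.
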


The proof is omitted as the argument is similar to the one used in the proof of Proposition \ref{pr:rewriting-general}.

\subsubsection{Open-loop Nash equilibria}
Prior to defining equilibrium, we first define the set of admissible strategies. We begin by considering the following set:
$$\mathcal{U}^0_G:=L^1_{\rho_1}(\R_+,\mathbf{U}^1)\times \ldots \times L^1_{\rho_N}(\R_+,\mathbf{U}^N).$$
The set of \emph{open-loop} admissible strategies is then given by 
\begin{multline}\label{ug}
\mathcal{U}_G(\mathbf{x}_{0})=\Big\{\mathbf{u}(\cdot)= (\mathbf{u}^1(\cdot),...,\mathbf{u}^N(\cdot))\in \mathcal{U}^0_G:  \ \ \ t\mapsto h^{i}(t,\mathbf{u}(t))\in L^{1}_{\rho_{i}}(\R_{+}) \ \forall i=1,...,N,\\
\mathbf{g}^{i}(t,\mathbf{x}(t))\leq \mathbf{0}, \ \ \ \ \mathbf{l}^{i}(t,\mathbf{u}(t))\leq \mathbf{0}, \ \ \ \  \ \forall t\geq 0, \ \forall i=1,...,N\ \Big\}.
\end{multline}
Notice that, in the absence of state constraints, the above set does not depend on $\mathbf{x}_{0}$. We shall thus simply denote it by $\mathcal{U}_{G}$.

Given $i\in\{1,...,N\}$ and\footnote{With some abuse of notation, here we denote
$$
\mathcal{U}_{G}^0\setminus L^{1}_{\rho_{i}}(\R_{+};\mathbf{U}^{i}):= \prod_{j\neq i} L^1_{\rho_j}(\R_+,\mathbf{U}^j).
$$} 
$\mathbf{u}^{-i}(\cdot)\in \mathcal{U}_{G}^0\setminus L^{1}_{\rho_{i}}(\R_{+};\mathbf{U}^{i})$, we define the set of admissible control strategies of player $i$ as the $i$-th section of the above set; i.e.,
\begin{equation}\label{ug2}
\mathcal{U}^{i}_G(\mathbf{x}_{0},\mathbf{u}^{-i}(\cdot)):=\Big\{ \mathbf{u}^{i}(\cdot)\in L^{1}_{\rho_{i}}(\R^{+};\mathbf{U}^{i}): \ (\mathbf{u}^{i}(\cdot),\mathbf{u}^{-i}(\cdot))\in \mathcal{U}_{G}(\mathbf{x}_{0})\Big\}.
\end{equation}

As, in the absence of state constraints, the above sets do not depend on $\mathbf{x}_{0}$ we again denote them by $\mathcal{U}^{i}_G(\mathbf{u}^{-i}(\cdot))$. As is standard in game theory, the set of admissible strategies $\mathcal{U}^{i}_G(\mathbf{x}_{0},\mathbf{u}^{-i}(\cdot))$ of player $i$ depends not only on the initial state $\mathbf{x}_{0}$, but also on the strategies $\mathbf{u}^{-i}(\cdot)$ of the other players.\footnote{The above choice is equivalent to taking the set of admissible strategies for each player $i$ to be $L^{1}_{\rho_{i}}(\R_{+};\mathbf{U}_{i})$, and assigning value $-\infty$ to control strategies which do not lie in $\mathcal{U}^{i}_G(\mathbf{x}_{0},\mathbf{u}^{-i}(\cdot))$.} Next, we introduce the notion of \emph{open-loop Nash equilibrium} for this setup.

\begin{Definition}[Open-loop Nash equilibrium]
An admissible open-loop strategy $\hat{\mathbf{u}}(\cdot)\in \mathcal{U}_{G}(\mathbf{x}_{0})$ starting at $\mathbf{x}_0$ is called an \emph{open-loop Nash equilibrium} for the dynamic game starting at $\mathbf{x}_{0}$ if, for all $i\in\{1,...,N\}$,
$$\mathcal{J}^{i}(\mathbf{x}_0, \hat{\mathbf{u}}^{-i}(\cdot);\hat{\mathbf{u}}^i(\cdot))\ \geq\ \mathcal{J}^{i}(\mathbf{x}_0, \hat{\mathbf{u}}^{-i}(\cdot);\mathbf{u}^i(\cdot)), \ \ \ \forall \mathbf{u}(\cdot)\in\mathcal{U}_{G}^{i}(\mathbf{x}_0,\hat{\mathbf{u}}^{-i}(\cdot)).$$
\end{Definition}

Similarly to the previous subsection, the ITM allows us to find Nash equilibria of the dynamic game by investigating a family of associated temporary games parametrized by time. More precisely, for each $t\in\R_{+}$, consider the following temporary game. There are $N$ players and, for $i=1,...,N$, player $i$ takes as given the choices $\mathbf{u}^{-i}_{t}\in\mathbf{U}^{-i}$ of the others players at time $t$ and seeks to maximize,
over the set\footnote{This is again equivalent to taking the set of admissible strategies for player $i$ to be
$\mathbf{U}_i$,
and assigning value $-\infty$ to strategies
which are not in
$\mathbf{U}_{i,\mathbf{u}^{-i}_{t}}$.}

$$
\mathbf{U}_{i,\mathbf{u}^{-i}_{t}}:=
\left\{
\mathbf{u}^{i}\in \mathbf{U}^i: \;
\mathbf{l}^i(t,\mathbf{u}^{i},\mathbf{u}^{-i}_{t})\le 0
\right\},
$$
the $i$-th objective function:
\begin{equation}
\label{eq:staticNash}
\mathbf{U}_{i,\mathbf{u}^{-i}_{t}}
\rightarrow \R,
\qquad \qquad
\mathbf{u}^i \mapsto
\left\langle
\mathbf{b}^i(t),\mathbf{f}(t,\mathbf{u}^i,
\mathbf{u}^{-i}_{t}) \right\rangle +h(t,\mathbf{u}^i,\mathbf{u}^{-i}_{t}).
\end{equation}
The notion of Nash equilibrium in this temporary context is given below.

\begin{Definition}[Temporary Nash equilibrium]\label{def:Nashstatic}
A Nash equilibrium for the game at time $t\in\R_{+}$ is an $N$-tuple of temporary strategies $\hat{\mathbf{u}}_{t}=(\hat{\mathbf{u}}^{1}_{t}, ..., \hat{\mathbf{u}}_{t}^{N})\in\mathbf{U}$ such that for each $i=1,...,N$,
\begin{equation}\label{Nashstatic}
\begin{cases}
\hat{\mathbf{u}}_{t}^{i}\in \mathbf{U}_{i,\hat{\mathbf{u}}^{-i}_{t}},\\\\
\displaystyle{\hat{\mathbf{u}}_{t}^{i}\in \arg\!\!\!\max_{\mathbf{u^{i}\in \mathbf{U}_{i,\hat{\mathbf{u}}^{-i}_{t}}}}  \  \big\{ \left\langle
\mathbf{b}^i(t),\mathbf{f}(t,\mathbf{u}^i,
\hat{\mathbf{u}}^{-i}_{t}) \right\rangle +h(t,\mathbf{u}^i,\hat{\mathbf{u}}^{-i}_{t})\big\}}.
\end{cases}
\end{equation}
\end{Definition}

We will denote by $\mathbf{NE}_{t}$ the (possibly empty) set of Nash equilibria for the temporary game in $t\in\R_{+}$. We then have the following equivalence result, which is analogous to the one obtained in the optimal control case above.

\begin{Theorem}[Open-loop Nash equilibria]
\label{th:staticreductionNash}
\begin{itemize}
\item[]
\item[]
\item[(i)] \emph{(Sufficient conditions for open-loop Nash equilibria)}
Let  
$\widehat{\bu}(\cdot)\in \mathcal{U}_{G}(\mathbf{x}_0)$ be
such that $\widehat{\mathbf{u}}_{t}:=\widehat{\mathbf{u}}(t)$ belongs to $\mathbf{NE}_{t}$ for a.e. $t\in\R_{+}$.
Then $\widehat{\mathbf{u}}(\cdot)$ is an open-loop Nash equilibrium for the dynamic game starting at $\mathbf{x}_0$. Moreover, in the absence of state constraints, $\widehat{\mathbf{u}}(\cdot)$ is a Nash equilibrium  for every  initial condition $\mathbf{x}_0$.
\end{itemize}
Assume that there are no state constraints.
\begin{itemize}
\item[(ii)] \emph{(Necessary conditions for open-loop Nash equilibria)} Let $\overline{\mathbf{u}}(\cdot)\in\mathcal{U}_{G}$ be an open-loop Nash equilibrium for the dynamic game starting  at $\mathbf{x}_0$ and assume that, for $i=1,\dots,N$, the map
$$
\R_+\times \mathbf{U}_{i}\to \R^{p_i}; \ \ \ (t,\mathbf{u}_i)\mapsto
\mathbf{l}_{i}(t,
\overline{\mathbf{u}}_{-i}(t),\mathbf{u}_i)
$$
is lower semicontinuous. Let
$$
D_{i}:=\Big\{(t,\mathbf{u}_{i})\in \R_+\times \mathbf{U}_{i}: \ 
\mathbf{u}_i\in \mathbf{U}_{i,\overline{\mathbf{u}}_{-i}(t)},\;
\forall t \ge 0\Big\}
$$
and assume that the map 
$$
D_{i}\to \R; \ \  \ (t,\mathbf{u}_i)\mapsto \left\langle
\mathbf{b}_{i}(t),\mathbf{f}(t,
\overline{\mathbf{u}}_{-i}(t),\mathbf{u}_i)
\right\rangle +  h(t,\overline{\mathbf{u}}_{-i}(t),\mathbf{u}_i)
$$
is upper semicontinuous and coercive in $\bu_i$, uniformly in $t\in \R_+$.
Then,  $\overline{\mathbf{u}}(t)\in  \mathbf{NE}_{t}$, for a.e. $t\in \R_{+}$, and $\overline{\mathbf{u}}(t)$ is an open-loop Nash equilibrium for any initial condition.
\medskip
\item[(iii)] \emph{(Uniqueness of open-loop Nash equilibria)} Assume that $\mathbf{NE}_{t}$ is at most a singleton for a.e. $t\in\R_{+}$. Then, there is at most one open-loop Nash equilibrium for the dynamic game.
\medskip

\item[(iv)] \emph{(Existence of open-loop Nash equilibria)}
Assume that  $\mathbf{NE}_{t}$  is non-empty for a.e. $t\in\R_{+}$. 
Then, any single-valued measurable selection $\widehat{\mathbf{u}}(\cdot)$ of the multi-valued map 
\begin{equation}\label{selecNash}
\R_+\to\mathcal{P}(\mathbf{U}); \ \ \ \ \ t\mapsto \mathbf{NE}_{t}
\end{equation} 
satisfying the integrability conditions
\begin{equation}\label{inttt}
\widehat{\mathbf{u}}_{i}(\cdot)\in L^1_{\rho_{i}}(\R_+;\mathbf{U}), \ \ \ \ h_{i}(\cdot,\widehat{\mathbf{u}}(\cdot)) \in L_{\rho_{i}}^1(\mathbb{R}_+), \ \ \ \forall i=1,...,N
\end{equation} satisfies the requirements of item \emph{(i)} and thus
constitutes an open-loop Nash equilibrium for the dynamic game, for every initial condition $\mathbf{x}_0$.\footnote{Similarly to the optimal control case, if we also assume that 
the maps
$$
\R_{+}\times \mathbf{U} \to\R, \ \ \ (t,\mathbf{u})\mapsto \left\langle
\mathbf{b}^{i}(t),\mathbf{f}(t,\mathbf{u})\right\rangle +  h^{i}(t,\mathbf{u})
$$
are upper semicontinuous and that the maps $\mathbf{l}^{i}$ are lower semicontinuous, for each $i=1,...,N$, then a single measurable selection of the multivalued map \eqref{selecNash} satisfying the integrability conditions \eqref{inttt} exists. }

\end{itemize}
\end{Theorem}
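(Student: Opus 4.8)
The plan is to transpose the proof of Theorem~\ref{th:staticreduction} to the game setting, using Proposition~\ref{pr:rewriting-generalNash} in place of Proposition~\ref{pr:rewriting-general}. The structural fact driving everything is that, once the opponents' profile $\mathbf{u}^{-i}(\cdot)$ is frozen, formula \eqref{eq:rewritingJNash} writes $\mathcal{J}^i(\mathbf{x}_0,\mathbf{u}^{-i}(\cdot);\mathbf{u}^i(\cdot))$ as the constant $\langle\mathbf{b}^i(0),\mathbf{x}_0\rangle$ — which does not depend on $\mathbf{u}^i$ — plus $\int_0^\infty e^{-\rho_i t}\big[\langle\mathbf{b}^i(t),\mathbf{f}(t,\mathbf{u}^i(t),\mathbf{u}^{-i}(t))\rangle+h^i(t,\mathbf{u}(t))\big]\,dt$, whose integrand is maximized over $\mathbf{u}^i(t)\in\mathbf{U}_{i,\mathbf{u}^{-i}(t)}$ exactly by the best-response condition in Definition~\ref{def:Nashstatic}. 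Thus the ``dynamic'' and the ``temporary'' best responses are linked pointwise, and each of the four items reduces to making this link precise in the appropriate direction.

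For (i) I would fix $\widehat{\mathbf{u}}(\cdot)\in\mathcal{U}_G(\mathbf{x}_0)$ with $\widehat{\mathbf{u}}(t)\in\mathbf{NE}_t$ for a.e.\ $t$, fix a player $i$ and an arbitrary admissible deviation $\mathbf{u}^i(\cdot)\in\mathcal{U}^i_G(\mathbf{x}_0,\widehat{\mathbf{u}}^{-i}(\cdot))$. Admissibility supplies the integrability needed to apply Proposition~\ref{pr:rewriting-generalNash} to both profiles $(\mathbf{u}^i(\cdot),\widehat{\mathbf{u}}^{-i}(\cdot))$ and $(\widehat{\mathbf{u}}^i(\cdot),\widehat{\mathbf{u}}^{-i}(\cdot))$. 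For a.e.\ $t$ the integrand evaluated along $\mathbf{u}^i(t)$ is bounded above by the supremum over $\mathbf{U}_{i,\widehat{\mathbf{u}}^{-i}(t)}$, which by \eqref{Nashstatic} is attained at $\widehat{\mathbf{u}}^i(t)$; integrating against $e^{-\rho_i t}>0$ yields $\mathcal{J}^i(\mathbf{x}_0,\widehat{\mathbf{u}}^{-i}(\cdot);\mathbf{u}^i(\cdot))\le\mathcal{J}^i(\mathbf{x}_0,\widehat{\mathbf{u}}^{-i}(\cdot);\widehat{\mathbf{u}}^i(\cdot))$. Since $i$ and the deviation were arbitrary, $\widehat{\mathbf{u}}(\cdot)$ is an open-loop Nash equilibrium; in the absence of state constraints $\mathcal{U}_G(\mathbf{x}_0)=\mathcal{U}_G$ and the term $\langle\mathbf{b}^i(0),\mathbf{x}_0\rangle$ cancels from both sides of the comparison, so the conclusion holds for every $\mathbf{x}_0$.

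Part (ii) is where the real work lies, since — in contrast with Theorem~\ref{th:staticreduction}(ii) — there is no reference strategy known a priori to be pointwise optimal, so the temporary characterization must be produced by a measurable-selection-and-localization contradiction argument; this is precisely why the extra semicontinuity and uniform coercivity hypotheses appear. Given an open-loop Nash equilibrium $\overline{\mathbf{u}}(\cdot)$ and a player $i$, suppose the set $T$ of times at which $\overline{\mathbf{u}}^i(t)$ fails to lie in $\argmax_{\mathbf{u}^i\in\mathbf{U}_{i,\overline{\mathbf{u}}^{-i}(t)}}\big[\langle\mathbf{b}^i(t),\mathbf{f}(t,\mathbf{u}^i,\overline{\mathbf{u}}^{-i}(t))\rangle+h^i(t,\mathbf{u}^i,\overline{\mathbf{u}}^{-i}(t))\big]$ has positive measure. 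Upper semicontinuity of the temporary payoff in $\mathbf{u}^i$ on $D_i$, lower semicontinuity of $\mathbf{l}^i(t,\overline{\mathbf{u}}^{-i}(t),\cdot)$, and coercivity uniform in $t$ make this argmax nonempty and closed-valued and the feasibility correspondence measurable, so it admits a measurable selection $\mathbf{u}^i_*(\cdot)$ (a measurable maximum theorem, e.g.\ Proposition~7.33 of \cite{BertsekasShreveBOOK}); on $T$, switching from $\overline{\mathbf{u}}^i(t)$ to $\mathbf{u}^i_*(t)$ strictly raises the temporary payoff. I would then localize: on the set $T_n:=\{t\in T:\ t\le n,\ |\mathbf{u}^i_*(t)|\le n,\ \text{the gain from switching is}\ge 1/n,\ \text{the relevant payoff values are}\le n\}$, which has positive measure for $n$ large, define $\widetilde{\mathbf{u}}^i(\cdot)$ to equal $\mathbf{u}^i_*(\cdot)$ on $T_n$ and $\overline{\mathbf{u}}^i(\cdot)$ elsewhere. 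The linear-growth bound of Hypothesis~\ref{hp:reduction}(ii), boundedness of $\mathbf{b}^i$, and the sandwiching of $h^i(\cdot,\mathbf{u}^i_*(\cdot),\overline{\mathbf{u}}^{-i}(\cdot))$ between integrable functions on the bounded set $T_n$ keep $\widetilde{\mathbf{u}}^i(\cdot)$ inside $\mathcal{U}^i_G(\mathbf{x}_0,\overline{\mathbf{u}}^{-i}(\cdot))$ (there are no state constraints), while by \eqref{eq:rewritingJNash} this deviation raises $\mathcal{J}^i$ by at least $n^{-1}\int_{T_n}e^{-\rho_i t}\,dt>0$ — contradicting the Nash property of $\overline{\mathbf{u}}(\cdot)$. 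Hence $T$ is null for every $i$, so $\overline{\mathbf{u}}(t)\in\mathbf{NE}_t$ for a.e.\ $t$, and the last assertion of (ii) is then just (i) applied to $\overline{\mathbf{u}}(\cdot)$.

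Finally, (iii) follows from (ii): two open-loop Nash equilibria both lie in $\mathbf{NE}_t$ for a.e.\ $t$, hence coincide a.e.\ as soon as $\mathbf{NE}_t$ is at most a singleton a.e. And (iv) is immediate from (i): a measurable selection $\widehat{\mathbf{u}}(\cdot)$ of $t\mapsto\mathbf{NE}_t$ satisfying \eqref{inttt} automatically belongs to $\mathcal{U}_G$ — the control constraints $\mathbf{l}^i(t,\widehat{\mathbf{u}}(t))\le\mathbf{0}$ are built into membership in $\mathbf{NE}_t$, there are no state constraints, and the integrability requirements defining $\mathcal{U}_G$ are exactly \eqref{inttt} — so item (i) applies and makes it an open-loop Nash equilibrium for every $\mathbf{x}_0$; the existence of such a selection under the footnote's semicontinuity hypotheses again follows from the cited selection theorem. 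The one genuinely delicate step is the necessity direction (ii); parts (i), (iii) and (iv) are essentially bookkeeping on top of Proposition~\ref{pr:rewriting-generalNash}.
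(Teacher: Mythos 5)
Your proposal is correct, and for items (i), (iii) and (iv) it is essentially identical to the paper's proof: both rest on Proposition \ref{pr:rewriting-generalNash} to freeze the opponents and reduce player $i$'s problem to pointwise maximization of the integrand, with (iii) read off from (ii) and (iv) from (i). The only genuine divergence is in the necessity part (ii). The paper argues globally: it invokes Proposition 7.33 of \cite{BertsekasShreveBOOK} to produce a single admissible measurable selection $\widehat{\mathbf{u}}_i(\cdot)$ of the temporary argmax on all of $\R_+$, then sandwiches $\mathcal{J}_i(\overline{\mathbf{u}}_{-i};\overline{\mathbf{u}}_i)\ge\mathcal{J}_i(\overline{\mathbf{u}}_{-i};\widehat{\mathbf{u}}_i)=\int e^{-\rho_i t}\sup(\cdots)\,dt\ge\mathcal{J}_i(\overline{\mathbf{u}}_{-i};\overline{\mathbf{u}}_i)$, forces equality throughout, and concludes that the integrand equals its supremum a.e. You instead argue by contradiction with a localization: you only modify $\overline{\mathbf{u}}_i$ on a bounded positive-measure set $T_n$ where the gain is uniformly bounded below, which buys you an easier admissibility check (you never need the full argmax selection to be globally in $L^1_{\rho_i}$ with $h_i$ integrable --- only boundedness of the perturbation on $T_n$, which the uniform coercivity and the linear-growth bound on $\mathbf{f}$ deliver directly). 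The paper's route is shorter but silently relies on the coercivity hypothesis to make the global selection admissible; yours is slightly longer but makes that step explicit and robust. Both hinge on the same measurable maximum theorem, and both arrive at the same conclusion.
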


A detailed proof is given in Appendix \ref{app:reduction}. 

Theorem \ref{th:staticreductionNash} demonstrates that the reformulation of the objective functional allows us to transform the problem of solving the differential game to that of solving a family of temporary games. The conditions allowing the application of the ITM are the same as in the single-agent optimal control case. Namely, they require linearity of the state equation and of the objective functional with respect to the state variable. In the context of differential games, related conditions have been discussed in \cite{dockner1985} and \cite{zaccour2005}, who studied the solvability of a system of ODEs representing the (sufficient) optimality conditions via the Pontryagin Maximum Principle. Our method is flexible, allowing for the treatment of constraints, as well as for generalizations to the analysis of Markovian equilibria). We next turn our attention to the study of Markovian Nash equilibria.

\subsubsection{Markov-Nash equilibria} It is well-known that open-loop Nash equilibrium can be restrictive, as it relies on commitment and it does not incorporate the reactions of the players' strategies to the other players' choices. An alternative concept is that of (closed-loop) Markovian equilibrium \cite[Ch.\,4, Sec.\,1]{dockner2000differential}. In what follows, we extend the notion of admissible strategies to accommodate feedback maps.

Given a measurable map $\bvarphi=(\bvarphi_{1},...,\bvarphi_{N}):\R_+\times \mathbf{X}\to \mathbf{U}$, we say that $\bvarphi$ is an \emph{admissible Markovian strategy for the dynamic game starting at} $\mathbf{x}_0$ if it satisfies the following. 
\begin{itemize}
\item[(i)] The closed-loop equation
\begin{equation}\label{eq:cle}
\mathbf{x}'(t)=\mathbf{A}(t)\mathbf{x}(t) +\mathbf{f}(t,\bvarphi(t,\mathbf{x}(t))), \qquad \mathbf{x}(0)=\mathbf{x}_0\in \mathbf{X},
\end{equation}
admits a unique solution, denoted by $\mathbf{x}^{\mathbf{x}_{0},\bvarphi}(\cdot)$. \smallskip
\item[(ii)] The feedback strategy $\mathbf{u}^{\mathbf{x}_{0},\bvarphi}(\cdot):= \bvarphi(\cdot, \mathbf{x}^{\mathbf{x}_{0},\bvarphi}(\cdot))$ lies in $\mathcal{U}_{N}(\mathbf{x}_{0})$; i.e.,
\smallskip
\begin{itemize}  
\item[--] the solution to the closed-loop equation satisfies the state constraints: 
$$\mathbf{g}_{i}(t,\mathbf{x}^{\mathbf{x}_{0},\bvarphi}(t))\leq \mathbf{0}, \ \ \ \ \forall t\in\R_{+}, \ \forall i=1,...,N;$$
\item[--] the feedback strategy satisfies the control constraints: 
$$\mathbf{l}_{i}(t, {\boldsymbol{\bvarphi}}(t,\mathbf{x}^{\mathbf{x}_{0},\bvarphi}(t))\leq \mathbf{0}, \ \ \ \ \forall t\in\R_{+}, \ \forall i=1,...,N;$$
\item[--] the map $\mathbf{u}^{\mathbf{x}_{0},\bvarphi}(\cdot) \in  L^1_{\rho_1}(\R_+,\mathbf{U}_1)\times \ldots \times L^1_{\rho_N}(\R_+,\mathbf{U}_N)$, and the map 
$ t\mapsto h_{i}(t,\mathbf{u}(t))\in L^{1}_{\rho_{i}}(\R_{+})$, for every  $i=1,...,N$.
\end{itemize}
\end{itemize}

We will denote the set of admissible Markovian strategies starting at $\mathbf{x}_{0}$ by 
$\mathcal{M}_G(\mathbf{x}_{0})$. In the context of no time-dependent coefficients, it might be interesting to consider admissible Markovian strategies that also do not depend explicitly on time. We term these strategies \emph{stationary} Markovian, and denote the set of such strategies by $\mathcal{M}^{o}_G(\mathbf{x}_{0})$.

Given $\bvarphi\in\mathcal{M}_{G}(\mathbf{x}_{0})$, abusing notation, we define
$$
\mathcal{J}_{i}(\mathbf{x}_{0},\bvarphi_{-i};\bvarphi_{i}):=\mathcal{J}_{i} (\mathbf{x}_{0},\mathbf{u}^{\mathbf{x}_{0},\bvarphi;-i}(\cdot);\bu^{\bx_{0},\bvarphi;i}(\cdot)).
$$
Finally, given $\bvarphi_{-i}:\R_{+}\times \mathbf{X}\to \mathbf{u}_{-i}$, we set
\begin{equation}\label{ugr}
\mathcal{M}_{i}^G(\mathbf{x}_{0},\bvarphi_{-i}):=\Big\{ \bvarphi_{i}: \R^+\times \mathbf{X}\to \mathbf{U}_{i} \ : \ \  (\bvarphi_{i},\bvarphi_{-i})\in \mathcal{M}_{G}(\mathbf{x}_{0})\Big\}.
\end{equation}
We are now ready to define Markovian Nash equilibria.

\begin{Definition}[Markov-Nash equilibrium]
Given $\mathbf{x}_{0}\in \R^{n}$, a Markovian admissible strategy profile $\widehat{\bvarphi}\in \mathcal{M}_{G}(\mathbf{x}_{0})$  is called a \emph{Markov-Nash equilibrium} for the dynamic game starting at $\mathbf{x}_{0}$ if, for all $i\in\{1,...,N\}$, 
$$\mathcal{J}_{i}(\mathbf{x}_0, \widehat{\bvarphi}_{-i};\widehat{\bvarphi}_i)\geq\mathcal{J}_{i}(\mathbf{x}_0, \widehat{\bvarphi}_{-i};\bvarphi_i), \ \ \ \forall \bvarphi_{i}\in\mathcal{M}_{G}^{i}(\mathbf{x}_0,\bvarphi_{-i}).$$
\end{Definition}
\begin{Remark}\label{rem:on}
The above definition of Markovian Nash equilibrium is equivalent to the one provided in  \cite[Def.\,4.1]{dockner2000differential}. Indeed, 
for (Markovian) optimal control problems, there is no difference between optimizing over open-loop controls or over closed-loop (Markovian) controls. Hence, $\widehat{\bvarphi}\in \mathcal{M}_{G}(\mathbf{x}_{0})$ being a \emph{Markovian Nash equilibrium} for the dynamic game starting at $\mathbf{x}_{0}$ is equivalent to, for each $i=1,...,N$, the feedback control $\widehat{\bu}_{i}(\cdot):=\widehat{\bvarphi}_{i}(\cdot;\mathbf{x}^{\mathbf{x}_{0},\widehat{\bvarphi}}(\cdot))$ being an optimal control of the control problem: 
$$
\sup_{\bu_{i}(\cdot)\in \mathcal{U}_{i}^{G}(\mathbf{x}_0,\widehat{\bvarphi}_{-i})} \int_{0}^{\infty} e^{-\rho_{i}t} [\langle\mathbf{a}_i(t), \mathbf{x}(t)\rangle + h_i(t, \widehat\bvarphi_{-i}(t,\bx(t)), \mathbf{u}_{i}(t))]dt,
$$
under the state equation 
$$
\mathbf{x}'(t)=\mathbf{A}(t)\mathbf{x}(t) +\mathbf{f}(t,\widehat\bvarphi_{-i}(t,\mathbf{x}(t)),\bu_{i}(t)), \qquad \mathbf{x}(0)=\mathbf{x}_0\in \mathbf{X};
$$
where, once again abusing notation, 
\begin{multline*}
\mathcal{U}_{i}^{G}(\mathbf{x}_0,\widehat{\bvarphi}_{-i})=\Big\{\mathbf{u}_{i}(\cdot)\in L^{1}_{\rho_{i}}(\R_+;\mathbf{U}_{i}):\\ \ \ \ \ \ \ \ \ \ \ \ \ \ \ \   \mathbf{g}_{j}(t,\mathbf{x}(t))\leq \mathbf{0}, \ \ \mathbf{l}_{j}(t, \widehat\bvarphi_{-i}(t,\bx(t)), \mathbf{u}_{i}(t))\leq \mathbf{0},  \ \ \ \forall j=1,...,N, \ \ \forall t\in\R_{+}, \\ \ \ \mbox{and} \ \    t\mapsto h_{j}(t, \widehat\bvarphi_{-i}(t,\bx(t)), \mathbf{u}_{i}(t))\in L^{1}_{\rho_{j}}(\R_{+}) \ \forall j=1,...,N\Big\}.
\end{multline*}
The latter is exactly  the condition  of equilibrium required in  \citealp[Def.\,4.1]{dockner2000differential}.\hfill$\square$
\end{Remark}
\begin{Remark}[Open-loop and Markovian Nash equilibria]
By definition, an open-loop Nash equilibrium is also a (degenerate) Markovian Nash equilibrium. If the problem is stationary; i.e., the data $\mathbf{A},\mathbf{f}, \mathbf{g}_{i},\mathbf{l}_{i},\mathbf{a}_{i}, h_{i}$ do not depend on $t$, then by Theorem \ref{th:staticreductionNash}, an open-loop Nash equilibrium, if it exists, will not depend on $t$, as the temporary problem is time independent. Thus, it will be a stationary Markovian equilibrium.
\end{Remark} 
\begin{Remark}[Markov Perfect Equilibrium]
In the context of Markovian Nash equilibria, one may inquire whether a stronger property holds. Starting the game from any initial time $t_{0}$ and any initial condition $\mathbf{x}_{0}$, does the equilibrium map $\widehat\bvarphi$, when restricted to $[t_{0},\infty)$, continue to be a Markovian equilibrium for the game starting at $(t_{0},\bx_{0})$? If the answer to this question is affirmative, we say that $\widehat\bvarphi$ is a \emph{Markov Perfect Equilibrium (MPE)}. In our setup, in the absence of state constraints, the \emph{degenerate Markov equilibrium} (if it exists) only depends on time, as it is the solution to a family of temporary problems, hence it is \emph{automatically} an MPE.\hfill$\square$
\end{Remark}

Next, we investigate whether a uniqueness result holds for Markovian Nash equilibria. The next Proposition provides a qualified confirmation. The proof is given in Appendix \ref{app:reduction}.

\begin{Proposition}[Uniqueness of affine Markovian Nash equilibria] \label{uniqueaffine}
Suppose that:
\begin{itemize}
\item[(i)]  The map $\mathbf{f}$ is affine in $\bu$; i.e., 
$$
\mathbf{f}(t,\bu)=  \mathbf{P}(t)\mathbf{u}+ \mathbf{j}(t), \ \ \ \ \ \ \ \mathbf{P}(t)\in \mathcal{L}(\mathbf{U},\mathbf{X}), \ \ \ \mathbf{j}(t)\in\mathbf{X};$$ 
\item[(ii)] There are no state or control constraints; i.e.,
$$\mathbf{g}_i,\mathbf{l}_{i}\equiv \mathbf{0};$$ 
\item[(iii)] For each $i=1,...,N$, the function $h_{i}$ only depends on $(t,\bu^{i})$ and is strictly concave and coercive, for each $t\in\R_{+}$.
\end{itemize}
Then, the differential game admits at most one MPE in the class of time-dependent affine Markovian feedbacks:\smallskip
\begin{align*}
\mathcal{M}_{G}^{L}&=\bigg\{\bvarphi=(\bvarphi_1,...,\bvarphi_{N}):\R_{+}\times \mathbf{X}\to \mathbf{U} \, :\  \ \bvarphi_{i}(t,\bx)=\mathbf{L}_{i}(t)\bx+\mathbf{w}_{i}(t),\\& \ \ \ \ \ \ \emph{where} \  \mathbf{L}_{i}: \R_{+}\to \mathcal{L}(\mathbf{X},\mathbf{U}_{i})\ \emph{and} \  \mathbf{w}_{i}: \R_{+}\to \mathbf{U}_{i}\  \emph{bounded} \ \ \forall  i=1,...,N\bigg\}. 
\end{align*}
\end{Proposition}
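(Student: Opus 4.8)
\emph{Proof idea.} The plan is to combine the equivalence between Markovian Nash equilibria and a family of best-response optimal control problems (Remark~\ref{rem:on}) with the ITM reduction of Theorem~\ref{th:staticreduction}. Fix any affine MPE $\widehat\bvarphi\in\mathcal M_G^L$, written $\widehat\bvarphi_i(t,\bx)=\mathbf L_i(t)\bx+\mathbf w_i(t)$ with $\mathbf L_i,\mathbf w_i$ bounded. By Remark~\ref{rem:on}, for every player $i$ and every $\bx_0$ the realized feedback control $\widehat\bu_i(\cdot):=\widehat\bvarphi_i(\cdot,\bx^{\bx_0,\widehat\bvarphi}(\cdot))$ is optimal for player $i$'s control problem against $\widehat\bvarphi_{-i}$. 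Plugging the affine feedbacks of the others into the state equation and writing $\mathbf P(t)\bu=\sum_j\mathbf P_j(t)\bu^j$, this problem has exactly the linear-state form of the single-player setting of Section~\ref{app:reductioncontrol}: state equation $\bx'(t)=\widetilde{\mathbf A}^i(t)\bx(t)+\mathbf P_i(t)\bu^i(t)+\widetilde{\mathbf j}^i(t)$ with $\widetilde{\mathbf A}^i(t):=\mathbf A(t)+\sum_{j\neq i}\mathbf P_j(t)\mathbf L_j(t)$ and $\widetilde{\mathbf j}^i(t):=\mathbf j(t)+\sum_{j\neq i}\mathbf P_j(t)\mathbf w_j(t)$, no constraints (by (ii)), and objective $\int_0^\infty e^{-\rho_i t}[\langle\mathbf a^i(t),\bx(t)\rangle+h^i(t,\bu^i(t))]\,dt$ (by (iii), $h^i$ depends on $\bu^i$ only). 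Since $\mathbf P$ is bounded, a consequence of Hypothesis~\ref{hp:reduction}(ii), and $\mathbf L_j$ is bounded, $\widetilde{\mathbf A}^i$ is locally integrable and the evolution family $\Phi_{\widetilde{\mathbf A}^i}$ exists and is invertible; the one remaining hypothesis needed to apply Proposition~\ref{pr:rewriting-general} and Theorem~\ref{th:staticreduction} to this reduced problem is that $\widetilde{\mathbf b}^i(t):=\int_0^\infty e^{-\rho_i\tau}\Phi^*_{\widetilde{\mathbf A}^i}(t+\tau,t)\mathbf a^i(t+\tau)\,d\tau$ be well defined and bounded, which I treat as part of admissibility of $\widehat\bvarphi$ (see the final remark).

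Granting this, I would apply Theorem~\ref{th:staticreduction} to the reduced problem. By Proposition~\ref{pr:rewriting-general}, the temporary problem at time $t$ is the maximization over $\bu^i\in\mathbf U^i$ of $\langle\widetilde{\mathbf b}^i(t),\mathbf P_i(t)\bu^i+\widetilde{\mathbf j}^i(t)\rangle+h^i(t,\bu^i)$; dropping the term that does not involve $\bu^i$, this is $\langle\mathbf P_i^*(t)\widetilde{\mathbf b}^i(t),\bu^i\rangle+h^i(t,\bu^i)$, which by the strict concavity and coercivity of $h^i(t,\cdot)$ in (iii) has a \emph{unique} maximizer $\widehat\bu^i(t)$, measurable in $t$, depending on $(\mathbf L_j)_{j\neq i}$ but \emph{not} on $\bx_0$. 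By Theorem~\ref{th:staticreduction}(i)--(ii) (the integrability conditions being met thanks to boundedness of $\widetilde{\mathbf b}^i$ and $\mathbf P$ and the coercivity in (iii), and in any case inherited from admissibility of $\widehat\bvarphi$), $\widehat\bu^i(\cdot)$ is the a.e.\ unique optimal control of player $i$'s reduced problem, and it is optimal for every initial condition. In particular, the best response against the others' affine feedbacks is itself state-independent.

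Since $\widehat\bvarphi$ is an MPE it is, in particular, a Markov--Nash equilibrium from every $(0,\bx_0)$, so the two previous steps yield $\mathbf L_i(t)\bx^{\bx_0,\widehat\bvarphi}(t)+\mathbf w_i(t)=\widehat\bu^i(t)$ for a.e.\ $t$, every $\bx_0$, every $i$. Differencing this identity for two initial data $\bx_0,\bx_0'$ gives $\mathbf L_i(t)\big(\bx^{\bx_0,\widehat\bvarphi}(t)-\bx^{\bx_0',\widehat\bvarphi}(t)\big)=\mathbf 0$; as $\widehat\bvarphi$ is affine, the difference solves the homogeneous closed-loop equation and equals $\Phi_{\widehat{\mathbf A}}(t,0)(\bx_0-\bx_0')$ with $\widehat{\mathbf A}(t):=\mathbf A(t)+\sum_j\mathbf P_j(t)\mathbf L_j(t)$, and $\Phi_{\widehat{\mathbf A}}(t,0)$ is invertible. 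Letting $\bx_0-\bx_0'$ run over a basis of $\mathbf X$ forces $\mathbf L_i(t)=\mathbf 0$ for a.e.\ $t$, for all $i$. But then $\widetilde{\mathbf A}^i\equiv\mathbf A$, hence $\widetilde{\mathbf b}^i\equiv\mathbf b^i$, the vector of Hypothesis~\ref{hp:reductionNash}, which is well defined and bounded so that the technical caveat above evaporates, and $\widehat\bu^i(t)=\mathbf w_i^{*}(t):=\argmax_{\bu^i\in\mathbf U^i}\big[\langle\mathbf P_i^*(t)\mathbf b^i(t),\bu^i\rangle+h^i(t,\bu^i)\big]$, which does not depend on the other players' strategies. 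Combined with $\mathbf w_i(t)=\widehat\bu^i(t)$ this gives $\mathbf w_i(t)=\mathbf w_i^{*}(t)$ for a.e.\ $t$. Thus every affine MPE coincides a.e.\ with the single profile $\bvarphi_i^{*}(t,\bx)=\mathbf w_i^{*}(t)$, which is the asserted uniqueness.

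The step I expect to be the main obstacle is purely technical: checking that substituting bounded affine feedbacks of the other players keeps the reduced problem within the scope of Hypothesis~\ref{hp:reduction}, i.e.\ that $\widetilde{\mathbf b}^i$ stays well defined and bounded even when some $\mathbf L_j\neq\mathbf 0$; this is not immediate from boundedness alone and has to be extracted from, or added to, the admissibility requirements on a candidate affine MPE, and it becomes irrelevant once $\mathbf L_i\equiv\mathbf 0$ is established. Secondary, routine points are the passage from ``for a.e.\ $t$, each $\bx_0$'' to ``for a.e.\ $t$, all $\bx_0$'', handled by differencing finitely many $\bx_0$ spanning $\mathbf X$, and the bookkeeping of the integrability conditions in Theorem~\ref{th:staticreduction}; assumption (iii) is used precisely to make the temporary maximizer single-valued and measurable, so that the necessity part of Theorem~\ref{th:staticreduction} applies.
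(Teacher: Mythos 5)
Your proposal is correct and follows essentially the same route as the paper: invoke Remark \ref{rem:on} to turn the MPE condition into a best-response optimal control problem that is still linear in the state, apply the ITM reduction to get a family of temporary problems whose unique (by strict concavity and coercivity of $h_i$) solutions do not depend on $\mathbf{x}_0$, and conclude that the affine feedbacks must have $\mathbf{L}_i\equiv\mathbf{0}$, hence coincide with the unique open-loop/degenerate Markov equilibrium. The only differences are cosmetic and in your favor: you extract state-independence by differencing trajectories from distinct initial data and inverting $\Phi_{\widehat{\mathbf{A}}}(t,0)$, where the paper instead uses perfection at every $(t_0,\mathbf{x}_0)$ directly, and you explicitly flag the well-definedness and boundedness of $\widetilde{\mathbf{b}}^i$ after substituting the others' affine feedbacks, a point the paper passes over silently.
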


Next, we will showcase the power of the ITM in the context of a specific economic application.

\section{An analytical integrated assessment model}
\label{sec:model}

The analytical advantages of the ITM over standard optimization methods can be exploited in a variety of dynamic models in economics and beyond. In this section, we will illustrate the method in the context of an application to climate economics. Provided that the conditions for the applicability of the ITM hold, a variety of climate models could be used for the illustration. Here, we will employ a version of the integrated assessment model in \cite{Golosov2014}. Our analysis will extend their basic model in several directions that might be of independent interest, including introducing multiple heterogenous regions, technological progress, strategic considerations, and deep (Knightian) uncertainty.

\subsection{Preferences and technology}

We consider two infinitely lived countries, $1$ and $2$. Country $i$, $i\in \{1,2\}$, chooses a consumption flow $C_{i}(t)$ earning a payoff $u_{i}(C_{i}(t))$ for $t\in[0,+\infty)$. The instantaneous utility functions $u_i$ are assumed to be strictly increasing, strictly concave and satisfy the usual Inada conditions. The common discount factor is $\rho>0$. To fix ideas, we consider country 1 to be representative of the ``global north," while country 2 represents the ``global south." The lifetime payoff of each country $i$ is given by:\footnote{We thus use a utility linear damage specification; see, for example, \cite{withagen1994pollution} and \cite{tahvonen1997fossil}.}
\begin{equation}  \label{eq:defUi}
U_{i}=\int_{0}^{\infty }e^{-\rho t}
\left[u_{i}\left( C_{i}(t)\right)
- \gamma_{i}\left( S(t)-\overline{S}\right)\right] dt
\end{equation}
where the variable $S(t)$ stands for the total stock of greenhouse gas emissions (GHG) relative to the pre-industrial level, $\overline{S}$. As different countries have different degrees of vulnerability to climate change, the parameters $\gamma_{i}>0$ capture the relative sensitivity of each country's payoff to GHG concentrations. The climate sensitivity parameter can be the result of a country's geography, but can also capture the ability to engage in adaptation. Of course, there are several different ways to model climate damage. Our formulation can be thought of as a reduced form of the one used, for example, in \cite{van2012there}. The additive linear structure can be justified as an approximation of the composition between the mapping from GHG emersions to temperatures, which is concave, and a convex mapping from temperatures to actual damages. Flow output can be produced by using an input, $K_i$, according to
\begin{equation}
Y_{i}(t)=A_{i}(t)f_i(K_i(t)).
\end{equation}
The functions $f_i$ are assumed to be differentiable, increasing, and concave. 
We assume that the input is ``dirty;" i.e., its use creates a flow of GHG emissions. For simplicity, folllowing \cite{Golosov2014}, we will assume that the input depreciates completely after it is used in production. Each country chooses an abatement effort $B_i(t)$ towards reducing the stock of GHG emissions. Like with the climate-sensitivity parameter, the abatement technology can capture several factors. For example, it might include reforestation efforts, carbon capture and storage systems, etc.

Next, we discuss the production technology and introduce policy interventions through a variety of transfers between the two countries. The total factor productivity (TFP) parameters, $A^{i}(t)$, in the two countries are as follows. For country 1 (the global north) we have
\begin{equation}
\label{eq:defA1}
A_{1}(t) = \overline{A}(t),
\end{equation}
where $\overline{A}(t)$ stands for the (exogenous) technological frontier, which is assumed to be continuous. For country 2 (the global south) we have  
\begin{equation}
\label{eq:defA2}
A_{2}(t) = h(R_a(t)) \overline{A}(t).
\end{equation}
In the above expression, $R_a$ stands for a production technology-specific transfer from country 1. The function $h$ is assumed to be positive, strictly increasing, and concave. Furthermore, we assume that $\lim_{R\to\infty}h(R) \in (0,1)$ (so that the TFP in country 2 is always lower than in country 1). Along the same lines, we assume that the efficiency of the abatement effort in country 2 may depend on a technology transfer from country 1. Henceforth, $R_b$ represents an abatement-specific technology transfer from country 1, to be used in improving the effectiveness of the abatement technology in country $2$.\footnote{We thus assume that the know-how needed to improve TFP and the abatement technology in country 2 must originate in country 1. Allowing for technology improvements from related investments in country 2 would add additional decision variables without significantly contributing to the issues we investigate here.} Flow output in the two countries is respectively given by
\[Y_1(t) = \overline{A}(t)f_1(K_1(t)), \quad  {\rm and }\; \quad  Y_2(t) = h(R_a(t)) \overline{A}(t)f_2(K_2(t)).\]
Aggregate feasibility requires that output in each country equals the respective total amount of resources used; i.e.,
\[
Y_1=K_1 + C_1 + B_1 + R_a + R_b; \qquad  Y_2=K_2 + C_2 + B_2.
\]

\subsection{The climate model}\label{subsect:theclimatemodel}

We follow the approach in \cite{Golosov2014} in modeling the two-way interplay between climate and economic activity. Building on the models described in \cite{nordhaus2003warming}, this approach incorporates explicitly the increase in GHG, and implicitly the effects of carbon sinks like the terrestrial biosphere and shallow and deep oceans. The modeling allows for nonlinear absorption of atmospheric carbon, but it abstracts from the delays of the economic impact of this carbon content and it does not separately keep track of the dynamics of different GHG. Importantly, this approach concentrates on temperatures and abstracts from the effects of precipitation.\footnote{As mentioned earlier, our modeling contribution and qualitative findings do not depend on the details of the climate model employed and we use the \cite{Golosov2014} model as an illustration.} 

The evolution of $S(t)$ depends on the aggregate use of the ``dirty" production input as well as on the aggregate investment in abatement. We assume that a fraction $\phi_{L}$ of emitted carbon stays permanently in the atmosphere, while a fraction $(1-\phi _{0})$ of the remaining emissions exits into the biosphere and the remaining part decays at geometric rate $\phi$. We use $P(t)$ and $T(t)$, respectively, to indicate the permanent and the temporary components of the total emissions, $S(t)$. Given the pre-industrial level of GHG\
concentration in the atmosphere $S(0)=\overline{S}=P(0)+T(0)$, we then have:

\begin{eqnarray}
P'(t) &=&\phi _{L}G\left(t, K_{1}(t),K_{2}(t),B_{1}(t),B_{2}(t),R_b(t)\right)
\label{eq:P} \\
T'(t) &=&-\phi T(t)+(1-\phi _{L})\phi _{0}
G\left(t,K_{1}(t),K_{2}(t),B_{1}(t),B_{2}(t),R_b(t)\right)  \label{eq:T} \\
S(t) &=&P(t)+T(t)  \label{eq:S}
\end{eqnarray}

The function $G:\mathbb{R}_{+}^{6}\rightarrow \mathbb{R}$ specifies how the use of the dirty input and the abatement technologies in the two countries affect the flow of emissions. We assume that $G$ is continuous, strictly increasing in the use of dirty capital, $K_i$, strictly decreasing in the abatement-related variables, $B_i$ and $R_b$, and that it has at most linear growth in $K_1,K_2,B_1,B_2$, and $R_b$.

\subsection{A special case}
\label{sub:plannerlinear}

For illustration purposes, we shall consider a special case of models where explicit results can be readily derived. This is specified in the following. 

\medskip

\medskip

\begin{Hypothesis} [A special case]
\label{hp:easycase}
\begin{itemize}
Suppose that:
\item[(i)] The map $G$ is independent of $t$ and given by:
$$
G(K,B,R_b)=
\eta_K (K_1+K_2){-} \eta_B (B_1(t)^{\theta_1} 
+ g(R_b) B_2(t)^{\theta_2}),
$$
where $\theta_1, \theta_2 \in (0,1)$ 
and $\eta_K,\eta_B>0$.

\medskip

\item[(ii)] The production function is linear: $f_i(K_i) = K_i$, so that
\begin{equation}
Y_{i}(t)=A_{i}(t)K_{i}(t),
\end{equation}
where $A_i(t)>1$ are as in (\ref{eq:defA1})-(\ref{eq:defA2}).

\medskip

\item[(iii)] The production technological frontier is constant:
\[
\overline{A}(t) \equiv \overline{A}>1.
\] 

\medskip

\item[(iv)] For $\sigma_1, \sigma_2 >0$, the instantaneous payoff functions $u_i$ are given by 
\[
u_i(C_i(t)) = \frac{C_i^{1-\sigma_i}}{1-\sigma_i}, \qquad \text{or} \qquad
u_i(C_i(t)) = \ln (C_i) \quad \text{(logarithmic case)}.
\]

\medskip

\item[(v)] The function $h$ is in $C^2([0,+\infty), (0,1))$, with $h'>0$, $h''<0$, and such that  $\lim_{R\to\infty}h(R) \in (0,1)$ and the same holds for the function $g$. Moreover, $\overline{A} h(0)>1$, and the map
$R_b\mapsto g(R_b)^{\frac{1}{1-\theta_2}}$ is strictly concave.
\end{itemize}
\end{Hypothesis}
We will restrict attention on cases where $G>0$. The concavity of $g(R_b)^{\frac{1}{1-\theta_2}}$ guarantees the uniqueness of the solution in what follows. We emphasize again that, as $K$ is a control variable in our model, our method allows for the functions $f_i$ to be strictly concave. We next specify and characterize the normative and positive arrangements that will be considered henceforth.\footnote{The above conditions will imply the existence and uniqueness of a solution to the auxiliary problem and the uniqueness of solutions to the social planners' problem as well as the uniqueness of open-loop Nash equilibria for the cases we will investigate next. Although we assume differentiability throughout for expository purposes, our results do not require differentiability and can be demonstrated using convex optimization techniques.}

\section{Normative and positive investigations}
\label{sec:institutionalarrangements}

In what follows, we will illustrate the ITM method  in the context of the analytical integrated assessment model in section 3. We first consider two normative benchmarks by characterizing the solutions to two social planner problems. We then study the Nash equilibria of a suitable non-cooperative dynamic game. We will discuss the reformulation given in Proposition \ref{pr:rewriting-general}, which lies at the heart of the ITM, in some detail in the context of the ``global planner" problem. As the same steps apply, we will skip the details for the other cases.

\subsection{The Global planner's (GP) problem}  
\label{sub:planner}

We will first consider the problem of a benevolent ``global planner (GP)" who has control over resources and production in both countries and who can freely transfer resources from one country to the other. Clearly, this defines an extreme normative benchmark, as it abstracts from any strategic considerations between the two countries, as well as mobility constraints, transportation costs, etc. 

The GP chooses: $C_1, C_2, B_1, B_2, R_a, R_b, K_1, K_2$, under the following single resource constraint:
\begin{multline}
\label{eq:plannerresourceconstraint1}
C_1(t)+ C_2(t)+ B_1(t)+ B_2(t)+ R_a(t)+ R_b(t)+ K_1(t)+ K_2(t) \\
\le Y_1(t) + Y_2(t) =  \overline{A}(t)f_1(K_1(t)) + h(R_a(t)) \overline{A}(t)f_2(K_2(t)).
\end{multline}
We assume that the GP maximizes the sum of the two countries' objectives: 
\begin{multline}
\label{eq:defUplannernew}
U^{P}=U_1+U_2=
\int_{0}^{\infty }e^{-\rho t}
\left[u_{1}\left( C_{1}(t)\right)+ u_{2}\left( C_{2}(t)\right)
- \gamma_{1}\left( S(t)-\overline{S}\right)
-  \gamma_{2}\left( S(t)-\overline{S}\right)
\right] dt \\
= \frac1\rho(\gamma_{1}+\gamma_2)\overline{S} +
\int_{0}^{\infty }e^{-\rho t}
\left[u_{1}\left( C_{1}(t)\right)+ u_{2}\left( C_{2}(t)\right)
- (\gamma_{1}+\gamma_2) S(t)\right] dt.
\end{multline}

To unburden the presentation, we will occasionally abuse notation and use $C$ to indicate the vector $(C_1, C_2)$, $R$ to indicate $(R_a, R_b)$, etc. Formally, the above optimal control problem is characterized by: (i) the state equation given by the system (\ref{eq:P})-(\ref{eq:S}), and (ii) the set of admissible policies\footnote{We denote by $L_{\rho}^1(\mathbb{R}_+)$ the set $\left \{ 
f\colon \mathbb{R}_+ \to \mathbb{R} \; : \; \int_0^\infty e^{-\rho t} \mid f(t) \mid  dt <\infty\right \}$, while $L_{loc}^1$ stands for the set of functions that are locally Lebesgue-integrable. The set of admissible strategies is chosen in a way that guarantees that the state equation and the objective functional are well-defined.}
\begin{multline}
\label{eq:admstratplanner1}
\mathcal{U}^{p_1} :=
\Bigg \{
C, B, K, R \in L_{loc}^1(\mathbb{R}_+;  \mathbb{R}^8_+) \; :
t \mapsto u_1(C_1(t)) \in L_{\rho}^1(\mathbb{R}_+), \\ t \mapsto u_2(C_2(t)) \in L_{\rho}^1(\mathbb{R}_+), t\mapsto S(t) \in L_{\rho}^1(\mathbb{R}_+)\\
\text{ and } (\ref{eq:plannerresourceconstraint1}) \text{ holds for all $t\geq 0$}
\Bigg \}
\end{multline}
and (iii) the objective functional given by (\ref{eq:defUplannernew}).

We will apply the ITM to this problem by applying Proposition \ref{pr:rewriting-general}. We begin by defining:  
\begin{equation}
\label{eq:defplannerxu}
\mathbf{x}(t) := \begin{pmatrix} P(t) \\ T(t) \\ \end{pmatrix},\quad \mathbf{u}(t) := \left(C(t),B(t),K(t),R_a(t),R_b(t)\right)^{T},
\end{equation}
\begin{equation}
\label{eq:defplannerAf}
\mathbf{A}(t) := \begin{pmatrix} 0 & 0 \\ 0 & -\phi \\ \end{pmatrix}, \quad \mathbf{f}(t,\mathbf{u}(t)) := \begin{pmatrix} \phi_L G\left(t,K(t),B(t),R_b(t)\right) \\ (1-\phi_L)\phi_0 G\left(t,K(t),B(t),R_b(s)\right)  \end{pmatrix},
\end{equation}
\begin{equation}
\label{eq:defplannerah}
\mathbf{a}(t) := -(\gamma_1+\gamma_2)\begin{pmatrix} 1 \\ 1 \\ \end{pmatrix},\quad  h(t,\mathbf{u}(t)) := u_1(C_1(t))+u_2(C_2(t)) + (\gamma_1+\gamma_2)\overline{S},
\end{equation}
\begin{equation}
\label{eq:defplannerg}
\mathbf{g}(t,\mathbf{x}(t)) := 0,
\end{equation}
and
\begin{multline}
\label{eq:defplannerl}
\mathbf{l}(t,\mathbf{u}(t)) := (-C(t),-B(t),-K(t),-R_a(t),-R_b(t),\\ C(t)+B(t)+K(t)+R_a(t)+R_b(t) - \overline{A}(t)f_1(K_1(t))+h(R_a(t))\overline{A}(t)f_2(K_2(t)))^{T}.
\end{multline}
With these choices, the abstract Problem defined in \eqref{eq:stategeneral}-\eqref{eq:ADMCONTRgeneral} reduces to Problem  \eqref{eq:plannerresourceconstraint1}-\eqref{eq:admstratplanner1}. Moreover:
\begin{equation*}
\Phi_A^{\ast}(t+\tau,t) = \exp\left(\tau A\right) = \begin{pmatrix} 1 & 0 \\ 0 & e^{-\phi \tau} \\ \end{pmatrix},
\end{equation*}
and, therefore,
\begin{equation*}
\mathbf{b}(t) = -(\gamma_1+\gamma_2) \begin{pmatrix} {1}/{\rho} \\ {1}/{(\rho+\phi)} \\ \end{pmatrix}.
\end{equation*}
As discussed earlier, the map $t \to \mathbf{b(t)}$ is a crucial component of the ITM. It captures the impact on the objective function of the future evolution of the state variables, as well as their marginal (exogenous) impact on the intertemporal payoffs. The problem studied in this section provides an instance of this principle. In this example, the first element of $\mathbf{b(t)}$, $\frac{(\gamma_1+\gamma_2)}{\rho}$, captures the discounted intertemporal disutility stream generated by an extra unit of GHG added to the permanent stock, $P(t)$. The second element, $ \frac{(\gamma_1+\gamma_2)}{\rho+\phi}$, gives the analogous expression for the addition of an extra unit of GHG to the transitory stock, $T(t)$, where $\phi$ is the rate at which GHG emissions decay over time.

Returning to the global planner's problem, note that Assumption \ref{hp:reduction} is satisfied, due to the assumptions in Section \ref{subsect:theclimatemodel}. We can thus apply directly Proposition \ref{pr:rewriting-general} yielding the following reformulation of the GP's objective functional:

\begin{multline}
\label{eq:defUPbisnew}
U^P=
(\gamma_{1}+\gamma_2)\left[
\frac{\overline{S}}{\rho} -
\frac{P(0)}{\rho}-\frac{T(0)}{\rho+\phi}
\right]+
\\[2mm]
\notag
\int_{0}^{\infty }e^{-\rho t}
\left[u_{1}\left(C_1(t)\right)+ u_{2}\left(C_2(t)\right)
-(\gamma_{1}+\gamma_2) \Phi G\left(t,K(t),B(t),R_b(t)\right)\right]dt
\end{multline}
where
\begin{equation}
\label{eq:defPhi}
\Phi:=
\left[
\frac{\phi_{L}}{\rho}+
\frac{(1-\phi _{L})\phi _{0}}{\rho+\phi}
\right].
\end{equation}

Applying Theorem 2.5, we can now characterize the outcomes of the GP's dynamic optimization problem.
\begin{Corollary}
\label{cor:rewriting-planner-max} 
Assume that, for every $t\ge 0$, $(C^*(t), B^*(t), K^*(t), R^*(t))$ solves the following temporary optimization problem:
\begin{equation}
\label{eq:staticglobalplannerSpecific}
\max_{C, B, K, R} \; \Big  [ u_1(C_1) +  u_2(C_2) - (\gamma_{1}+\gamma_2) \Phi G(t,K,B,R_b) 
\Big ]
\end{equation}

over the feasible set $\mathcal{E}^{GP}(t)\subseteq \R^8$ described by the inequalities 
\begin{equation}
\label{eq:StaticConstraintsPlannerSpecific}
\left \{
\begin{array}{l}
C_1, C_2, B_1, B_2, K_1, K_2, R_a, R_b \geq 0,\\[8pt]
C_1 + C_2 + B_1 + B_2 + K_1 + K_2 + R_a + R_b \leq
\overline{A}(t) \left [f_1(K_1) + h(R_a) f_2(K_2) \right ],
\end{array}
\right .
\end{equation}
where 
\begin{equation*}
\Phi:=
\left[
\frac{\phi_{L}}{\rho}+
\frac{(1-\phi _{L})\phi _{0}}{\rho+\phi}
\right].
\end{equation*}
In addition, assume $(C^*(t), B^*(t), K^*(t), R^*(t))$ is admissible for the original dynamic optimization problem. Then the map $t \mapsto (C^*(t), B^*(t), K^*(t), R^*(t))$ solves the GP's problem.
Conversely, a solution to the GP's problem constitutes, for a.e. $t\ge 0$, a solution to the temporary problem above.
Consequently, if the solution to the temporary problem is unique for a.e. $t\ge 0$, then the solution to the GP's problem is a.e. unique.
\end{Corollary}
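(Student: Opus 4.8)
The plan is to recognize the global planner's problem as a concrete instance of the abstract optimal control problem of Section~\ref{app:reductioncontrol} and then read the three assertions off Theorem~\ref{th:staticreduction}. Concretely, I would take the identifications \eqref{eq:defplannerxu}--\eqref{eq:defplannerl} as given and check that the resulting abstract data fall under Assumption~\ref{hp:reduction}, so that Proposition~\ref{pr:rewriting-general} and Theorem~\ref{th:staticreduction} apply.

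The bulk of the work is the translation. Here $\mathbf{A}$ is the constant matrix $\mathrm{diag}(0,-\phi)$, hence trivially locally integrable; the at-most-linear growth of $G$ in $(K_1,K_2,B_1,B_2,R_b)$ assumed in Section~\ref{subsect:theclimatemodel} gives the bound in Assumption~\ref{hp:reduction}(ii); and, as already computed above, $\mathbf{b}(\cdot)$ is not merely bounded but constant, $\mathbf{b}(t)\equiv -(\gamma_1+\gamma_2)\big(\tfrac1\rho,\tfrac1{\rho+\phi}\big)^{\top}$, which is Assumption~\ref{hp:reduction}(iii). One then verifies that the admissible set \eqref{eq:ADMCONTRgeneralreduced} for this data coincides with $\mathcal{U}^{p_1}$ of \eqref{eq:admstratplanner1}, that $\mathcal{J}$ coincides with $U^P$, and --- since $\mathbf{g}\equiv 0$ --- that there are no state constraints and the admissible set is independent of the initial condition. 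For the temporary problems, \eqref{eq:defplannerAf} gives $\langle\mathbf{b}(t),\mathbf{f}(t,\mathbf{u})\rangle = -(\gamma_1+\gamma_2)\big[\tfrac{\phi_L}{\rho}+\tfrac{(1-\phi_L)\phi_0}{\rho+\phi}\big]G(t,K,B,R_b) = -(\gamma_1+\gamma_2)\Phi\,G(t,K,B,R_b)$, while $h(t,\mathbf{u})=u_1(C_1)+u_2(C_2)+(\gamma_1+\gamma_2)\overline{S}$; since the last term is independent of $\mathbf{u}$, the generic temporary problem \eqref{eq:staticOCgeneral} has the same maximizers as \eqref{eq:staticglobalplannerSpecific}, and $\{\mathbf{u}:\mathbf{l}(t,\mathbf{u})\le\mathbf{0}\}$, with $\mathbf{l}$ from \eqref{eq:defplannerl}, is precisely the feasible set $\mathcal{E}^{GP}(t)$ cut out by \eqref{eq:StaticConstraintsPlannerSpecific}.

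With this dictionary in place the conclusion is immediate: the forward implication is Theorem~\ref{th:staticreduction}(i) (using that $(C^*,B^*,K^*,R^*)$ is assumed admissible and solves the temporary problem for every $t$, hence for a.e.~$t$); the converse is item~(ii); and the a.e.~uniqueness claim is item~(iii). No state-constraint caveats intervene since $\mathbf{g}\equiv 0$. The only genuinely delicate step is the bookkeeping in the second paragraph: showing that $\mathcal{U}^{p_1}$ and \eqref{eq:ADMCONTRgeneralreduced} really coincide (not merely that one contains the other), which amounts to reconciling the $L^1_{loc}$/$L^1_{\rho}$ conditions on $(C_i,S)$ in \eqref{eq:admstratplanner1} with the condition on $t\mapsto h(t,\mathbf{u}(t))$ plus the finiteness supplied by Proposition~\ref{pr:rewriting-general}, and checking that deleting the constant $(\gamma_1+\gamma_2)\overline{S}$ from the temporary criterion changes neither its set of maximizers nor the ranking of admissible controls. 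Everything else is a direct appeal to Theorem~\ref{th:staticreduction}.
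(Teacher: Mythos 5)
Your proposal is correct and follows essentially the same route as the paper: the paper's proof of Corollary \ref{cor:rewriting-planner-max} is a one-line appeal to Theorem \ref{th:staticreduction} with the identifications \eqref{eq:defplannerxu}--\eqref{eq:defplannerl}, which is precisely your argument, with the verification of Assumption \ref{hp:reduction} and the matching of admissible sets (which the paper leaves implicit, having noted in the main text that Assumption \ref{hp:reduction} holds) spelled out explicitly.
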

\begin{proof}
See Appendix \ref{subapp:generalplannerspecificcas}.
\end{proof}


Clearly, a global planner who can costlessly allocate resources across the two countries would not choose to produce in country 2, as country 1 is more efficient.\footnote{Note that $R_b = 0$ in \eqref{eq:RbB2plannernew-bis-ultra} and \eqref{eq:RbB2plannernew-bis-ultra-log}
if and only if $\frac{\eta_K}{\eta_B} \geq g(0)^{\theta_2} g'(0)^{1-\theta_2} \theta_2^{{\theta_2}} (\overline{A} -1)$. To see this, observe that, in that case, the derivative of (\ref{eq:chesevealremark}) evaluated $R_b=0$ is positive. Conversely, when $\frac{\eta_K}{\eta_B} < g(0)^{\theta_2} g'(0)^{1-\theta_2} \theta_2^{{\theta_2}} (\overline{A} -1)$, the interior maximum point $\bar R_b$ satisfies: $\frac{\eta_K}{\eta_B} =  \left [ g(\bar R_b)^{\theta_2} g'(\bar R_b)^{1-\theta_2} \theta_2^{{\theta_2}} (\overline{A} -1) \right ]$.} The interpretation is straightforward. When the marginal abatement efficiency ($\eta_B$) is small relative to the marginal addition to GHG emissions generated by production ($\eta_K$), the GP will refrain from subsidizing abatement in country 2. 

\subsection{The restricted planner's (RP) problem}
\label{subsec:plannerNOtransfer}

As a second normative benchmark we consider the more relevant case of a world planner who cannot directly move resources across the two countries. The planner can still invest in producing and in improving the abatement technology through choosing positive $R_a$ and $R_b$. In this case we find that, as is more natural, production takes place in both countries. As before, the RP planner maximizes the functional in \eqref{eq:defUplannernew}. The difference is reflected in the constraints, as we now replace the single resource constraint (\ref{eq:plannerresourceconstraint1}) with the following  two constraints, one for each country:
\begin{equation}
\label{eq:planner-notrasnfer-resourceconstraintmain}
\left \{
\begin{array}{l}
C_1(t)+ B_1(t)+ K_1(t)+ R_a(t)+ R_b(t)\le Y_1(t)= \overline{A}(t)f_1(K_1(t))\\ 
C_2(t)+ B_2(t)+ K_2(t) \le Y_2(t) =
\overline{A}(t)h(R_a(t))f_2(K_2(t)).
\end{array}
\right .
\end{equation}
This implies that the set of admissible policies for the RP's problem is now the set 
$\mathcal{U}^{RP}$, which is defined exactly as $\mathcal{U}^{GP}$ in \eqref{eq:admstratplanner1},
but with the constraints \eqref{eq:planner-notrasnfer-resourceconstraintmain} in the place of \eqref{eq:plannerresourceconstraint1}.

Similarly to Corollary  \ref{cor:rewriting-planner-max}, we can specify Theorem \ref{th:staticreduction} for this case as follows.
\begin{Corollary}
\label{cor:rewriting-planner-notransfer} 
Assume that, for every $t\ge 0$, $(C^*(t), B^*(t), K^*(t), R^*(t))$ solves the following temporary optimization problem:
\[
\max_{C, B, K, R} \; \Big  [ u_1(C_1) + u_2(C_2) - (\gamma_1 + \gamma_2)\Phi G(t,K,B,R_b) 
\Big ]
\]
over the feasible set $\mathcal{E}^{RP}(t)\subseteq \R^8$ defined by the inequalities: 
\begin{equation}
\label{eq:StaticConstraintsPlannerNoTransfer}
\left \{
\begin{array}{l}
C, B, K, R \geq 0\\[8pt]
C_1 +  B_1  + K_1  + R_a + R_b \leq
\overline{A}(t) f_1(K_1)
\\[8pt]
C_2 + B_2 + K_2  \leq
\overline{A}(t) h(R_a) f_2(K_2)
\end{array}
\right .
\end{equation}
In addition, assume $(C^*(t), B^*(t), K^*(t), R^*(t))$ is admissible for the original dynamic optimization problem. Then the map $t \mapsto (C^*(t), B^*(t), K^*(t), R^*(t))$ solves the RP's problem.
Conversely, a solution to the RP's problem constitutes, for a.e. $t\ge 0$, a solution to the above temporary problem.
Consequently, if the solution of the temporary problem is unique for a.e. $t\ge 0$, then the solution to the RP's problem is a.e. unique.
\end{Corollary}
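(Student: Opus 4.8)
The plan is to recognize that this corollary is the exact analogue of Corollary \ref{cor:rewriting-planner-max} for the restricted planner, and that — just as for the global planner — it follows by specializing Theorem \ref{th:staticreduction} to the present data. First I would set up the abstract problem in the form \eqref{eq:stategeneral}--\eqref{eq:ADMCONTRgeneral} exactly as in Section \ref{sub:planner}: the state is $\mathbf{x}(t)=(P(t),T(t))^T$, the control is $\mathbf{u}(t)=(C(t),B(t),K(t),R_a(t),R_b(t))^T$, and $\mathbf{A}(t)$, $\mathbf{f}(t,\mathbf{u})$, $\mathbf{a}(t)$, $h(t,\mathbf{u})$ are taken verbatim from \eqref{eq:defplannerAf}--\eqref{eq:defplannerah}. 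The only change from the GP case is in the control constraint map $\mathbf{l}$: instead of the single aggregate resource inequality in \eqref{eq:defplannerl}, we encode the two country-specific constraints \eqref{eq:planner-notrasnfer-resourceconstraintmain} together with the nonnegativity of all eight components of $\mathbf{u}$. With this substitution the admissible set $\mathcal{U}^{RP}$ coincides with the abstract $\mathcal{U}(\mathbf{x}_0)$ of \eqref{eq:ADMCONTRgeneralreduced}, and since there are no state constraints ($\mathbf{g}\equiv 0$), the admissible set does not depend on $\mathbf{x}_0$.

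Next I would verify that Assumption \ref{hp:reduction} holds. Part (i) is immediate: $\mathbf{A}$ is constant, so $\Phi_{\mathbf{A}}^*(t+\tau,t)=\mathrm{diag}(1,e^{-\phi\tau})$. Part (ii) is the at-most-linear-growth hypothesis on $G$ imposed in Section \ref{subsect:theclimatemodel} (and respected by the special case in Assumption \ref{hp:easycase}), which gives $|\mathbf{f}(t,\mathbf{u})|\le C(1+|\mathbf{u}|)$. Part (iii) is the explicit computation $\mathbf{b}(t)\equiv -(\gamma_1+\gamma_2)(1/\rho,\,1/(\rho+\phi))^T$, which is constant and hence bounded. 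Then Proposition \ref{pr:rewriting-general} applies and rewrites the RP's objective functional $U^P$ exactly as in \eqref{eq:defUPbisnew}, with the same constant $\Phi$ defined in \eqref{eq:defPhi}; the only difference relative to the GP case is that the admissible controls now satisfy \eqref{eq:planner-notrasnfer-resourceconstraintmain}. Since the state variable has disappeared from the reformulated functional, the pointwise structure is identical to the GP reformulation.

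Finally I would invoke Theorem \ref{th:staticreduction} directly. The temporary optimization problem \eqref{eq:staticOCgeneral} with the present data reads
\[
\max_{C,B,K,R}\;\Big[u_1(C_1)+u_2(C_2)-(\gamma_1+\gamma_2)\Phi\,G(t,K,B,R_b)\Big]
\]
over the feasible set defined by $\mathbf{l}(t,\mathbf{u})\le\mathbf{0}$, which is precisely $\mathcal{E}^{RP}(t)$ as described by \eqref{eq:StaticConstraintsPlannerNoTransfer}. Then: sufficiency is Theorem \ref{th:staticreduction}(i) — a measurable selection $t\mapsto(C^*(t),B^*(t),K^*(t),R^*(t))$ of temporary maximizers that is admissible for the dynamic problem is optimal for the RP's problem (and, absent state constraints, for every initial condition); the converse statement is Theorem \ref{th:staticreduction}(ii) — any RP-optimal policy solves the temporary problem for a.e.\ $t$; and the uniqueness claim is Theorem \ref{th:staticreduction}(iii). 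The main point requiring a sentence of care — and the only genuine obstacle — is checking that the abstract admissible set of \eqref{eq:ADMCONTRgeneralreduced} really does collapse to $\mathcal{U}^{RP}$ under the chosen $\mathbf{l}$; this uses Proposition \ref{pr:rewriting-general} exactly as in the GP case (the map $t\mapsto\langle\mathbf{a}(t),\mathbf{x}(t)\rangle$ lies in $L^1_\rho$ automatically once $\mathbf{u}(\cdot)\in L^1_\rho$ and $t\mapsto u_i(C_i(t))\in L^1_\rho$), so that the integrability requirements in $\mathcal{U}^{RP}$ are exactly the ones needed, and the equivalence is verbatim the argument of Corollary \ref{cor:rewriting-planner-max}. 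Hence the proof amounts to "same as Corollary \ref{cor:rewriting-planner-max}, replacing \eqref{eq:plannerresourceconstraint1} by \eqref{eq:planner-notrasnfer-resourceconstraintmain}," which is what I would write, with a pointer to Appendix \ref{subapp:generalplannerspecificcas}.
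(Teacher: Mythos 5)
Your proposal is correct and follows exactly the paper's own route: the paper's proof likewise keeps the data \eqref{eq:defplannerxu}--\eqref{eq:defplannerg} from the global-planner case, replaces only the constraint map $\mathbf{l}$ with one encoding the two country-specific resource constraints \eqref{eq:planner-notrasnfer-resourceconstraintmain} together with nonnegativity, and then applies Theorem \ref{th:staticreduction}. Your version merely spells out the verification of Assumption \ref{hp:reduction} and the identification of the admissible sets, which the paper leaves implicit.
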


Like before, one can return to the illustrative special case described in Assumption \ref{hp:easycase}; see Appendix \ref{subapp:restrictedplannerspecificcas}, in particular Proposition \ref{pr:planner-notrasnfer-easy}.

\subsection{The non-cooperative dynamic game}
\label{subsec:Nash}

Normative issues aside, an important question concerning implementation is whether outcomes of interactions between self-interested countries are likely to be efficient. 
To investigate these questions, we now turn to the study of Nash equilibria of an underlying non-cooperative game between the two countries. 
As before, we first investigate the game keeping $G$, $z_i$ $f_i$, $u_i$, $g$, and $h$ in a general form. Player 2 (the global South) takes $R_a$ and $R_b$ as given, as we assume that they are chosen by country 1 (the global North). Given the strategy of the other player, each country $i$ maximizes its own payoff, which can be written as
\begin{equation}
\label{eq5:defUPbisgameNashtransfer}
U_i=
\gamma_i\frac{\overline S}{\rho}+ \int_{0}^{\infty }e^{-\rho t}
u^{i}\left(C_i(t)\right)dt
-\gamma_{i} \int_{0}^{\infty }e^{-\rho t} (P(t)+T(t)) dt.
\end{equation}

We first discuss how the general result in Theorem \ref{th:staticreductionNash} applies in this specific setup. The proof can be found in Appendix \ref{subapp:Nashspecificcas}.

\begin{Corollary}
\label{cor:rewriting+conditions-Nash}
Assume that, for every $t\ge 0$, $(C^*(t), B^*(t), K^*(t), R^*(t))$ is a Nash equilibrium of the temporary game where:\footnote{In order to rule out outcomes where (\ref{eq:vincolo1gameplayer1}) or (\ref{eq:vincolo2gameplayer2}) do not hold, we assume that the payoff to either player is $-\infty$ if their budget constraint is violated.}
\begin{itemize}
\item[(i)] Given $B_2,K_2\geq 0$, Country 1 chooses $(C_1,B_1, K_1, R_a, R_b) \geq 0$ to maximize 
\begin{equation}
\label{eq:H1-per-Nash}
H^1(t,C_1,B_1,K_1,R_a,R_b):= u_{1}\left(C_1\right) -\gamma_{1}\Phi 
G\left(t,K(t),B(t),R_b(t)\right),
\end{equation}
under the constraint:
\begin{equation}
\label{eq:vincolo1gameplayer1}
C_1+B_{1}+R_a+R_b+I_1 \le Y_1(t)=\overline{A}(t) f_1(K_1).
\end{equation}
\item[(ii)] Given $B_1,K_1, R_a, R_b\geq 0$, country 2 chooses $(C_2,B_2,K_2) \geq 0$ to maximize
\begin{equation}
\label{eq:H2-per-Nash}
H^2(t,C_1,B_1,K_1,R_a,R_b):= u_{2}\left(C_2\right) -\gamma_{2}\Phi 
G\left(t,K(t),B(t),R_b(t)\right)
\end{equation}

under the constraint:
\begin{equation}
\label{eq:vincolo2gameplayer2}
C_2+B_{2}+K_2 \le Y_2(t)=\overline{A}(t)h(R_a)f_2(K_2).
\end{equation}
\end{itemize}

In addition, assume that $(C^*(t), B^*(t), K^*(t), R^*(t))$ is an admissible strategy for the original differential game.
Then the map $t \to (C^*(t), B^*(t), K^*(t), R^*(t))$ is an open-loop Nash equilibrium to the original differential game. Conversely, every Nash equilibrium of the original differential game is, for a.e. $t\ge 0$, a
Nash equilibrium for the above temporary game.
Consequently, if the Nash equilibrium of the temporary game is unique for a.e. $t\ge 0$, then the Nash equilibrium to the original differential game is a.e. unique.
\end{Corollary}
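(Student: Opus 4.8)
The plan is to verify that this Corollary is an instance of the general equivalence result Theorem \ref{th:staticreductionNash} applied to the non-cooperative game of Section \ref{subsec:Nash}, after the ITM reformulation of the two objective functionals has been carried out. First I would identify the abstract data of the game with the concrete objects of the climate model: the state is $\mathbf{x}(t)=(P(t),T(t))^\top$, with $\mathbf{A}$ and $\mathbf{f}$ as in \eqref{eq:defplannerAf}; player $i$'s control is the relevant subvector (player $1$ controls $(C_1,B_1,K_1,R_a,R_b)$, player $2$ controls $(C_2,B_2,K_2)$); the linear-in-state coefficients are $\mathbf{a}^i(t)=-\gamma_i(1,1)^\top$; and $h^i$ is $u_i(C_i)$ up to the constant $\gamma_i\overline S$. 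Since $\mathbf{A}$ is the same constant diagonal matrix used in the planner problems, the computation of $\mathbf{b}^i$ is identical to the one already performed in Section \ref{sub:planner}, giving $\mathbf{b}^i(t)=-\gamma_i(1/\rho,\,1/(\rho+\phi))^\top$, so that $\langle \mathbf{b}^i(t),\mathbf{f}(t,\mathbf{u})\rangle = -\gamma_i\Phi\,G(t,K,B,R_b)$ with $\Phi$ as in \eqref{eq:defPhi}. I would then check that Assumption \ref{hp:reductionNash} holds: (i) $\mathbf{A}$ constant is trivially locally integrable; (ii) the at-most-linear-growth hypothesis on $G$ in $K_1,K_2,B_1,B_2,R_b$ (imposed in Section \ref{subsect:theclimatemodel}) gives the linear bound on $\mathbf{f}$; (iii) $\mathbf{b}^i$ is a constant vector, hence bounded.

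With this dictionary in place, Proposition \ref{pr:rewriting-generalNash} rewrites each $U_i$ exactly as in \eqref{eq5:defUPbisgameNashtransfer} plus the constant $\gamma_i\overline S/\rho$ absorbed into $h^i$, with integrand $u_i(C_i)-\gamma_i\Phi\,G(t,K,B,R_b)$; this integrand is precisely $H^i$ in \eqref{eq:H1-per-Nash}--\eqref{eq:H2-per-Nash}. The control constraint sets $\mathbf{U}_{i,\mathbf{u}^{-i}_t}$ of the temporary game reduce, after dropping the (here absent) state constraints, to the budget sets \eqref{eq:vincolo1gameplayer1} and \eqref{eq:vincolo2gameplayer2} together with nonnegativity — note that the feasible set of player $2$ depends on player $1$'s choice of $R_a$ through $Y_2=\overline A h(R_a)f_2(K_2)$, and player $2$ takes $R_a,R_b$ as given, which is exactly the asymmetric information structure built into Definition \ref{def:Nashstatic}. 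Thus the temporary Nash equilibrium of Definition \ref{def:Nashstatic} for this game is exactly the pair of fixed-point conditions (i)--(ii) in the statement, i.e. the set $\mathbf{NE}_t$ coincides with the set of temporary Nash equilibria described here.

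The three assertions of the Corollary then follow directly: the \emph{sufficiency} direction (a selection of $\mathbf{NE}_t$ that is admissible for the dynamic game is an open-loop Nash equilibrium) is Theorem \ref{th:staticreductionNash}(i); the \emph{necessity} direction (every open-loop Nash equilibrium of the dynamic game lies in $\mathbf{NE}_t$ for a.e.\ $t$) is Theorem \ref{th:staticreductionNash}(ii), which requires checking its upper/lower semicontinuity and coercivity hypotheses in this setting — but here $h^i=u_i(C_i)$ is continuous and, under Assumption \ref{hp:easycase}(iv), strictly concave, $G$ is continuous, and the budget-plus-nonnegativity constraints define a lower-semicontinuous correspondence, while the effective feasible sets are compact once one observes that the resource constraint bounds all controls, giving the required coercivity; and the \emph{uniqueness} statement follows from Theorem \ref{th:staticreductionNash}(iii). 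The only genuinely non-routine step is the verification of the regularity hypotheses of Theorem \ref{th:staticreductionNash}(ii): one must confirm that the map $\mathbf{u}_i\mapsto\langle\mathbf{b}^i(t),\mathbf{f}(t,\overline{\mathbf{u}}_{-i}(t),\mathbf{u}_i)\rangle+h^i$ is upper semicontinuous and coercive in $\mathbf{u}_i$ uniformly in $t$; the main obstacle is the uniformity in $t$, but since the only $t$-dependence enters through $\overline A(t)$ (constant in the special case of Assumption \ref{hp:easycase}, and continuous with a positive lower bound in general) and through the continuous, at-most-linearly-growing $G$, this uniformity is not hard to establish — the delicate point is simply to make sure the induced feasible correspondence does not degenerate, which is ruled out by $\overline A h(0)>1$ in Assumption \ref{hp:easycase}(v).
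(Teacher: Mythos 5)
Your proposal is correct and follows essentially the same route as the paper: the paper's proof consists precisely of the identification you describe (carried out in Proposition \ref{pr:rewriting-Nash}, yielding $\mathbf{b}^i(t)=-\gamma_i\,(1/\rho,\,1/(\rho+\phi))^\top$ and the integrand $u_i(C_i)-\gamma_i\Phi G$) followed by an appeal to Theorem \ref{th:staticreductionNash}, and you are in fact somewhat more explicit than the paper in checking the semicontinuity and coercivity hypotheses of its part (ii). One small correction: under Assumption \ref{hp:easycase} the budget sets are \emph{not} compact (linear $f_i$ with $\overline A>1$ lets $K_i$ and $C_i$ grow without bound), so the necessity step must rest on the negative coercivity of the objective supplied by the term $-\gamma_i\Phi\eta_K K_i$ in $-\gamma_i\Phi G$, not on boundedness of the feasible set.
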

\begin{proof}
See Appendix \ref{subapp:Nashspecificcas}
\end{proof}

Like before, additional insights, as well as necessary and sufficient conditions for $R_{b}>0$, can be obtained if we impose further structure on the problem.\footnote{See Proposition \ref{pr:Nash-easy} in Appendix \ref{subapp:Nashspecificcas}.}




\section{Numerical explorations}  

An advantage of the ITM is that it allows for analytical comparisons. Indeed, explicit comparison results are provided for the specific cases described in Hypothesis \ref{hp:easycase} under logarithmic utility. To unburden the exposition, we report these comparisons in Appendix \ref{app:proofs}. Our tractable dynamic framework can be used for quantitative analysis. Although we will not pursue a calibration here, in this section we will use a numerical example to illustrate some quantitative features of our modeling approach. The relevant parameters we use are listed in Table 1 in the Appendix.\footnote{The codes used in this section are freely available at: \url{https://github.com/crricci/climate_change_optimal/}.} We will focus attention on various comparisons between the solutions to the global planner (GP), restricted planner (RP) and Nash (N) solutions for several variables of interest. We will pay particular attention to the GHG emissions and temperatures paths under different scenarios, including under heterogeneous damages. 

We map carbon concentrations into global temperatures, $T$, using the following expression; see, for example, Golosov et al. (2014):
\begin{equation}
Temp(S_{t})=3\ln \left( \frac{S_{t}}{\overline{S}}\right) /\ln (2) \nonumber,
\end{equation}%
where $\overline{S}$ is the pre-industrial level of the GHG concentration. In what follows, we will focus on the sensitivity of various outcomes to (i) the intertemporal elasticity of substitution, and (ii) the North-South heterogeneity.

\subsection{Intertemporal elasticity of substitution}
Here we investigate the sensitivity of model outcomes to the value of the intertemporal elasticity of substitution, $\sigma$. In the next section we will investigate the effects of heterogeneity, including in the value of discounting, which is an important parameter in climate economics.\footnote{The sensitivity with respect to $\sigma$ is also discussed in \cite{Golosov2014}, who concentrate on the logarithmic case for their theoretical investigations.} Using values from our benchmark parametrization  (see Appendix \ref{app:Tables}), here we vary $\sigma$ in $[0.1,2]$. Figure \ref{fig:differentSigma} illustrates the differences in net GHG emissions and in temperatures between the Nash case and the GP solution (first-best).  

\begin{figure}[!htpb]
\centering

\begin{subfigure}{0.35\textwidth}  
\includegraphics[width=\textwidth]{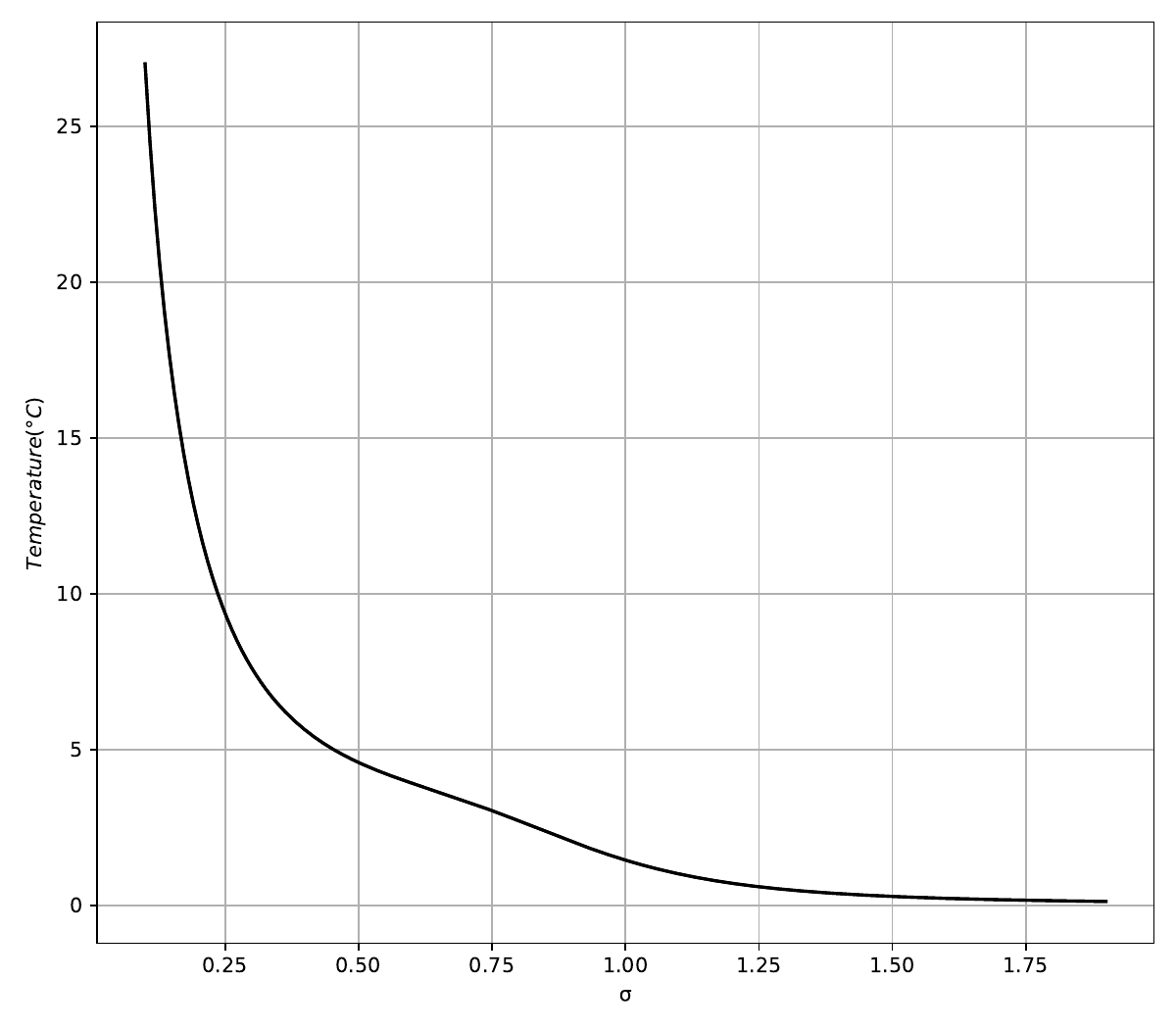}
\caption{Absolute change in the final temperature}
\label{fig:abschangeTsigma}
\end{subfigure}
\begin{subfigure}{0.35\textwidth}  
\includegraphics[width=\textwidth]{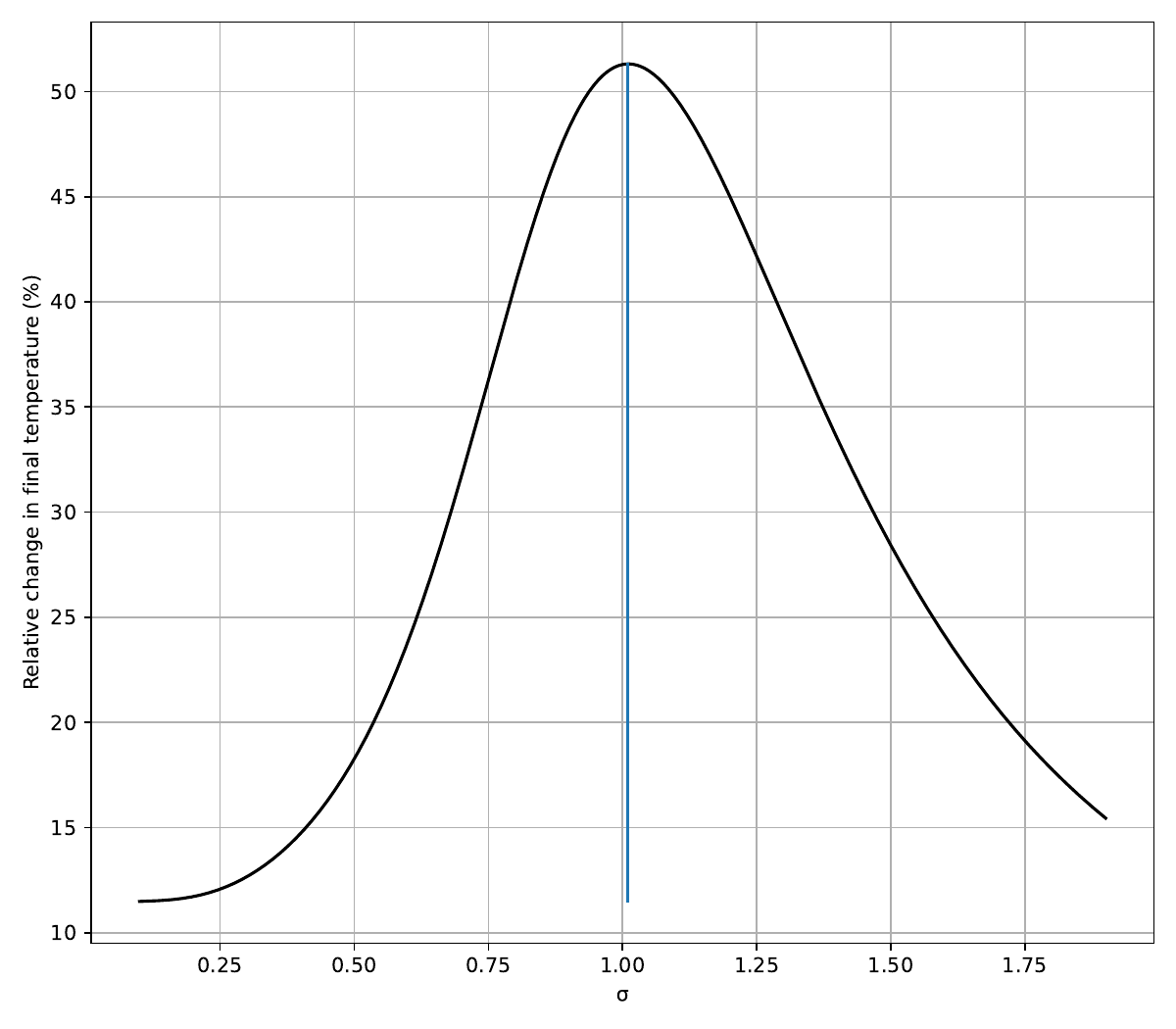}
\caption{Relative change in the final temperature (\%)}
\label{fig:relchangeTsigma}
\end{subfigure}
\caption{Absolute vs relative Nash inefficiency gap in terms of the final temperature when $\sigma$ varies.}
\label{fig:differentSigma}
\end{figure}

As $\sigma$ increases, consumption is smoother and both net emissions and temperatures decrease with $\sigma$. When $\sigma$ is small, say below $0.5$, the absolute deviation of the temperature in the Nash case with respect to the first best is larger than $5$ degrees (of course, it becomes very large when $\sigma$ goes to zero). This inefficiency gap drops to less than $2$ degrees close to the logarithmic case, and then it decreases quickly to zero for $\sigma \leq 1.5$. It turns out that the difference between the first-best and the Nash outcome are higher in the neighborhood of the logarithmic case (about $50\%$ around $\sigma=1$). This suggests that large differences in relative efficiency between the Nash and the GP outcome can result when $\sigma$ is in the neighborhood of the logarithmic case.

\subsection{Heterogeneity}

Here we briefly explore the sensitivity to heterogeneity with respect to the time discount factor ($\rho$) and with respect to the parameter measuring relative vulnerability to climate change ($\gamma$).

\subsubsection{Discounting}
Time discounting is an important parameter in climate economics, as many of the GHG-related damages occur in the future; see, for example, \cite{stern2007economics}. Few game-theoretic models in this area have investigated the role of heterogeneity in time discounting.\footnote{One recent exception is \cite{vosooghi2022self}, however, the authors do not investigate equilibrium dynamics.} Here, we will consider the autonomous benchmark case and illustrate that our methodology can accommodate this type of heterogeneity at low mathematical and computational costs.  The results also illustrate the intrinsically dynamic nature of our methodology. As we will see, the planners' problems internalize the heterogeneity in discounting rates, leading to endogenous dynamics as the optimal controls will be time-dependent. 

To illustrate this point, define
\begin{equation*}
\Phi(\rho) = \frac{\phi_{L}}{\rho} + \frac{(1-\phi_{L}\phi_{0})}{\rho+\phi},
\end{equation*}
where $\phi,\phi_{L},\phi_{0}$ are the climate parameters defined earlier. \\
Then the planners' problem (the GP and the RP only differ in their constraints) objective function under equal discount factors, $\rho$, is given by
\begin{equation}
\max_{C, B, K, R} \; \Big  [ u_1(C_1) +  u_2(C_2) - (\gamma_{1}+\gamma_2) \Phi(\rho) G(t,K,B,R_b) \Big ].
\end{equation}

In the case of heterogeneous discount factors, $\rho_1$, $\rho_2$, this objective becomes
\begin{equation}
\max_{C, B, K, R} \; \Big  \{ e^{-\rho_{1}t}\left[u_1(C_1) -\gamma_{1}\Phi(\rho_{1})G(t,K,B,R_b) \right] + e^{-\rho_{2}t}\left[u_2(C_2) -\gamma_{2}\Phi(\rho_{2})G(t,K,B,R_b) \right]\Big \}.
\end{equation}
Note that, apart from the difference in the coefficients, $\gamma_{i}\Phi(\rho_{i})$, the expression  assigns a different (time-dependent) weight to country 1 versus country 2, as the exponential decay at  different rates. As a result, the controls in this problem will be time-dependent. Figure \ref{fig:differentRho} shows the optimal paths for both the GP and the RP problems when $\rho_1$ is our benchmark while $\rho_2= 1.2 \; \rho_1$ (thus, the global south is ``more impatient").

\begin{figure}[!htpb]
\centering
\begin{subfigure}{0.35\textwidth}  
\includegraphics[width=\textwidth]{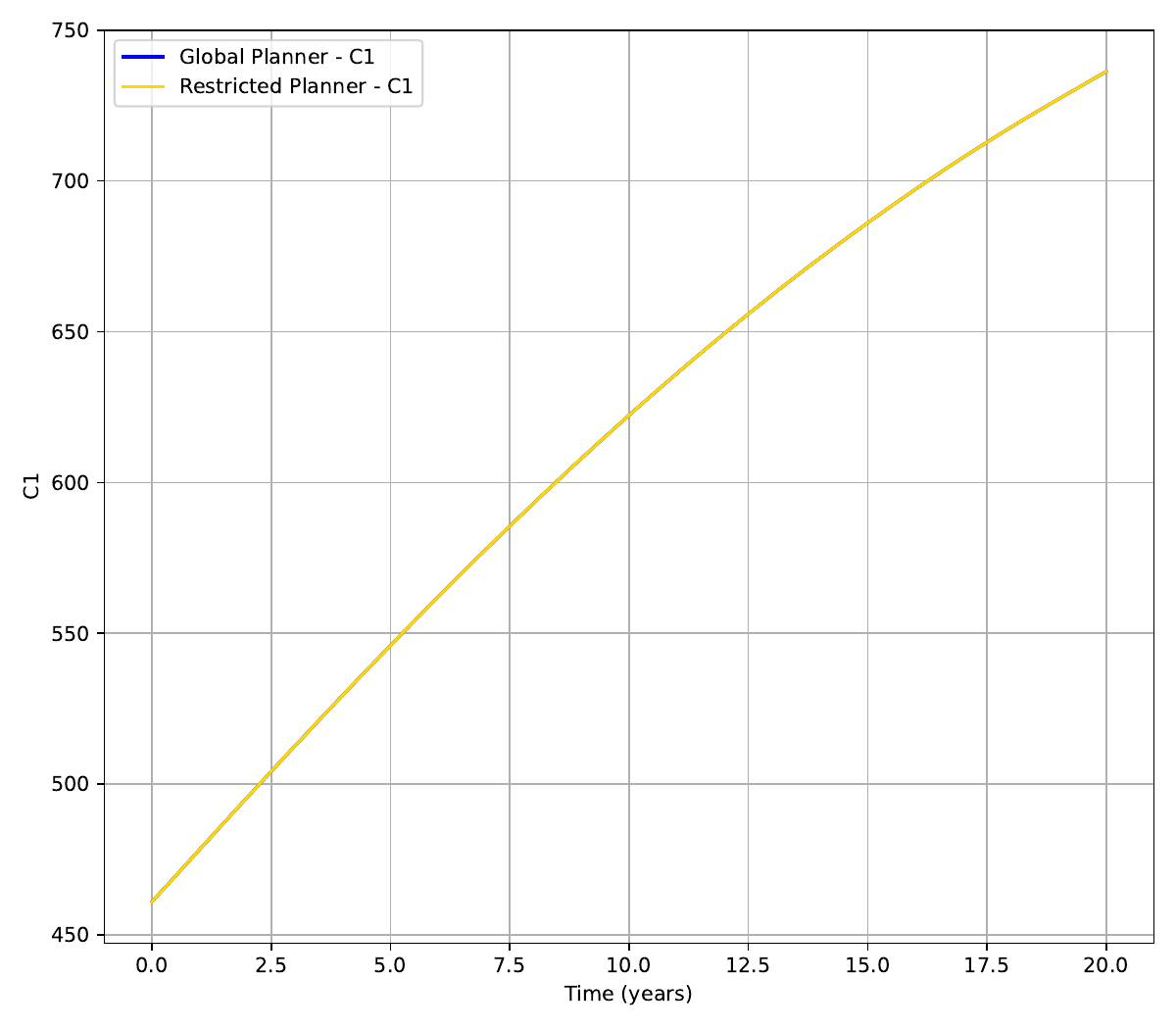}
\caption{$C_1$}
\label{fig:differentRhoC1}
\end{subfigure}
\begin{subfigure}{0.35\textwidth}  
\includegraphics[width=\textwidth]{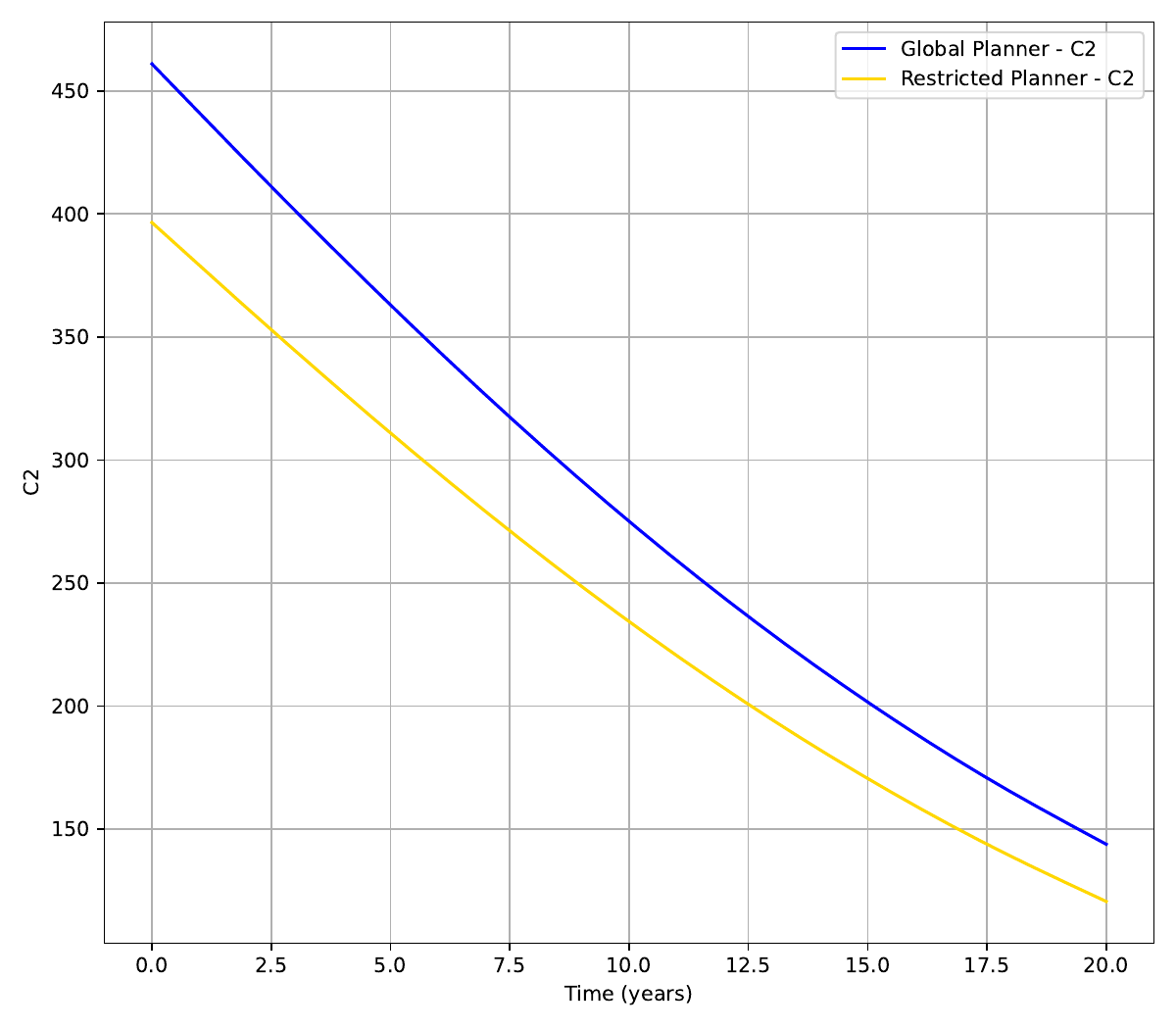}
\caption{$C_2$}
\label{fig:differentRhoC2}
\end{subfigure}
\begin{subfigure}{0.35\textwidth}  
\includegraphics[width=\textwidth]{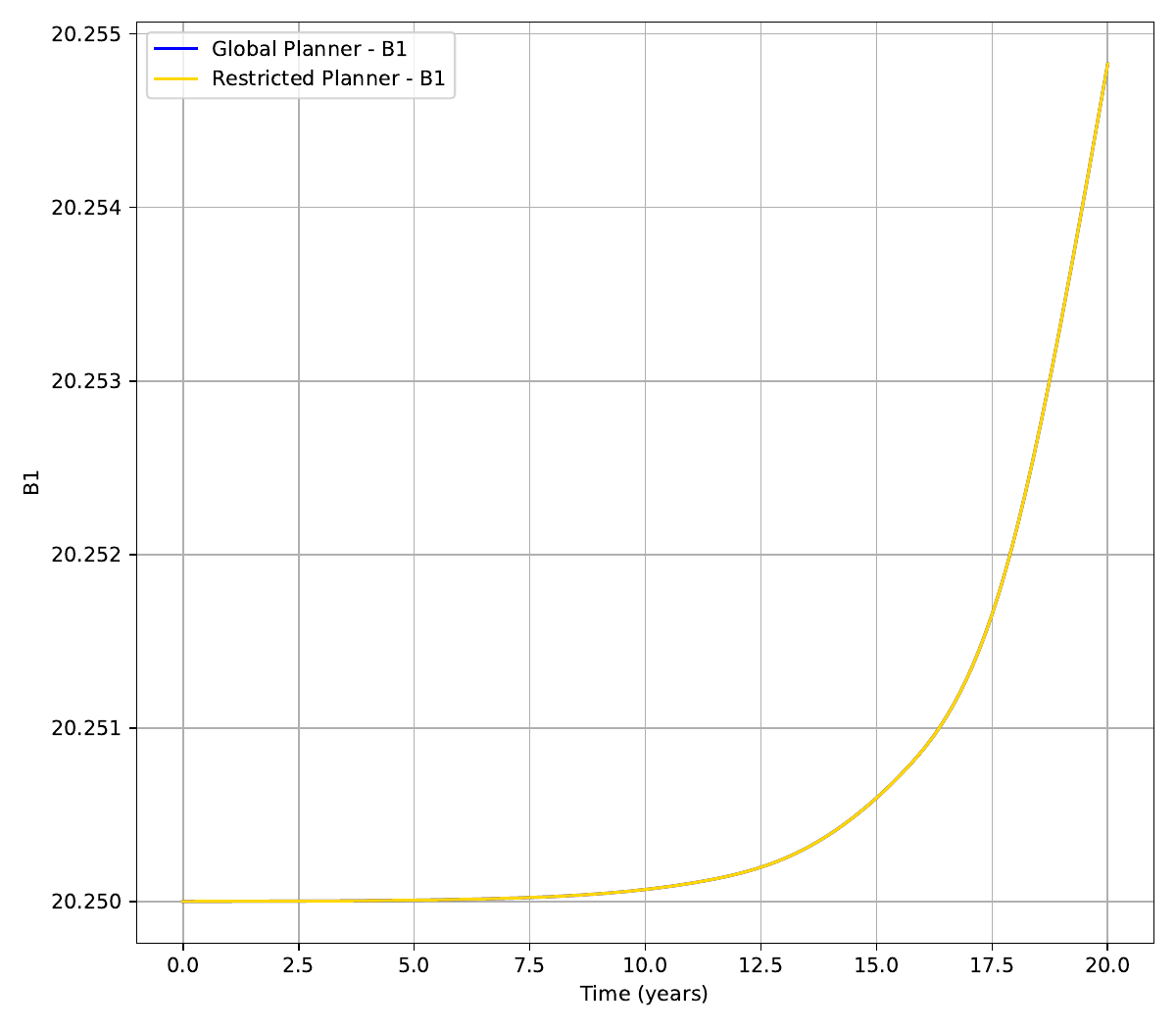}
\caption{$B_1$}
\label{fig:differentRhoB1}
\end{subfigure}
\begin{subfigure}{0.35\textwidth}  
\includegraphics[width=\textwidth]{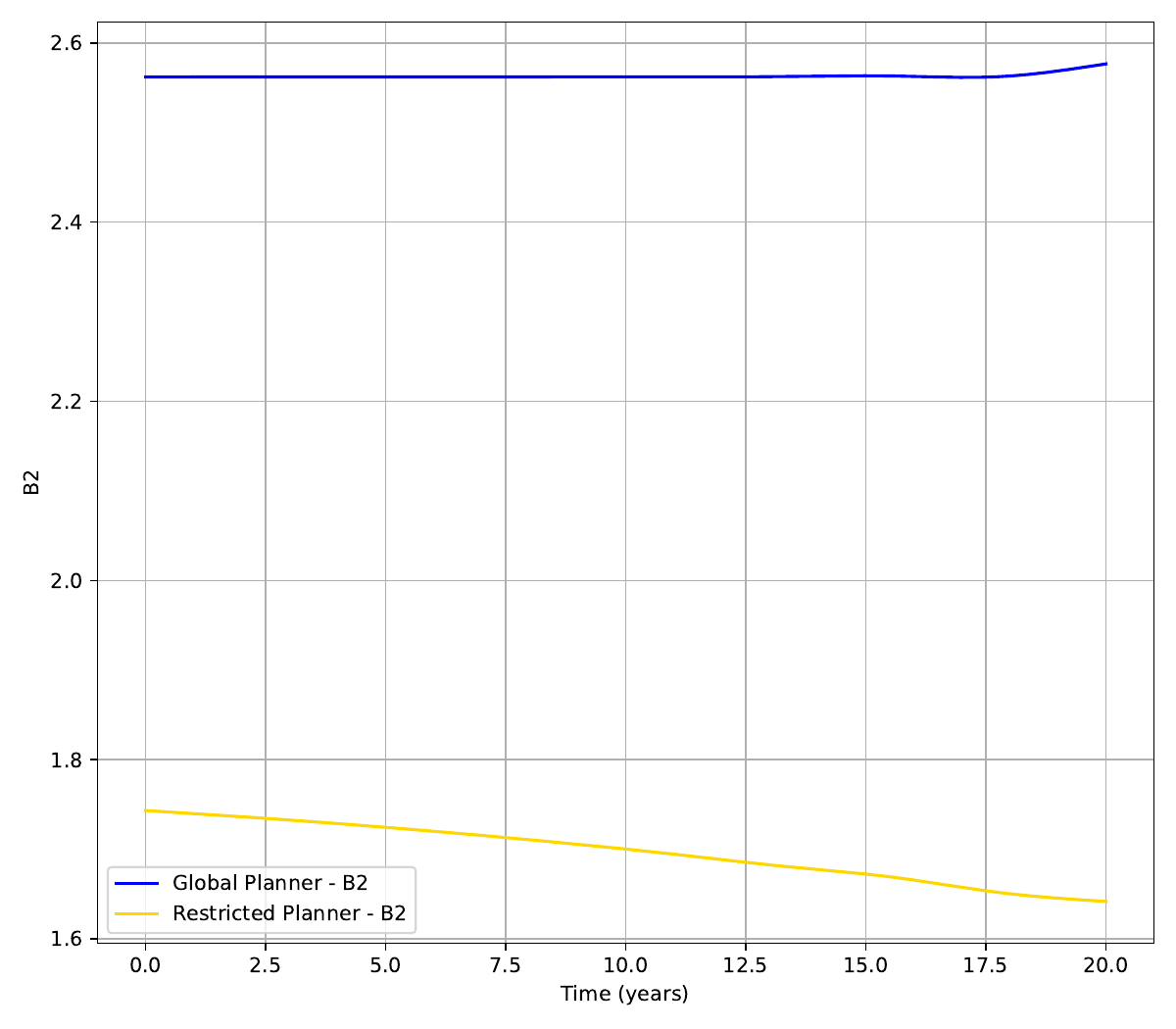}
\caption{$B_1$}
\label{fig:differentRhoB2}
\end{subfigure}
\begin{subfigure}{0.35\textwidth}  
\includegraphics[width=\textwidth]{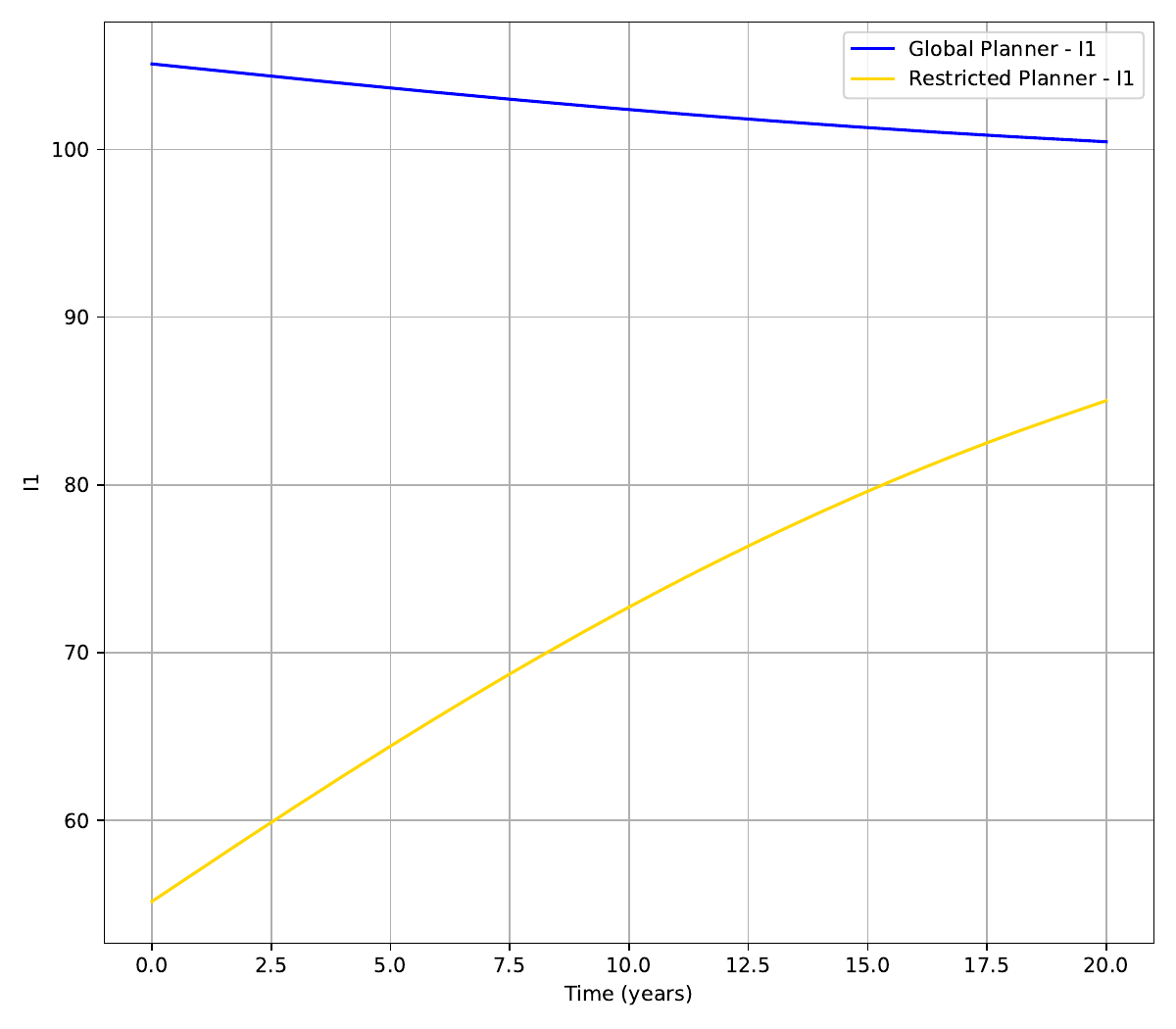}
\caption{$I_1$}
\label{fig:differentRhoI1}
\end{subfigure}
\begin{subfigure}{0.35\textwidth}  
\includegraphics[width=\textwidth]{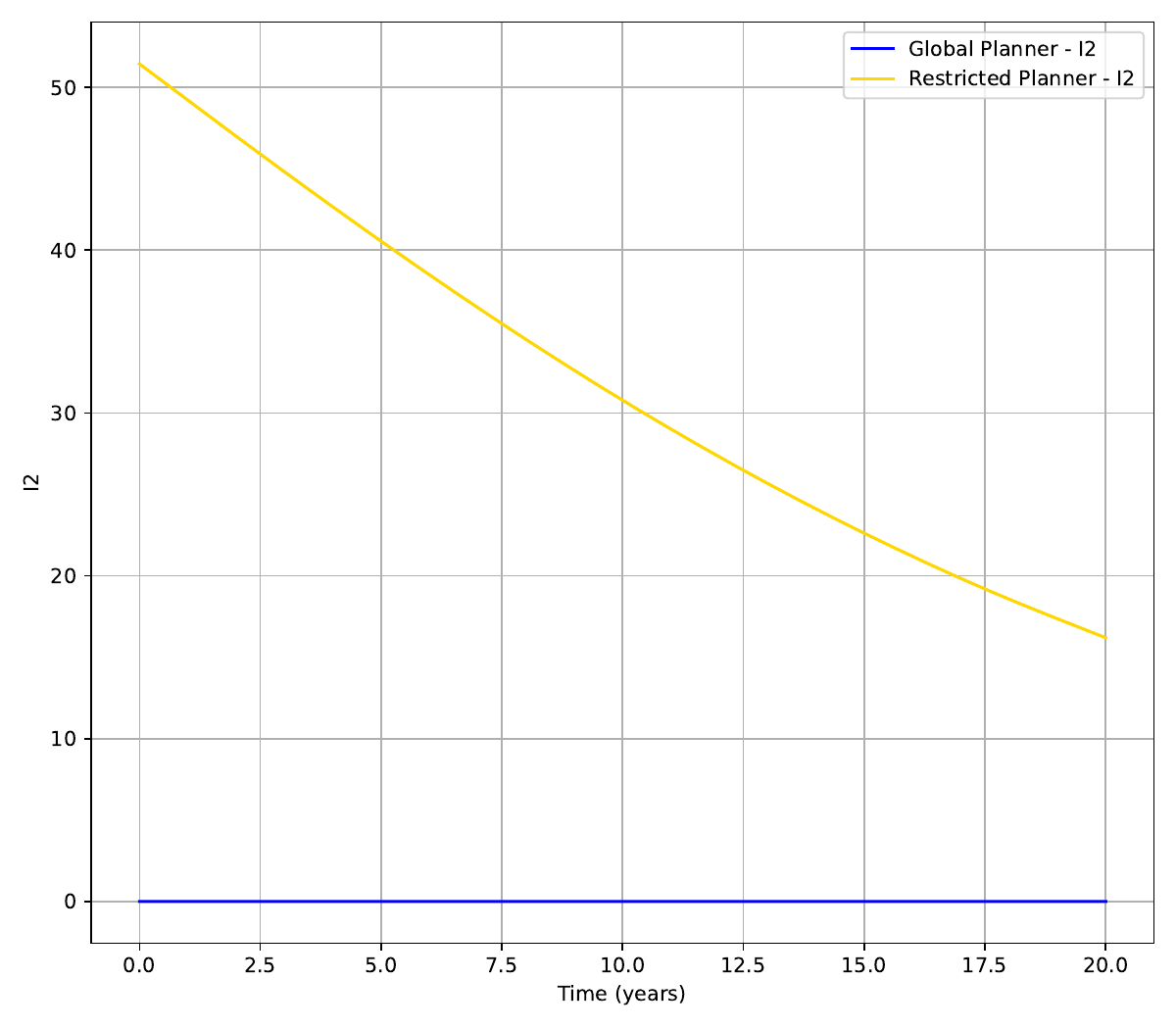}
\caption{$I_2$}
\label{fig:differentRhoI2}
\end{subfigure}
\begin{subfigure}{0.35\textwidth}  
\includegraphics[width=\textwidth]{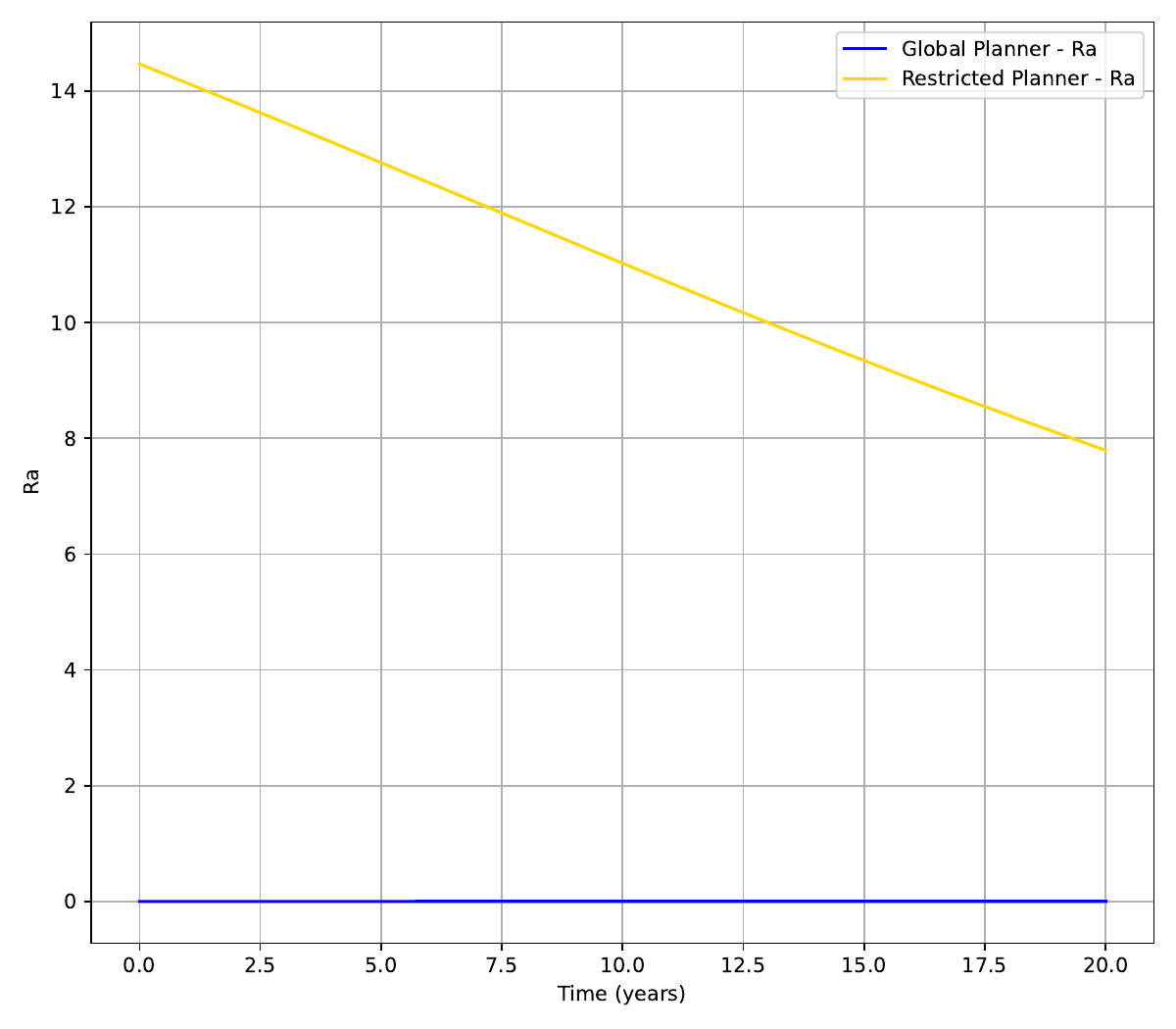}
\caption{$R_a$}
\label{fig:differentRhoRa}
\end{subfigure}
\begin{subfigure}{0.35\textwidth}  
\includegraphics[width=\textwidth]{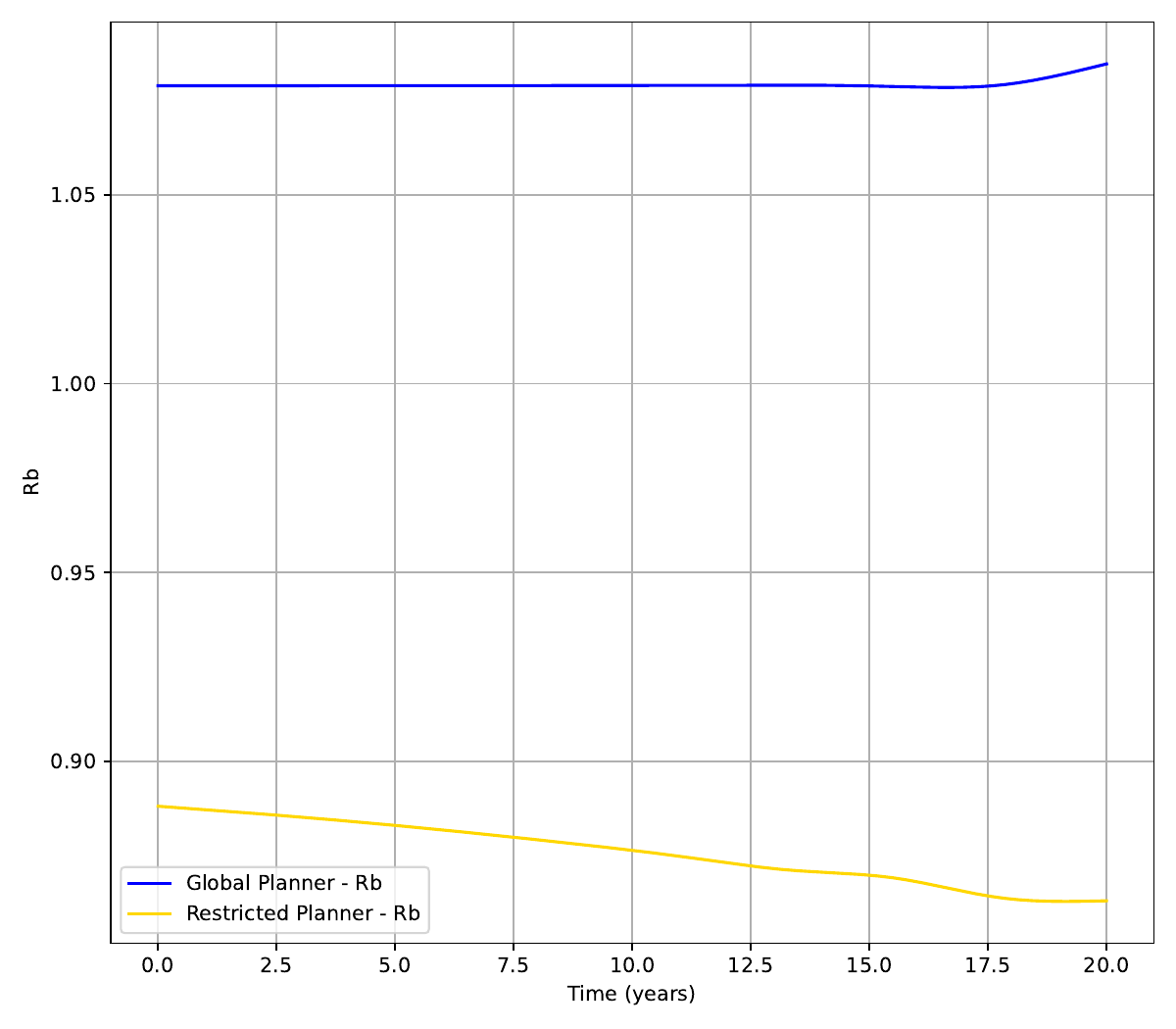}
\caption{$R_b$}
\label{fig:differentRhoRb}
\end{subfigure}
\caption{Control variables $C_1,C_2,B_1,B_2,I_1,I_2,R_a,R_b$ in the Global and Restricted planner cases when $\rho_2 = 1.2 \rho_1$.}
\label{fig:differentRho}
\end{figure}

Optimal consumption in the global north is rising, with the GP and the RP solution largely overlapping. In contrast, optimal consumption in the global south is declining, as the higher preference for present consumption in the south in internalized. Note that in this case, the GP solution implies higher consumption to the south relative to the RP solution. The optimal abatement effort in both countries is increasing over time (and these largely overlap) in both the GP and the RP solution in order to balance for the emissions resulting from increasing consumption. As the GP assigns higher consumption to the south, they also choose higher abatement. Abatement dynamics in the north remains limited, except in the case of the RP, where abatement is more markedly decreasing over time, matching the a decreasing pattern of optimal consumption in the South. 

\subsubsection{Vulnerability to climate change}
Figure \ref{fig:emissionSigma1} below shows the over-time stocks of emissions in the homogeneous damage ($\gamma_1 = \gamma_2 = 0.0125$) case, versus the case where the global south experiences higher damages, $\gamma_1 = 0.0075$, and $\gamma_2 = 0.0175$ (we keep the total damages constant for this comparison).\footnote{Recall that a higher $\gamma$ can also be interpreted as a result of a lesser degree or effectiveness of the respective adaptation measures.}
\begin{figure}[!htpb]
\begin{subfigure}{0.49\textwidth}  
\includegraphics[width=\textwidth]{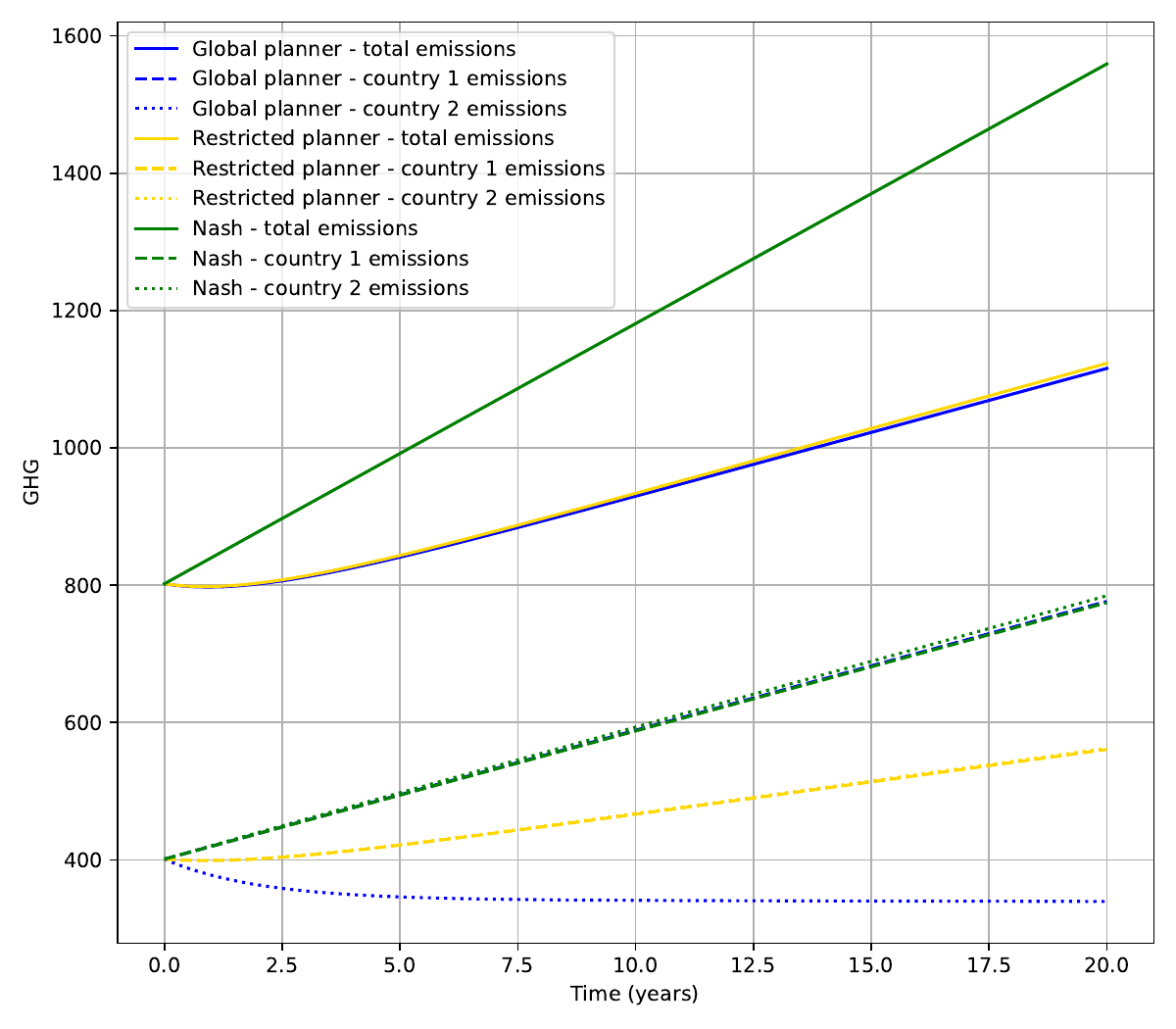}
\caption{Symmetric case}
\label{fig:emissionSigma1Symmetric}
\end{subfigure}
\begin{subfigure}{0.49\textwidth}  
\includegraphics[width=\textwidth]{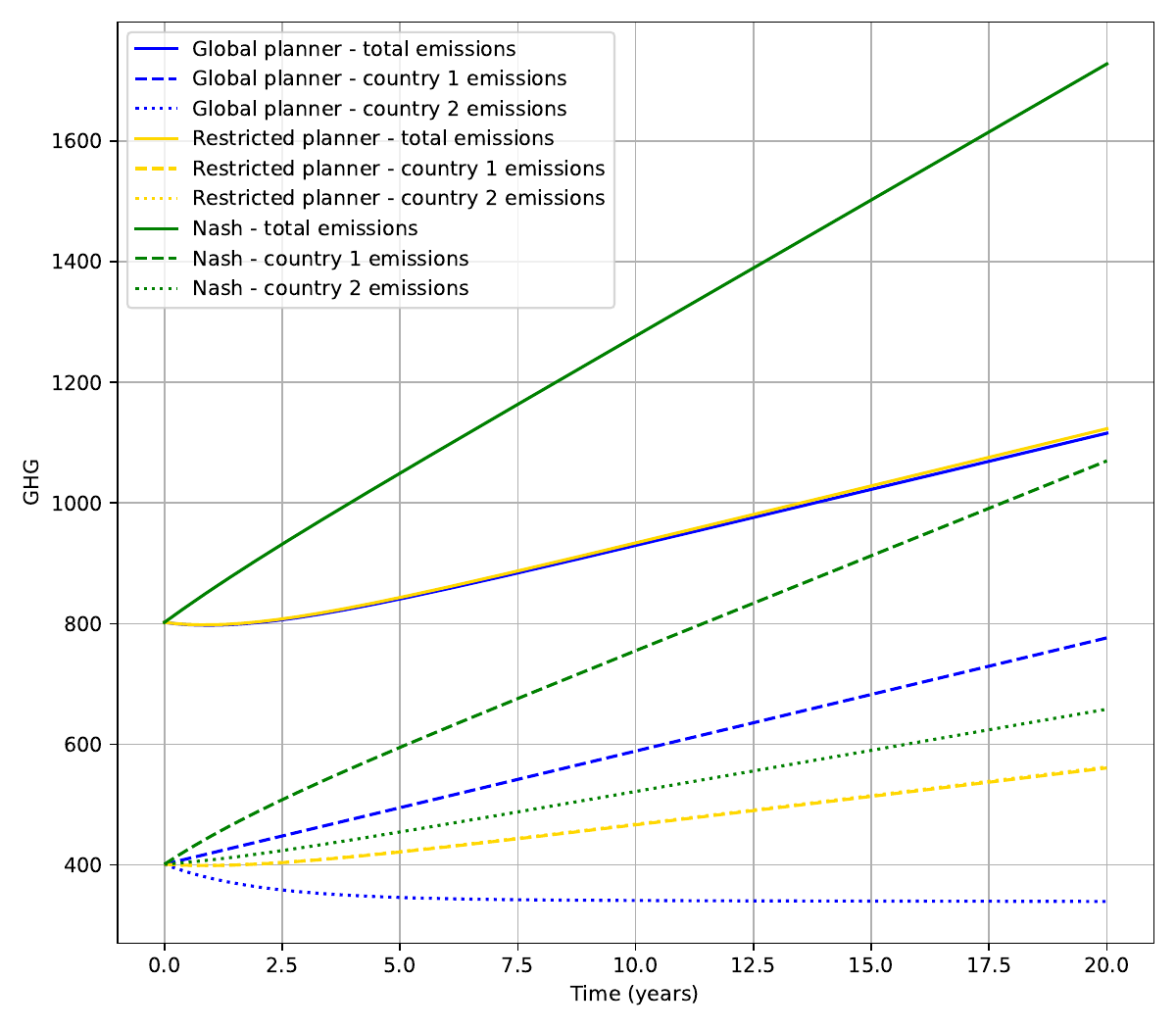}
\caption{Asymmetric case}
\label{fig:emissionSigma1Asymmetric}
\end{subfigure}
\caption{$\sigma = 1$ ($\log$), Total emissions.}
\label{fig:emissionSigma1}
\end{figure}

A few remarks are in order. First, note that, as it is more efficient, the GP would only engage in production in the north. Global emissions are higher in the non-cooperative outcome than in the GP and the RP  solutions, which are close. The associated global temperature paths are given in Figure \ref{fig:temperatureSigma1}. In the Nash outcome under asymmetric $\gamma$'s, the north emits more than the south, reflecting the higher resilience to damages in the north. In contrast, in the symmetric case, the emissions in the south are higher than those in the north. Total emissions in the asymmetric case are higher than in the symmetric one, pointing to the need for improving the south's ability to adapt.

\begin{figure}[!htpb]
\begin{subfigure}{0.49\textwidth}  
\includegraphics[width=\textwidth]{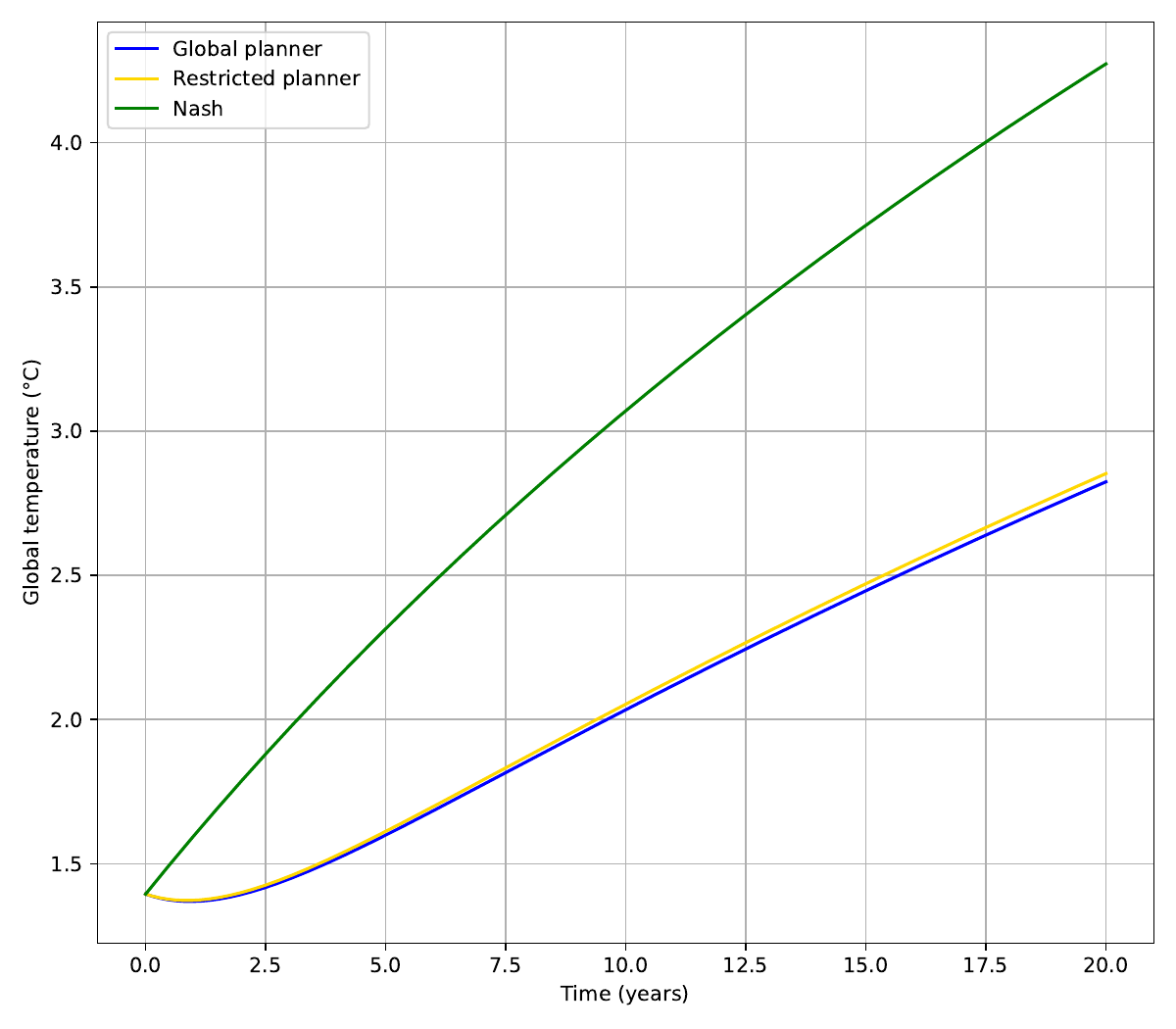}
\caption{Symmetric case}
\label{fig:temperatureSigma1Symmetric}
\end{subfigure}
\begin{subfigure}{0.49\textwidth}  
\includegraphics[width=\textwidth]{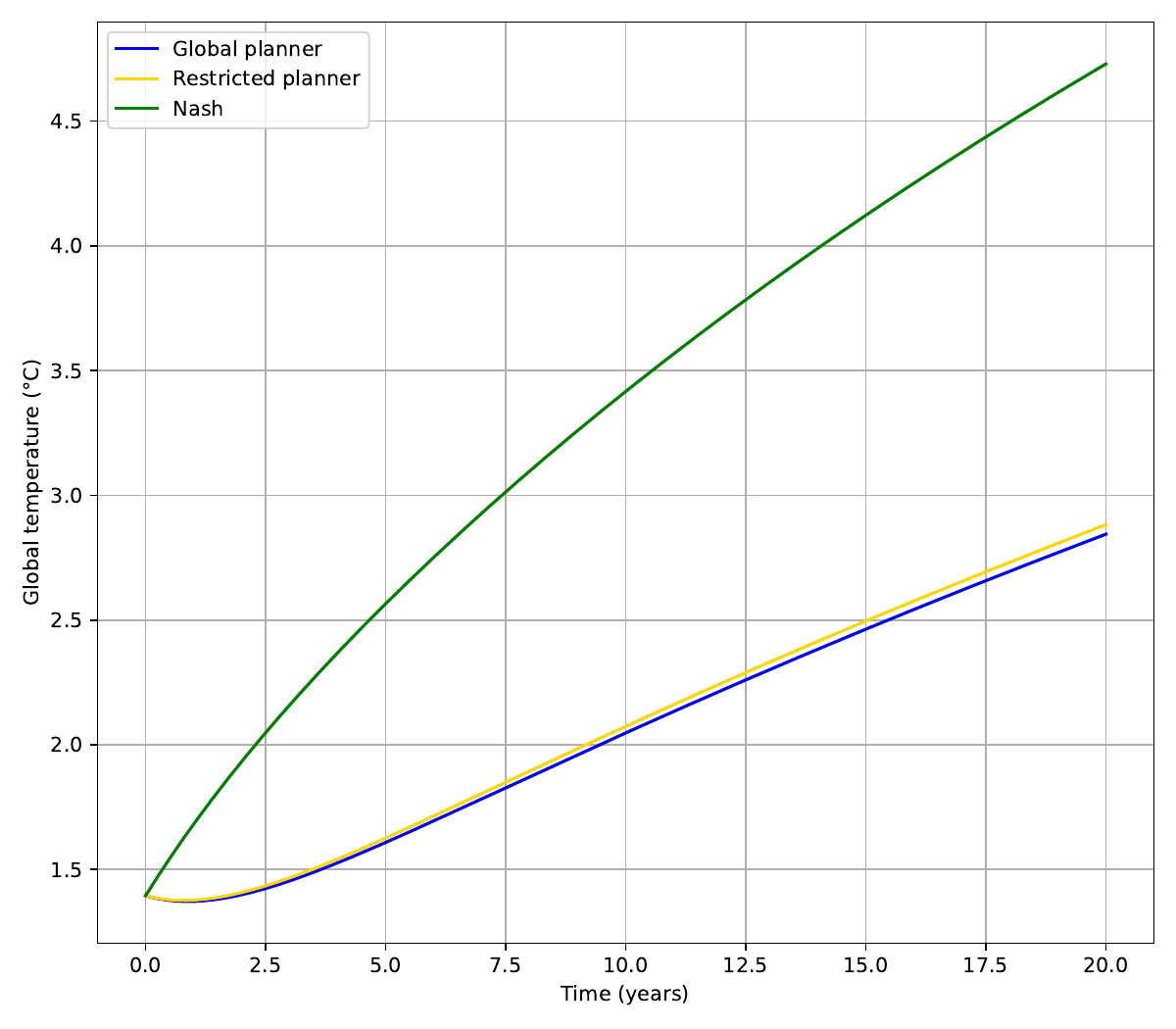}
\caption{Asymmetric case}
\label{fig:temperatureSigma1Asymmetric}
\end{subfigure}
\caption{$\sigma = 1$ ($\log$), Global temperature.}
\label{fig:temperatureSigma1}
\end{figure}

A lower intertemporal substitution parameter ($\sigma=1.2$). In this case, emissions in both countries are lower compared to the logarithmic utility case.  In addition, the relative differences between the efficiency benchmarks and the Nash outcome ar smaller. The qualitative properties concerning players' relative positions in the symmetric versus the asymmetric cases remain largely unchanged from the logarithmic case, although the differences between the north and the south are somewhat more pronounced (see Figure \ref{fig:emissionSigma1.2}).
\begin{figure}[!htpb]
\begin{subfigure}{0.49\textwidth}  
\includegraphics[width=\textwidth]{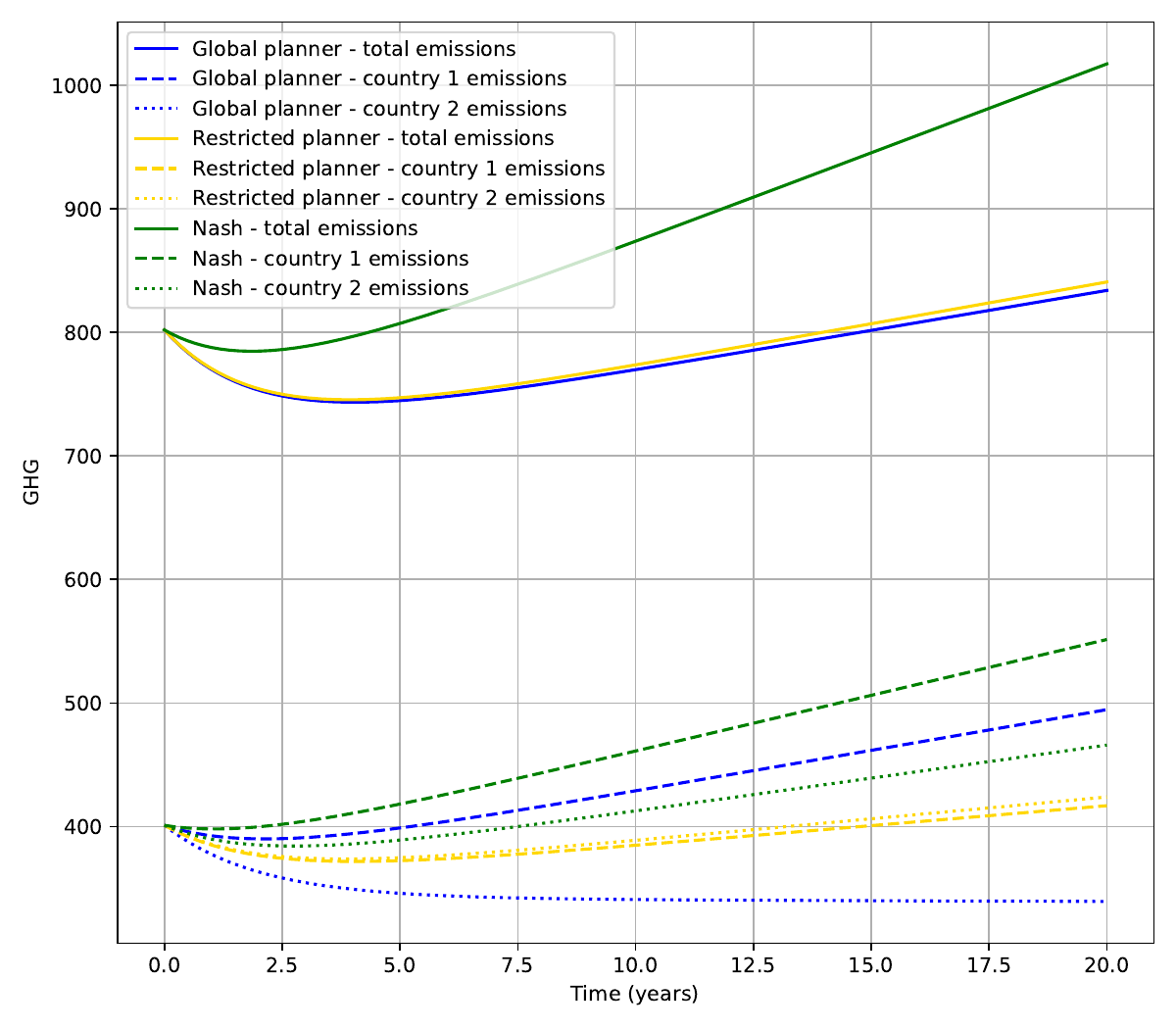}
\caption{Symmetric case}
\label{fig:emissionSigma1.2Symmetric}
\end{subfigure}
\begin{subfigure}{0.49\textwidth}  
\includegraphics[width=\textwidth]{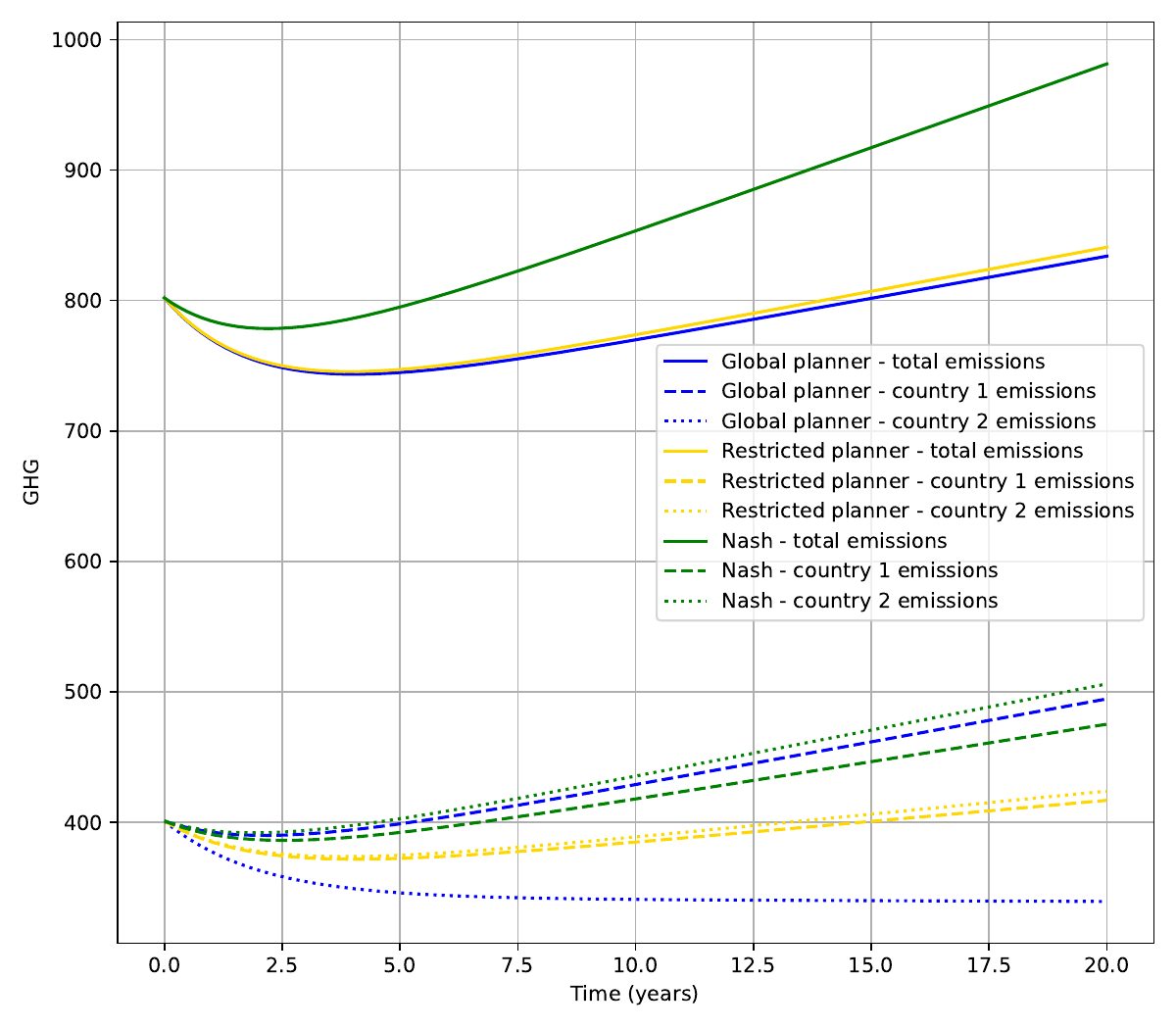}
\caption{Asymmetric case}
\label{fig:emissionSigma1.2Asymmetric}
\end{subfigure}
\caption{$\sigma = 1.2$, Total emissions}
\label{fig:emissionSigma1.2}
\end{figure}

Finally, we turn to the effect of an increase in relative climate vulnerability in the global south (modeled as an increase in the ratio $\frac{\gamma_2}{\gamma_1}$) on the north-south transfers. We choose the RP case for the illustration. Figure \ref{fig:RaRbGamma2RestrictedPlannerLog} suggests that north-south abatement technology transfers become more important relatively to production transfers as climate vulnerability in the south increases. 

In the following section we extend the climate model to account for Knightian uncertainty and demonstrate that the ITM can be applied to that case.

\begin{figure}[!htpb]
\centering
\includegraphics[width=0.7\textwidth]{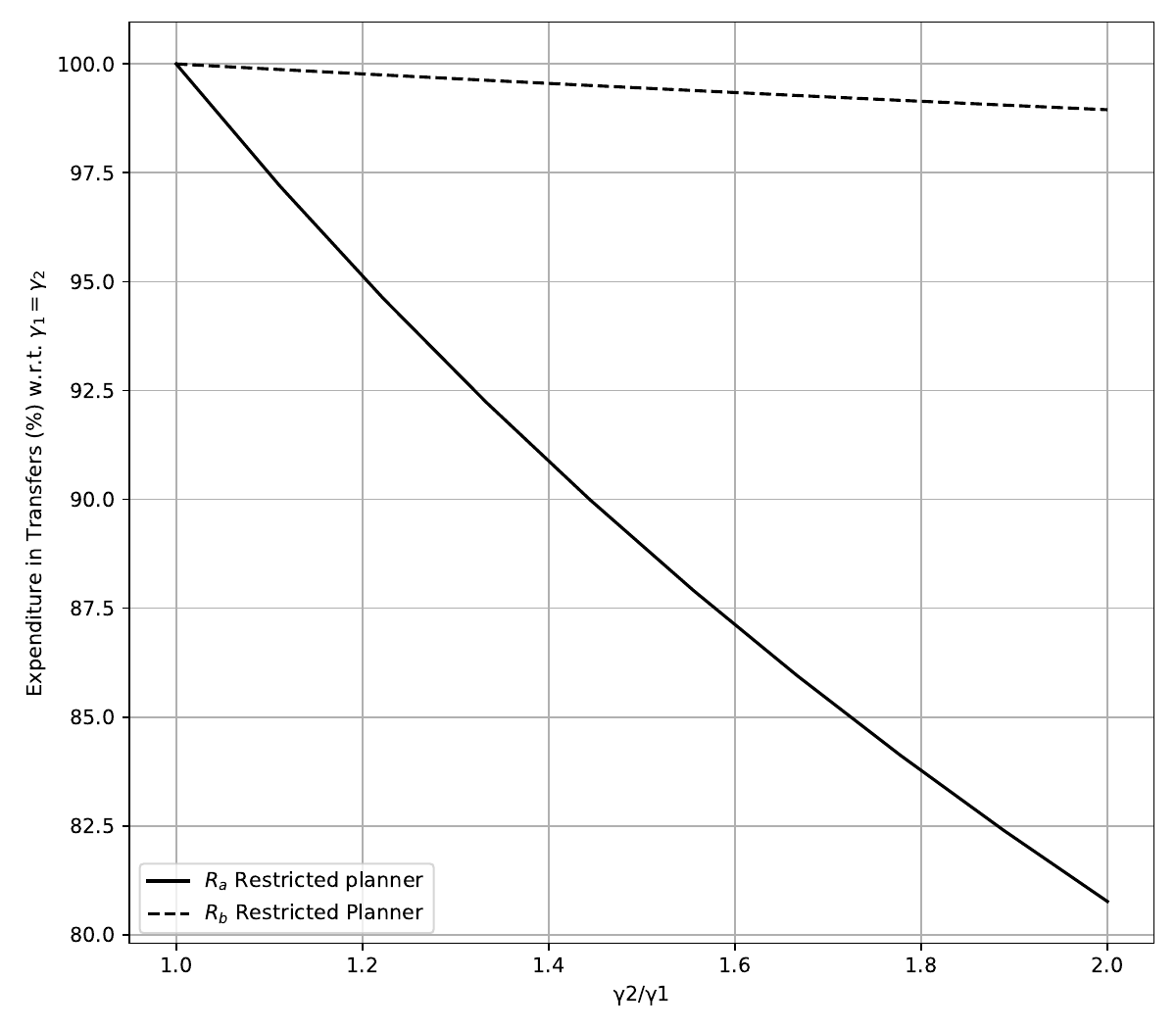}
\caption{Impact of $\gamma_2$ on the expenditure of transfers $R_a,R_b$ in the Restricted planner case ($\sigma = 1$ (log)).}
\label{fig:RaRbGamma2RestrictedPlannerLog}
\end{figure}

\section{Knightian Uncertainty and Robust control}  
\label{sec:robusteness}

While it is well recognized that GHG concentrations in excess of the preindustrial level are detrimental to economic well-being, the
details of the mapping from excess concentrations to damages and to the resulting
utility reductions are subject to considerable uncertainty. Indeed, as anthropogenic climate change is unprecedented, one might argue that it would be difficult to accommodate this uncertainty within the standard model of Bayesian decision-making under risk. As a result, decision-makers might favor policies that perform well for a variety of models in the proximity of the benchmark (approximate) model. Robust control is one rigorous way to model this problem; see, for example, \cite{hansen2008robustness}, \cite{hansen2010wanting}, \cite{anderson2014robust}, \cite{li2016robust}, \cite{hansen2022structured}, and \cite{barnett2022climate}. In this section we introduce model uncertainty regarding the damages resulting from climate change through the variables $\gamma_{1}$, $\gamma _{2}$. Our goal is to demonstrate that the ITM can be applied to this setup. Here, we will only develop explicitly the global planner's (GP) problem (we relegate the proofs in Appendix \ref{supapp:robustglobal}). The restricted planner's case is studied in Subsection \ref{supapp:robustrestricted}), while the non-cooperative Nash outcomes are studied in Subsection \ref{supapp:robustNash}. 


\subsection{The global planner's (GP) robust control problem}
In the GP case, the concern about model-uncertainty is
represented by a two-person zero-sum dynamic game in which, after observing
the choice of a social planner, a \textquotedblleft malevolent
player\textquotedblright\ chooses the worst specification of the model
according to a metric we describe below.\footnote{Our modeling follows \cite{vandenBroek2003}. Our results generalize to the use of the relative entropy approach, which is often employed in economic applications; see, for example, \cite{hansen2008robustness}.} In particular, we assume that the malevolent nature's
deviation from the approximating distribution is penalized by
adding 
\begin{equation}
\int_{0}^{\infty }e^{-\rho t}\left[\alpha_1 \left\vert \gamma _{1}(t)-%
\widehat{\gamma }_{1}\right\vert^2 +\alpha_2\left\vert \gamma _{2}(t)-\widehat{\gamma }%
_{2}\right\vert^2 \right] dt
\end{equation}%
to the planner's objective function, where $\alpha_i$ represent the
magnitude of the deviation \textquotedblleft punishment\textquotedblright for the two countries, making it more costly for the malevolent player to deviate from the approximate model. In what follows, we will concentrate on the benchmark case where $\alpha_1 = \alpha_2=\alpha$. We interpret $\widehat{%
\gamma }_{1}$ and $\widehat{\gamma }_{2}$ as the (strictly positive) benchmark values of the respective parameters. Since a larger $\alpha$ implies a higher penalty for the malevolent player resulting from their deviation from the approximating distribution, it makes such a
deviation less likely, which is equivalent to a lower concern about model uncertainty (robustness). Incorporating the malevolent player's decision, the robust GP problem can be written as follows:

\begin{eqnarray}
\label{eq:functionalrobust-planner}
&&\max_{C(\cdot),K(\cdot),B(\cdot),R_{a}(\cdot),R_{b}(\cdot)} \qquad 
\min_{\gamma _{1}(\cdot),\gamma _{2}(\cdot)}U^{R}((\alpha_1,\alpha_2) ) = U_1^R(\alpha_1)+U_2^R(\alpha_2)=
\notag
\\
&&\int_{0}^{\infty }e^{-\rho t}\left[ u_{1}\left( C_{1}(t)\right)
-\gamma _{1}(t)\left( S(t)-\overline{S}\right) +\alpha_1 \left\vert \gamma
_{1}(t)-\widehat{\gamma }_{1}\right\vert^2 \right] dt  \notag 
\\
&&+\int_{0}^{\infty }
e^{-\rho t}\left[ u_{2}\left( C_{2}(t)\right)
-\gamma _{2}(t)\left( S(t)-\overline{S}\right) +\alpha_2 \left\vert \gamma
_{2}(t)-\widehat{\gamma }_{2}\right\vert^2 \right] dt
\end{eqnarray}
subject to a single resource constraint:
\begin{multline}
\label{eq:plannerresourceconstraint}
C_1(t)+ C_2(t)+ B_1(t)+ B_2(t)+ R_a(t)+ R_b(t)+ K_1(t)+ K_2(t) \\
= Y_1(t) + Y_2(t) =  \overline{A}(t)f_1(K_1(t)) + h(R_a(t)) \overline{A}(t)f_2(K_2(t)).
\end{multline}
As before, the control strategies of the GP are assumed to belong to the set $\mathcal{U}^{GP}$ given in
\eqref{eq:admstratplanner1}, while those of the malevolent player
$(\gamma _{1}(\cdot),\gamma _{2}(\cdot))$
belong to $L^2_\rho(\R_+;\R^2)$. Following Theorem Theorem \ref{th:staticreduction}, the above functional can be rewritten, allowing us to obtain the following minimax theorem.

\begin{Theorem}
\label{th:maxmingeneral}
Suppose that Hypothesis \ref{hp:easycase} holds.\footnote{The statement holds more generally under the joint concavity of $G$.} Then:
$$
\max_{C(\cdot),K(\cdot),B(\cdot),R_{a}(\cdot),R_{b}(\cdot)} \qquad 
\min_{\gamma _{1}(\cdot),\gamma _{2}(\cdot)}U^{R}((\alpha_1,\alpha_2) ) 
=
$$
$$
\min_{\gamma _{1}(\cdot),\gamma _{2}(\cdot)} \qquad
\max_{C(\cdot),K(\cdot),B(\cdot),R_{a}(\cdot),R_{b}(\cdot)} 
U^{R}((\alpha_1,\alpha_2) ) 
$$
\end{Theorem}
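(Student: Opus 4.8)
The plan is to apply the integral transformation (Theorem~\ref{th:staticreduction}) to reduce the infinite-horizon max--min problem to a family of \emph{static} (temporary) max--min problems parametrized by $t$, and then invoke a classical minimax theorem (Sion's theorem) pointwise in $t$, checking that the pointwise interchange lifts to the integrals. First I would apply Proposition~\ref{pr:rewriting-general} (or rather its obvious extension incorporating the malevolent player's controls $\gamma_1(\cdot),\gamma_2(\cdot)$ as additional time-varying coefficients in the linear-in-state term): since the term $-\gamma_i(t)(S(t)-\overline S)$ is linear in the state $S=P+T$, the same Fubini computation that produced $\mathbf{b}(t)$ goes through, now with a \emph{time-dependent} multiplier. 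Concretely, writing $\mathbf{a}(t) = -(\gamma_1(t)+\gamma_2(t))(1,1)^{T}$, one gets $\mathbf{b}(t)$ replaced by $-\big(\int_t^\infty e^{-\rho(\tau-t)}(\gamma_1(\tau)+\gamma_2(\tau))d\tau\,,\ \int_t^\infty e^{-\rho(\tau-t)}e^{-\phi(\tau-t)}(\gamma_1(\tau)+\gamma_2(\tau))d\tau\big)$; the point is that $U^R$ becomes, up to the initial-condition constant, an integral $\int_0^\infty e^{-\rho t}\,\Psi\big(t,\mathbf{u}(t),\gamma_1(t),\gamma_2(t)\big)\,dt$ of a pointwise integrand. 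Since $\gamma_i$ appears only at time $t$ (the penalty $\alpha_i|\gamma_i(t)-\widehat\gamma_i|^2$ is already pointwise, and the state contribution has been pushed into the $\mathbf{b}$-coefficient which, because $G$ does not depend on $\gamma$, couples $\gamma_i(\tau)$ only to $\mathbf{f}(t,\mathbf{u}(t))$ for $t<\tau$), one must be slightly careful: the clean way is to keep the \emph{original} order, i.e. substitute the transformed representation of Proposition~\ref{pr:rewriting-general} \emph{for fixed $\gamma_i(\cdot)$} into the inner problem, observe that the inner minimization over $\gamma_i(\cdot)\in L^2_\rho$ also separates across time after the transformation, and conclude that both the max and the min reduce to pointwise optimization.

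The key structural step is therefore: for a.e.\ fixed $t$, consider the temporary payoff
\[
\psi_t(\mathbf{u},\gamma_1,\gamma_2) := u_1(C_1)+u_2(C_2) - \Phi(\gamma_1+\gamma_2)\,G(t,K,B,R_b) + \alpha\big(|\gamma_1-\widehat\gamma_1|^2+|\gamma_2-\widehat\gamma_2|^2\big),
\]
where $\mathbf{u}=(C,B,K,R_a,R_b)$ ranges over the (closed, convex) resource-feasible set $\mathcal{E}^{GP}(t)$ and $(\gamma_1,\gamma_2)\in\R^2$. Under Hypothesis~\ref{hp:easycase}, $u_i$ is strictly concave, $G$ is jointly convex in $(K,B,R_b)$ (affine in $K$, convex in $B$ via $\theta_i\in(0,1)$ and the concavity of $g(R_b)^{1/(1-\theta_2)}$), so $\mathbf{u}\mapsto \psi_t$ is concave on the convex set $\mathcal{E}^{GP}(t)$; and $(\gamma_1,\gamma_2)\mapsto\psi_t$ is strictly convex (a positive quadratic plus a linear term), hence also coercive. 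I would then invoke Sion's minimax theorem (one side convex, the other concave, one side's feasible set compact or the function coercive there) to get
\[
\max_{\mathbf{u}\in\mathcal{E}^{GP}(t)}\ \min_{\gamma_1,\gamma_2}\ \psi_t
=
\min_{\gamma_1,\gamma_2}\ \max_{\mathbf{u}\in\mathcal{E}^{GP}(t)}\ \psi_t
\qquad\text{for a.e. } t\ge 0.
\]
Finally, Theorem~\ref{th:staticreduction} (applied to the outer player with $\gamma$ fixed, giving $\max\int = \int\max$, and symmetrically, applied after swapping roles, to the outer player being the malevolent one with $\mathbf{u}$ fixed — noting the inner problem is again of the required linear-in-state form) lets me integrate the pointwise identity: $\int_0^\infty e^{-\rho t}\max_{\mathbf u}\min_\gamma \psi_t\,dt = \int_0^\infty e^{-\rho t}\min_\gamma\max_{\mathbf u}\psi_t\,dt$, and each side equals the corresponding dynamic value by the equivalence results. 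A measurable-selection argument (as in Theorem~\ref{th:staticreduction}(iv), via \cite[Prop.~7.33]{BertsekasShreveBOOK}) guarantees the pointwise optimizers can be chosen measurably and satisfy the $L^2_\rho$/$L^1_\rho$ integrability needed to be admissible, closing the loop between the static and dynamic values on both sides.

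The main obstacle I expect is \emph{not} the pointwise minimax step (Sion applies comfortably here) but rather the bookkeeping that justifies reducing the \emph{inner} infinite-horizon minimization over $\gamma_i(\cdot)$ to a pointwise one. One has to verify that after applying Proposition~\ref{pr:rewriting-general} to the planner's controls, the resulting functional is \emph{jointly} of the integral form $\int e^{-\rho t}\psi_t(\mathbf u(t),\gamma(t))\,dt$ plus a constant — i.e.\ that the $\gamma$-dependence has genuinely localized in $t$. Because $G$ (and hence $\mathbf f$) does not depend on $\gamma$, and the state enters linearly, this is true, but the cleanest exposition is to apply the integral transformation with $\gamma(\cdot)$ held fixed, obtaining the representation \eqref{eq:rewritingJ} with $\mathbf b$ depending (linearly) on $\gamma(\cdot)$ through a convolution; then one checks directly that $\int_0^\infty e^{-\rho t}\langle \mathbf b(t),\mathbf f(t,\mathbf u(t))\rangle dt + \int_0^\infty e^{-\rho t}\alpha\sum_i|\gamma_i(t)-\widehat\gamma_i|^2 dt$, after undoing the convolution by Fubini once more, collapses back to $\int_0^\infty e^{-\rho t}\big[-\Phi(\gamma_1(t)+\gamma_2(t))G(t,\cdot)+\alpha\sum_i|\gamma_i(t)-\widehat\gamma_i|^2\big]dt$ — which is manifestly pointwise in $\gamma(t)$. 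A secondary technical point is ensuring the outer $\max$ is attained within the admissible class uniformly enough that the interchange of $\max$/$\min$ with the integral is legitimate on both orders; this is handled exactly as in the proof of Theorem~\ref{th:staticreduction}, using that $\mathcal{E}^{GP}(t)$ is compact (the resource constraint plus nonnegativity bound all controls) and $\psi_t$ is continuous.
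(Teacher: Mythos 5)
Your proposal contains a genuine gap at its central step: the claim that, after the integral transformation, $U^{R}$ ``collapses back'' to a single integral $\int_0^\infty e^{-\rho t}\,\psi_t\bigl(\mathbf{u}(t),\gamma_1(t),\gamma_2(t)\bigr)\,dt$ with the pointwise integrand $\psi_t$ you write down. This is false for time-varying $\gamma_i(\cdot)$. The interaction term is $\int_0^\infty e^{-\rho t}\gamma_i(t)\bigl(S(t)-\overline S\bigr)\,dt$ with $S(t)$ a cumulative (convolved) functional of $G(s,\mathbf{u}(s))$ for $s\le t$; after Fubini it becomes $-\int_0^\infty e^{-\rho s}G(s,\mathbf{u}(s))\bigl[\int_s^\infty e^{-\rho(\tau-s)}\gamma_i(\tau)\,d\tau+\cdots\bigr]ds$, i.e.\ pointwise in $\mathbf{u}$ but depending on the entire \emph{future} of $\gamma_i$; undoing the Fubini gives an expression pointwise in $\gamma_i(t)$ but depending on the entire \emph{past} of $\mathbf{u}$. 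The constant $\Phi$ in your $\psi_t$ only materializes when $\gamma_i$ is constant, so that $\int_t^\infty e^{-\rho(\tau-t)}\gamma_i\,d\tau=\gamma_i/\rho$. Since no representation is simultaneously pointwise in both $\mathbf{u}(t)$ and $\gamma(t)$, there is no common integrand to which Sion's theorem can be applied $t$-by-$t$, and the identity $\int\max\min\psi_t=\int\min\max\psi_t$ does not yield the theorem. (Relatedly, the inner $\min_{\gamma(\cdot)}$ for fixed $\mathbf{u}$ produces a value that is \emph{quadratic} in $S(t)$, which destroys the linearity in the state needed to run the ITM on the outer maximization; and note also that $\mathcal{E}^{GP}(t)$ is not compact under Assumption~\ref{hp:easycase}(ii), since $K_1$ is unbounded along the resource constraint.)

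The paper avoids all of this by not attempting a pointwise reduction at this stage: it applies a minimax theorem for concave--convex functionals on products of convex sets in function spaces (Proposition 2.2, p.~173 of Ekeland--Temam) directly to $U^{R}$, checking that (a) the admissible sets for $(\gamma_1(\cdot),\gamma_2(\cdot))$ and for the controls are convex, closed and non-empty subsets of $L^2_\rho$ and $L^6_\rho$ respectively, (b) $U^R$ is convex in $\gamma(\cdot)$ (quadratic penalty plus a term bilinear in $(\gamma,S)$), and (c) $U^R$ is concave in the controls, which is where the concavity of $g(R_b)^{1/(1-\theta_2)}$ enters, giving joint concavity of $(R_b,B_2)\mapsto g(R_b)B_2^{\theta_2}$ and hence convexity of $G$. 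Your convexity observations at the pointwise level are the right ingredients, but they must be verified for the full integral functional on the function spaces, not for a nonexistent pointwise integrand. The restriction to constant $\gamma$ and the explicit pointwise computations come only \emph{after} the minimax interchange has been established.
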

\begin{proof}
See Appendix \ref{app:robusteness}.
\end{proof}

The above clearly holds if we restrict attention to constant strategies $(\gamma_1,\gamma_2)$ for the malevolent player. For expository purposes, we will restrict attention to this case under logarithmic utility in what follows.
\subsubsection{Logarithmic payoffs}
\label{subsub:special-robust-GP}

We now investigate the logarithmic case under constant $\gamma_1$ and $\gamma_2$. Using Theorem \ref{th:maxmingeneral}, we solve the following: 
\begin{multline}
\label{eq:problem-robust-planner1}
\min_{\gamma _{1},\gamma _{2} \in \mathbb{R}} \; \max_{C(\cdot),B(\cdot), K(\cdot), R(\cdot) \in \mathcal{U}^{p1}} U^{R}(\alpha ) = \min_{\gamma _{1},\gamma _{2} \in \mathbb{R}} \; \max_{C(\cdot),B(\cdot), K(\cdot), R(\cdot) \in \mathcal{U}^{p1}} U_{1}+U_{2}=\\
\min_{\gamma _{1},\gamma _{2} \in \mathbb{R}} \; \max_{C(\cdot),B(\cdot), K(\cdot), R(\cdot) \in \mathcal{U}^{p1}} \int_{0}^{\infty }e^{-\rho t}\Big[ \ln\left( C_{1}(t)\right) + \ln\left( C_{2}(t) \right )
-(\gamma_{1}+\gamma_{2})\left( S(t)-\overline{S}\right) \\ +\alpha \left\vert \gamma
_{1}-\widehat{\gamma }_{1}\right\vert^2 +\alpha \left\vert \gamma
_{2}-\widehat{\gamma }_{2}\right\vert^2 \Big] dt \end{multline}

Both the planner's choices reflect the fact that they internalize the damages resulting from the climate externality and their decisions only depend on the value of $\gamma_1+\gamma_2$. For the planner's payoff to be decreasing in $S(t)$ we will require that $(\gamma_1+\gamma_2)$ is positive. We then have the following.


\begin{Proposition}
\label{pr:robust-global-planner}
Suppose Assumption \ref{hp:easycase} holds and assume logarithmic payoffs. Given $\hat{\gamma}_1, \hat{\gamma}_2 >0$, the values of $(\gamma_1, \gamma_2)$ that solve (\ref{eq:problem-robust-planner1}) are unique and given by:
\[
\gamma_1 = \frac{1}{4\alpha} \left [ - \alpha(- 3\widehat{\gamma}_1 + \widehat{\gamma}_2) - \rho \Gamma_1 + \sqrt{\left [\rho \Gamma_1 -\alpha(\widehat{\gamma}_1 + \widehat{\gamma}_2) \right ]^2 + 8 \alpha} \right ],
\]
\[
\gamma_2 = \frac{1}{4\alpha} \left [ - \alpha(- 3\widehat{\gamma}_2 + \widehat{\gamma}_1) - \rho \Gamma_1 + \sqrt{\left [\rho \Gamma_1 -\alpha(\widehat{\gamma}_1 + \widehat{\gamma}_2) \right ]^2 + 8 \alpha} \right ],
\]
where
\[
\Gamma_1 := \left[
\frac{\overline{S}}{\rho} -
\frac{P(0)}{\rho}-\frac{T(0)}{\rho+\phi}
\right] - \frac{\Phi \eta_K}{\rho} (R_b + B_1 + B_2) + \frac{\Phi \eta_B}{\rho}  \Big (B_1^{\theta_1} + g(R_b) B_2^{\theta_2} \Big ).
\]
The values of $R, C, B$ and $K$ are given by the corresponding resource constraint in Proposition \ref{pr:optimum-planner-easycase-log}. 
\end{Proposition}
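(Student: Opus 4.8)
The plan is to use Theorem~\ref{th:maxmingeneral} (or simply to note that \eqref{eq:problem-robust-planner1} is already in min-max form) and then carry out the two optimizations in order: first the inner maximization over the planner's controls for a \emph{fixed} constant pair $(\gamma_1,\gamma_2)$, and then the outer minimization over $(\gamma_1,\gamma_2)\in\R^2$. For the inner problem, observe that for constant $\gamma_i$ the penalty $\int_0^\infty e^{-\rho t}\alpha|\gamma_i-\widehat\gamma_i|^2\,dt=\frac{\alpha}{\rho}|\gamma_i-\widehat\gamma_i|^2$ does not depend on the controls, so the inner maximization is exactly the logarithmic GP problem of Proposition~\ref{pr:optimum-planner-easycase-log} with aggregate damage parameter $\gamma_1+\gamma_2$, plus this constant. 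If $\gamma_1+\gamma_2\le 0$ the inner supremum is $+\infty$ (enlarging $K_1$ only relaxes the resource constraint, so one can send either $-(\gamma_1+\gamma_2)\Phi\eta_K K_1$, or, when $\gamma_1+\gamma_2=0$, $\ln C_1+\ln C_2$, to $+\infty$); hence the outer minimization is effectively over the half-plane $\{\gamma_1+\gamma_2>0\}$.

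On that half-plane, Proposition~\ref{pr:optimum-planner-easycase-log} supplies the optimal policy in closed form: $K_2=R_a=0$, the abatement and transfer levels $B_1,B_2,R_b$ are independent of $\gamma_1+\gamma_2$, $C_1=C_2=\frac{\overline{A}-1}{(\gamma_1+\gamma_2)\Phi\eta_K}$, and $K_1$ is fixed by the binding resource constraint. Substituting this policy into the ITM representation \eqref{eq:defUPbisnew} of $U^P$ and adding back the quadratic penalty, and collecting all terms that do not involve $(\gamma_1,\gamma_2)$ into a single constant $\kappa$, one obtains
\begin{equation*}
\Psi(\gamma_1,\gamma_2)\;=\;\kappa\;-\;\frac{2}{\rho}\ln(\gamma_1+\gamma_2)\;+\;\Gamma_1\,(\gamma_1+\gamma_2)\;+\;\frac{\alpha}{\rho}\Big[(\gamma_1-\widehat\gamma_1)^2+(\gamma_2-\widehat\gamma_2)^2\Big],
\end{equation*}
where the logarithmic term comes from $\tfrac1\rho(\ln C_1+\ln C_2)$, the portion of $-(\gamma_1+\gamma_2)\Phi\eta_K K_1$ coming from the consumption part of $K_1$ (which scales like $1/(\gamma_1+\gamma_2)$) is constant and hence absorbed into $\kappa$, and the remaining affine-in-$(\gamma_1+\gamma_2)$ piece assembles the ITM boundary term $(\gamma_1+\gamma_2)\big[\tfrac{\overline{S}}{\rho}-\tfrac{P(0)}{\rho}-\tfrac{T(0)}{\rho+\phi}\big]$ with $-(\gamma_1+\gamma_2)\Phi\int_0^\infty e^{-\rho t}G\,dt$ evaluated along the optimal policy, producing exactly the coefficient $\Gamma_1$ of the statement.

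It then remains to minimize $\Psi$ over $\{\gamma_1+\gamma_2>0\}$. The function is strictly convex there (a sum of the convex map $-\tfrac2\rho\ln(\gamma_1+\gamma_2)$, an affine map, and a strictly convex quadratic) and tends to $+\infty$ both as $\gamma_1+\gamma_2\downarrow 0$ and as $|(\gamma_1,\gamma_2)|\to\infty$ inside the half-plane, so it has a unique minimizer, characterized by the first-order conditions $-\tfrac{2}{\rho(\gamma_1+\gamma_2)}+\Gamma_1+\tfrac{2\alpha}{\rho}(\gamma_i-\widehat\gamma_i)=0$ for $i=1,2$. Subtracting these gives $\gamma_1-\gamma_2=\widehat\gamma_1-\widehat\gamma_2$; setting $s:=\gamma_1+\gamma_2$ and using $\gamma_i-\widehat\gamma_i=\tfrac12(s-\widehat\gamma_1-\widehat\gamma_2)$ reduces the conditions to the single quadratic $\alpha s^2+\big[\rho\Gamma_1-\alpha(\widehat\gamma_1+\widehat\gamma_2)\big]s-2=0$, whose unique positive root (the product of the two roots is $-2/\alpha<0$) is $s=\tfrac{1}{2\alpha}\big[\alpha(\widehat\gamma_1+\widehat\gamma_2)-\rho\Gamma_1+\sqrt{(\rho\Gamma_1-\alpha(\widehat\gamma_1+\widehat\gamma_2))^2+8\alpha}\,\big]$. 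Substituting $s$ into $\gamma_1=\tfrac12(s+\widehat\gamma_1-\widehat\gamma_2)$ and $\gamma_2=\tfrac12(s+\widehat\gamma_2-\widehat\gamma_1)$ yields the stated closed forms; since $s>0$ the point is admissible, and the corresponding $R,C,B,K$ are read off from Proposition~\ref{pr:optimum-planner-easycase-log} with aggregate damage $s$.

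The only genuinely delicate step is the ITM reduction of the second paragraph: one must check carefully that, once the optimal policy is inserted, the dependence on the malevolent player's choice collapses exactly to the three displayed pieces — in particular that the $\gamma$-dependence entering through $K_1$ (hence through $C_1,C_2$ and the resource constraint) contributes only an additive constant, so that the affine coefficient is precisely $\Gamma_1$. Once this structure is established, the min-max interchange is provided by Theorem~\ref{th:maxmingeneral} and the remaining optimization is routine convex analysis.
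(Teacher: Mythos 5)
Your proposal is correct and follows essentially the same route as the paper's proof: interchange min and max via Theorem \ref{th:maxmingeneral}, solve the inner problem with Proposition \ref{pr:optimum-planner-easycase-log}, reduce to the strictly convex finite-dimensional minimization with objective $-\tfrac{2}{\rho}\ln(\gamma_1+\gamma_2)+\Gamma_1(\gamma_1+\gamma_2)+\tfrac{\alpha}{\rho}\sum_i(\gamma_i-\widehat\gamma_i)^2$ up to constants, and solve the first-order conditions. The only differences are cosmetic — you solve the resulting quadratic in $s=\gamma_1+\gamma_2$ rather than in $\gamma_1$, and you make explicit the reason (inner supremum equal to $+\infty$) for restricting to $\gamma_1+\gamma_2>0$, which the paper asserts without proof.
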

\begin{proof}
See  Appendix \ref{app:robusteness}.
\end{proof}



As mentioned earlier, the reader is refereed to Appendix \ref{app:robusteness} for the corresponding results for the RP and Nash cases. The last case is of independent interest, as it extends robust control from decision-theoretic to a differential game setup. As an illustration, Figure \ref{fig:TemperatureRobustPoint} extends the numerical example studied earlier to the case where there is concern about robustness, assuming logarithmic utility. The three figures below summarize the paths for global temperatures corresponding to the solutions of: (A) the global planner, (B) the restricted planner, and (C) the Nash  case, respectively. Each graph is given for different values of $\alpha$, with $\alpha = \infty$ corresponding to the approximate (rational expectations) model. 

Clearly, lower values of $\alpha$ give rise to ``more cautious" behavior, as they correspond to a higher concern about model uncertainty. In Appendix \ref{app:random} we discuss related simulations and corresponding confidence intervals when nature draws independently over time from an exponential distribution with mean $\gamma$. The graphs indicate a sharp difference between the planners' solutions (graphs A,B) and those resulting from the non-cooperative equilibria (graphs C). 

\begin{figure}[!htpb]
\begin{subfigure}{0.49\textwidth}  
\includegraphics[width=\textwidth]{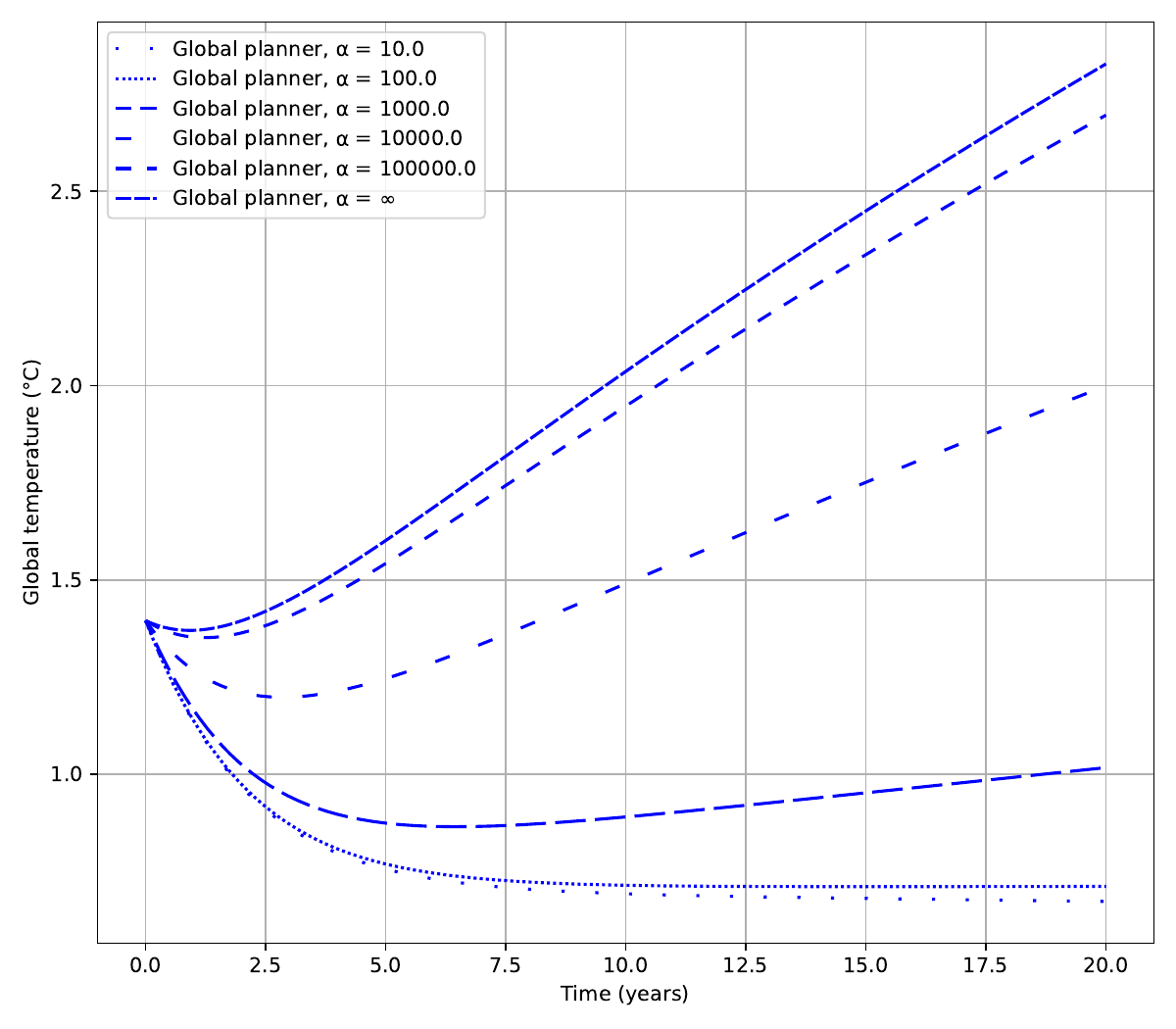}
\caption{Global Planner}
\label{fig:TemperatureRobustPointPlanner}
\end{subfigure}
\begin{subfigure}{0.49\textwidth}  
\includegraphics[width=\textwidth]{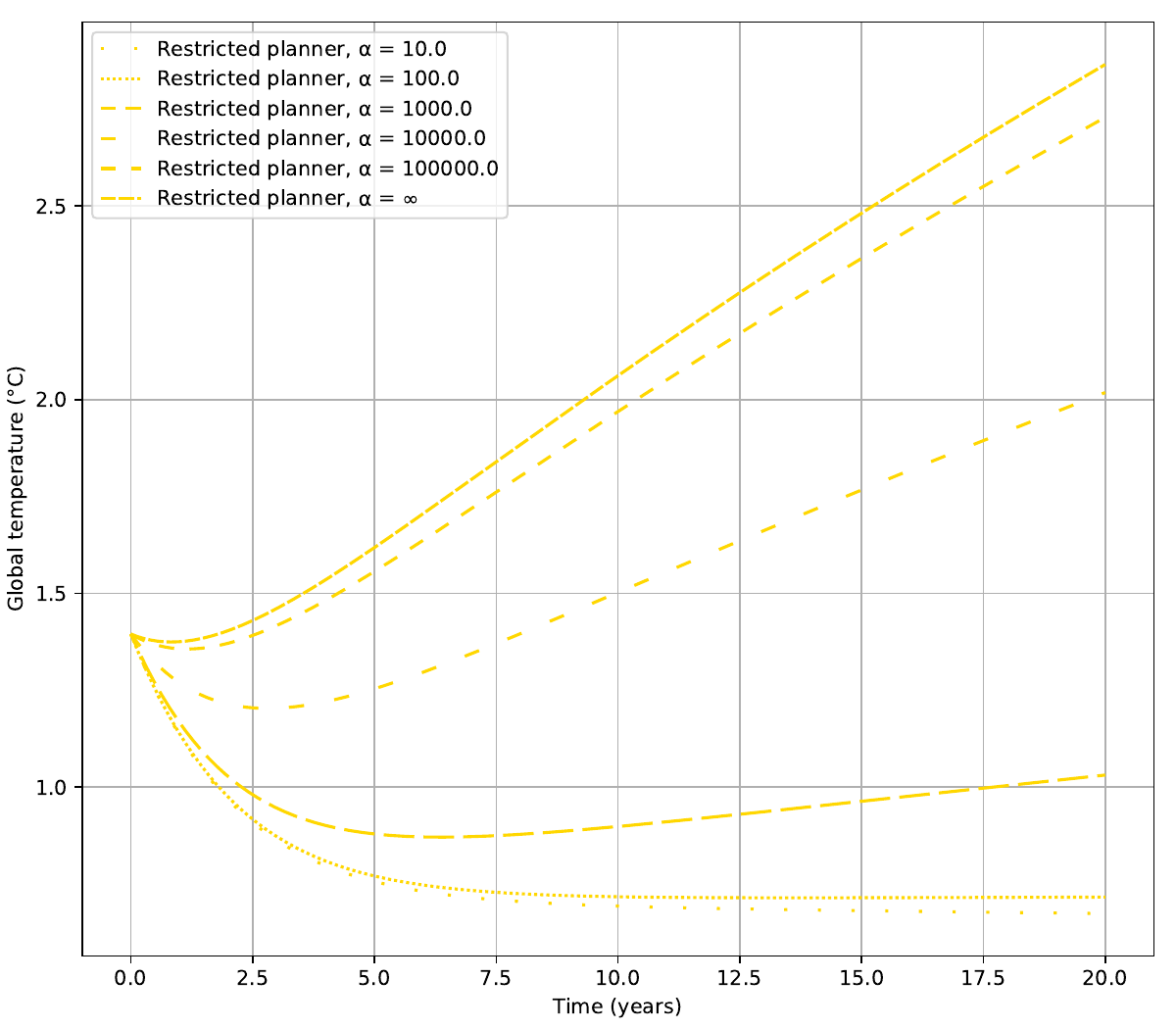}
\caption{Restricted Planner}
\label{fig:TemperatureRobustPointPlannerNoResources}
\end{subfigure}
\begin{subfigure}{0.49\textwidth}  
\includegraphics[width=\textwidth]{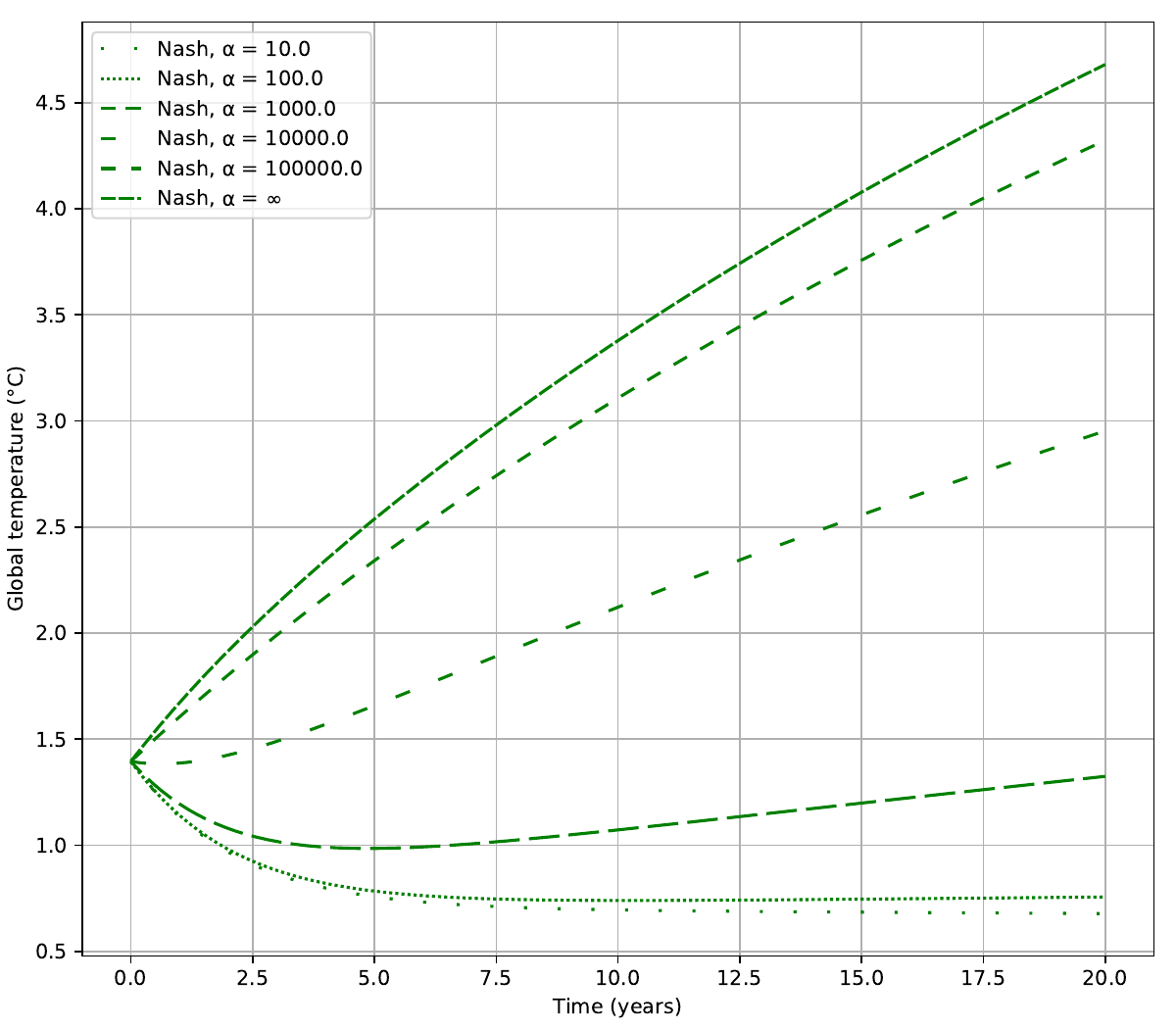}
\caption{Nash}
\label{fig:TemperatureRobustPointNash}
\end{subfigure}
\caption{$\log$, asymmetric damages, global temperature as function of $\alpha$.}
\label{fig:TemperatureRobustPoint}
\end{figure}

\section{Conclusions} Differential games are subject to restrictive linearity assumptions which are often considered necessary for tractability, but which can also have far-reaching economic implications. We proposed an Integral Transformation
Method (ITM) for the study of suitable optimal control and differential game models. When applicable, the ITM allows for a solution to such dynamic problems to be found through the
analysis of a family of temporary optimization problems parametrized by time, thus allowing for the rewriting of the objective functional in a way that permits a pointwise optimization of the
integrand. The method is quite flexible, and it can be applied in several economic applications where the state equation and 
the objective functional are linear in a (suitably defined) state variable. This includes, for example, cases involving time-dependent systems, control and state constraints, differential games, and robust control optimization.

We illustrated the ITM in the context of climate economics, by developing a two-country extension of the integrated assessment model in \cite{Golosov2014}. We characterized emissions, consumption, transfers, and welfare by computing the Nash equilibria of the associated dynamic game and comparing them to efficient benchmarks. Furthermore, we illustrated how the ITM can be applied in a robust control setup in order to investigate how (deep) uncertainty affects climate outcomes. The ITM might be particularly well-suited for future applications in the study of mean-field games; see \cite{lasry2007mean} and, for a more recent application, \cite{alvarez2023price}, including mean-field games involving impulse control, as in \cite{bertucci2020fokker}.

\begin{small}

\bigskip

\bigskip

\bibliographystyle{apalike} 
\bibliography{biblioclimateNEW}     

\begin{thebibliography}{}

\bibitem[Aldy and Stavins, 2009]{Aldy2009}
Aldy, J.~E. and Stavins, R.~N. (2009).
\newblock {\em Post-Kyoto International Climate Policy: Summary for
  Policymakers}.
\newblock Cambridge University Press, New York.

\bibitem[Alvarez et~al., 2023]{alvarez2023price}
Alvarez, F., Lippi, F., and Souganidis, P. (2023).
\newblock Price setting with strategic complementarities as a mean field game.
\newblock {\em Econometrica}, 91(6):2005--2039.

\bibitem[Anderson et~al., 2014]{anderson2014robust}
Anderson, E.~W., Brock, W.~A., Hansen, L.~P., and Sanstad, A. (2014).
\newblock Robust analytical and computational explorations of coupled
  economic-climate models with carbon-climate response.
\newblock RDCEP Working paper.

\bibitem[Barnett et~al., 2022]{barnett2022climate}
Barnett, M., Brock, W., and Hansen, L.~P. (2022).
\newblock Climate change uncertainty spillover in the macroeconomy.
\newblock {\em NBER Macroeconomics Annual}, 36(1):253--320.

\bibitem[Bensoussan et~al., 2007]{bensoussan2007representation}
Bensoussan, A., Da~Prato, G., Delfour, M., and Mitter, S. (2007).
\newblock {\em Representation and Control of Infinite Dimensional Systems},
  volume~1.
\newblock Springer.

\bibitem[Bertsekas and Shreve, 1996]{BertsekasShreveBOOK}
Bertsekas, D. and Shreve, S.~E. (1996).
\newblock {\em Stochastic optimal control: the discrete-time case}, volume~5.
\newblock Athena Scientific.

\bibitem[Bertucci, 2020]{bertucci2020fokker}
Bertucci, C. (2020).
\newblock Fokker-planck equations of jumping particles and mean field games of
  impulse control.
\newblock {\em Annales de l'Institut Henri Poincaré C, Analyse non linéaire},
  37(5):1211--1244.

\bibitem[Boucekkine et~al., 2019]{boucekkine2019geographic}
Boucekkine, R., Fabbri, G., Federico, S., and Gozzi, F. (2019).
\newblock Geographic environmental kuznets curves: The optimal growth
  linear-quadratic case.
\newblock {\em Mathematical Modelling of Natural Phenomena}, 14(1):105.

\bibitem[Boucekkine et~al., 2022a]{Boucekkine2022}
Boucekkine, R., Fabbri, G., Federico, S., and Gozzi, F. (2022a).
\newblock A dynamic theory of spatial externalities.
\newblock {\em Games and Economic Behavior}, 132(C):133--165.

\bibitem[Boucekkine et~al., 2022b]{boucekkine2022managing}
Boucekkine, R., Fabbri, G., Federico, S., and Gozzi, F. (2022b).
\newblock Managing spatial linkages and geographic heterogeneity in dynamic
  models with transboundary pollution.
\newblock {\em Journal of Mathematical Economics}, 98:102577.

\bibitem[Chen and Shi, 2022]{Chen2022}
Chen, Y. and Shi, G. (2022).
\newblock How cooperation and competition arise in regional climate policies:
  Rice as a dynamic game.
\newblock {\em arXiv}.
\newblock arXiv:2211.11893v1 [math.OC].

\bibitem[Colombo and Labrecciosa, 2019]{colombo2019}
Colombo, L. and Labrecciosa, P. (2019).
\newblock Stackelberg versus cournot: A differential game approach.
\newblock {\em Journal of Economic Dynamics and Control}, 101:239--261.

\bibitem[de~Zeeuw, 2008]{deZeeuw2008}
de~Zeeuw, A. (2008).
\newblock Dynamic effects on the stability of international environmental
  agreements.
\newblock {\em Journal of Environmental Economics and Management},
  55(2):163--174.

\bibitem[Dietz et~al., 2021]{dietz2021economists}
Dietz, S., van~der Ploeg, F., Rezai, A., and Venmans, F. (2021).
\newblock Are economists getting climate dynamics right and does it matter?
\newblock {\em Journal of the Association of Environmental and Resource
  Economists}, 8(5):895--921.

\bibitem[Dietz and Venmans, 2019]{dietz2019cumulative}
Dietz, S. and Venmans, F. (2019).
\newblock Cumulative carbon emissions and economic policy: in search of general
  principles.
\newblock {\em Journal of Environmental Economics and Management}, 96:108--129.

\bibitem[Dockner, 2000]{dockner2000differential}
Dockner, E. (2000).
\newblock {\em Differential games in economics and management science}.
\newblock Cambridge University Press.

\bibitem[Dockner et~al., 1985]{dockner1985}
Dockner, E., Feichtinger, G., and Jorgensen, S. (1985).
\newblock Tractable classes of nonzero-sum open-loop nash differential games:
  Theory and examples.
\newblock {\em Journal of Optimization Theory and Applications},
  45(2):179--197.

\bibitem[Dockner et~al., 1989]{dockner1989}
Dockner, E., Feichtinger, G., and Mehlmann, A. (1989).
\newblock Non-cooperative solutions for a differential game model of fishery.
\newblock {\em Journal of Economic Dynamics and Control}, 13:1--20.

\bibitem[Dutta and Radner, 2004]{Dutta2004}
Dutta, P.~K. and Radner, R. (2004).
\newblock Self-enforcing climate-change treaties.
\newblock {\em Proceedings of the National Academy of Sciences of the United
  States of America}, 101(14):5174--9.

\bibitem[Dutta and Radner, 2006]{Dutta2006}
Dutta, P.~K. and Radner, R. (2006).
\newblock A game-theoretic approach to global warming.
\newblock In {\em Advances in Mathematical Economics}, volume~8. Springer,
  Tokyo.

\bibitem[Dutta and Radner, 2009]{Dutta2009}
Dutta, P.~K. and Radner, R. (2009).
\newblock A strategic analysis of global warming: Theory and some numbers.
\newblock {\em Journal of Economic Behavior and Organization}, 71(2):187--209.

\bibitem[Ekeland and Temam, 1999]{ekeland1999convex}
Ekeland, I. and Temam, R. (1999).
\newblock {\em Convex analysis and variational problems}.
\newblock SIAM.

\bibitem[Fredj et~al., 2004]{zaccour2004}
Fredj, K., Mart\'{i}n-Herr\'an, G., and Zaccour, G. (2004).
\newblock Slowing deforestation pace through subsidies: A differential game.
\newblock {\em Automatica}, 40(2):301--309.

\bibitem[Golosov et~al., 2014]{Golosov2014}
Golosov, M., Hassler, J., Krusell, P., and Tsyvinski, A. (2014).
\newblock Optimal taxes on fossil fuel in general equilibrium.
\newblock {\em Econometrica}, 82(1):41--88.

\bibitem[Grandmont, 1977]{grandmont1977}
Grandmont, J.-M. (1977).
\newblock Temporary general equilibrium theory.
\newblock {\em Econometrica}, 45:535--572.

\bibitem[H{\"a}nsel et~al., 2020]{hansel2020climate}
H{\"a}nsel, M.~C., Drupp, M.~A., Johansson, D.~J., Nesje, F., Azar, C.,
  Freeman, M.~C., Groom, B., and Sterner, T. (2020).
\newblock Climate economics support for the un climate targets.
\newblock {\em Nature Climate Change}, 10(8):781--789.

\bibitem[Hansen and Sargent, 2008]{hansen2008robustness}
Hansen, L.~P. and Sargent, T.~J. (2008).
\newblock {\em Robustness}.
\newblock Princeton university press.

\bibitem[Hansen and Sargent, 2010]{hansen2010wanting}
Hansen, L.~P. and Sargent, T.~J. (2010).
\newblock Wanting robustness in macroeconomics.
\newblock In {\em Handbook of monetary economics}, volume~3, pages 1097--1157.
  Elsevier.

\bibitem[Hansen and Sargent, 2022]{hansen2022structured}
Hansen, L.~P. and Sargent, T.~J. (2022).
\newblock Structured ambiguity and model misspecification.
\newblock {\em Journal of Economic Theory}, 199:105165.

\bibitem[Harstad, 2007]{Harstad2007}
Harstad, B. (2007).
\newblock Harmonization and side payments in political cooperation.
\newblock {\em American Economic Review}, 97(3):871--889.

\bibitem[Harstad, 2012a]{Harstad2012a}
Harstad, B. (2012a).
\newblock Buy coal! a case for supply-side environmental policy.
\newblock {\em Journal of Political Economy}, 120(1):77--115.

\bibitem[Harstad, 2012b]{Harstad2012}
Harstad, B. (2012b).
\newblock Climate contracts: A game of emissions, investments, negotiations,
  and renegotiations.
\newblock {\em The Review of Economic Studies}, 79(4):1527--1557.

\bibitem[Harstad, 2016]{Harstad2016}
Harstad, B. (2016).
\newblock The dynamics of climate agreements.
\newblock {\em Journal of the European Economic Association}, 14(3):719--752.

\bibitem[Hoel, 1997]{Hoel1997}
Hoel, M. (1997).
\newblock How should international greenhouse gas agreements be designed?
\newblock In Dasgupta, P., M{\"a}ler, K.-G., and Vercelli, A., editors, {\em
  The Economics of Transnational Commons}. Oxford University Press, Oxford.

\bibitem[Insley and Forsyth, 2019]{insley2019climate}
Insley, M. and Forsyth, P.~A. (2019).
\newblock Climate games: Who’s on first? what’s on second?
\newblock {\em L'Actualit{\'e} {\'e}conomique}, 95(2):287--322.

\bibitem[Jaakkola and Van~der Ploeg, 2019]{jaakkola2019non}
Jaakkola, N. and Van~der Ploeg, F. (2019).
\newblock Non-cooperative and cooperative climate policies with anticipated
  breakthrough technology.
\newblock {\em Journal of Environmental Economics and Management}, 97:42--66.

\bibitem[Jorgensen et~al., 2010]{zaccour2010}
Jorgensen, S., Mart\'{i}n-Herr\'an, G., and Zaccour, G. (2010).
\newblock Dynamic games in the economics and management of pollution.
\newblock {\em Environmental Modelling and Assessment}, 15:433--467.

\bibitem[Kolstad and Toman, 2005]{Kolstad2005}
Kolstad, C. and Toman, M. (2005).
\newblock The economics of climate policy.
\newblock In {\em Handbook of Environmental Economics}, volume~3, pages
  1561--1618. Elsevier.

\bibitem[Lasry and Lions, 2007]{lasry2007mean}
Lasry, J.-M. and Lions, P.-L. (2007).
\newblock Mean field games.
\newblock {\em Japanese journal of mathematics}, 2(1):229--260.

\bibitem[Li et~al., 2016]{li2016robust}
Li, X., Narajabad, B., and Temzelides, T. (2016).
\newblock Robust dynamic energy use and climate change.
\newblock {\em Quantitative Economics}, 7(3):821--857.

\bibitem[Long, 1992]{long1992pollution}
Long, N.~V. (1992).
\newblock Pollution control: A differential game approach.
\newblock {\em Annals of Operations Research}, 37:283--296.

\bibitem[Mart\'{i}n-Herr\'an and Zaccour, 2005]{zaccour2005}
Mart\'{i}n-Herr\'an, G. and Zaccour, G. (2005).
\newblock Credibility of incentive equilibrium strategies in linear-state
  differential games.
\newblock {\em Journal of Optimization Theory and Applications}, 126:367--389.

\bibitem[McEvoy and McGinty, 2018]{McEvoy2018}
McEvoy, D. and McGinty, M. (2018).
\newblock Negotiating a uniform emissions tax in international environmental
  agreements.
\newblock {\em Journal of Environmental Economics and Management}, 90:217--231.

\bibitem[Nordhaus, 2018]{nordhaus2018evolution}
Nordhaus, W. (2018).
\newblock Evolution of modeling of the economics of global warming: changes in
  the dice model, 1992--2017.
\newblock {\em Climatic change}, 148(4):623--640.

\bibitem[Nordhaus and Boyer, 2003]{nordhaus2003warming}
Nordhaus, W.~D. and Boyer, J. (2003).
\newblock {\em Warming the world: economic models of global warming}.
\newblock MIT press.

\bibitem[Stern, 2007]{stern2007economics}
Stern, N. (2007).
\newblock The economics of climate change: the stern review.
\newblock {\em HM Treasury}.

\bibitem[Tahvonen, 1994]{Tahvonen1994}
Tahvonen, O. (1994).
\newblock Carbon dioxide abatement as a differential game.
\newblock {\em European Journal of Political Economy}, 10(4):685--705.

\bibitem[Tahvonen, 1997]{tahvonen1997fossil}
Tahvonen, O. (1997).
\newblock Fossil fuels, stock externalities, and backstop technology.
\newblock {\em Canadian journal of Economics}, pages 855--874.

\bibitem[Traeger, 2023]{traeger2023ace}
Traeger, C.~P. (2023).
\newblock Ace—analytic climate economy.
\newblock {\em American Economic Journal: Economic Policy}, 15(3):372--406.

\bibitem[van~den Broek et~al., 2003]{vandenBroek2003}
van~den Broek, W.~A., Engwerda, J.~C., and Schumacher, J.~M. (2003).
\newblock Robust equilibria in indefinite linear-quadratic differential games.
\newblock {\em Journal of Optimization Theory and Applications}, 119:565--595.

\bibitem[van~der Ploeg and de~Zeeuw, 1992]{ploeg1992}
van~der Ploeg, F. and de~Zeeuw, A. (1992).
\newblock International aspects of pollution control.
\newblock {\em Environmental and Resource Economics}, 2(2):117--139.

\bibitem[Van~der Ploeg and Withagen, 2012]{van2012there}
Van~der Ploeg, F. and Withagen, C. (2012).
\newblock Is there really a green paradox?
\newblock {\em Journal of Environmental Economics and Management},
  64(3):342--363.

\bibitem[Vosooghi et~al., 2022]{vosooghi2022self}
Vosooghi, S., Arvaniti, M., and van~der Ploeg, F. (2022).
\newblock {\em Self-enforcing climate coalitions for farsighted countries:
  integrated analysis of heterogeneous countries}.
\newblock CESifo Working Paper.

\bibitem[Weitzman, 2015]{Weitzman2015}
Weitzman, M.~L. (2015).
\newblock A voting architecture for the governance of free-driver
  externalities, with application to geoengineering.
\newblock {\em Scandinavian Journal of Economics}, 117(4):1049--1068.

\bibitem[Weitzman, 2017]{Weitzman2017}
Weitzman, M.~L. (2017).
\newblock On a world climate assembly and the social cost of carbon.
\newblock {\em Economica}, 84(336):559--586.

\bibitem[Weyant, 2017]{weyant2017some}
Weyant, J. (2017).
\newblock Some contributions of integrated assessment models of global climate
  change.
\newblock {\em Review of Environmental Economics and Policy}.

\bibitem[Withagen, 1994]{withagen1994pollution}
Withagen, C. (1994).
\newblock Pollution and exhaustibility of fossil fuels.
\newblock {\em Resource and Energy Economics}, 16(3):235--242.

\bibitem[Xepapadeas, 1995]{Xepapadeas1995}
Xepapadeas, A. (1995).
\newblock Managing the international commons: Resource use and pollution
  control.
\newblock {\em Environmental and Resource Economics}, 5(4):375--391.

\end{thebibliography}

\bigskip

\bigskip

\appendix

\section{The ITM method: proofs}
\label{app:reduction}


\begin{proof}[Proof of Theorem \ref{th:staticreductionNash}]
\begin{itemize}
\item[(i)]
Let $i\in\{1,...N\}$. 
By Proposition
\ref{pr:rewriting-generalNash}, we have that, for any
$\mathbf{u}_i(\cdot) \in
\mathcal{U}_i(\mathbf{x}_0; \widehat{\mathbf{u}}_{-i}(\cdot))$,
\begin{align*}
&\mathcal{J}_{i}(\mathbf{x}_0,\widehat{\mathbf{u}}_{-i}(\cdot);
\mathbf{u}_i(\cdot))\nonumber\\&=
\left\langle
\mathbf{b}_i(0),\mathbf{x}_0
\right\rangle
+
\int_{0}^{\infty }e^{-\rho_i t}\left[
\left\langle
\mathbf{b}_i(t),
\mathbf{f}(t,\widehat{\mathbf{u}}_{-i}(t),\mathbf{u}_{i}(t))
\right\rangle +  h_i(t,\widehat{\mathbf{u}}_{-i}(t),\mathbf{u}_i(t))\right]dt.
\end{align*}
On the other hand, when ${\mathbf{u}}_{-i,t}=\widehat{\mathbf{u}}_{-i}(t)$, for a.e. $t\ge 0$, the argmax in \eqref{eq:staticNash} is non-empty and contains $\widehat{\mathbf{u}}_{i}(t)$.
Moreover, the map $t \mapsto \widehat{\mathbf{u}}_i(t)$
belongs to 
$\mathcal{U}_{i}(\mathbf{x}_0, \widehat{\mathbf{u}}_{-i}(\cdot))$.
Hence, 
\begin{align*}
&
\left\langle
\mathbf{b}_i(0),\mathbf{x}_0
\right\rangle
+
\int_{0}^{\infty }e^{-\rho_i t}\left[
\left\langle
\mathbf{b}_i(t),
\mathbf{f}(t,\widehat{\mathbf{u}}_{-i}(t),\mathbf{u}_i(t))
\right\rangle +  h_i(t,\widehat{\mathbf{u}}_{-i}(t),\mathbf{u}_i(t))\right]dt\nonumber\\&
\le
\left\langle
\mathbf{b}_i(0),\mathbf{x}_0
\right\rangle
+
\int_{0}^{\infty }e^{-\rho_i t}
\sup_{\mathbf{u_i}\in \mathbf{U}_{i,{\widehat{\mathbf{u}}}_{-i,t}}}
\left[
\left\langle
\mathbf{b}_i(t),
\mathbf{f}(t,{\widehat{\mathbf{u}}}_{-i,t},\mathbf{u}_i)
\right\rangle +  h_i(t,{\widehat{\mathbf{u}}}_{-i,t},\mathbf{u}_i)\right]dt\nonumber\\
&=
\left\langle
\mathbf{b}_i(0),\mathbf{x}_0
\right\rangle
+
\int_{0}^{\infty }e^{-\rho_i t}\left[
\left\langle
\mathbf{b}_i(t),
\mathbf{f}(t,\widehat{\mathbf{u}}_{-i}(t),\widehat{\mathbf{u}}_i(t))
\right\rangle +  h_i(t,\widehat{\mathbf{u}}_{-i}(t),\widehat{\mathbf{u}}_i(t))\right]dt\nonumber\\
&
=
\mathcal{J}_{i}(\mathbf{x}_0,\widehat{\mathbf{u}}_{-i}(\cdot);
\widehat{\mathbf{u}}_i(\cdot)).\nonumber
\end{align*}
Since this is true for all $i=1,\dots,N$, it follows that $\widehat{\mathbf{u}}(\cdot)$
is a Nash equilibrium for the dynamic game.

The conclusion that, in the absence of state constraints,  $\widehat{\bu}(\cdot)$ is a Nash equilibrium for every initial condition is due to the fact that $\mathcal{U}_{G}$ does not depend on $\mathbf{x}_{0}$ and in the argument above $\mathbf{x}_{0}$ only plays a role in the the definition of the set of admissible strategies.
\smallskip

\item[(ii)] In the absence of state constraints, 
assume that $\overline{\mathbf{u}}(\cdot)\in\mathcal{U}_{G}$
is an open-loop Nash equilibrium for the dynamic game.
By Proposition \ref{pr:rewriting-generalNash}, for all $\mathbf{u}^i(\cdot)\in \mathcal{U}_i(\overline{\mathbf{u}}_{-i}(\cdot))$ we have
\begin{multline}\label{asl}
\mathcal{J}_{i}(\mathbf{x}_0;
\overline{\mathbf{u}}_{-i}(\cdot);
\mathbf{u}_i(\cdot))\\
=
\left\langle
\mathbf{b}_i(0),\mathbf{x}_0
\right\rangle
+
\int_{0}^{\infty }e^{-\rho_i t}
\left[
\left\langle
\mathbf{b}_i(t),\mathbf{f}
(t,\overline{\mathbf{u}}_{-i}(t),\mathbf{u}_i(t))
\right\rangle +
h_i(t,\overline{\mathbf{u}}_{-i}(t),\mathbf{u}_i(t))
\right]dt.
\end{multline}
Our assumptions imply \cite[Prop.\,7.33, p.\,153]{BertsekasShreveBOOK} the existence of a Borel measurable map $\widehat{\mathbf{u}}_i(\cdot)\in \mathcal{U}(\overline{\mathbf{u}}_{-i}(\cdot))$ such that 
\begin{multline}\label{als2}
\sup_{\mathbf{u}_{i}\in  \mathbf{U}_{i,\overline{\mathbf{u}}_{-i}(t)}}\Big\{\left\langle
\mathbf{b}_i(t),\mathbf{f}
(t,\overline{\mathbf{u}}_{-i}(t),\mathbf{u}_i)
\right\rangle +
h_i(t,\overline{\mathbf{u}}_{-i}(t),\mathbf{u}_i)\Big\} \\
= \left\langle
\mathbf{b}_i(t),\mathbf{f}
(t,\overline{\mathbf{u}}_{-i}(t),\widehat{\mathbf{u}}_i(t))
\right\rangle +
h_i(t,\overline{\mathbf{u}}_{-i}(t),\widehat{\mathbf{u}}_i(t)), \ \ \ \forall t\in\R_{+}.
\end{multline}
By the definition of Nash equilibrium, and using \eqref{asl} and \eqref{als2}, it follows that, for every $i=1,...,N$,
\begin{align*}
&\mathcal{J}_{i}(\mathbf{x}_0,
\overline{\mathbf{u}}_{-i}(\cdot);
\overline{\mathbf{u}}_i(\cdot))
\geq 
\mathcal{J}_{i}(\mathbf{x}_0;
\overline{\mathbf{u}}_{-i}(\cdot);
\widehat{\mathbf{u}}_i(\cdot))\\
&=  \left\langle
\mathbf{b}_i(0),\mathbf{x}_0
\right\rangle
+
\int_{0}^{\infty }e^{-\rho_i t}
\left[
\left\langle
\mathbf{b}_i(t),\mathbf{f}
(t,\overline{\mathbf{u}}_{-i}(t),\widehat{\mathbf{u}}_i(t))
\right\rangle +
h_i(t,\overline{\mathbf{u}}_{-i}(t),\widehat{\mathbf{u}}_i(t))
\right]dt\\
&= \left\langle
\mathbf{b}_i(0),\mathbf{x}_0
\right\rangle
+
\int_{0}^{\infty }e^{-\rho_i t}\sup_{\mathbf{u}_{i}\in  \mathbf{U}_{i,\overline{\mathbf{u}}_{-i}(t)}}
\left[
\left\langle
\mathbf{b}_i(t),\mathbf{f}
(t,\overline{\mathbf{u}}_{-i}(t),\mathbf{u}_i)
\right\rangle +
h_i(t,\overline{\mathbf{u}}_{-i}(t),\mathbf{u}_i)
\right]dt\\
&\geq  \left\langle
\mathbf{b}_i(0),\mathbf{x}_0
\right\rangle
+
\int_{0}^{\infty }e^{-\rho_i t}
\left[
\left\langle
\mathbf{b}_i(t),\mathbf{f}
(t,\overline{\mathbf{u}}_{-i}(t),\overline{\mathbf{u}}_i(t))
\right\rangle +
h_i(t,\overline{\mathbf{u}}_{-i}(t),\overline{\mathbf{u}}_i(t))
\right]dt\\
&= \mathcal{J}_{i}(\mathbf{x}_0;
\overline{\mathbf{u}}_{-i}(\cdot);
\overline{\mathbf{u}}_i(\cdot)).
\end{align*}
Thus, the above inequalities are in fact equalities and, thus, we have
\begin{align*}
&\sup_{\mathbf{u}_{i}\in  \mathbf{U}_{i,\overline{\mathbf{u}}_{-i}(t)}}
\left[
\left\langle
\mathbf{b}_i(t),\mathbf{f}
(t,\overline{\mathbf{u}}_{-i}(t),\mathbf{u}_i)
\right\rangle +
h_i(t,\overline{\mathbf{u}}_{-i}(t),\mathbf{u}_i)
\right]\\
&= \left\langle
\mathbf{b}_i(t),\mathbf{f}
(t,\overline{\mathbf{u}}_{-i}(t),\overline{\mathbf{u}}_i(t))
\right\rangle +
h_i(t,\overline{\mathbf{u}}_{-i}(t),\overline{\mathbf{u}}_i(t)), \ \ \mbox{for a.e.} \ t\in\R_{+}.
\end{align*}
Thus, the first claim follows.
\medskip 

The claim that $\overline{\bu}(\cdot)$ is a Nash equilibrium for every initial condition then follows combining the above with (i).
\medskip

\item[(iii)] This is immediate from point (ii).

\medskip

\item[(iv)] In the absence of state constraints, any single-valued measurable selection $\widehat\bu(\cdot)$ as in the claim belongs to $\mathcal{U}_{G}$ and, by construction, satisfies the requirements of point (i). Hence, the claim follows.
\end{itemize}
\end{proof}

\begin{proof}[Proof of Proposition \ref{uniqueaffine}]
For simplicity of notation and clearness of exposition, we will illustrate the proof for the case $N=2$. Notice that, given the assumptions on $h_{1}$ and $h_{2}$, Theorem \ref{th:staticreductionNash} implies that there exists a unique open-loop Nash equilibrium for the game, which also constitutes a (degenerate) MPE.
Let the initial time be $t_{0}\in\R_{+}$ and consider an MPE  $\widehat\bvarphi=(\widehat \bvarphi_{1}, \widehat\bvarphi_{2})\in\mathcal{M}_{G}^{L}$. We identify $\widehat\bvarphi$ with
$\big((\widehat{\mathbf{L}}_{1},\widehat{\mathbf{w}}_{1}), (\widehat{\mathbf{L}}_{2},\widehat{\mathbf{w}}_{2})\big)
$, and let $\widehat{\bx}^{t_0,\bx_{0}}(\cdot)$ stand for the solution to the closed-loop equation associated with $\widehat\bvarphi$, starting at $(t_{0},\bx_{0})$. Remark \ref{rem:on} implies that $\widehat{{\bu}}_{1}(\cdot):= \widehat \bvarphi_{1}(\cdot, \widehat{\bx}^{t_{0},\bx_{0}}(\cdot))$ is an optimal control for the problem:
$$
\sup_{\bu_{1}(\cdot)\in  \mathcal{U}_{G}^{1}(t_{0})} \int_{t_{0}}^{\infty} e^{-\rho_{1}t} \big[\langle \mathbf{a}_{1}(t), \mathbf{x}(t)\rangle + h_1(t,{\bu}_1(t))\big]dt,
$$
under the state equation 
$$
\begin{cases}
\bx'(t)
=\mathbf{A}(t)\bx(t) + \mathbf{P}(t) \left(\begin{array}{c}
\bu_{1}(t)\\
(\widehat{\mathbf{L}}_{2}(t)\bx(t)+
\widehat{\mathbf{w}}_{2}(t))
\end{array}
\right)+\mathbf{j}(t)\\ \\
{\bx}(0)=\mathbf{x}_0\in \mathbf{X},
\end{cases}
$$
where
$$ \mathcal{U}_{1}^{G}(t_{0})=\Big\{{\bu}_{1}(\cdot)\in L^{1}_{\rho_{1}}([t_{0},\infty);\mathbf{U}_{1}): \  \   t\mapsto h_{1}(t, \bu_{1}(t)) \in L^{1}_{\rho_{1}}([t_{0},\infty);\R)\Big\}.
$$
This optimal control problem has the same structure as the one investigated in Subsection \ref{app:reductioncontrol}. Namely, it involves a linear dependence of the state variable in the functional. This allows us to perform the same transformation of the functional, ending up with a family of temporary optimization problems parametrized by $t$. For all $t\in\R_{+}$, each of these temporary problems admits a unique solution. Denote these by $\overline\bu_{1,t}$. Since $\widehat{\bvarphi}$ is a MPE, we have that, for a.e. $t_{0}\in\R_{+}$,
$$\overline{\bu}_{1, t_{0}} =\widehat{\bu}_{1}(t_{0})= \widehat{\bvarphi}_{1}(t_{0},\bx_{0}),\ \ \ \ \forall \bx_{0}\in\mathbf{X}.$$
It follows that $\widehat{\bvarphi}_{1}$ cannot depend on $\mathbf{x}$. By symmetry, the same is true for $\widehat{\bvarphi}_{2}$. Thus, $\widehat\bvarphi$ is the unique open-loop Nash equilibrium of the game.
\end{proof}

\section{The GP, RP, and Nash equilibrium solutions under Assumption \ref{hp:easycase}}

Here we give the proofs for the GP, RP, and Nash equilibrium cases discussed in section \ref{sec:institutionalarrangements}. In each case we state and prove specific results under Assumption \ref{hp:easycase}.

\subsection{The Global Planner's case (GP)}
\label{subapp:generalplannerspecificcas}

\begin{proof}[Proof of Corollary \ref{cor:rewriting-planner-max}]
The proof follows by applying Theorem \ref{th:staticreduction} using the specific forms in \eqref{eq:defplannerxu}-\eqref{eq:defplannerl}.
\end{proof}
Next, we discuss the case under Assumption \ref{hp:easycase}.
\begin{Proposition}
\label{pr:optimum-planner-easycase}
Suppose Assumption \ref{hp:easycase} holds and assume $\sigma_{1},\sigma_{2}\neq 1$ (non-logarithmic utility). Then there exists a unique solution to the problem given in
\eqref{eq:staticglobalplannerSpecific}-\eqref{eq:StaticConstraintsPlannerSpecific}. Moreover, the GP's policy is characterized by the following:
\begin{equation}  
\label{eq:C1C2planner-easy}
C_1=\left (\frac{\overline{A}-1}{(\gamma_{1}+\gamma_2) \Phi \eta_K} \right )^{1/\sigma_1}, \qquad 
C_2=\left (\frac{\overline{A}-1}{(\gamma_{1}+\gamma_2) \Phi \eta_K} \right )^{1/\sigma_2}, 
\end{equation}
\begin{equation}
\label{eq:Raplanner-bis}
R_a= 0,
\end{equation}
\begin{equation}
\label{eq:RbB2plannernew-bis-ultra}
R_b \in \argmax \left [ -R_b +  g(R_b)^{\frac{1}{1-\theta_2}} \left ( \frac{\theta_2}{\frac{\eta_K}{(\overline{A} -1)\eta_B}} \right )^{\frac{1}{1-\theta_2}} \left ( \frac{1}{\theta_2} -1 \right )\right ],
\end{equation}
\begin{equation}
\label{eq:B1planner-bis}
B_1=\left ( \frac{{\eta_B}\theta_1}{\eta_K}
(\overline{A}-1) \right )^{\frac{1}{1-\theta_1}}, 
\qquad   B_2= \left ( \frac{\eta_B g(R_b) \theta_2}{\eta_K}(\overline{A}-1)   \right)^{\frac{1}{1-\theta_2}},
\end{equation}
and 
\begin{equation}
\label{eq:I1I2planner-bis}
\displaystyle   K_1 = \frac{1}{\overline{A}-1} (C_1 + C_2 + B_1 + B_2 + R_b),\qquad   K_2 = 0.
\end{equation}
\end{Proposition}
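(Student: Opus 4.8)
The plan is to reduce the constrained static problem \eqref{eq:staticglobalplannerSpecific}--\eqref{eq:StaticConstraintsPlannerSpecific} to a collection of one-dimensional strictly concave maximizations, solve each by first-order conditions, read off the stated formulas, and get uniqueness for free; the characterization of the GP's dynamic policy then follows from Corollary \ref{cor:rewriting-planner-max}, since under Assumption \ref{hp:easycase} the data are $t$-independent, so the unique static solution is constant in $t$ and plainly admissible for the dynamic problem.

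First I would observe that, under Assumption \ref{hp:easycase}, $f_i(K_i)=K_i$, so the resource inequality in \eqref{eq:StaticConstraintsPlannerSpecific} reads
\[
C_1+C_2+B_1+B_2+R_a+R_b \;\le\; (\overline{A}-1)K_1 + \bigl(\overline{A}h(R_a)-1\bigr)K_2 .
\]
Since $u_1,u_2$ are strictly increasing and $G$ is strictly increasing in $(K_1,K_2)$ only through the term $\eta_K(K_1+K_2)$, at any maximizer this constraint binds; moreover, for a fixed value of the total dirty input $K_1+K_2$ (which is all that enters the objective through $G$) the right-hand side is strictly larger when the entire amount is allocated to $K_1$ and $R_a=0$, because $h<1$ forces $\overline{A}h(R_a)-1<\overline{A}-1$. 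Hence every optimal policy has $K_2=0$ and $R_a=0$, which gives \eqref{eq:Raplanner-bis} and the second equality in \eqref{eq:I1I2planner-bis}. Substituting the binding constraint $K_1=\frac1{\overline A-1}(C_1+C_2+B_1+B_2+R_b)$, the problem reduces to maximizing, over $(C_1,C_2,B_1,B_2,R_b)\in\R_+^5$,
\[
\frac{C_1^{1-\sigma_1}}{1-\sigma_1}+\frac{C_2^{1-\sigma_2}}{1-\sigma_2}
-\kappa\,(C_1+C_2+B_1+B_2+R_b)+\Gamma\eta_B\bigl(B_1^{\theta_1}+g(R_b)B_2^{\theta_2}\bigr),
\]
where $\Gamma:=(\gamma_1+\gamma_2)\Phi>0$ and $\kappa:=\Gamma\eta_K/(\overline A-1)>0$.

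Next I would exploit the separable structure, writing this as $g_1(C_1)+g_2(C_2)+g_3(B_1)+\psi(B_2,R_b)$. Each of $g_1,g_2,g_3$ is strictly concave and tends to $-\infty$ as its argument grows (since $\sigma_i>0$ and $\theta_1\in(0,1)$); the Inada condition $u_i'(0^+)=+\infty$ and $\lim_{B_1\downarrow0}\theta_1B_1^{\theta_1-1}=+\infty$ force interior maximizers, so the first-order conditions $C_i^{-\sigma_i}=\kappa$ and $\Gamma\eta_B\theta_1B_1^{\theta_1-1}=\kappa$ yield exactly \eqref{eq:C1C2planner-easy} and the first formula in \eqref{eq:B1planner-bis}. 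The term $\psi(B_2,R_b)=-\kappa(B_2+R_b)+\Gamma\eta_B g(R_b)B_2^{\theta_2}$ is not jointly concave, so I would handle it by first maximizing in $B_2$ for fixed $R_b$: this is strictly concave with interior optimum $B_2^{1-\theta_2}=\Gamma\eta_B\theta_2 g(R_b)/\kappa$, i.e. $B_2=\bigl(\tfrac{\eta_B g(R_b)\theta_2}{\eta_K}(\overline A-1)\bigr)^{1/(1-\theta_2)}$, which is the second formula in \eqref{eq:B1planner-bis}. Inserting this $B_2(R_b)$ back, a direct computation collapses $\psi$ to $-\kappa R_b+c\,g(R_b)^{1/(1-\theta_2)}$ with an explicit constant $c>0$; dividing through by $\kappa$ reproduces precisely the maximand in \eqref{eq:RbB2plannernew-bis-ultra}. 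By Assumption \ref{hp:easycase}(v), $R_b\mapsto g(R_b)^{1/(1-\theta_2)}$ is strictly concave, and since $g$ is bounded this one-dimensional problem on $[0,\infty)$ is coercive, hence has a unique maximizer (interior or at $R_b=0$). Finally, $K_1$ is recovered from the binding constraint, giving the first equality in \eqref{eq:I1I2planner-bis}.

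Uniqueness of the whole optimal couple is then immediate: the reduction steps ($K_2=R_a=0$, binding constraint) are forced at any optimum, and each decoupled one-dimensional subproblem has a unique solution. The only place where the hypotheses beyond plain concavity are needed is the non-jointly-concave term $\psi(B_2,R_b)$; the device of optimizing $B_2$ out turns it into a strictly concave function of $R_b$ exactly because $g(\cdot)^{1/(1-\theta_2)}$ is assumed strictly concave, and I expect the remaining difficulty to be purely the bookkeeping required to match the constants in the stated formulas (and a one-line check that the constant policy is admissible for the dynamic GP problem so that Corollary \ref{cor:rewriting-planner-max} applies).
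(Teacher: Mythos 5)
Your proposal is correct and follows essentially the same route as the paper's proof: impose the binding resource constraint, eliminate $K_2$ and $R_a$ via the $h<1$ argument, decompose the reduced objective into separable one-dimensional pieces (the paper's $F_1+F_2+F_3+F_4$), and handle the non-separable $(B_2,R_b)$ term by optimizing out $B_2$ first and then using the strict concavity of $g(\cdot)^{1/(1-\theta_2)}$ from Assumption \ref{hp:easycase}(v) for the remaining $R_b$ problem. The only cosmetic difference is that you make the sequential treatment of the non-jointly-concave term and the coercivity/uniqueness bookkeeping slightly more explicit than the paper does.
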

\begin{proof}
By Theorem \ref{cor:rewriting-planner-max}, under Assumption \ref{hp:easycase} the GP maximizes
\begin{equation}
\label{eq:staticglobalplanner-proof}
\max_{C,B,K,R_a,R_b} \; \Big  [ \frac{C_1^{1-\sigma_1}}{1-\sigma_1}  +  \frac{C_2^{1-\sigma_2}}{1-\sigma_2}  - (\gamma_{1}+\gamma_2) \Phi\left ( \eta_K (K_1+K_2)- \eta_B (B_1(t)^{\theta_1} 
+ g(R_b) B_2(t)^{\theta_2} ) \right )
\Big ]
\end{equation}
under the constraints
\begin{equation}
\label{eq:StaticConstraintsPlanner-proof}
\left \{
\begin{array}{l}
C, B, K, R_a, R_b \geq 0\\[8pt]
C_1 + C_2 + B_1 + B_2 + K_1 + K_2 + R_a + R_b \leq
\overline{A} \left [K_1 + h(R_a) K_2 \right ].
\end{array}
\right .
\end{equation}
Observe that for any plan satisfying the second constraint in  \eqref{eq:StaticConstraintsPlanner-proof} with strict inequality, we can construct a feasible plan which satisfies it with equality by increasing $C_1$ (or $C_2$). Hence, we can assume without loss of generality that \eqref{eq:StaticConstraintsPlanner-proof} holds as an equality.
After a simple substitution, our problem is equivalent to 
\begin{multline}
\label{eq:staticglobalplanner-proof-2}
\max_{C,B,K,R_a,R_b} \; \bigg  [ \frac{C_1^{1-\sigma_1}}{1-\sigma_1}  +  \frac{C_2^{1-\sigma_2}}{1-\sigma_2}  - (\gamma_{1}+\gamma_2) \Phi \left ( 
- \eta_B (B_1(t)^{\theta_1} 
+ g(R_b) B_2(t)^{\theta_2} ) \right )\\
- (\gamma_{1}+\gamma_2) \Phi \eta_K \left ( \frac{1}{\overline{A} - 1} \left ( C_1 + C_2 + B_1 + B_2  + \overline{A} (1 -  h(R_a)) K_2 + R_a + R_b \right )    \right )
\bigg ].
\end{multline}
Since $h<1$,  using a simple argument by contradiction we can reduce the problem to one where $R_a = K_2=0$. Indeed, if a policy involves $K_2 >0$, since its coefficient in \eqref{eq:staticglobalplanner-proof-2} is strictly positive, we can construct an alternative plan implying a strictly higher payoff by increasing $C_1$ or $C_2$,  and setting $K_2=0$. We can then use the same argument to establish that $R_a=0$. This, in turn, implies that the second constraint in \eqref{eq:StaticConstraintsPlanner-proof} becomes
\[
K_1 = \frac{1}{\overline{A} -1} (C_1 + C_2 + B_1 + B_2 + R_b ).
\]
Hence the problem is equivalent to
\begin{multline}
\label{eq:staticglobalplanner-proofbis}
\max_{C_1, C_2, B_1, B_2, R_b} \; \bigg  [ \frac{C_1^{1-\sigma_1}}{1-\sigma_1}  +  \frac{C_2^{1-\sigma_2}}{1-\sigma_2}  - (\gamma_{1}+\gamma_2) \Phi \left ( 
- \eta_B (B_1(t)^{\theta_1} 
+ g(R_b) B_2(t)^{\theta_2} ) \right )\\
- (\gamma_{1}+\gamma_2) \Phi \eta_I \left ( \frac{1}{\overline{A} - 1} \left ( C_1 + C_2 + B_1 + B_2  + R_b \right )    \right )
\bigg ]
\end{multline}
under the non-negativity constraints: $C_1, C_2, B_1, B_2, R_b \geq 0$.

The objective function in \eqref{eq:staticglobalplanner-proofbis} is well defined, continuous and negatively coercive in $\R^5_+$. This implies that a maximum exists. Moreover, by Assumption \ref{hp:easycase}-(v) it is strictly convex, so the maximum is unique.

Next, we find it convenient to separate the objective in \eqref{eq:staticglobalplanner-proofbis} into four parts:
\begin{multline}
F_1(C_1) + F_2(C_2) + F_3(B_1) + F_4(B_2, R_b) := 
\left [ \frac{C_1^{1-\sigma_1}}{1-\sigma_1} -  \frac{(\gamma_{1}+\gamma_2) \Phi \eta_K}{\overline{A} -1} C_1 \right ] \\
+ \left [ \frac{C_2^{1-\sigma_2}}{1-\sigma_2} -  \frac{(\gamma_{1}+\gamma_2) \Phi \eta_K}{\overline{A} -1} C_2 \right ] +  
(\gamma_{1}+\gamma_2) \Phi \left [  - \frac{\eta_K}{\overline{A} -1} B_1 + \eta_B  B_1^{\theta_1} \right ] \\ + 
(\gamma_{1}+\gamma_2) \Phi \left [ - \frac{\eta_K}{\overline{A} -1} (B_2 + R_b) + \eta_B g(R_b) B_2^{\theta_2} \right ]. 
\end{multline}
We proceed by maximizing each part independently. By maximizing $F_1$, $F_2$ and $F_3$ we obtain (\ref{eq:C1C2planner-easy}) and the first part of (\ref{eq:B1planner-bis}). 
Maximizing $F_4$ w.r.t. $B_2$, we obtain the second part of (\ref{eq:B1planner-bis}). However, the value of $R_b$ is not yet explicit in this expression. We proceed by substituting the expression for the optimal $B_2$ in $F_4$. Then, maximizing the expression w.r.t. $R_b$ (skipping the factor $(\gamma_{1}+\gamma_2) \Phi$, as it multiplies the entire expression), we obtain:
\[
- \frac{\eta_K}{\overline{A} -1} \left [ \left ( \frac{\eta_K}{\eta_B g(R_b)  \theta_2}\frac{1}{\overline{A}-1}   \right)^{\frac{1}{\theta_2-1}} + R_b \right ]  + \eta_B g(R_b) \left ( \frac{\eta_K}{\eta_B g(R_b) \theta_2}\frac{1}{\overline{A}-1}   \right)^{\frac{\theta_2}{\theta_2-1}}.
\]
The last expression can be rewritten as
\begin{equation}
\label{eq:chesevealremark}
\frac{\eta_K}{\overline{A} -1} \left [ -R_b +  g(R_b)^{\frac{1}{1-\theta_2}} \left ( \frac{\theta_2}{\frac{\eta_K}{(\overline{A} -1)\eta_B}} \right )^{\frac{1}{1-\theta_2}} \left ( \frac{1}{\theta_2} -1 \right )\right ],
\end{equation}
and (\ref{eq:RbB2plannernew-bis-ultra}) follows. 
\end{proof}

\begin{Proposition}
\label{pr:optimum-planner-easycase-log}
Let Assumption \ref{hp:easycase} hold and assume logarithmic utility. Then there exists a unique solution to the problem
\eqref{eq:staticglobalplannerSpecific}-\eqref{eq:StaticConstraintsPlannerSpecific}. Moreover, the GP's policy is characterized by the following:
\begin{equation*}  
C_1=\left (\frac{\overline{A}-1}{(\gamma_{1}+\gamma_2) \Phi \eta_K} \right ), \quad C_2=\left (\frac{\overline{A}-1}{(\gamma_{1}+\gamma_2) \Phi \eta_K} \right ), 
\end{equation*}
\begin{equation*}
R_a = 0,
\end{equation*}
\begin{equation}
\label{eq:RbB2plannernew-bis-ultra-log}
R_b \in \argmax \left [ -\frac{\eta_K}{(\overline{A} -1)}R_b  +  (1-\theta_2)\theta_2^\frac{\theta_2}{1-\theta_2}(\eta_B g(R_b))^{\frac{1}{1-\theta_2}} \left ( \frac{(\overline{A} -1)}{\eta_K} \right )^{\frac{\theta_2}{1-\theta_2}} \right ],
\end{equation}
\begin{equation*}
, \qquad B_1=\left ( \frac{{\eta_B}\theta_1}{\eta_K}
(\overline{A}-1) \right )^{\frac{1}{1-\theta_1}}, \quad B_2=\left ( \frac{{\eta_B}g(R_b^{GP})\theta_2}{\eta_K}
(\overline{A}-1) \right )^{\frac{1}{1-\theta_2}},
\end{equation*}
and
\begin{equation*}
K_1 = \frac{1}{\overline{A}-1} (C_1 + C_2 + B_1 + B_2 + R_b),\qquad   K_2 = 0.
\end{equation*}
\end{Proposition}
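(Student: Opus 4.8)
The plan is to mirror the proof of Proposition~\ref{pr:optimum-planner-easycase} (the non-logarithmic case) almost verbatim, simply substituting the logarithmic utility functions. First I would invoke Corollary~\ref{cor:rewriting-planner-max}, which reduces the GP's dynamic problem under Assumption~\ref{hp:easycase} to the temporary optimization problem \eqref{eq:staticglobalplannerSpecific}--\eqref{eq:StaticConstraintsPlannerSpecific}; with $u_i(C_i)=\ln C_i$ this reads
$$
\max_{C,B,K,R_a,R_b}\;\Big[\ln C_1+\ln C_2-(\gamma_1+\gamma_2)\Phi\big(\eta_K(K_1+K_2)-\eta_B(B_1^{\theta_1}+g(R_b)B_2^{\theta_2})\big)\Big]
$$
over the feasible set in \eqref{eq:StaticConstraintsPlannerSpecific}. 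As in the non-logarithmic proof, I would observe that the resource constraint may be assumed to bind (otherwise raise $C_1$), substitute $K_1$ out, and then argue by contradiction that the optimal policy has $K_2=0$ and $R_a=0$, since the coefficients of these variables in the substituted objective are strictly positive (using $h<1$ from Assumption~\ref{hp:easycase}(v)). This yields $K_1=\frac{1}{\overline A-1}(C_1+C_2+B_1+B_2+R_b)$ and reduces the problem to an unconstrained-except-for-nonnegativity maximization in $(C_1,C_2,B_1,B_2,R_b)$.

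Next I would note that the reduced objective is continuous and negatively coercive on $\R^5_+$ (the $\ln$ terms blow up to $-\infty$ near the boundary $C_i=0$, and the linear penalties dominate at infinity), so a maximizer exists; and it is strictly concave in $(C_1,C_2,B_1)$ with the $(B_2,R_b)$-block being strictly concave precisely because $R_b\mapsto g(R_b)^{1/(1-\theta_2)}$ is strictly concave by Assumption~\ref{hp:easycase}(v), giving uniqueness. I would then decompose the objective additively as
$$
F_1(C_1)+F_2(C_2)+F_3(B_1)+F_4(B_2,R_b),
$$
with $F_1(C_1)=\ln C_1-\tfrac{(\gamma_1+\gamma_2)\Phi\eta_K}{\overline A-1}C_1$, analogously for $F_2$, $F_3(B_1)=(\gamma_1+\gamma_2)\Phi\big(-\tfrac{\eta_K}{\overline A-1}B_1+\eta_B B_1^{\theta_1}\big)$, and $F_4(B_2,R_b)=(\gamma_1+\gamma_2)\Phi\big(-\tfrac{\eta_K}{\overline A-1}(B_2+R_b)+\eta_B g(R_b)B_2^{\theta_2}\big)$, and maximize each piece separately.

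Maximizing $F_1$ and $F_2$ by setting the derivative to zero gives $C_1=C_2=\frac{\overline A-1}{(\gamma_1+\gamma_2)\Phi\eta_K}$; maximizing $F_3$ gives $B_1=\big(\tfrac{\eta_B\theta_1}{\eta_K}(\overline A-1)\big)^{1/(1-\theta_1)}$. For $F_4$, first-order conditions in $B_2$ give $B_2=\big(\tfrac{\eta_B g(R_b)\theta_2}{\eta_K}(\overline A-1)\big)^{1/(1-\theta_2)}$; substituting this optimal $B_2$ back into $F_4$ and simplifying the resulting expression in $R_b$ (a routine algebraic manipulation pulling out powers of $\theta_2$, $\eta_B$, $\eta_K$ and $\overline A-1$) yields the objective displayed in \eqref{eq:RbB2plannernew-bis-ultra-log}, whose argmax characterizes $R_b$. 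Finally $K_1$ is recovered from the binding resource constraint. The only genuinely delicate point — and the place where care is needed — is the algebraic simplification of $F_4$ after substituting the optimal $B_2$; one must verify that $-\tfrac{\eta_K}{\overline A-1}B_2+\eta_B g(R_b)B_2^{\theta_2}$, evaluated at the optimal $B_2$, collapses to $(1-\theta_2)\theta_2^{\theta_2/(1-\theta_2)}(\eta_B g(R_b))^{1/(1-\theta_2)}\big(\tfrac{\overline A-1}{\eta_K}\big)^{\theta_2/(1-\theta_2)}$, which is the term appearing in \eqref{eq:RbB2plannernew-bis-ultra-log}; everything else is a direct transcription of the non-logarithmic argument with $1/\sigma_i$ replaced by the $\ln$-derivative $1/C_i$.
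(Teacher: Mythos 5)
Your proposal is correct and is essentially the proof the paper intends: the paper's own "proof" simply states that the argument follows the same lines as Proposition \ref{pr:optimum-planner-easycase} and omits the details, and you have faithfully reproduced that argument with logarithmic utility, including the correct verification that the first-order condition $1/C_i=(\gamma_1+\gamma_2)\Phi\eta_K/(\overline A-1)$ replaces $C_i^{-\sigma_i}=\cdots$ and that the substituted $F_4$ collapses to $(1-\theta_2)\theta_2^{\theta_2/(1-\theta_2)}(\eta_B g(R_b))^{1/(1-\theta_2)}\left(\tfrac{\overline A-1}{\eta_K}\right)^{\theta_2/(1-\theta_2)}$.
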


\begin{proof}
The proof follows the same lines as in the proof of Proposition \ref{pr:optimum-planner-easycase} and will be omitted.
\end{proof}

\subsection{The restricted planner's case (RP)}
\label{subapp:restrictedplannerspecificcas}

\begin{proof}[Proof of Corollary \ref{cor:rewriting-planner-notransfer}]
As in the proof of Corollary \ref{cor:rewriting-planner-max} we define the terms of the optimal control problem as in \eqref{eq:defplannerxu}-\eqref{eq:defplannerg} and define the constraint on the control variables as 
\begin{multline}
\mathbf{l}(t,\mathbf{u}(t)) := \big(-C(t),-B(t),-K(t),-R_a(t),-R_b(t),\\ C_1(t)+B_1(t)+K_1(t)+R_a(t)+R_b(t) - \overline{A}(t)f_1(K_1(t)), \\ C_2(t)+B_2(t)+K_2(t) - h(R_a(t))\overline{A}(t)f_2(K_2(t))\big)^{T}.
\end{multline}
The claim then follows by applying Theorem \ref{th:staticreduction}.
\end{proof}
We will again discuss the optimum under Assumption \ref{hp:easycase}.

\begin{Proposition}
\label{pr:planner-notrasnfer-easy}
Suppose Assumption \ref{hp:easycase} holds and $\sigma\neq 1$ (non-logarithmic utility case). Then the RP’s policies are characterized by the following:
\begin{equation}\label{eq:C1C2-RP-specific}
C_1=\left ( (\gamma_1 + \gamma_2)\Phi \eta_K\frac{1}{\overline{A}-1} \right ) ^{-1/\sigma_1}, \quad C_2 = \left (\frac{\overline{A} h(R_a) -1}{(\gamma_1 + \gamma_2)\Phi \eta_K} \right )^{1/\sigma_2}
\end{equation}
\begin{multline}
\label{eq:RaRb-RP-specific}
(R_a ,R_b) \in\argmax \bigg \{
\frac{\sigma_2}{1-\sigma_2}\left (\frac{1}{(\gamma_1 + \gamma_2)\Phi \eta_K} \right )^{\frac{1}{\sigma_2}} (\overline{A}h(R_a) -1)^{\frac{1-\sigma_2}{\sigma_2}}
\\ +  
{(1-\theta_2) \eta_B g(R_b)  \left ( \frac{\eta_K}{\eta_Bg(R_b) \theta_2}\cdot\frac{1}{\overline{A}h(R_a) -1} \right )^{\frac{\theta_2}{\theta_2-1}}
- \frac{\eta_K}{\overline{A} -1} (R_a + R_b) \bigg \}}
\end{multline}
\begin{equation}\label{eq:B1B2-RP-specific} 
B_1=\left ( \frac{\eta_B\theta_1}{\eta_K}(\overline{A}-1) \right )^{\frac{1}{1-\theta_1}},    \quad B_2=\left ( \frac{\eta_K}{\eta_Bg(R_b)\theta_2}\frac{1}{\overline{A}h(R_a)-1} \right )^{\frac{1}{\theta_2-1}},   
\end{equation}
and
\begin{equation*}
K_1 = \frac{1}{\overline{A}-1} (C_1  + B_1 +R_a + R_b),\qquad   K_2 = \frac{C_2 + B_2}{\overline{A} h(0)-1}.
\end{equation*}
\end{Proposition}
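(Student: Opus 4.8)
The plan is to invoke Corollary \ref{cor:rewriting-planner-notransfer}, which reduces the RP's dynamic problem to the family of temporary problems
\[
\max_{C,B,K,R \ge 0} \Big[ u_1(C_1) + u_2(C_2) - (\gamma_1+\gamma_2)\Phi\, G(t,K,B,R_b) \Big]
\]
over the feasible set $\mathcal{E}^{RP}(t)$ described by \eqref{eq:StaticConstraintsPlannerNoTransfer}. Under Assumption \ref{hp:easycase}, $G$ is time-independent and given by $G(K,B,R_b) = \eta_K(K_1+K_2) - \eta_B(B_1^{\theta_1} + g(R_b)B_2^{\theta_2})$, so the temporary problem is itself time-independent. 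First I would argue that both resource constraints bind at the optimum: since $u_1,u_2$ are strictly increasing and the objective is strictly decreasing in $K_1+K_2$, any slack in a country's constraint can be consumed away to strictly increase the payoff; formally, if country $1$'s constraint is slack one raises $C_1$, and if country $2$'s is slack one raises $C_2$. Then I would use the binding constraints to eliminate $K_1$ and $K_2$: from country $1$, $K_1 = \frac{1}{\overline A - 1}(C_1 + B_1 + R_a + R_b)$ (using $f_1(K_1)=K_1$ and $\overline A >1$), and from country $2$, $K_2 = \frac{C_2+B_2}{\overline A h(R_a) - 1}$, which requires $\overline A h(R_a) > 1$; this holds on the relevant range since $\overline A h(0) > 1$ by Assumption \ref{hp:easycase}(v) and $h$ is increasing.

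Second, after substituting these expressions for $K_1,K_2$ into the objective, the problem separates: the objective becomes a sum of a term depending only on $(C_1,B_1)$ — namely $\frac{C_1^{1-\sigma_1}}{1-\sigma_1} - \frac{(\gamma_1+\gamma_2)\Phi\eta_K}{\overline A - 1}C_1 + (\gamma_1+\gamma_2)\Phi(\eta_B B_1^{\theta_1} - \frac{\eta_K}{\overline A -1}B_1)$ — plus a term depending on $(C_2,B_2,R_a,R_b)$, plus the cross term $-\frac{(\gamma_1+\gamma_2)\Phi\eta_K}{\overline A -1}(R_a+R_b)$. I would then maximize each block in turn. Maximizing over $C_1$ via the first-order condition $C_1^{-\sigma_1} = \frac{(\gamma_1+\gamma_2)\Phi\eta_K}{\overline A -1}$ gives the first formula in \eqref{eq:C1C2-RP-specific}; maximizing over $B_1$ via $\eta_B\theta_1 B_1^{\theta_1-1} = \frac{\eta_K}{\overline A -1}$ gives the first formula in \eqref{eq:B1B2-RP-specific}. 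For the $(C_2,B_2)$ block, for fixed $R_a$ I would optimize over $C_2$ — FOC $C_2^{-\sigma_2} = \frac{(\gamma_1+\gamma_2)\Phi\eta_K}{\overline A h(R_a)-1}$, giving the second formula in \eqref{eq:C1C2-RP-specific} — and over $B_2$ — FOC $\eta_B g(R_b)\theta_2 B_2^{\theta_2-1} = \frac{\eta_K}{\overline A h(R_a)-1}$, giving the second formula in \eqref{eq:B1B2-RP-specific}. Plugging the optimal $C_2$ and $B_2$ back in and collecting the remaining dependence on $(R_a,R_b)$ produces precisely the objective in the $\argmax$ of \eqref{eq:RaRb-RP-specific} (the $C_2$ contribution yields the $\frac{\sigma_2}{1-\sigma_2}(\cdots)^{(1-\sigma_2)/\sigma_2}$ term, the $B_2$ contribution yields the $(1-\theta_2)\eta_B g(R_b)(\cdots)^{\theta_2/(\theta_2-1)}$ term, and the linear cost $-\frac{\eta_K}{\overline A -1}(R_a+R_b)$ carries over). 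This is a routine but somewhat lengthy substitution.

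Third, I would address existence and uniqueness. Existence of a maximizer of the temporary problem follows because, after the reductions above, the objective is continuous on $\R^5_+$ (in the variables $C_1,B_1$ together with $R_a,R_b$ and the already-optimized $C_2,B_2$) and is negatively coercive — the linear-in-$K$ penalty dominates the concave production and utility terms at infinity — so a maximum is attained; the $C_1$, $B_1$, $C_2$, $B_2$ subproblems are each strictly concave and have unique interior solutions, and Assumption \ref{hp:easycase}(v), in particular the strict concavity of $R_b \mapsto g(R_b)^{1/(1-\theta_2)}$, is what guarantees the $(R_a,R_b)$ subproblem has a unique maximizer (this is the same mechanism invoked in Proposition \ref{pr:optimum-planner-easycase}). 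Finally I would note that the resulting time-independent profile is constant in $t$, hence trivially measurable and — one checks the integrability conditions $u_i(C_i)\in L^1_\rho$, $S(\cdot)\in L^1_\rho$ using that $G$ evaluated at this profile is a constant, so $P(t),T(t)$ grow at most linearly and are $L^1_\rho$ — admissible for the original dynamic RP problem. Corollary \ref{cor:rewriting-planner-notransfer} then upgrades this to the unique solution of the RP's problem. The main obstacle is bookkeeping: verifying that the back-substitution of the optimal $C_2$ and $B_2$ genuinely collapses to the displayed expression in \eqref{eq:RaRb-RP-specific}, and confirming the sign conditions ($\overline A h(R_a) - 1 > 0$ throughout, non-negativity of all controls at the optimum) needed for the first-order characterizations to be valid. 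Since $\sigma \neq 1$ here, one must also keep the $\frac{\sigma_2}{1-\sigma_2}$ prefactor and exponents straight; the logarithmic case would be handled separately as in Proposition \ref{pr:optimum-planner-easycase-log}.
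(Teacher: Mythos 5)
Your proposal follows essentially the same route as the paper's proof: reduce to the temporary problem via Corollary \ref{cor:rewriting-planner-notransfer}, use the binding budget constraints to eliminate $K_1$ and $K_2$, additively decompose the objective into blocks in $(C_1)$, $(B_1)$, $(C_2,R_a)$, $(B_2,R_a,R_b)$ and the linear transfer cost, optimize each block by first-order conditions, and back-substitute the optimal $C_2$ and $B_2$ to obtain the reduced $(R_a,R_b)$ maximization in \eqref{eq:RaRb-RP-specific}. Your additional remarks on coercivity, the strict concavity of $g(R_b)^{1/(1-\theta_2)}$, and admissibility of the resulting constant profile only make explicit steps the paper leaves implicit (and your $\overline{A}h(R_a)-1$ in the denominator of $K_2$ is the intended expression), so the argument is correct and matches the paper's.
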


\begin{proof} We find if convenient to use Corollary \ref{cor:rewriting-planner-notransfer} together with the budget constraint, in order to rewrite the optimization problem into separate parts. Thus, we need to maximize the following:
\begin{multline}
F_1(C_1) + F_2(C_2,R_a) + F_3(B_1) + F_4(B_2,R_a) + F_5(R_a, R_b) := 
\left [ \frac{C_1^{1-\sigma_1}}{1-\sigma_1} -  \frac{(\gamma_{1}+\gamma_2) \Phi \eta_K}{\overline{A} -1} C_1 \right ] \\
+ \left [ \frac{C_2^{1-\sigma_2}}{1-\sigma_2} -  \frac{(\gamma_{1}+\gamma_2) \Phi \eta_K}{\overline{A} h(R_a) -1} C_2 \right ] +  
(\gamma_{1}+\gamma_2) \Phi \left [  - \frac{\eta_K}{\overline{A} -1} B_1 + \eta_B  B_1^{\theta_1} \right ] \\ + 
(\gamma_{1}+\gamma_2) \Phi \left [ - \frac{\eta_K}{h(R_a)\overline{A} -1} B_2 + \eta_B g(R_b) B_2^{\theta_2} \right ]
+
(\gamma_{1}+\gamma_2) \Phi \left [ - \frac{\eta_K}{\overline{A} -1} (R_a + R_b) \right ].
\end{multline}
Maximizing $F_1$ and $F_3$ gives the two equations in \eqref{eq:C1C2-RP-specific}, while maximizing $F_2$ and $F_4$ results in the two expressions in \eqref{eq:B1B2-RP-specific}. Using the expressions in \eqref{eq:B1B2-RP-specific}, to find $R_a$ and $R_b$ we need to maximize (ignoring the common factor $(\gamma_{1}+\gamma_2) \Phi$):
\begin{multline}
\frac{\sigma_2}{1-\sigma_2}\left (\frac{1}{(\gamma_1 + \gamma_2)\Phi \eta_K} \right )^{\frac{1}{\sigma_2}} (h(R_a) \overline{A}-1)^{\frac{1-\sigma_2}{\sigma_2}}\\ +  
{(1-\theta) \eta_B g(R_b)  \left ( \frac{\eta_K}{\eta_B g(R_b) \theta_2}\frac{1}{h(R_a) \overline{A}-1} \right )^{\frac{\theta_2}{\theta_2-1}}
- \frac{\eta_K}{\overline{A} -1} (R_a + R_b).}
\end{multline}
If $(R_a, R_b)$ is a maximum point for this expression, the control is optimal.
\end{proof}

\begin{Proposition}
\label{pr:planner-notrasnfer-easy-log}
Suppose Assumption \ref{hp:easycase} holds and assume logarithmic utility. Then the RP’s policies are characterized by the following:
\begin{equation}
C_1^{RP}=\left (\frac{\overline{A}-1}{(\gamma_{1}+\gamma_2) \Phi \eta_K} \right ), \qquad C_2^{RP}=
\left (\frac{\overline{A} h(R_a^{p_2}) -1}{(\gamma_1 + \gamma_2)\Phi \eta_K} \right ),
\end{equation}
\begin{multline}
\label{eq:Ra-Rb-RP-log}
(R_a^{RP} ,R_b^{RP}) \in\argmax 
\bigg \{ { \frac{1}{(\gamma_1+\gamma_2)\Phi}\log\left(\dfrac{\overline{A}h(R_a)-1}{(\gamma_1+\gamma_2)\Phi\eta_K} \right)} \\
+(1-\theta_2)\theta_2^\frac{\theta_2}{1-\theta_2} \left ( \eta_B  g(R_b) \right)^{\frac{1}{1 - \theta_2}} \left ( \frac{\overline{A}h(R_a) -1}{\eta_K} \right )^{\frac{\theta_2}{1-\theta_2}}
- \frac{\eta_K}{\overline{A} -1} (R_a + R_b) \bigg \}
\end{multline}
\begin{equation}
B_1^{RP}=\left ( \frac{\eta_B\theta_1}{\eta_K}(\overline{A}-1) \right )^{\frac{1}{1-\theta_1}},  \quad B_2^{RP}=\left ( \frac{\eta_Bg(R_b^{RP})\theta_2}{\eta_K}(\overline{A}h(R_a)-1) \right )^{\frac{1}{1-\theta_2}},
\end{equation}
and
\begin{equation*}
K_1 = \frac{1}{\overline{A}-1} (C_1  + B_1 +R_a + R_b),\qquad   K_2 = \frac{C_2 + B_2}{\overline{A} h(0)-1}.
\end{equation*}
\end{Proposition}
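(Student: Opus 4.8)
The plan is to follow the proof of Proposition~\ref{pr:planner-notrasnfer-easy} verbatim, specializing to $u_1=u_2=\ln$. By Corollary~\ref{cor:rewriting-planner-notransfer} it suffices to solve, for each fixed $t$, the temporary problem of maximizing $\ln C_1+\ln C_2-(\gamma_1+\gamma_2)\Phi\,G(t,K,B,R_b)$ over the feasible set cut out by the two country-specific resource constraints \eqref{eq:planner-notrasnfer-resourceconstraintmain} and non-negativity, and then to verify that the resulting (constant-in-$t$) path is admissible for the dynamic RP problem. As in Proposition~\ref{pr:planner-notrasnfer-easy}, each resource constraint must bind at an optimum: slack in country~$1$'s constraint would let us raise $C_1$, strictly increasing $\ln C_1$ without affecting $G$ or country~$2$'s constraint, and symmetrically for country~$2$. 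By Assumption~\ref{hp:easycase}(v), $\overline{A}h(R_a)-1\ge \overline{A}h(0)-1>0$ for every admissible $R_a\ge 0$, so the binding constraints solve for $K_1=\tfrac{1}{\overline{A}-1}(C_1+B_1+R_a+R_b)$ and $K_2=\tfrac{C_2+B_2}{\overline{A}h(R_a)-1}$, both non-negative; substituting these together with the explicit $G$ of Assumption~\ref{hp:easycase}(i) reduces the temporary problem to maximizing a function of $(C_1,C_2,B_1,B_2,R_a,R_b)$ subject only to non-negativity.

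Next I would observe that the reduced objective splits additively as
$$F_1(C_1)+F_2(C_2,R_a)+F_3(B_1)+F_4(B_2,R_a,R_b)+F_5(R_a,R_b),$$
with $F_1(C_1)=\ln C_1-\tfrac{(\gamma_1+\gamma_2)\Phi\eta_K}{\overline{A}-1}C_1$, $F_2(C_2,R_a)=\ln C_2-\tfrac{(\gamma_1+\gamma_2)\Phi\eta_K}{\overline{A}h(R_a)-1}C_2$, $F_3(B_1)=(\gamma_1+\gamma_2)\Phi\big(\eta_B B_1^{\theta_1}-\tfrac{\eta_K}{\overline{A}-1}B_1\big)$, $F_4(B_2,R_a,R_b)=(\gamma_1+\gamma_2)\Phi\big(\eta_B g(R_b)B_2^{\theta_2}-\tfrac{\eta_K}{\overline{A}h(R_a)-1}B_2\big)$, and $F_5(R_a,R_b)=-\tfrac{(\gamma_1+\gamma_2)\Phi\eta_K}{\overline{A}-1}(R_a+R_b)$. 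Since $C_1$ and $B_1$ enter only $F_1$ and $F_3$, the logarithmic first-order condition $1/C_1=\tfrac{(\gamma_1+\gamma_2)\Phi\eta_K}{\overline{A}-1}$ gives $C_1^{RP}$, and differentiating $F_3$ gives $B_1^{RP}$, exactly as in the non-logarithmic case. For the rest, fix $(R_a,R_b)$ and maximize $F_2$ over $C_2$ and $F_4$ over $B_2$: the conditions $1/C_2=\tfrac{(\gamma_1+\gamma_2)\Phi\eta_K}{\overline{A}h(R_a)-1}$ and $\eta_B g(R_b)\theta_2 B_2^{\theta_2-1}=\tfrac{\eta_K}{\overline{A}h(R_a)-1}$ yield the stated $C_2^{RP}$ and $B_2^{RP}$.

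Finally I would substitute these optimal $C_2,B_2$ back into $F_2$ and $F_4$, discard the additive constant arising from $\ln C_2$ minus its linear term, and factor $(\gamma_1+\gamma_2)\Phi$ out of $F_4+F_5$: the $\ln C_2$ piece becomes $\tfrac{1}{(\gamma_1+\gamma_2)\Phi}\log\!\big(\tfrac{\overline{A}h(R_a)-1}{(\gamma_1+\gamma_2)\Phi\eta_K}\big)$, the $F_4$ piece simplifies (using $(\,\cdot\,)^{\frac{\theta_2}{\theta_2-1}}=(\,\cdot\,)^{-\frac{\theta_2}{1-\theta_2}}$ and $1+\tfrac{\theta_2}{1-\theta_2}=\tfrac{1}{1-\theta_2}$) to $(1-\theta_2)\theta_2^{\frac{\theta_2}{1-\theta_2}}(\eta_B g(R_b))^{\frac{1}{1-\theta_2}}\big(\tfrac{\overline{A}h(R_a)-1}{\eta_K}\big)^{\frac{\theta_2}{1-\theta_2}}$, and $F_5$ gives $-\tfrac{\eta_K}{\overline{A}-1}(R_a+R_b)$; their sum is precisely the bracket in \eqref{eq:Ra-Rb-RP-log}. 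Boundedness of $h$ and $g$ makes the first two terms bounded above while the last sends the objective to $-\infty$ as $R_a+R_b\to\infty$, so, being continuous on $\R_+^2$, it attains a maximum at some $(R_a^{RP},R_b^{RP})$; feeding this back through the closed-form expressions for $C_i,B_i,K_i$ produces an admissible constant path (the induced $S(\cdot)$ and the $u_i(C_i)$ lie in $L^1_\rho$ since $G$ is bounded and constant), so Corollary~\ref{cor:rewriting-planner-notransfer} closes the argument. The only real work is bookkeeping: correctly propagating the $R_a$-dependence through country~$2$'s budget term $\overline{A}h(R_a)-1$ in both $C_2$ and $B_2$, and carrying out the exponent algebra in the $F_4$ substitution so that it matches \eqref{eq:Ra-Rb-RP-log}; there is no conceptual obstacle beyond what already appears in Proposition~\ref{pr:planner-notrasnfer-easy}.
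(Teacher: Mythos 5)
Your proposal is correct and takes exactly the route the paper intends: the paper omits this proof, stating only that it "follows the lines of the proof of Proposition \ref{pr:planner-notrasnfer-easy}," and your argument is precisely that specialization — binding both budget constraints, substituting into the temporary problem of Corollary \ref{cor:rewriting-planner-notransfer}, splitting the objective into the additive pieces $F_1,\dots,F_5$, and solving the logarithmic first-order conditions before back-substituting to obtain the bracket in \eqref{eq:Ra-Rb-RP-log}. The exponent algebra in your $F_4$ substitution checks out, so nothing is missing.
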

\begin{proof}
The proof follows the lines of the proof of Proposition \ref{pr:planner-notrasnfer-easy} and will be omitted.
\end{proof}

\subsection{Nash equilibrium (N)}
\label{subapp:Nashspecificcas}

\begin{Proposition}
\label{pr:rewriting-Nash} 
Let 
$(C(\cdot), K(\cdot), B(\cdot),  R(\cdot)) \in\mathcal{U}^{RP}$ and assume that 
$t\mapsto G(t, C(t), B(t), R_b(t)) \in L_{\rho}^1(\mathbb{R}_+)$.
Then the objective functional of country $i$ in expression \eqref{eq5:defUPbisgameNashtransfer} can be written as
\begin{equation}
\label{eq:defUPbisnew-Nash}
U_i=
\gamma_{i}\left[
\frac{\overline{S}}{\rho} -
\frac{P(0)}{\rho}-\frac{T(0)}{\rho+\phi}
\right] + \int_{0}^{\infty }e^{-\rho t}
\left[u_{i}\left(C_i(t)\right)
-\gamma_{i}\Phi G\left(t,K(t),B(t),R_b(t)\right)\right]dt,
\end{equation}
where $\Phi$ is defined in (\ref{eq:defPhi}). 
\end{Proposition}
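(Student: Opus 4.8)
The plan is to recognize that Proposition \ref{pr:rewriting-Nash} is simply the specialization of the general single-agent reformulation result (Proposition \ref{pr:rewriting-general}, via the Fubini computation of Proposition \ref{pr:rewriting-generalNash}) to the climate model at hand, with the state variable taken to be $\mathbf{x}(t)=(P(t),T(t))^T$. First I would recall the identifications already made in Section \ref{sub:planner} for the planner's problem: the state matrix $\mathbf{A}(t)=\begin{pmatrix}0&0\\0&-\phi\end{pmatrix}$, the map $\mathbf{f}(t,\mathbf{u})=\big(\phi_L G(t,K,B,R_b),\,(1-\phi_L)\phi_0 G(t,K,B,R_b)\big)^T$, and for player $i$ the cost coefficient $\mathbf{a}^i(t)=-\gamma_i(1,1)^T$ (reading $U_i$ in \eqref{eq5:defUPbisgameNashtransfer}, where $-\gamma_i\int e^{-\rho t}(P(t)+T(t))\,dt=\int e^{-\rho t}\langle\mathbf{a}^i(t),\mathbf{x}(t)\rangle\,dt$). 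Since $\mathbf{A}$ is constant and diagonal, $\Phi_{\mathbf{A}}^*(t+\tau,t)=e^{\tau\mathbf{A}}=\begin{pmatrix}1&0\\0&e^{-\phi\tau}\end{pmatrix}$, so by \eqref{eq:defbfbnash},
\[
\mathbf{b}^i(t)=\int_0^\infty e^{-\rho\tau}\begin{pmatrix}1&0\\0&e^{-\phi\tau}\end{pmatrix}\!\begin{pmatrix}-\gamma_i\\-\gamma_i\end{pmatrix}d\tau=-\gamma_i\begin{pmatrix}1/\rho\\1/(\rho+\phi)\end{pmatrix},
\]
which is constant in $t$ and bounded, so Assumption \ref{hp:reductionNash} holds (using the at-most-linear growth of $G$ imposed in Section \ref{subsect:theclimatemodel} for Hypothesis \ref{hp:reduction}(ii)).

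Next I would apply Proposition \ref{pr:rewriting-generalNash}. The hypothesis $t\mapsto G(t,C(t),B(t),R_b(t))\in L^1_\rho(\mathbb{R}_+)$ together with the linear-growth bound on $\mathbf{f}$ ensures $t\mapsto\mathbf{f}(t,\mathbf{u}(t))\in L^1_{loc}$ and that the integrability premises of the proposition are met; admissibility $(C,K,B,R)\in\mathcal{U}^{RP}$ gives $t\mapsto u_i(C_i(t))\in L^1_\rho$ and the state-constraint/budget requirements. Then \eqref{eq:reductionFubiniNash} yields
\[
-\gamma_i\int_0^\infty e^{-\rho t}(P(t)+T(t))\,dt=\left\langle\mathbf{b}^i(0),\mathbf{x}_0\right\rangle+\int_0^\infty e^{-\rho t}\left\langle\mathbf{b}^i(t),\mathbf{f}(t,\mathbf{u}(t))\right\rangle dt.
\]
Here $\langle\mathbf{b}^i(0),\mathbf{x}_0\rangle=-\gamma_i\big(P(0)/\rho+T(0)/(\rho+\phi)\big)$, and
\[
\left\langle\mathbf{b}^i(t),\mathbf{f}(t,\mathbf{u}(t))\right\rangle=-\gamma_i\!\left(\frac{\phi_L}{\rho}+\frac{(1-\phi_L)\phi_0}{\rho+\phi}\right)\!G(t,K(t),B(t),R_b(t))=-\gamma_i\Phi\,G(t,K(t),B(t),R_b(t)),
\]
with $\Phi$ as in \eqref{eq:defPhi}.

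Finally I would assemble $U_i$: substituting the above into \eqref{eq5:defUPbisgameNashtransfer} and adding the $\gamma_i\overline{S}/\rho$ term already present gives exactly \eqref{eq:defUPbisnew-Nash}. I do not expect a genuine obstacle here — the only thing to be careful about is bookkeeping: matching the sign conventions between $\mathbf{a}^i$ and the $-\gamma_i$ in the objective, confirming that the transitory-component eigenvalue $-\phi$ feeds through $\Phi_{\mathbf{A}}^*$ correctly to produce the $1/(\rho+\phi)$ factor, and checking that the stated integrability hypothesis on $G$ is precisely what is needed to license the Fubini--Tonelli interchange in Proposition \ref{pr:rewriting-generalNash}. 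Since these computations replicate verbatim those already carried out for the global planner in Section \ref{sub:planner} (the only change being the coefficient $\gamma_i$ in place of $\gamma_1+\gamma_2$), the cleanest exposition is to state that the result follows from Proposition \ref{pr:rewriting-generalNash} with the identifications above, and exhibit the computation of $\mathbf{b}^i$.
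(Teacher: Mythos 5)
Your proposal is correct and follows essentially the same route as the paper: it specializes Proposition \ref{pr:rewriting-generalNash} with the identifications $\mathbf{x}=(P,T)^T$, $\mathbf{A}=\mathrm{diag}(0,-\phi)$, $\mathbf{a}^i=-\gamma_i(1,1)^T$, computes $\mathbf{b}^i(t)=-\gamma_i(1/\rho,\,1/(\rho+\phi))^T$ via $\Phi_{\mathbf{A}}^*(t+\tau,t)=e^{\tau\mathbf{A}}$, and verifies Assumption \ref{hp:reductionNash} from the growth conditions on $G$. The only difference is cosmetic: you carry out the final inner-product bookkeeping explicitly, whereas the paper leaves that last substitution implicit after invoking the general proposition.
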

\begin{proof}
We will apply Proposition \ref{pr:rewriting-generalNash}. We begin by defining  
\begin{equation}
\label{eq:stateNash}
\mathbf{x}(t) := \begin{pmatrix} P(t) \\ T(t) \\ \end{pmatrix},
\end{equation}
\begin{equation}
\mathbf{u}^{1}(t) := \left(C_1(t),B_1(t),K_1(t),R_a(t),R_b(t)\right)^{T}, \quad \mathbf{u}^{2}(t) := \left(C_2(t),B_2(t),K_2(t)\right)^{T}, 
\end{equation}
\begin{equation}
\mathbf{A}(t) := \begin{pmatrix} 0 & 0 \\ 0 & -\phi \\ \end{pmatrix}, \quad \mathbf{f}(t,\mathbf{u}(t)) := \begin{pmatrix} \phi_L G\left(t,K(t),B(t),R_b(t)\right) \\ (1-\phi_L)\phi_0 G\left(t,K(t),B(t),R_b(s)\right)  \end{pmatrix},
\end{equation}
\begin{equation}
\mathbf{a}^{1}(t) := -\gamma_1\begin{pmatrix} 1 \\ 1 \\ \end{pmatrix},\quad  h^1(t,\mathbf{u}^{1}(t),\mathbf{u}^{-1}(t)) := u_1(C_1(t)) + \gamma_1\overline{S},
\end{equation}
\begin{equation}
\mathbf{a}^{2}(t) := -\gamma_2\begin{pmatrix} 1 \\ 1 \\ \end{pmatrix},\quad  h^2(t,\mathbf{u}^{2}(t),\mathbf{u}^{-2}(t)) := u_2(C_2(t)) + \gamma_2\overline{S},
\end{equation}
\begin{equation}
\mathbf{g}^{1}(t,\mathbf{x}(t)) := 0,\quad \mathbf{g}^{2}(t,\mathbf{x}(t)) := 0,
\end{equation}
and
\begin{multline}
\label{eq:lNash-1}
\mathbf{l}^1(t,\mathbf{u}(t)) := \big(-C_1(t),-B_1(t),-K_1(t),-R_a(t),-R_b(t),\\ C_1(t)+B_1(t)+K_1(t)+R_a(t)+R_b(t) - \overline{A}(t)f_1(K_1(t))\big)^{T},
\end{multline}
\begin{multline}
\label{eq:lNash-2}
\mathbf{l}^2(t,\mathbf{u}(t)) := \big(-C_2(t),-B_2(t),-K_2(t), C_2(t)+B_2(t)+K_2(t) - h(R_a(t))\overline{A}(t)f_2(K_2(t))\big)^{T}.
\end{multline}
Under these choices, the general formulation in Section \ref{app:reductionNash} reduces to the problem in \eqref{eq5:defUPbisgameNashtransfer}. Moreover, we have
\begin{equation*}
\Phi_A^{\ast}(t+\tau,t) = \exp\left(\tau A\right) = \begin{pmatrix} 1 & 0 \\ 0 & e^{-\phi \tau} \\ \end{pmatrix},
\end{equation*}
and
\begin{equation*}
\mathbf{b}^{i}(t) = -\gamma_i \begin{pmatrix} {1}/{\rho} \\ {1}/{(\rho+\phi)} \\ \end{pmatrix},\quad i \in \{1,2\}.
\end{equation*}
Thus, Assumption \ref{hp:reductionNash} is readily satisfied, due to the assumptions in Section \ref{subsect:theclimatemodel}. Applying Proposition  \ref{pr:rewriting-generalNash} completes the proof.
\end{proof}

\begin{proof}[Proof of Corollary \ref{cor:rewriting+conditions-Nash}]

As noted earlier, in order to rule out outcomes when (\ref{eq:vincolo1gameplayer1}) or (\ref{eq:vincolo2gameplayer2}) do not hold, we assume that the payoff to either player is $-\infty$ if their budget constraint is violated. It the follows that Player 1 will choose $R_a=0$. For, if this is not the case, their payoff could increase by reducing $R_a$ and increasing $C_1$ (keeping $B_1$, $R_b$ and $K_1$ constant). The result then follows by applying Theorem \ref{th:staticreductionNash} using the definitions in \eqref{eq:stateNash}-\eqref{eq:lNash-1}-\eqref{eq:lNash-2}.
\end{proof}

Next, we once again turn to the special case in the main text.

\begin{Proposition}
\label{pr:Nash-easy}
Suppose that Assumption \ref{hp:easycase} holds and assume $\sigma\neq 1$ (non-logarithmic utility case). Then Nash equilibrium is characterized by the following equations:
\begin{equation}  \label{eq:C1B1easy-game1}
C_1=\left ( \gamma_{1}\Phi \eta_K\frac{1}{\overline{A}-1} \right ) ^{-1/\sigma_1}, \qquad 
C_2= \left(\gamma_{2}\Phi \eta_K\frac{1}{\overline{A}h(0) -1}\right)^{-1/\sigma_2},
\end{equation}
\begin{equation}
\label{eq:RaRbN}
R_a^{N}= 0, \qquad R_b^{N} 
\in \argmax\bigg \{  
(1-\theta_2)\theta_2^\frac{\theta_2}{1-\theta_2} \left ( \eta_B  g(R_b) \right)^{\frac{1}{1 - \theta_2}} \left ( \frac{\overline{A}h(0) -1}{\eta_K} \right )^{\frac{\theta_2}{1-\theta_2}}
- \frac{\eta_K}{\overline{A} -1}R_b \bigg \}
\end{equation}
\begin{equation*}
B_1=\left ( \frac{\eta_K}{\eta_B\theta_1}\frac{1}{\overline{A}-1} \right )^{\frac{1}{\theta_1-1}}, B_2 = \left ( \frac{\eta_B g(R_b) \theta_2 (\overline{A} h(0) -1)}{\eta_K} \right )^{\frac{1}{1 - \theta_2}},
\end{equation*}
and
\begin{equation*}
K_1 = \frac{1}{\overline{A}-1} (C_1  + B_1 + R_b),\qquad   K_2 = \frac{C_2 + B_2}{\overline{A} h(0)-1}.
\end{equation*}
\end{Proposition}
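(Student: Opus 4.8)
The plan is to mirror the argument used for the restricted planner in Proposition~\ref{pr:planner-notrasnfer-easy}, now invoking the equivalence between the differential game and the family of temporary games supplied by Corollary~\ref{cor:rewriting+conditions-Nash} (itself a specialization of Theorem~\ref{th:staticreductionNash}). Under Assumption~\ref{hp:easycase} the temporary game at each $t$ is autonomous, so I first write it out explicitly: player~1 maximizes $\frac{C_1^{1-\sigma_1}}{1-\sigma_1}-\gamma_1\Phi\big[\eta_K(K_1+K_2)-\eta_B(B_1^{\theta_1}+g(R_b)B_2^{\theta_2})\big]$ over $(C_1,B_1,K_1,R_a,R_b)\ge 0$ subject to $C_1+B_1+K_1+R_a+R_b\le \overline{A}K_1$, taking $(C_2,B_2,K_2)$ as given, while player~2 maximizes the analogous expression with $\gamma_2$ over $(C_2,B_2,K_2)\ge 0$ subject to $C_2+B_2+K_2\le \overline{A}h(R_a)K_2$, taking player~1's choices as given.

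For player~1, I would argue as in the proof of Corollary~\ref{cor:rewriting+conditions-Nash} that $R_a^N=0$ (since $R_a$ enters only through country~1's budget and not through its production, any $R_a>0$ is strictly dominated by shifting it to $C_1$), and that the budget binds without loss of generality. Substituting $K_1=\frac{1}{\overline{A}-1}(C_1+B_1+R_b)$ renders player~1's objective additively separable into a strictly concave, negatively coercive function of $C_1$, one of $B_1$, and one of $R_b$ (with $B_2$ held fixed); the corresponding first-order conditions deliver the stated $C_1$ and $B_1$ in \eqref{eq:C1B1easy-game1} together with the relation $\eta_B g'(R_b)B_2^{\theta_2}=\eta_K/(\overline{A}-1)$. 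For player~2, with $R_a=0$ and the binding budget I substitute $K_2=\frac{C_2+B_2}{\overline{A}h(0)-1}$; the objective again splits into strictly concave functions of $C_2$ and of $B_2$, whose first-order conditions give the stated $C_2$ and $B_2=\big(\eta_B g(R_b)\theta_2(\overline{A}h(0)-1)/\eta_K\big)^{1/(1-\theta_2)}$.

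The only genuine coupling that remains is between player~1's choice of $R_b$ and player~2's choice of $B_2$. Substituting player~2's reaction $B_2=B_2(R_b)$ into player~1's $R_b$-condition, I would verify that the resulting equation is precisely the first-order condition of the objective appearing in \eqref{eq:RaRbN}; since by Assumption~\ref{hp:easycase}(v) the map $R_b\mapsto g(R_b)^{1/(1-\theta_2)}$ is strictly concave, that objective is strictly concave and admits a unique maximizer, which pins $R_b^N$ down uniquely and then determines $B_2$, $B_1$, the $C_i$, and the $K_i$ by back-substitution. This yields existence and uniqueness of the temporary Nash equilibrium. I conclude by observing that the resulting profile is constant in $t$ and admissible for the differential game — here the standing restriction $G>0$, the budget feasibility, and the integrability of $u_i(C_i)$ are used — so that the sufficiency and the ``conversely/consequently'' parts of Corollary~\ref{cor:rewriting+conditions-Nash} upgrade it to the (a.e.\ unique) open-loop Nash equilibrium of the dynamic game. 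The main obstacle is exactly the $R_b$--$B_2$ interdependence; everything hinges on the strict concavity of $g(\cdot)^{1/(1-\theta_2)}$ assumed in Hypothesis~\ref{hp:easycase}, which collapses the fixed-point problem into a single strictly concave optimization in $R_b$.
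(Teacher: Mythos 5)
Your proposal is correct and follows essentially the same route as the paper's proof: reduce to the temporary game via Corollary \ref{cor:rewriting+conditions-Nash}, set $R_a=0$ by the dominance argument, substitute the binding budgets, split each player's objective into separable concave pieces, and resolve the only remaining coupling by inserting player 2's reaction $B_2(R_b)$ into player 1's first-order condition for $R_b$, which is exactly the stationarity condition of the objective in \eqref{eq:RaRbN}. Your observation that strict concavity of $g(\cdot)^{1/(1-\theta_2)}$ collapses the fixed-point problem into a single strictly concave optimization in $R_b$ matches the role this assumption plays in the paper's existence/uniqueness discussion.
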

\begin{proof}
We look for a solution of the form described in Proposition \ref{cor:rewriting+conditions-Nash}. As demonstrated in the beginning of the proof of Corollary \ref{cor:rewriting+conditions-Nash}, we necessarily have $R_a\equiv 0$. This implies that $R_a = 0$ and the expressions for $K_1$ and $K_2$ follow.

Using the equation for $K_1$ and $K_2$, we can rewrite the maximization problem of Player 1 as
\begin{multline}
\max_{C_1, B_1, R_b} u_{1}\left(C_1\right) -\gamma_{1}\Phi 
\eta_K \left (\frac{1}{\overline{A}-1}
(C_1  + B_1  + R_b) + \frac{1}{h(0)\overline{A}-1} (C_2 + B_2) \right ) \\ 
+ \gamma_{1}\Phi \eta_B \left (B_1^{\theta_1} +g(R_b) B_2^{\theta_2} \right) = F_1 (C_1) - F_2 (B_1) - F_3 (R_b, B_2) - F_4(B_2,C_2)=
\\
= \left [ u_{1}\left(C_1\right) -  \frac{\gamma_{1}\Phi \eta_K}{\overline{A}-1}
C_1 \right ] - 
\left [ 
\frac{\gamma_{1}\Phi \eta_K}{\overline{A}-1} B_1 -  \gamma_{1}\Phi \eta_B B_1^{\theta_1}
\right ] - 
\left [
\frac{\gamma_{1}\Phi \eta_K}{\overline{A}-1} R_b - \gamma_{1}\Phi \eta_{B} g(R_b) B_2^{\theta_2} 
\right ]\\
- \gamma_{1}\Phi \eta_K \left [\frac{1}{h(0)\overline{A}-1} (C_2 + B_2)\right ].
\end{multline}
Maximizing the term containing $F_1$ and $F_2$ we obtain (\ref{eq:C1B1easy-game1}). The term containing $F_4$ does not depend of the decisions of Player 1, so it is not taken into account in his decision. The maximization of the term containing $F_3$ (recall that $g$ is concave) gives
\begin{equation}
\label{eq:maxR_b-Player1}    
\left \{
\begin{array}{ll}
R_b = 0,      & \text{if } g'(0) \leq \frac{1}{\overline{A}-1} \frac{\eta_K}{\eta_B} B_2^{-\theta_2} \\[8pt]
R_b = (g')^{-1} \left ( \frac{1}{\overline{A}-1} \frac{\eta_K}{\eta_B} B_2^{-\theta_2} \right ),      & \text{if } g'(0) > \frac{1}{\overline{A}-1} \frac{\eta_K}{\eta_B} B_2^{-\theta_2}.
\end{array}
\right .
\end{equation}

The maximization problem of Player 2 reads as 
\begin{multline}
\max_{C_2, B_2} u_{2}\left(C_2\right) -\gamma_{2}\Phi
\eta_K \left (\frac{1}{\overline{A}-1}
(C_1  + B_1  + R_b) + \frac{1}{h(0)\overline{A}-1} (C_2 + B_2) \right ) 
+ \gamma_{2}\Phi \eta_B \left (B_1^{\theta_1} +g(R_b) B_2^{\theta_2} \right)\\
= J_1 (C_2) - J_2(B_2, R_b) + J_3(C_1, B_1) \\
\\
= \left [ u_{1}\left(C_2\right) -  \frac{1}{h(0)\overline{A}-1} C_2 \right ] - 
\left [ 
\frac{\gamma_{2}\Phi \eta_K}{h(0)\overline{A}-1} B_2 - \gamma_{2}\Phi \eta_B g(R_b) B_2^{\theta_2} 
\right ] \\
+ \left [
-\gamma_{2}\Phi 
\eta_K \left (\frac{1}{\overline{A}-1}
(C_1  + B_1  + R_b) \right ) + \gamma_{2}\Phi \eta_B B_1^{\theta_1}
\right ].
\end{multline}

Maximizing the term $J_1$, we obtain the expression for $C_2$. The term $J_3$ does not depend of the decisions of Player 2. The maximization of the $J_2$ gives 
\begin{equation}
\label{eq:feedback-RP}
B_2 := \left ( \frac{\eta_B g(R_b) \theta_2 (\overline{A} h(0) -1)}{\eta_K} \right ) ^{\frac{1}{1 - \theta_2}}.
\end{equation}
Thus, a Nash equilibrium with $R_b=0$ exists if and only if this expression evaluated at $R_b=0$ satisfies the condition in the first line of (\ref{eq:maxR_b-Player1}); i.e. if and only if 
\[
g'(0) \leq \frac{1}{\overline{A}-1} \frac{\eta_K}{\eta_B} \left ( \frac{\eta_B g(0)  \theta_2 (\overline{A} h(0) -1)}{\eta_K} \right ) ^{\frac{- \theta_2}{1 - \theta_2}}.
\]
Rearranging this expression, we obtain the condition (i) in the text.

To establish the existence of a a Nash equilibrium with $R_b>0$, we need to find a pair $B_2$, $R_b$ satisfying the second line of (\ref{eq:maxR_b-Player1}) and (\ref{eq:feedback-RP}). Thus, we need to find $R_b^h$ such that 
\[
\frac{\eta_K}{\eta_B} = g'(R_b^h)^{1-\theta_2} {\theta_2^{\theta_2}}g(R_b^h)^{\theta_2} (\overline{A}-1)^{1-\theta_2} (\overline{A} h(0) -1)^{\theta_2}.
\]
Since $g'(R_b)^{1-\theta_2} g(R_b)^{\theta_2}\to0$ as $R_b \to +\infty$, the previous equation has a solution if and only if (ii) holds. The associated value of $B_2^h$ can be found using (\ref{eq:feedback-RP}). We remark that the condition in the second line of (\ref{eq:maxR_b-Player1}) holds, since $B_2^h$ can be found using the expression in (\ref{eq:maxR_b-Player1}). Thus, since $g$ is concave,
\[
g'(0) > g'(R_b^h) = \left ( \frac{1}{\overline{A}-1} \frac{\eta_K}{\eta_B} {B_2^h}^{-\theta_2} \right ).
\]
Implication (iii) then follows from (i) and (ii).
\end{proof}

\begin{Proposition}
\label{pr:Nash-easy-log}
Suppose that Assumption \ref{hp:easycase} holds and assume logarithmic utility. Then, Nash equilibrium is characterized by the following:
\begin{equation}
\label{eq:C1C2N-log}
C_1^{N}=\left ( \frac{\overline{A}-1}{\gamma_{1}\Phi \eta_K} \right ), \qquad C_2^{N}= \left(\frac{\overline{A}h(0) -1}{\gamma_{2}\Phi \eta_K}\right), 
\end{equation}
\begin{equation}
\label{eq:RaRbN-log}
R_a^{N}= 0, \qquad R_b^{N} 
\in \argmax\bigg \{  
(1-\theta_2)\theta_2^\frac{\theta_2}{1-\theta_2} \left ( \eta_B  g(R_b) \right)^{\frac{1}{1 - \theta_2}} \left ( \frac{\overline{A}h(0) -1}{\eta_K} \right )^{\frac{\theta_2}{1-\theta_2}}
- \frac{\eta_K}{\overline{A} -1}R_b \bigg \}
\end{equation}
\begin{equation}
\label{eq:B1B2N-log}
B_1^{N}=\left ( \frac{\eta_B\theta_1}{\eta_K}(\overline{A}-1) \right )^{\frac{1}{1-\theta_1}},\quad
B_2^{N} := \left ( \frac{\eta_B g(R_b^{N}) \theta_2}{\eta_K} (\overline{A} h(0) -1) \right )^{\frac{1}{1 - \theta_2}},
\end{equation}
and
\begin{equation*}
K_1 = \frac{1}{\overline{A}-1} (C_1  + B_1 + R_b),\qquad   K_2 = \frac{C_2 + B_2}{\overline{A} h(0)-1}.
\end{equation*}
\end{Proposition}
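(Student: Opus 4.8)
The plan is to mirror the proof of Proposition~\ref{pr:Nash-easy}, replacing the non‑logarithmic CRRA payoffs by $u_i=\ln$. First I would appeal to Corollary~\ref{cor:rewriting+conditions-Nash}: since Assumption~\ref{hp:easycase} holds and all the data are autonomous, it suffices to exhibit a \emph{constant-in-time} profile $(C,B,K,R)$ which is a Nash equilibrium of the temporary game \eqref{eq:H1-per-Nash}--\eqref{eq:vincolo2gameplayer2} and is admissible for the differential game. Constancy makes the integrability requirements $u_i(C_i)\in L^1_\rho(\R_+)$ and $t\mapsto G\in L^1_\rho(\R_+)$ automatic, there are no state constraints, and admissibility reduces to non-negativity together with the two resource constraints, which we impose with equality (any slack can be absorbed into $C_i$). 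As in the proof of Corollary~\ref{cor:rewriting+conditions-Nash}, Player~$1$ never gains from $R_a>0$: cutting $R_a$ and raising $C_1$ (holding $B_1,R_b,K_1$ fixed) is feasible and strictly increases $H^1$ because $h<1$. Hence $R_a^N\equiv0$, and the two binding budget constraints give $K_1=\tfrac{1}{\overline A-1}(C_1+B_1+R_b)$ and $K_2=\tfrac{C_2+B_2}{\overline A h(0)-1}$.

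Next I would substitute these expressions for $K_1,K_2$, together with the explicit form of $G$ from Assumption~\ref{hp:easycase}(i), into $H^1$ and $H^2$. Exactly as in the non-logarithmic case, Player~$1$'s temporary objective then splits additively into a term in $C_1$ only, a term in $B_1$ only, a term in $(R_b,B_2)$, and a remainder free of Player~$1$'s controls; symmetrically Player~$2$'s objective splits into a term in $C_2$, a term in $(B_2,R_b)$, and a remainder free of Player~$2$'s controls. Each summand is strictly concave in the relevant own-variable ($\theta_1,\theta_2\in(0,1)$ and $g$ concave), so every temporary best response is a unique interior point. Maximizing $\ln C_1-\tfrac{\gamma_1\Phi\eta_K}{\overline A-1}C_1$ and $\ln C_2-\tfrac{\gamma_2\Phi\eta_K}{\overline A h(0)-1}C_2$ yields \eqref{eq:C1C2N-log}; the first-order conditions for $-\tfrac{\gamma_1\Phi\eta_K}{\overline A-1}B_1+\gamma_1\Phi\eta_B B_1^{\theta_1}$ and for $-\tfrac{\gamma_2\Phi\eta_K}{\overline A h(0)-1}B_2+\gamma_2\Phi\eta_B g(R_b)B_2^{\theta_2}$ (with $R_b$ taken as given by Player~$2$) yield \eqref{eq:B1B2N-log}.

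It remains to pin down $R_b^N$. Player~$1$ chooses $R_b$ taking $B_2$ as given, so $R_b^N$ solves $\max_{R_b\ge0}\bigl[-\tfrac{\eta_K}{\overline A-1}R_b+\eta_B g(R_b)B_2^{\theta_2}\bigr]$, while at equilibrium $B_2=B_2^N(R_b):=\bigl(\tfrac{\eta_B g(R_b)\theta_2(\overline A h(0)-1)}{\eta_K}\bigr)^{1/(1-\theta_2)}$ from Player~$2$'s best response. Substituting $B_2^N(R_b)^{\theta_2}$ into Player~$1$'s first-order condition and simplifying, one checks that the resulting equation is precisely the stationarity condition of the auxiliary map
\[
R_b\ \longmapsto\ (1-\theta_2)\,\theta_2^{\theta_2/(1-\theta_2)}\bigl(\eta_B g(R_b)\bigr)^{1/(1-\theta_2)}\Bigl(\tfrac{\overline A h(0)-1}{\eta_K}\Bigr)^{\theta_2/(1-\theta_2)}-\tfrac{\eta_K}{\overline A-1}R_b ,
\]
the prefactor $(1-\theta_2)$ being exactly what cancels the $\tfrac{1}{1-\theta_2}$ produced by differentiating $g(R_b)^{1/(1-\theta_2)}$. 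By Assumption~\ref{hp:easycase}(v) the map $R_b\mapsto g(R_b)^{1/(1-\theta_2)}$ is concave, so this auxiliary map is concave and coercive on $\R_+$; hence it has a unique maximizer, which coincides with $R_b^N$, giving \eqref{eq:RaRbN-log}. The corner solution $R_b^N=0$ and the existence discussion are handled exactly as in the proof of Proposition~\ref{pr:Nash-easy}, using $g'(R_b)^{1-\theta_2}g(R_b)^{\theta_2}\to0$ as $R_b\to\infty$.

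I expect the coupling step to be the main obstacle: one must verify carefully that Player~$1$'s partial optimality condition for $R_b$, \emph{evaluated along} Player~$2$'s best-response curve $B_2=B_2^N(R_b)$, reproduces the stationarity condition of the displayed potential, and that the concavity hypothesis on $g(\cdot)^{1/(1-\theta_2)}$ suffices for uniqueness. Everything else is the routine separation-and-maximization bookkeeping already carried out in the non-logarithmic case. Finally, the constant profile so constructed is admissible for the differential game, so Corollary~\ref{cor:rewriting+conditions-Nash} delivers that $t\mapsto(C^N,B^N,K^N,R^N)$ is an open-loop Nash equilibrium; uniqueness of each temporary maximizer gives its a.e.\ uniqueness.
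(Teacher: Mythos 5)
Your proposal is correct and follows essentially the same route the paper intends: the paper omits this proof, stating only that it ``follows along the lines of the proof of Proposition~\ref{pr:Nash-easy},'' and your argument is precisely that adaptation --- reduction to the temporary game via Corollary~\ref{cor:rewriting+conditions-Nash}, the $R_a=0$ argument, substitution of the binding budget constraints, additive separation of each player's objective, and the verification that Player~1's first-order condition for $R_b$ along Player~2's best response $B_2^N(R_b)$ coincides with the stationarity condition of the displayed concave, coercive potential. The coupling computation you flag as the main obstacle does check out, with the factor $(1-\theta_2)$ cancelling as you describe, and strict concavity of $g(\cdot)^{1/(1-\theta_2)}$ from Assumption~\ref{hp:easycase}(v) gives uniqueness.
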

\begin{proof}
The proof follows along the lines of the proof of Proposition \ref{pr:Nash-easy} and will be omitted.
\end{proof}

\section{Comparisons under Assumption \ref{hp:easycase} and logarithmic payoffs}
\label{app:proofs}

Here we concentrate on the specific case 
described in Assumption \ref{hp:easycase} under logarithmic utility. This allows us to analytically compare the outcomes under the regimes studied in the previous sections, namely the global (GP) and restricted planner's (RP) solutions, as well as the Nash equilibrium (N) outcomes. These findings are later confirmed using numerical methods. The detailed results of the logarithmic utility case in the three arrangements are given in Propositions \ref{pr:optimum-planner-easycase-log}, \ref{pr:planner-notrasnfer-easy-log}, and \ref{pr:Nash-easy-log}. Here we report on the related comparisons.

\subsection{Transfers} 
In Propositions \ref{pr:optimum-planner-easycase-log}, \ref{pr:planner-notrasnfer-easy-log}, and \ref{pr:Nash-easy-log}, we derive the values for $R_a$ and $R_b$ in the three regimes. 
Regarding the order of the transfers in the different regimes, we have the following.

\begin{Proposition}
\label{pr:log-odinediR}
Assume that Assumption \ref{hp:easycase} holds, and $u_i(C) = \ln(C)$ $i=1,2$. Then
\[
0=R_a^{GP}=R_a^{N} \le R_a^{RP}
\]
and
\[
R_b^{N} \le R_b^{RP} \leq R_b^{GP}.\footnote{Using the expression in Proposition \ref{pr:planner-notrasnfer-easy-log}, we can conclude that $R_a^{RP}$ is  strictly positive provided that $\frac{\eta_K}{\eta_B}$ is sufficiently small.}
\]
\end{Proposition}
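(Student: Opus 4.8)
The statement collects three comparisons, and the natural strategy is to treat them separately using the explicit characterizations already obtained in Propositions~\ref{pr:optimum-planner-easycase-log}, \ref{pr:planner-notrasnfer-easy-log}, and \ref{pr:Nash-easy-log}. The equalities $R_a^{GP}=0$ and $R_a^N=0$ are immediate: they are asserted directly in those propositions (and follow, respectively, from the contradiction argument in the proof of Corollary~\ref{cor:rewriting-planner-max}, where a positive $R_a$ or $K_2$ can always be traded for more consumption, and from the opening paragraph of the proof of Corollary~\ref{cor:rewriting+conditions-Nash}, where Player~1 strictly prefers to set $R_a=0$). The inequality $0\le R_a^{RP}$ is then trivial by the nonnegativity constraint. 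So the only content in the first chain is that $R_a^{RP}$ may be strictly positive, which is exactly the footnote claim and follows by inspecting the first-order condition for $R_a$ in \eqref{eq:Ra-Rb-RP-log}: when $\eta_K/\eta_B$ is small the marginal cost term $\eta_K/(\overline A-1)$ is dominated by the marginal productivity gain $\overline A h'(R_a)\,(\cdot)$ at $R_a=0$, forcing an interior optimum.

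\textbf{The $R_b$ chain.} The plan is to read off the three defining maximization problems. From \eqref{eq:RaRbN-log} the Nash value $R_b^N$ maximizes
\[
\Psi_N(R_b):=(1-\theta_2)\theta_2^{\frac{\theta_2}{1-\theta_2}}\bigl(\eta_B g(R_b)\bigr)^{\frac{1}{1-\theta_2}}\Bigl(\tfrac{\overline A h(0)-1}{\eta_K}\Bigr)^{\frac{\theta_2}{1-\theta_2}}-\tfrac{\eta_K}{\overline A-1}R_b,
\]
while from \eqref{eq:Ra-Rb-RP-log} the RP value maximizes the analogous expression but with the factor $(\overline A h(0)-1)^{\frac{\theta_2}{1-\theta_2}}$ replaced by $(\overline A h(R_a^{RP})-1)^{\frac{\theta_2}{1-\theta_2}}$ (plus the extra $\log(\overline A h(R_a)-1)$ and $-\tfrac{\eta_K}{\overline A-1}R_a$ terms, which do not involve $R_b$), and from \eqref{eq:RbB2plannernew-bis-ultra-log} the GP value maximizes $-\tfrac{\eta_K}{\overline A-1}R_b+(1-\theta_2)\theta_2^{\frac{\theta_2}{1-\theta_2}}(\eta_B g(R_b))^{\frac{1}{1-\theta_2}}\bigl(\tfrac{\overline A-1}{\eta_K}\bigr)^{\frac{\theta_2}{1-\theta_2}}$. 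The three objectives have the common form $a\,g(R_b)^{\frac{1}{1-\theta_2}}-c\,R_b$ with a fixed $c=\eta_K/(\overline A-1)$ but an increasing coefficient $a$: since $h(R_a^{RP})\ge h(0)$ (because $h'>0$ and $R_a^{RP}\ge 0=R_a^N$), we have $a_N\le a_{RP}$; and since $\overline A-1\ge \overline A h(R_a^{RP})-1$ would require $h(R_a^{RP})\le 1$, which holds by Assumption~\ref{hp:easycase}(v) ($h<1$), we get $a_{RP}\le a_{GP}$. The key monotone-comparative-statics lemma to invoke is: for $F_a(R):=a\,\varphi(R)-cR$ with $\varphi:=g^{1/(1-\theta_2)}$ strictly concave and increasing (this is precisely Assumption~\ref{hp:easycase}(v)), the argmax is nondecreasing in $a$. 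This is a one-line supermodularity argument — $\partial^2 F_a/\partial R\,\partial a=\varphi'(R)>0$ — or, if one prefers the first-order approach, the interior optimum solves $\varphi'(R)=c/a$, and since $\varphi'$ is strictly decreasing, $R$ increases with $a$, with the corner $R=0$ handled by the same inequality at the boundary.

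\textbf{Expected obstacle and cleanup.} The only subtlety is that these are \emph{argmax} statements, so one must be slightly careful when the maximizer is non-unique or at the corner; the cleanest route is to note that Assumption~\ref{hp:easycase}(v) makes $\varphi=g^{1/(1-\theta_2)}$ strictly concave, hence each objective $F_a$ is strictly concave in $R_b$ and the maximizer is a well-defined singleton, after which the monotone comparative statics is unambiguous. A second point worth stating explicitly is that the RP objective for $R_b$ genuinely depends on $R_a$ only through the coefficient $\overline A h(R_a^{RP})-1$, so plugging in the optimal $R_a^{RP}$ first and then comparing in $R_b$ is legitimate (the joint maximization decouples in the relevant direction because the $R_a$-dependent terms are separable from $R_b$ apart from that coefficient). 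I would write the proof as: (i) recall $R_a^{GP}=R_a^N=0\le R_a^{RP}$ and the strict-positivity footnote; (ii) state the common functional form $a\varphi(R_b)-cR_b$ of the three $R_b$-problems with $a_N\le a_{RP}\le a_{GP}$, justifying the two inequalities via $h'>0$, $R_a^{RP}\ge 0$ and $h<1$; (iii) invoke strict concavity of $\varphi$ (Assumption~\ref{hp:easycase}(v)) to get uniqueness and the monotonicity of the argmax in $a$, concluding $R_b^N\le R_b^{RP}\le R_b^{GP}$. I do not anticipate a hard step here — the result is essentially a bookkeeping exercise on top of the explicit formulas already derived — but the place where care is needed is ensuring the comparison of the coefficients $a$ is done with the correct sign, which hinges on the maintained assumption $h<1$ (global south always less productive) from Assumption~\ref{hp:easycase}(v).
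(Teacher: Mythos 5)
Your proposal is correct and follows essentially the same route as the paper: the $R_a$ chain is read off from the earlier propositions, and the $R_b$ chain is obtained by writing all three problems in the common form (concave increasing function of $R_b$ times a coefficient, minus a linear cost) with coefficients ordered via $h(0)\le h(R_a^{RP})<1$, then applying monotone comparative statics. You merely make explicit the supermodularity/first-order-condition step and the uniqueness of the maximizer, which the paper leaves as "the claim then follows."
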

\begin{proof}
We compare the expressions for $R_a$ and $R_b$ under the three arrangements described in Section \ref{sec:institutionalarrangements}. Their respective expressions are given in Propositions \ref{pr:optimum-planner-easycase-log}, \ref{pr:planner-notrasnfer-easy-log}, and \ref{pr:Nash-easy-log}. The first part of the claim follows since we have already demonstrated that $0=R_a^{GP}=R_a^{N}$ (see the proofs of Proposition \ref{pr:optimum-planner-easycase} and of Corollary \ref{cor:rewriting+conditions-Nash}). 

For the second inequality, observe that the three expressions for $R_b$ in (\ref{eq:RbB2plannernew-bis-ultra}), (\ref{eq:RaRb-RP-specific}), and (\ref{eq:RaRbN}), can be rewritten respectively as follows: $R_b \in \argmax{-R_b + \mu(R_b)(\bar A -1)^{\frac{\theta_2}{1-\theta_2}}}$, $R_b \in \argmax{-R_b + \mu(R_b)(\bar Ah(R_a^{RP}) -1)^{\frac{\theta_2}{1-\theta_2}}}$, and $R_b \in \argmax{-R_b + \mu(R_b)(\bar A h(0) -1)^{\frac{\theta_2}{1-\theta_2}}}$, where $\mu(R_b)$ is a concave and increasing function. The claim then follows since Since $(\bar Ah(0) -1) \leq (\bar Ah(R_a^{RP}) -1) \leq (\bar A -1)$.
\end{proof}

\subsection{Consumption and abatement}
In Propositions \ref{pr:optimum-planner-easycase-log}, \ref{pr:planner-notrasnfer-easy-log}, and \ref{pr:Nash-easy-log},  we derive the values of $C_1$, $C_2$, $B_1$ and $B_2$ in the three regimes.  We have the following.
\begin{Proposition}
\label{pr:log-ordine2}
Suppose that Assumption \ref{hp:easycase} holds and $u_i(C) = \ln(C)$, $i=1,2$. Then

\[
C_1^{GP} =  C_1^{RP} < C_1^N, \qquad B_1^{GP} =  B_1^{RP} = B_1^N
\]
and
\[
C_2^{RP} < C_2^{GP},  \qquad  B_2^{N} \leq B_2^{RP}  < B_2^{GP}.
\]
Moreover, provided that the technological differences (before transfers) between North and South are not too large; i.e.,
\begin{equation}
\label{eq:perordine}
(1- h(0)) < \frac{\bar A - 1}{\bar A} \frac{\gamma_1}{\gamma_1 + \gamma_2}
\end{equation}
we have 
\begin{equation}
\label{eq:consumpopiualtoNas}
C_2^{RP} < C_2^{N}.
\end{equation}
\end{Proposition}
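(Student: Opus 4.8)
The plan is to compare the closed-form expressions for the equilibrium and efficient quantities that were already derived in Propositions \ref{pr:optimum-planner-easycase-log}, \ref{pr:planner-notrasnfer-easy-log}, and \ref{pr:Nash-easy-log}. Each claim is a routine inequality between explicit formulas, so the work is organized claim by claim.

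\textbf{North quantities.} First I would read off that in all three regimes $C_1$ has the form $\overline A - 1$ divided by a discounting/damage coefficient, and that this coefficient is $(\gamma_1+\gamma_2)\Phi\eta_K$ for both planners but only $\gamma_1\Phi\eta_K$ in the Nash case. Since $\gamma_2>0$, dividing by the smaller quantity gives $C_1^N > C_1^{GP}=C_1^{RP}$. The abatement expression $B_1 = (\eta_B\theta_1(\overline A-1)/\eta_K)^{1/(1-\theta_1)}$ is literally identical in the three propositions (it does not involve $\gamma_i$ at all, because the externality term in the temporary objective is linear in the state and separates additively), so $B_1^{GP}=B_1^{RP}=B_1^N$ is immediate.

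\textbf{South quantities.} For $C_2$, the GP expression is $(\overline A-1)/((\gamma_1+\gamma_2)\Phi\eta_K)$ while the RP expression is $(\overline A h(R_a^{RP})-1)/((\gamma_1+\gamma_2)\Phi\eta_K)$; since $h<1$ and $R_a^{RP}$ is finite, $\overline A h(R_a^{RP})-1 < \overline A - 1$, giving $C_2^{RP}<C_2^{GP}$. For the abatement ordering $B_2^N \le B_2^{RP} < B_2^{GP}$, I would use the fact (from Proposition \ref{pr:log-odinediR}) that $R_b^N\le R_b^{RP}\le R_b^{GP}$ together with the fact that $B_2$ in each regime is an increasing function of both $g(R_b)$ and of the relevant ``effective productivity'' factor — $(\overline A-1)$ for GP versus $(\overline A h(R_a)-1)$ for RP and $(\overline A h(0)-1)$ for N — and that $\overline A h(0)-1 \le \overline A h(R_a^{RP})-1 \le \overline A-1$. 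So both monotonicities push in the same direction.

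\textbf{The comparison $C_2^{RP}<C_2^N$ under \eqref{eq:perordine}.} This is the one genuinely non-trivial step, since a priori the two effects run in opposite directions: the RP South enjoys the larger combined damage weight $\gamma_1+\gamma_2$ in the denominator (which lowers $C_2$), but also a larger numerator $\overline A h(R_a^{RP})-1 \ge \overline A h(0)-1$, while the Nash South has numerator $\overline A h(0)-1$ and the smaller denominator $\gamma_2$. Writing $C_2^{RP} = (\overline A h(R_a^{RP})-1)/((\gamma_1+\gamma_2)\Phi\eta_K)$ and $C_2^N = (\overline A h(0)-1)/(\gamma_2\Phi\eta_K)$, the inequality $C_2^{RP}<C_2^N$ is equivalent to $\gamma_2(\overline A h(R_a^{RP})-1) < (\gamma_1+\gamma_2)(\overline A h(0)-1)$. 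I would bound the left side using $h(R_a^{RP})\le 1$, i.e.\ $\overline A h(R_a^{RP})-1 \le \overline A - 1$, reducing the goal to $\gamma_2(\overline A-1) \le (\gamma_1+\gamma_2)(\overline A h(0)-1)$, which after rearranging becomes exactly $\overline A(1-h(0))(\gamma_1+\gamma_2) \le \gamma_1(\overline A-1)$, i.e.\ condition \eqref{eq:perordine}. The main obstacle, and the only place requiring care, is checking that the crude bound $h(R_a^{RP})\le 1$ is not too lossy — but since \eqref{eq:perordine} is stated as a sufficient condition, this bound is exactly what is intended, and the argument closes cleanly.

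\begin{proof}
All quantities below are taken from Propositions \ref{pr:optimum-planner-easycase-log}, \ref{pr:planner-notrasnfer-easy-log}, and \ref{pr:Nash-easy-log}.

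\emph{North.} In all three regimes $C_1$ equals $(\overline A-1)$ divided by a positive constant; this constant is $(\gamma_1+\gamma_2)\Phi\eta_K$ for the GP and the RP, and $\gamma_1\Phi\eta_K$ in the Nash case. Since $\gamma_2>0$, we get $C_1^{GP}=C_1^{RP}<C_1^{N}$. The expression $B_1=\big(\eta_B\theta_1(\overline A-1)/\eta_K\big)^{1/(1-\theta_1)}$ is the same in the three propositions, hence $B_1^{GP}=B_1^{RP}=B_1^{N}$.

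\emph{South.} We have $C_2^{GP}=(\overline A-1)/\big((\gamma_1+\gamma_2)\Phi\eta_K\big)$ and $C_2^{RP}=(\overline A h(R_a^{RP})-1)/\big((\gamma_1+\gamma_2)\Phi\eta_K\big)$. Since $h<1$ (Assumption \ref{hp:easycase}(v)) and $R_a^{RP}<\infty$, we have $\overline A h(R_a^{RP})-1<\overline A-1$, so $C_2^{RP}<C_2^{GP}$.

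For $B_2$, note that in each regime $B_2$ is of the form $\big(\eta_B g(R_b)\theta_2\, m/\eta_K\big)^{1/(1-\theta_2)}$, with $m=\overline A-1$ for the GP, $m=\overline A h(R_a^{RP})-1$ for the RP, and $m=\overline A h(0)-1$ for the Nash case; here the GP value of $R_b$ is evaluated at $R_b^{GP}$, the RP one at $R_b^{RP}$, and the Nash one at $R_b^{N}$. By Assumption \ref{hp:easycase}(v), $g$ is increasing, so $B_2$ is increasing in $R_b$ and in $m$. Since $0\le R_a^{RP}$ and $h$ is increasing we have $h(0)\le h(R_a^{RP})\le 1$, hence $\overline A h(0)-1\le \overline A h(R_a^{RP})-1\le \overline A-1$; combined with $R_b^{N}\le R_b^{RP}\le R_b^{GP}$ from Proposition \ref{pr:log-odinediR}, both monotonicities give $B_2^{N}\le B_2^{RP}<B_2^{GP}$ (the last inequality strict because $\overline A h(R_a^{RP})-1<\overline A-1$).

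\emph{Consumption in the South under \eqref{eq:perordine}.} We have $C_2^{RP}=(\overline A h(R_a^{RP})-1)/\big((\gamma_1+\gamma_2)\Phi\eta_K\big)$ and $C_2^{N}=(\overline A h(0)-1)/\big(\gamma_2\Phi\eta_K\big)$, so $C_2^{RP}<C_2^{N}$ is equivalent to
\[
\gamma_2\big(\overline A h(R_a^{RP})-1\big)<(\gamma_1+\gamma_2)\big(\overline A h(0)-1\big).
\]
Since $h(R_a^{RP})\le 1$, it suffices to prove $\gamma_2(\overline A-1)\le(\gamma_1+\gamma_2)(\overline A h(0)-1)$, i.e.
\[
\gamma_2(\overline A-1)-(\gamma_1+\gamma_2)(\overline A-1)+(\gamma_1+\gamma_2)\overline A\,(1-h(0))\le 0,
\]
that is $(\gamma_1+\gamma_2)\overline A\,(1-h(0))\le\gamma_1(\overline A-1)$, which is exactly \eqref{eq:perordine}. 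This proves \eqref{eq:consumpopiualtoNas}.
\end{proof}
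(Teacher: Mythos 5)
Your proof is correct and follows essentially the same route as the paper's: read off the closed-form expressions from Propositions \ref{pr:optimum-planner-easycase-log}, \ref{pr:planner-notrasnfer-easy-log}, and \ref{pr:Nash-easy-log}, use the monotonicity of $h$ and $g$ together with the transfer ordering from Proposition \ref{pr:log-odinediR}, and observe that \eqref{eq:perordine} is equivalent to $(\gamma_1+\gamma_2)(\overline A h(0)-1)>\gamma_2(\overline A-1)$, which combined with $h(R_a^{RP})\le 1$ gives \eqref{eq:consumpopiualtoNas}. The only difference is that you spell out the bound $\overline A h(R_a^{RP})-1\le\overline A-1$ explicitly where the paper leaves it implicit.
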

\begin{proof}
We compare the expressions of $C_1$ and $C_2$ for the three arrangements described in Section \ref{sec:institutionalarrangements}. Their expression are given in Propositions \ref{pr:optimum-planner-easycase-log}, \ref{pr:planner-notrasnfer-easy-log} and \ref{pr:Nash-easy-log}.

The results for Country $1$ are immediate. The first claim for Country $2$ is straightforward given the properties of $h$. The second follows from the fact that $R_b^{N} \leq  R_b^{RP} \leq R_b^{GP}$ and $0 = R_a^{N} = R_a^{GP} \leq R_a^{RP}$ (see Proposition \ref{pr:log-odinediR}), together with the fact that $h$ and $g$ are increasing functions.
Finally, (\ref{eq:consumpopiualtoNas}) follows directly from the respective expressions for consumption in the two cases, since (\ref{eq:perordine}) implies
\[
\left(\frac{\gamma_1 + \gamma_2}{\gamma_{2}}\right) > \left (\frac{\overline{A} -1}{\overline{A}h(0) -1} \right ).
\]
\end{proof}

\subsection{Welfare}

Next, we turn to a study of welfare in the two countries under the different regimes. We denote by $U_i$ the utility of country $i$ and we let $U:= U_1 + U_2$.  We have the following.

\begin{Proposition}
\label{pr:log-ordineutility}
Suppose that Assumption \ref{hp:easycase} holds, and $u_i(C) = \ln(C)$ $i=1,2$. Then
\begin{itemize}
\item[(i)] $U^{GP} > U^{RP} > U^{N}$,
\item[(ii)] $U_1^{GP} > U_1^{RP}$, 
\item[(iii)] $U_2^{GP} > U_2^{RP}$. 
\end{itemize}
\end{Proposition}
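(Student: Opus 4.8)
The plan is to exploit the ITM reformulation together with the explicit solutions derived in Propositions \ref{pr:optimum-planner-easycase-log}, \ref{pr:planner-notrasnfer-easy-log} and \ref{pr:Nash-easy-log}. The key structural observation is that, after the integral transformation, each of the three regimes reduces to maximizing (or, in the Nash case, finding an equilibrium of) a \emph{static} problem whose objective is the same reduced integrand $u_1(C_1)+u_2(C_2)-(\gamma_1+\gamma_2)\Phi\,G(t,K,B,R_b)$, but over progressively smaller feasible sets. Indeed the GP feasible set $\mathcal{E}^{GP}(t)$ contains $\mathcal{E}^{RP}(t)$ (a single pooled resource constraint is weaker than two separate ones), which immediately gives part (i)'s first inequality $U^{GP}\ge U^{RP}$ by monotonicity of the supremum; strictness follows because the GP optimum has $K_2=0$ and moves all production to country~1, which is infeasible for the RP (country~2 must produce to consume), so the two optima genuinely differ. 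For $U^{RP}>U^{N}$ I would argue that the Nash profile $(C^N,B^N,K^N,R^N)$ is feasible for the RP's temporary problem (it satisfies both per-country budget constraints by construction), hence $U^{RP}\ge U(C^N,\dots)$; but the RP internalizes the full externality coefficient $\gamma_1+\gamma_2$ in its abatement/consumption trade-off whereas player~$i$ in the Nash game only internalizes $\gamma_i$, so the Nash profile is \emph{not} optimal for the RP (e.g. $C_1^N>C_1^{RP}$, $R_b^N\le R_b^{RP}$ strictly by Propositions \ref{pr:log-odinediR}, \ref{pr:log-ordine2}), giving the strict inequality $U^{RP}>U^N$.

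For parts (ii) and (iii) I would compute $U_1$ and $U_2$ directly from \eqref{eq:defUPbisnew-Nash}-style formulas, i.e.
\[
U_i = \gamma_i\!\left[\tfrac{\overline S}{\rho}-\tfrac{P(0)}{\rho}-\tfrac{T(0)}{\rho+\phi}\right]+\frac1\rho\Big[u_i(C_i)-\gamma_i\Phi\,G(K,B,R_b)\Big],
\]
since in the logarithmic special case all controls are constant in time and the integrand is autonomous. Plugging in the closed forms from Propositions \ref{pr:optimum-planner-easycase-log} and \ref{pr:planner-notrasnfer-easy-log}: $C_1^{GP}=C_1^{RP}$ and $B_1^{GP}=B_1^{RP}$ (so the $u_1$ terms coincide), while $G$ differs only through $R_b$ and, in the RP, through the extra term $-\eta_K R_a^{RP}$ and the $g(R_b)B_2^{\theta_2}$ piece. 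The point is that $G^{GP}$ is \emph{smaller} (more negative, i.e. lower net emissions) than $G^{RP}$ — intuitively because the GP concentrates clean production in the north and chooses weakly larger abatement transfers $R_b^{GP}\ge R_b^{RP}$ while the RP is forced to run the dirtier southern technology — hence $-\gamma_1\Phi G^{GP}\ge -\gamma_1\Phi G^{RP}$ and likewise for country~2. Combined with $u_1(C_1^{GP})=u_1(C_1^{RP})$ this yields (ii); for (iii) one additionally notes $u_2(C_2^{GP})>u_2(C_2^{RP})$ since $C_2^{RP}<C_2^{GP}$ (Proposition \ref{pr:log-ordine2}), so both the utility-of-consumption term and the damage term move in country~2's favor, giving (iii).

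The main obstacle is the sign of $G^{GP}-G^{RP}$: unlike the aggregate welfare comparison, which follows painlessly from feasible-set inclusion, the per-country statements (ii)–(iii) require knowing that the \emph{common} quantity $G$ (the flow of net emissions, which enters both countries' payoffs with the same $\Phi$) is ordered across regimes, and this is not obvious because $R_a^{RP}>0=R_a^{GP}$ pushes $G^{RP}$ up (more southern output, hence the $\eta_K K$ terms), while $R_b^{GP}\ge R_b^{RP}$ and the GP's choice $K_2=0$ push it down. I would resolve this by writing $G$ explicitly as a function of the equilibrium abatement levels using the first-order conditions already recorded in the proofs of Propositions \ref{pr:optimum-planner-easycase-log} and \ref{pr:planner-notrasnfer-easy-log} — in particular the optimal $B_1,B_2$ are pinned down so that the marginal abatement benefit equals $\eta_K/(\overline A-1)$ — substituting these back, and checking that the residual comparison reduces to the already-established ordering $R_b^N\le R_b^{RP}\le R_b^{GP}$ together with $\overline A h(0)-1\le \overline A h(R_a^{RP})-1\le \overline A-1$. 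A cleaner fallback, if the direct computation gets unwieldy, is to use optimality of the GP solution for the \emph{pooled} problem to bound $U_1^{GP}+U_2^{GP}$ from below by the value obtained when one additionally imposes the RP's two constraints but keeps the GP's objective weights, i.e. to exhibit for each country a feasible deviation in the RP problem that the GP can replicate; but this only yields the aggregate bound, so some per-country work via the explicit formulas seems unavoidable.
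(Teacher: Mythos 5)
Your overall architecture is sound and arrives at the same comparisons as the paper, but in the reverse logical order. Part (i) is handled exactly as the paper does (feasible-set inclusion for GP versus RP, feasibility of the Nash profile for the RP, uniqueness of the temporary optima for strictness); the paper compresses this into ``directly from the definitions.'' For (ii)--(iii), however, the paper does \emph{not} route the argument through the emissions flow $G$: it writes out $U_1^{GP}$ and $U_1^{RP}$ explicitly as in \eqref{eq:UGPVSURP-Utility1}--\eqref{eq:UGPVSURP-Utility2}, observes that the $\ln C_1$ term and the consumption-damage contributions coincide across regimes, and notes that the residual piece of country $1$'s payoff is a positive multiple of precisely the sub-objective \eqref{eq:UGPVSURP-1} (resp.\ \eqref{eq:UGPVSURP-2}) over which the respective planner optimizes; since \eqref{eq:UGPVSURP-2} is pointwise dominated by \eqref{eq:UGPVSURP-1} (the extra $-\eta_K R_a/(\overline A-1)$ term and the larger coefficient $1/(\overline A h(R_a)-1)>1/(\overline A-1)$ multiplying $B_2$), the maximized residual is strictly larger for the GP. The ordering $G^{GP}\le G^{RP}$ is then obtained in the paper as a \emph{corollary} of (ii) (Proposition \ref{pr:logcase-emissionranking}), whereas you propose to prove the $G$-ordering first and feed it back into the welfare comparison. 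Both directions are legitimate, and yours has the interpretive advantage of making emissions the driver of the per-country welfare gap; the cost is that you must establish the $G$-ordering independently.

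That is the one soft spot in your sketch. Settling the sign of $G^{GP}-G^{RP}$ by ``checking against the ordering $R_b^{N}\le R_b^{RP}\le R_b^{GP}$ together with $\overline A h(0)-1\le \overline A h(R_a^{RP})-1\le \overline A-1$'' does not close term by term: $R_b^{GP}\ge R_b^{RP}$ pushes $G^{GP}$ \emph{up} through the $\eta_K K_1$ channel, so monotonicity of the transfers alone is inconclusive. The argument that works --- and it is exactly the paper's mechanism in different clothing --- is that after substituting the closed forms the consumption and $B_1$ contributions to $G$ cancel across regimes (e.g.\ $\frac{\eta_K}{\overline A-1}C_2^{GP}=\frac{\eta_K}{\overline A h(R_a)-1}C_2^{RP}=\frac{1}{(\gamma_1+\gamma_2)\Phi}$), so that $G^{RP}-G^{GP}$ equals the value of \eqref{eq:UGPVSURP-1} at the GP optimum minus the value of \eqref{eq:UGPVSURP-2} at the RP's chosen point; the former is a maximum of a pointwise-larger function and hence dominates, strictly because $B_2^{RP}>0$. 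Replace your ``check the orderings'' step with this maximized-value comparison and your proof goes through in full, including the strictness needed for (ii) and (iii).
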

\begin{proof}

Directly from the definitions, it follows that $U^{GP} > U^{RP} > U^{N}$. In addition, we have that $U_1^{GP} > U_1^{RP}$, and $U_2^{GP} > U_2^{RP}$. To see this, decompose the payoff of country 1 and abstract the part in $B_1$ (which is the same in all the cases), to obtain
\begin{multline}
\label{eq:UGPVSURP-Utility1}
U_1^{GP} = \ln(C_1^{GP}) - \gamma_1\Phi \eta_K \frac{1}{\overline{A} - 1} C_1^{GP} - \gamma_1\Phi \eta_K \frac{1}{\overline{A} - 1} C_2^{GP} \\ + \gamma_1 \Phi \eta_B g(R_b^{GP}) \left ( B_2^{GP} \right )^{\theta_2} - \gamma_1\Phi \eta_K \frac{1}{\overline{A} - 1} B_2^{GP} - \gamma_1\Phi \eta_K \frac{1}{\overline{A} - 1} R_b^{GP} \\
=\ln \left ( \frac{\overline{A} -1}{(\gamma_1 + \gamma_2)\Phi \eta_K} \right ) - \frac{\gamma_1}{\gamma_1 + \gamma_2} - \frac{\gamma_1}{\gamma_1 + \gamma_2} \\+ \gamma_1 \Phi \eta_B g(R_b^{GP}) \left ( B_2^{GP} \right )^{\theta_2} - \gamma_1\Phi \eta_K \frac{1}{\overline{A} - 1} B_2^{GP} - \gamma_1\Phi \eta_K \frac{1}{\overline{A} - 1} R_b^{GP},
\end{multline}
and
\begin{multline}
\label{eq:UGPVSURP-Utility2}
U_1^{RP} = \ln(C_1^{RP}) - \gamma_1\Phi \eta_K \frac{1}{\overline{A} - 1} C_1^{RP} - \gamma_1\Phi \eta_K \frac{1}{h(R_a)\overline{A} - 1} C_2^{RP} \\ - \gamma_1\Phi \eta_K \frac{1}{\overline{A} - 1} R_a^{RP} + \gamma_1 \Phi \eta_B g(R_b^{RP}) \left ( B_2^{RP} \right )^{\theta_2} - \gamma_1\Phi \eta_K \frac{1}{\overline{A} - 1} B_2^{RP} - \gamma_1\Phi \eta_K \frac{1}{\overline{A} - 1} R_b^{RP} \\
=\ln \left ( \frac{\overline{A} -1}{(\gamma_1 + \gamma_2)\Phi \eta_K} \right ) - \frac{\gamma_1}{\gamma_1 + \gamma_2} - \frac{\gamma_1}{\gamma_1 + \gamma_2}  \\- \gamma_1\Phi \eta_K \frac{1}{\overline{A} - 1} R_a^{RP} + \gamma_1 \Phi \eta_B g(R_b^{RP}) \left ( B_2^{RP} \right )^{\theta_2} - \gamma_1\Phi \eta_K \frac{1}{h(R_a)\overline{A} - 1} B_2^{RP} - \gamma_1\Phi \eta_K \frac{1}{\overline{A} - 1} R_b^{RP}.
\end{multline}
The values of $B_2$ and of $R_b$ (respectively of $B_2$, $R_b$, and $R_a$) are chosen by the global planner (respectively the restricted planner) to maximize $ (\gamma_1 +  \gamma_2) \Phi$ times
\begin{equation}
\label{eq:UGPVSURP-1}
\eta_B g(R_b^{GP}) \left ( B_2^{GP} \right )^{\theta_2} - \eta_K \frac{1}{\overline{A} - 1} B_2^{GP} - \eta_K \frac{1}{\overline{A} - 1} R_b^{GP},
\end{equation}
respectively,
\begin{equation}
\label{eq:UGPVSURP-2}
- \eta_K \frac{1}{\overline{A} - 1} R_a^{RP} + \eta_B g(R_b^{GP}) \left ( B_2^{GP} \right )^{\theta_2} - \eta_K \frac{1}{\overline{A} - 1} B_2^{GP} -  \eta_K \frac{1}{\overline{A} - 1} R_b^{GP}.
\end{equation}
Since, for any choice of $B_2$, $R_b$, and $R_a$, the expression in (\ref{eq:UGPVSURP-2}) is smaller than (\ref{eq:UGPVSURP-1}), the last line of (\ref{eq:UGPVSURP-Utility1}) is smaller than the last line of (\ref{eq:UGPVSURP-Utility2}). Thus, the payoff to country 1 under the GP is higher than its payoff under the RP. The same argument holds for country 2.
\end{proof}

\subsection{Emissions}

Under Assumption \ref{hp:easycase}, we can separate the GHG emissions due to each player. Denote by $G_i$ the instantaneous net emissions of player $i$, given by 
\[
G_1 := \Lambda_I I_1 - \Lambda_D B_1^{\theta_1} \qquad and \qquad 
G_2:= \Lambda_I I_2 - \Lambda_D g(R_b) B_2^{\theta_2}.
\]

For the aggregate GHG emission flows, $G$, we have the following.
\begin{Proposition}
\label{pr:logcase-emissionranking}
Suppose that Assumption \ref{hp:easycase} holds and $u_i(C) = \ln(C)$ $i=1,2$. Then
\[
G^{GP} \leq G^{RP}.
\]
If, in addition, (\ref{eq:perordine}) holds, then
\[
G^{GP} \leq G^{RP} \leq G^{N}.
\]
\end{Proposition}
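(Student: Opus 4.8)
The plan is to compare the aggregate emission flow $G$ across the three regimes by exploiting the explicit characterizations already established in Propositions \ref{pr:optimum-planner-easycase-log}, \ref{pr:planner-notrasnfer-easy-log}, and \ref{pr:Nash-easy-log}, together with the ordering of transfers from Proposition \ref{pr:log-odinediR}. Recall that under Assumption \ref{hp:easycase} we have $G = \eta_K(K_1 + K_2) - \eta_B(B_1^{\theta_1} + g(R_b)B_2^{\theta_2})$, and since $B_1$ is identical in all three regimes (from Proposition \ref{pr:log-ordine2}), the comparison reduces to comparing the quantity $\eta_K(K_1+K_2) - \eta_B g(R_b)B_2^{\theta_2}$ across regimes. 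Substituting the resource constraints $K_1 = \frac{1}{\overline{A}-1}(C_1 + B_1 + R_a + R_b)$ and $K_2 = \frac{C_2+B_2}{\overline{A}h(R_a)-1}$, one can express $G$ in each regime as an explicit function of $(C_1, C_2, B_2, R_a, R_b)$, which I will then evaluate at the optimal/equilibrium values from the three propositions.

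For the first inequality $G^{GP} \le G^{RP}$, the key observation is that in the GP regime all production occurs in country $1$ (so $K_2^{GP} = 0$, $R_a^{GP}=0$), whereas in the RP regime production is split. I would argue that the GP, facing a \emph{single} resource constraint, chooses $(B_2, R_b)$ — and implicitly the level of net emissions — to maximize welfare, and that the RP's feasible set is more restrictive in precisely the direction that forces higher net emissions: the RP cannot shift production to the more efficient country $1$ and must run country $2$'s less productive technology, generating more emissions per unit of consumption delivered to the South. Concretely, I would show that the structure of the maximization in \eqref{eq:RbB2plannernew-bis-ultra-log} versus \eqref{eq:Ra-Rb-RP-log} implies that the ``abatement-minus-dirty-capital'' tradeoff is resolved more favorably (lower $G$) in the GP case because the effective productivity term $\overline{A}-1$ replacing $\overline{A}h(R_a)-1$ makes abatement cheaper relative to emissions; combining this with $R_b^{RP} \le R_b^{GP}$ and the monotonicity of $g$ then gives the inequality.

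For the second inequality $G^{RP} \le G^{N}$ under hypothesis \eqref{eq:perordine}, I would compare the RP and Nash characterizations directly. In the Nash case each country internalizes only its own damage coefficient $\gamma_i$ rather than $\gamma_1+\gamma_2$, so consumption is higher (Proposition \ref{pr:log-ordine2}: $C_1^N > C_1^{RP}$ and, under \eqref{eq:perordine}, $C_2^N > C_2^{RP}$), while abatement $B_2$ and the transfer $R_b$ are weakly lower ($B_2^N \le B_2^{RP}$, $R_b^N \le R_b^{RP}$, again from Propositions \ref{pr:log-odinediR} and \ref{pr:log-ordine2}). Since $G$ is increasing in $C_1, C_2$ (through the capital requirements) and decreasing in $B_2$ and in $R_b$ (through $g(R_b)B_2^{\theta_2}$ and through the $R_b$ term in $K_1$), all of these channels push in the same direction: $G^N \ge G^{RP}$. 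I would write $G^N - G^{RP}$ as a sum of non-negative terms using the explicit formulas and the established orderings.

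The main obstacle I anticipate is the first inequality: the comparison between GP and RP is not a pure ``more constraints, higher emissions'' monotonicity argument, because the GP also delivers \emph{higher} consumption to the South ($C_2^{GP} > C_2^{RP}$ from Proposition \ref{pr:log-ordine2}), which taken in isolation would raise emissions. One must show that the savings from concentrating production in country $1$ and from the larger abatement transfer $R_b^{GP}$ dominate the extra emissions from higher $C_2^{GP}$. I would handle this by writing both $G^{GP}$ and $G^{RP}$ in closed form, substituting $C_1, C_2$ from the respective propositions, and reducing the claimed inequality to a comparison between the two maximization problems \eqref{eq:RbB2plannernew-bis-ultra-log} and \eqref{eq:Ra-Rb-RP-log} evaluated at their optima — using the envelope-type fact that the RP's objective is the GP's objective minus extra strictly positive cost terms (as already exploited in the proof of Proposition \ref{pr:log-ordineutility}), together with the monotonicity of the consumption and abatement expressions in the effective productivity parameters $\overline{A}-1$ and $\overline{A}h(R_a)-1$.
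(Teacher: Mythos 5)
Your overall strategy (write $G$ in closed form in each regime and compare term by term) is genuinely different from the paper's, which never computes $G$ directly: the paper observes that each country's payoff is (log-consumption) minus a term \emph{linear} in total emissions, and then backs out the emission ranking from the consumption ranking in Proposition \ref{pr:log-ordine2} together with the welfare ranking in Proposition \ref{pr:log-ordineutility}. For $G^{GP}\le G^{RP}$ this is a two-line argument: $C_1^{GP}=C_1^{RP}$ and $U_1^{GP}>U_1^{RP}$ force the emission disutility, hence $G$, to be lower under the GP. Your direct route for this first inequality can be made to work (and is arguably instructive: the capital needed to deliver $C_2$ is $C_2^{GP}/(\overline A-1)=C_2^{RP}/(\overline A h(R_a)-1)=1/((\gamma_1+\gamma_2)\Phi\eta_K)$ in both regimes, so the tension you worry about — higher $C_2^{GP}$ versus cheaper production — cancels exactly, and the remaining $(R_a,R_b,B_2)$ block is precisely the quantity each planner maximizes with a sign that makes the GP's larger optimum mean lower emissions), but it is considerably more laborious than the paper's indirect argument.

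The second inequality is where your proposal has a genuine gap. You claim that all channels push $G^N$ above $G^{RP}$, so that $G^N-G^{RP}$ decomposes as a sum of non-negative terms. This is false as stated: the restricted planner chooses $R_a^{RP}\ge 0=R_a^N$, and $R_a$ enters $K_1=\frac{1}{\overline A-1}(C_1+B_1+R_a+R_b)$ with a \emph{positive} sign, so this channel raises $G^{RP}$ relative to $G^N$. Moreover, the RP's $(R_a,R_b,B_2)$ block is not chosen to minimize its emission contribution — the RP's first-order condition for $R_a$ in \eqref{eq:Ra-Rb-RP-log} trades off the emission cost of $R_a$ against the log-consumption gain $\log(\overline A h(R_a)-1)$, so you cannot compare the RP and Nash abatement blocks by a pointwise monotonicity argument. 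This is exactly why the paper proves this inequality indirectly: under \eqref{eq:perordine} both $C_1^{RP}<C_1^N$ and $C_2^{RP}<C_2^N$, so consumption utility is higher in Nash for both countries, yet $U^{RP}>U^N$ by Proposition \ref{pr:log-ordineutility}; since the only other component of welfare is the linear emission damage, emissions must be higher in Nash. To repair your proof you would need to either import this welfare comparison (at which point you have reproduced the paper's argument) or explicitly bound the $R_a$ and abatement-block terms, which your current decomposition does not do.
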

\begin{proof}
Proposition \ref{pr:log-ordine2} established that $C_1^{GP} = C_1^{RP}$ and Proposition \ref{pr:log-ordineutility} established that $U_1^{GP} > U_1^{RP}$. Since the payoff of country 1 is given by $\ln(C_1)$ minus a disutility part which is linear in total emissions, we obtain
\[
G^{GP} < G^{RP}.
\]
We prove now the second part. From Proposition \ref{pr:log-ordine2} we know that $C_1^{RP} < C_1^N$ and, under hypothesis (\ref{eq:perordine}) we also have (see (\ref{eq:consumpopiualtoNas}) $C_2^{RP} < C_2^N$. So the utility coming from consumption is higher in both countries in the Nash case than in the RP case. On the other hand we also know, from Proposition \ref{pr:log-ordineutility} that $U^{RP} > U^{N}$ so the only possibility is that the disutility coming from emissions is also lower i.e. that emissions are lower. This concludes the proof.
\end{proof}

\section{Robust control proofs}
\label{app:robusteness}

\subsection{The GP robust control problem}
\label{supapp:robustglobal}

\begin{proof}[Proof of Theorem \ref{th:maxmingeneral}]
We need to demonstrate that $U^R$, as a function of the control variables, is concave and, as a function of $(\gamma_1(\cdot), \gamma_2(\cdot))$ is convex. The proof then follows from Proposition 2.2, p.173, in \cite{ekeland1999convex}. We briefly describe how to check these conditions

\begin{itemize}
\item $(\gamma_1(\cdot), \gamma_2(\cdot))$ vary in the cone of non-negative functions of $L^2_\rho(\mathbb{R}^+, \mathbb{R}^2)$ which is convex, closed and non-empty.

\item $(C(\cdot), B(\cdot), R_a(\cdot), R_b(\cdot))$ vary in the subset of $L^6_\rho(\mathbb{R}^+, \mathbb{R}^2)$ given by non-negative functions satisfying (\ref{eq:plannerresourceconstraint}). Since the right-hand side of (\ref{eq:plannerresourceconstraint}) is a quasi-concave function of $K_1$, $R_a$ and $K_2$, the subset is convex. It is also clearly closed and non-empty.

\item The functional (\ref{eq:functionalrobust-planner}) is convex with respect to $(\gamma_1(\cdot), \gamma_2(\cdot))$.

\item To check the concavity of the functional in(\ref{eq:functionalrobust-planner}) with respect to $(C(\cdot), B(\cdot), R_a(\cdot), R_b(\cdot))$, we first remark that the concavity of $(g(R_b))^{\frac{1}{1-\theta_2}}$ implies that the function
\begin{equation}
\label{eq:jointly}
(R_b, B_2) \mapsto g(R_b) B_2^{\theta_2}
\end{equation}
is jointly concave\footnote{To check this property it is enough to check the signature of the Hessian matrix of (\ref{eq:jointly}) and to use the sign of the second derivative of  $(g(R_b))^{\frac{1}{1-\theta_2}}$.} and then the function $G$ is convex. The convexity of $G$ then implies the concavity of the functional. Using Corollary \ref{cor:rewriting-planner-max}, this can be rewritten as
\begin{multline}
(\gamma_{1}+\gamma_2)\left[
\frac{\overline{S}}{\rho} -
\frac{P(0)}{\rho}-\frac{T(0)}{\rho+\phi}
\right]+
\\[2mm]
\notag
\int_{0}^{\infty }e^{-\rho t}
\left[u_{1}\left(C_1(t)\right)+ u_{2}\left(C_2(t)\right)
-(\gamma_{1}+\gamma_2) \Phi G\left(t,K(t),B(t),R_b(t)\right)\right]dt.
\end{multline}
This concludes the proof.
\end{itemize}
\end{proof}

\begin{proof}[Proof of Proposition \ref{pr:robust-global-planner}]
We demonstrated that the initial max-min problem is equivalent to the min-max problem given by
\begin{multline}
\min_{\gamma _{1},\gamma _{2}} \max_{C_{i}(t),B_{i}(t), R_b(t)} U^{R}(\alpha) =U_{1}+U_{2}=
\notag \\
\int_{0}^{\infty }e^{-\rho t}\left[ u_{1}\left( C_{1}(t)\right)
-\gamma _{1}\left( S(t)-\overline{S}\right) +\alpha \left\vert \gamma
_{1}-\widehat{\gamma }_{1}\right\vert^2 \right] dt  \notag \\
+\int_{0}^{\infty }
e^{-\rho t}\left[ u_{2}\left( C_{2}(t)\right)
-\gamma _{2}\left( S(t)-\overline{S}\right) +\alpha \left\vert \gamma
_{2}-\widehat{\gamma }_{2}\right\vert^2 \right] dt\\
=
\min_{\gamma _{1},\gamma _{2}} \max_{C_{i}(t),B_{i}(t), R_b(t)}
\bigg \{
(\gamma_{1}+\gamma_2)\left[
\frac{\overline{S}}{\rho} -
\frac{P(0)}{\rho}-\frac{T(0)}{\rho+\phi}
\right]+ \frac{\alpha}{\rho} (\left\vert \gamma
_{1}-\widehat{\gamma }_{1}\right\vert^2 + \left\vert \gamma
_{2}-\widehat{\gamma }_{2}\right\vert^2) +
\\
\int_{0}^{\infty }e^{-\rho t}
\left[u_{1}\left(C_1(t)\right)+ u_{2}\left(C_2(t)\right)
-(\gamma_{1}+\gamma_2) \Phi G\left(t,K(t),B(t),R_b(t)\right)\right]dt
\bigg \}.
\end{multline}
Without loss of generality, we can restrict attention to the case $\gamma_1+\gamma_2>0$. In the logarithmic case, the previous expression can be written as:
\begin{multline}
\min_{\gamma _{1},\gamma _{2}} 
\bigg \{
(\gamma_{1}+\gamma_2)\left[
\frac{\overline{S}}{\rho} -
\frac{P(0)}{\rho}-\frac{T(0)}{\rho+\phi}
\right]+ \frac{\alpha}{\rho} (\left\vert \gamma
_{1}-\widehat{\gamma }_{1}\right\vert^2 + \left\vert \gamma
_{2}-\widehat{\gamma }_{2}\right\vert^2)+
\\
\int_{0}^{\infty }e^{-\rho t}
\bigg[ \ln \left (\frac{\overline{A}-1}{(\gamma_{1}+\gamma_2) \Phi \eta_K} \right ) + \ln \left (\frac{\overline{A}-1}{(\gamma_{1}+\gamma_2) \Phi \eta_K} \right ) \\
- (\gamma_{1}+\gamma_2) \Phi \eta_K \left ( R_b + B_1 + B_2 + \left (\frac{\overline{A}-1}{(\gamma_{1}+\gamma_2) \Phi \eta_K} \right ) + \left (\frac{\overline{A}-1}{(\gamma_{1}+\gamma_2) \Phi \eta_K} \right ) \right ) \\ 
+ (\gamma_{1}+\gamma_2) \Phi\eta_B \Big (B_1(t)^{\theta_1} + g(R_b) B_2(t)^{\theta_2} \Big ) dt \bigg \}.
\end{multline}
Evaluating the time integral, simplifying, and eliminating the terms that do not depend on $\gamma_i$, we have that $(\gamma_1, \gamma_2)$ is a solution of the previous minimization problem if and only if it is a solution to: 
\begin{multline}
\min_{\gamma _{1},\gamma _{2}} \Bigg \{ 
\frac{2}{\rho}\ln \left (\frac{\overline{A}-1}{(\gamma_{1}+\gamma_2) \Phi \eta_K} \right )
+(\gamma_{1}+\gamma_2) \Gamma_1
+ \frac{\alpha}{\rho} \Big(\left\vert \gamma
_{1}-\widehat{\gamma }_{1}\right\vert^2 + \left\vert \gamma
_{2}-\widehat{\gamma }_{2}\right\vert^2 \Big )\Bigg \}.
\end{multline}
The previous equation is coercive, convex, goes to $+\infty$ when $(\gamma_1+\gamma_2) \to 0$, and has exactly one minimum for $\gamma_1 + \gamma_2 > 0$.  At the minimum point the first-order conditions are necessary and sufficient:
\[
-\frac{2}{\rho}\frac{1}{\gamma_1 + \gamma_2} + \Gamma_1 + \frac{2\alpha}{\rho} (\gamma_1-\widehat{\gamma}_1) = 0 =-\frac{2}{\rho}\frac{1}{\gamma_1 + \gamma_2} + \Gamma_1 + \frac{2\alpha}{\rho} (\gamma_2-\widehat{\gamma}_2),
\]

Since $\gamma_2 = \gamma_1 + \widehat{\gamma}_2 - \widehat{\gamma}_1$, the previous equation becomes:
\[
0 = 2\alpha \gamma_1^2 + \gamma_1 \left [ \alpha(- 3\widehat{\gamma}_1 + \widehat{\gamma}_2) + \rho \Gamma_1 \right ] + (\widehat{\gamma}_2 -\widehat{\gamma}_1) \left [ \frac{\rho\Gamma_1}{2} - \alpha \widehat{\gamma}_1 \right ] -1. 
\]
Thus,
\begin{multline}
\gamma_1 = \frac{1}{4\alpha} \left [ - \alpha(- 3\widehat{\gamma}_1 + \widehat{\gamma}_2) - \rho \Gamma_1 \pm \sqrt{\left [ \alpha(- 3\widehat{\gamma}_1 + \widehat{\gamma}_2) + \rho \Gamma_1 \right ]^2 - 8 \alpha \left ((\widehat{\gamma}_2 -\widehat{\gamma}_1) \left [ \frac{\rho\Gamma_1}{2} - \alpha \widehat{\gamma}_1 \right ] -1  \right )}\right ] \\
= \frac{1}{4\alpha} \left [ - \alpha(- 3\widehat{\gamma}_1 + \widehat{\gamma}_2) - \rho \Gamma_1 \pm \sqrt{\left [\rho \Gamma_1 -\alpha(\widehat{\gamma}_1 + \widehat{\gamma}_2) \right ]^2 + 8 \alpha} \right ],
\end{multline}
and
\[
\gamma_2 = \frac{1}{4\alpha} \left [ - \alpha(- 3\widehat{\gamma}_2 + \widehat{\gamma}_1) - \rho \Gamma_1 \pm \sqrt{\left [\rho \Gamma_1 -\alpha(\widehat{\gamma}_1 + \widehat{\gamma}_2) \right ]^2 + 8 \alpha} \right ].
\]
The unique positive solution is then given by: 
\[
\gamma_1 = \frac{1}{4\alpha} \left [ - \alpha(- 3\widehat{\gamma}_1 + \widehat{\gamma}_2) - \rho \Gamma_1 + \sqrt{\left [\rho \Gamma_1 -\alpha(\widehat{\gamma}_1 + \widehat{\gamma}_2) \right ]^2 + 8 \alpha} \right ],
\]
and
\[
\gamma_2 = \frac{1}{4\alpha} \left [ - \alpha(- 3\widehat{\gamma}_2 + \widehat{\gamma}_1) - \rho \Gamma_1 + \sqrt{\left [\rho \Gamma_1 -\alpha(\widehat{\gamma}_1 + \widehat{\gamma}_2) \right ]^2 + 8 \alpha} \right ].
\]
The choices for the other variables follow as in Subsection \ref{sub:planner}. In particular,
\[
\gamma_1 + \gamma_2 = \frac{1}{2\alpha} \left [ \alpha( \widehat{\gamma}_2 + \widehat{\gamma}_1) - \rho \Gamma_1 + \sqrt{\left [\rho \Gamma_1 -\alpha(\widehat{\gamma}_1 + \widehat{\gamma}_2) \right ]^2 + 8 \alpha} \right ]>0.
\]

\end{proof}

\subsection{The RP robust control problem}
\label{supapp:robustrestricted}

The restricted robust social planner's problem can be written as follows: 
\begin{eqnarray}
&&\max_{C(\cdot),K(\cdot),B(\cdot),R_{a}(\cdot),R_{b}(\cdot)} \qquad 
\min_{\gamma _{1}(\cdot),\gamma _{2}(\cdot)}U^{R}(\alpha ) = U_1^R(\alpha)+U_2^R(\alpha)=
\notag
\\
&&\int_{0}^{\infty }e^{-\rho t}\left[ u_{1}\left( C_{1}(t)\right)
-\gamma _{1}(t)\left( S(t)-\overline{S}\right) +\alpha \left\vert \gamma
_{1}(t)-\widehat{\gamma }_{1}\right\vert^2 \right] dt  \notag 
\\
&&+\int_{0}^{\infty }
e^{-\rho t}\left[ u_{2}\left( C_{2}(t)\right)
-\gamma _{2}(t)\left( S(t)-\overline{S}\right) +\alpha \left\vert \gamma
_{2}(t)-\widehat{\gamma }_{2}\right\vert^2 \right] dt,
\end{eqnarray}
subject to the resource constraints:
\begin{equation}
\label{eq:planner-notrasnfer-resourceconstraint}
\left\{
\begin{array}{l}
C_1(t)+ B_1(t)+ K_1(t)+ R_a(t)+ R_b(t)= Y_1(t)= \overline{A}(t)f_1(K_1(t))\\ 
C_2(t)+ B_2(t)+ K_2(t) = Y_2(t) =
\overline{A}(t)h(R_a(t))f_2(K_2(t)).
\end{array}
\right.
\end{equation}
The control variables belong to the set $\mathcal{U}^{RP}$ defined below 
\eqref{eq:planner-notrasnfer-resourceconstraintmain}, while the strategies of the malevolent player
$(\gamma _{1}(\cdot),\gamma _{2}(\cdot))$
are assumed to belong to $L^2_\rho(\R_+;\R^2)$.

Similarly to Theorem \ref{th:maxmingeneral} in the GP set-up, we have here the following minimax theorem.

\begin{Theorem}
\label{th:maxmingeneralrestricted}
Under the above assumptions we have:
$$
\max_{C(\cdot),K(\cdot),B(\cdot),R_{a}(\cdot),R_{b}(\cdot)} \qquad 
\min_{\gamma _{1}(\cdot),\gamma _{2}(\cdot)}U^{R}(\alpha) 
=
$$
$$
\min_{\gamma _{1}(\cdot),\gamma _{2}(\cdot)} \qquad
\max_{C(\cdot),K(\cdot),B(\cdot),R_{a}(\cdot),R_{b}(\cdot)} 
U^{R}(\alpha) 
$$
\end{Theorem}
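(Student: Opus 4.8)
\textbf{Proof plan for Theorem \ref{th:maxmingeneralrestricted} (minimax for the restricted planner).}

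The plan is to mimic the proof of Theorem \ref{th:maxmingeneral} verbatim, replacing the single resource constraint \eqref{eq:plannerresourceconstraint} by the two constraints \eqref{eq:planner-notrasnfer-resourceconstraint}. The strategy is to invoke a minimax (saddle-point) theorem — Proposition 2.2, p.\,173 of \cite{ekeland1999convex} — for which we must check: (a) the constraint set for the maximizing player (the planner) is a convex, closed, non-empty subset of $L^6_\rho(\R_+;\R^6)$; (b) the constraint set for the minimizing (malevolent) player is convex, closed, non-empty (it is the cone of non-negative functions in $L^2_\rho(\R_+;\R^2)$, exactly as in the GP case, so nothing changes here); (c) the functional $U^R$ is concave in the planner's controls for each fixed $(\gamma_1(\cdot),\gamma_2(\cdot))$; and (d) $U^R$ is convex (indeed affine) in $(\gamma_1(\cdot),\gamma_2(\cdot))$ for each fixed choice of controls. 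Point (d) is immediate since $U^R$ is affine in each $\gamma_i(\cdot)$ plus the strictly convex penalty $\alpha\int e^{-\rho t}|\gamma_i(t)-\widehat\gamma_i|^2\,dt$, exactly as before.

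The two points that require the restricted structure are (a) and (c). For (c), I would first apply the ITM rewriting of the objective: by Corollary \ref{cor:rewriting-planner-notransfer}, the state variable is eliminated and $U^R$ becomes, up to the constant term $(\gamma_1+\gamma_2)\big[\tfrac{\overline S}{\rho}-\tfrac{P(0)}{\rho}-\tfrac{T(0)}{\rho+\phi}\big]$, an integral of $u_1(C_1)+u_2(C_2)-(\gamma_1+\gamma_2)\Phi\,G(t,K,B,R_b)$ plus the penalty. Since $u_i$ is concave and $-G$ is concave in $(K,B,R_b)$ — here I use Assumption \ref{hp:easycase}, in particular the concavity of $R_b\mapsto g(R_b)^{1/(1-\theta_2)}$, which as in the footnote to the GP proof makes $(R_b,B_2)\mapsto g(R_b)B_2^{\theta_2}$ jointly concave, hence $G$ convex — the integrand is concave in the controls (the coefficient $(\gamma_1+\gamma_2)\Phi$ being non-negative on the relevant range), so $U^R$ is concave in the controls.

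For (a), the only difference from the GP case is that the single constraint is replaced by the pair \eqref{eq:planner-notrasnfer-resourceconstraint}. Each of these is of the form ``a linear combination of the controls $\le$ a quasi-concave (indeed, under Assumption \ref{hp:easycase}, concave) function of the controls'': the right-hand side $\overline A f_1(K_1)$ is concave in $K_1$, and $\overline A h(R_a) f_2(K_2)$ is quasi-concave in $(R_a,K_2)$ (concave in each, and in the linear-$f_i$ special case it is $\overline A h(R_a) K_2$, which is concave when $h$ is concave and we work on the region where the constraint is active). Hence each constraint defines a convex set, their intersection with the non-negative orthant is convex, and it is clearly closed and non-empty. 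With (a)--(d) verified, the cited saddle-point theorem yields the asserted equality of $\max\min$ and $\min\max$.

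\textbf{Main obstacle.} The genuine subtlety, as in the GP case, is the joint concavity of $G$ in the controls: linearity/separability alone is not enough, and one must use the hypothesis that $g(R_b)^{1/(1-\theta_2)}$ is concave to get joint concavity of $(R_b,B_2)\mapsto g(R_b)B_2^{\theta_2}$ (checking the Hessian signature, exactly as flagged in the footnote of the proof of Theorem \ref{th:maxmingeneral}). A secondary technical point is ensuring the constraint set \eqref{eq:planner-notrasnfer-resourceconstraint} is convex despite $h$ appearing multiplicatively with $f_2(K_2)$; this is handled by the quasi-concavity observation above, but it is the one place where the restricted case differs materially from the global one and where care is needed. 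Everything else is a routine transcription of the GP argument.
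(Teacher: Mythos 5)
Your proposal is correct and follows essentially the same route as the paper: the paper states Theorem \ref{th:maxmingeneralrestricted} without proof, explicitly deferring to the argument for Theorem \ref{th:maxmingeneral}, and your transcription --- checking convexity and closedness of both strategy sets, convexity of $U^{R}$ in $(\gamma_1(\cdot),\gamma_2(\cdot))$, and concavity in the controls via the joint concavity of $(R_b,B_2)\mapsto g(R_b)B_2^{\theta_2}$, then invoking the Ekeland--Temam saddle-point theorem --- is exactly that argument adapted to the two-constraint feasible set \eqref{eq:planner-notrasnfer-resourceconstraint}. The one caveat is that your justification of convexity of the feasible set via quasi-concavity of $\overline{A}h(R_a)f_2(K_2)$ inherits the same looseness as the paper's own GP proof (a quasi-concave right-hand side minus a linear left-hand side need not be quasi-concave, so the superlevel set is not automatically convex); to make this airtight one should, as in the paper's Proposition \ref{pr:planner-notrasnfer-easy}, use the binding constraints of Assumption \ref{hp:easycase} to eliminate $K_1,K_2$ and verify concavity of the reduced unconstrained problem directly.
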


\subsubsection{Logarithmic payoffs}
\label{subsub:special-robust-RP}
As in the case of the GP we study the special case where Assumption \ref{hp:easycase} holds, the payoffs are logarithmic, and $\gamma_1$ and $\gamma_2$ are real constants. Using Theorem \ref{th:maxmingeneralrestricted}, and again restrict attention to the case where $\gamma_1 + \gamma_2 >0$, we need to solve the following: 
\begin{multline}
\label{eq:problem-robust-planner2}
\min_{\gamma _{1},\gamma _{2} \in \mathbb{R}} \; \max_{C(\cdot),B(\cdot), K(\cdot), R(\cdot) \in \mathcal{U}^{RP}} U^{R}(\alpha ) = \min_{\gamma _{1},\gamma _{2} \in \mathbb{R}} \; \max_{C(\cdot),B(\cdot), K(\cdot), R(\cdot) \in \mathcal{U}^{p1}} U_{1}+U_{2}=\\
\min_{\gamma _{1},\gamma _{2} \in \mathbb{R}} \; \max_{C(\cdot),B(\cdot), K(\cdot), R(\cdot) \in \mathcal{U}^{RP}} \int_{0}^{\infty }e^{-\rho t}\Big[ \ln\left( C_{1}(t)\right) + \ln\left( C_{2}(t) \right )
-(\gamma_{1}+\gamma_{2})\left( S(t)-\overline{S}\right) \\ +\alpha \left\vert \gamma
_{1}-\widehat{\gamma }_{1}\right\vert^2 +\alpha \left\vert \gamma
_{2}-\widehat{\gamma }_{2}\right\vert^2 \Big] dt. 
\end{multline}
We have the following result.

\begin{Proposition}
\label{pr:robust-restricted-planner}
Suppose Hypothesis \ref{hp:easycase} holds and assume logarithmic payoffs. Given $\hat{\gamma}_1, \hat{\gamma}_2 >0$ the values of $(\gamma_1, \gamma_2)$ that solve (\ref{eq:problem-robust-planner2}) are given by
\begin{multline}
(\gamma _{1},\gamma _{2}) \in \arg\min \Bigg \{ 
\frac{1}{\rho}\ln \left (\frac{\overline{A}-1}{(\gamma_{1}+\gamma_2) \Phi \eta_K} \frac{h(R_a(\gamma_1, \gamma_2))\overline{A}-1}{(\gamma_{1}+\gamma_2) \Phi \eta_K} \right )
+(\gamma_{1}+\gamma_2) \Lambda_1(\gamma_1, \gamma_2)\\
+  h(R_a(\gamma_1, \gamma_2))\overline{A}
+ \frac{\alpha}{\rho} \Big(\left\vert \gamma
_{1}-\widehat{\gamma }_{1}\right\vert^2 + \left\vert \gamma
_{2}-\widehat{\gamma }_{2}\right\vert^2 \Big )\Bigg \},
\end{multline}
where
\begin{multline}
\label{eq:defGamma1-robust-RP}
\Lambda_1(\gamma_1, \gamma_2) := \left[ \frac{\overline{S}}{\rho} - \frac{P(0)}{\rho} - \frac{T(0)}{\rho+\phi} \right] - \frac{\Phi \eta_K}{\rho} (R_a(\gamma_1, \gamma_2) + R_b(\gamma_1, \gamma_2) + B_1 + B_2) \\
+ \frac{\Phi \eta_B}{\rho}  \Big (B_1(t)^{\theta_1} + g(R_b(\gamma_1, \gamma_2)) B_2(t)^{\theta_2} \Big ),
\end{multline}
and $\Big ( R_a(\gamma_1, \gamma_2), R_b(\gamma_1, \gamma_2) \Big)$ is any pair $(R_a, R_b)$ solving the maximization problem (\ref{eq:Ra-Rb-RP-log}). The values of $R, C, B,$ and $K$ are given by the corresponding resource constraint. 
\end{Proposition}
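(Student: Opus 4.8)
The plan is to mimic the proof of Proposition \ref{pr:robust-global-planner}, with the single resource constraint replaced by the two country-specific ones. First I would invoke Theorem \ref{th:maxmingeneralrestricted} to replace the max-min in \eqref{eq:problem-robust-planner2} by the equivalent min-max, so that for each fixed constant pair $(\gamma_1,\gamma_2)$ with $\gamma_1+\gamma_2>0$ the inner maximization over $(C,B,K,R)\in\mathcal{U}^{RP}$ can be carried out first. For fixed $(\gamma_1,\gamma_2)$ this inner problem is exactly the restricted-planner problem under logarithmic utility with aggregate damage coefficient $\gamma_1+\gamma_2$, so by Corollary \ref{cor:rewriting-planner-notransfer} together with Proposition \ref{pr:planner-notrasnfer-easy-log} its solution is given by the closed-form expressions recorded there; in particular $C_1$ and $B_1$ are independent of $(\gamma_1,\gamma_2)$, whereas $C_2,B_2,R_a,R_b$ depend on $(\gamma_1,\gamma_2)$ only through $\gamma_1+\gamma_2$, via the pair $\big(R_a(\gamma_1,\gamma_2),R_b(\gamma_1,\gamma_2)\big)$ that solves the one-shot maximization \eqref{eq:Ra-Rb-RP-log}.

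Next I would substitute these optimal controls into the objective of \eqref{eq:problem-robust-planner2}. Applying Proposition \ref{pr:rewriting-general} exactly as in Corollary \ref{cor:rewriting-planner-notransfer} to rewrite $\int_0^\infty e^{-\rho t}\big(S(t)-\overline S\big)\,dt$ through the coefficient $\mathbf{b}$ of the climate subsystem produces the factor $\Phi$ and the bracket $\big[\tfrac{\overline S}{\rho}-\tfrac{P(0)}{\rho}-\tfrac{T(0)}{\rho+\phi}\big]$; since all controls are now time-independent, the remaining discounted integrals merely contribute factors $1/\rho$. Collecting the logarithmic-consumption terms, the emission terms (which, after simplifying the contributions in $\eta_K$, $B_2$ and $h(R_a)$, assemble into the combination $(\gamma_1+\gamma_2)\Lambda_1(\gamma_1,\gamma_2)+h\big(R_a(\gamma_1,\gamma_2)\big)\overline A$ with $\Lambda_1$ as in \eqref{eq:defGamma1-robust-RP}), and the quadratic penalty, and then discarding the additive constants that do not involve $(\gamma_1,\gamma_2)$, the max-min collapses to the finite-dimensional minimization over $(\gamma_1,\gamma_2)$ displayed in the statement; the remaining variables $R,C,B,K$ are read off from the RP resource constraints as in Subsection \ref{subsec:plannerNOtransfer}.

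The step needing the most care --- and the essential point of departure from the global-planner case --- is this substitution into the envelope argument. In \eqref{eq:Ra-Rb-RP-log} the objective defining $(R_a,R_b)$ already contains the term $\frac{1}{(\gamma_1+\gamma_2)\Phi}\log\!\big(\cdots\big)$, so the optimal transfers, and hence $\Lambda_1$, $B_2$ and $C_2$, genuinely depend on $\gamma_1+\gamma_2$, and no longer explicitly; consequently the outer minimization cannot be reduced to an explicit quadratic as in Proposition \ref{pr:robust-global-planner}, which is why the characterization must be left as an $\arg\min$. To legitimize the substitution I would first record, for each fixed $(\gamma_1,\gamma_2)$, that $\max_{(C,B,K,R)\in\mathcal{U}^{RP}}U^R$ equals the objective evaluated at the optimal controls, using that Proposition \ref{pr:planner-notrasnfer-easy-log} pins down the inner maximizer --- well defined thanks to the strict concavity of $R_b\mapsto g(R_b)^{1/(1-\theta_2)}$ in Assumption \ref{hp:easycase}(v) --- and only then minimize over $(\gamma_1,\gamma_2)$. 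Finally I would check that the resulting outer objective is coercive and tends to $+\infty$ as $\gamma_1+\gamma_2\to 0^+$, so that a minimizer exists in the admissible region $\gamma_1+\gamma_2>0$.
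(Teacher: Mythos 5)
Your proposal follows essentially the same route as the paper's proof: swap max--min for min--max via Theorem \ref{th:maxmingeneralrestricted}, solve the inner restricted-planner problem in closed form for fixed $(\gamma_1,\gamma_2)$ using Corollary \ref{cor:rewriting-planner-notransfer} and Proposition \ref{pr:planner-notrasnfer-easy-log}, substitute back, discard the $\gamma$-independent constants, and conclude with the coercivity/blow-up argument as $\gamma_1+\gamma_2\to 0^+$. The only slip is your parenthetical claim that $C_1$ is independent of $(\gamma_1,\gamma_2)$ --- in fact $C_1^{RP}=(\overline{A}-1)/\big((\gamma_1+\gamma_2)\Phi\eta_K\big)$ depends on the sum, which is exactly why the first logarithmic factor appears inside the displayed $\arg\min$; this does not affect the validity of your argument, since you correctly carry the logarithmic-consumption terms into the outer objective.
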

\begin{proof}
Using the equivalence between max-min and min-max problems given in Theorem \ref{th:maxmingeneralrestricted} and Corollary \ref{cor:rewriting-planner-notransfer}, we obtain:
\begin{multline}
\min_{\gamma _{1},\gamma _{2}} \max_{C_{i}(t),B_{i}(t), R_b(t)} U^{R}(\alpha ) = U_{1} + U_{2} = \\
\int_{0}^{\infty }e^{-\rho t}\left[ u_{1}\left( C_{1}(t)\right) -\gamma _{1}\left( S(t)-\overline{S}\right) +\alpha \left\vert \gamma_{1}-\widehat{\gamma }_{1}\right\vert^2 \right] dt  \\
+\int_{0}^{\infty}e^{-\rho t}\left[ u_{2}\left( C_{2}(t)\right) -\gamma _{2}\left( S(t)-\overline{S}\right) +\alpha \left\vert \gamma_{2}-\widehat{\gamma }_{2}\right\vert^2 \right] dt = \\
\min_{\gamma _{1},\gamma _{2}} \max_{C_{i}(t),B_{i}(t), R_b(t)} \bigg \{ (\gamma_{1}+\gamma_2)\left[ \frac{\overline{S}}{\rho} - \frac{P(0)}{\rho} - \frac{T(0)}{\rho+\phi} \right] \\
+ \frac{\alpha}{\rho} (\left\vert \gamma_{1}-\widehat{\gamma }_{1}\right\vert^2 + \left\vert \gamma_{2}-\widehat{\gamma }_{2}\right\vert^2) + \int_{0}^{\infty }e^{-\rho t} \left[u_{1}\left(C_1(t)\right) + u_{2}\left(C_2(t)\right) \right. \\
\left. -(\gamma_{1}+\gamma_2) \Phi G\left(t,K(t),B(t),R_b(t)\right)\right]dt \bigg \}.
\end{multline}
Using the expression for $\Phi$ defined in (\ref{eq:defPhi}) and the expressions for $C_i$ derived earlier, we obtain:
\begin{multline}
\min_{\gamma _{1},\gamma _{2}} 
\bigg \{
(\gamma_{1}+\gamma_2)\left[
\frac{\overline{S}}{\rho} -
\frac{P(0)}{\rho}-\frac{T(0)}{\rho+\phi}
\right]+ \frac{\alpha}{\rho} (\left\vert \gamma
_{1}-\widehat{\gamma }_{1}\right\vert^2 + \left\vert \gamma
_{2}-\widehat{\gamma }_{2}\right\vert^2)+
\\
\int_{0}^{\infty }e^{-\rho t}
\bigg[ \ln \left (\frac{\overline{A}-1}{(\gamma_{1}+\gamma_2) \Phi \eta_K} \right ) + \ln \left (\frac{h(R_a(\gamma_1, \gamma_2))\overline{A}-1}{(\gamma_{1}+\gamma_2) \Phi \eta_K} \right ) \\
- (\gamma_{1}+\gamma_2) \Phi \eta_K \left ( R_a(\gamma_1, \gamma_2) + R_b(\gamma_1, \gamma_2) + B_1 + B_2 + \left (\frac{\overline{A}-1}{(\gamma_{1}+\gamma_2) \Phi \eta_K} \right ) + \left (\frac{h(R_a(\gamma_1, \gamma_2))\overline{A}-1}{(\gamma_{1}+\gamma_2) \Phi \eta_K} \right ) \right ) \\ 
+ (\gamma_{1}+\gamma_2) \Phi\eta_B \Big (B_1(t)^{\theta_1} + g(R_b(\gamma_1, \gamma_2)) B_2(t)^{\theta_2} \Big ) dt \bigg \}.
\end{multline}
Evaluating the time integral, simplifying, and eliminating the terms which do not depends of $\gamma_i$, we obtain that $(\gamma_1, \gamma_2)$ is a solution of the previous minimization if and only if it is a solution to: 
\begin{multline}
\min_{\gamma _{1},\gamma _{2}} \Bigg \{ 
\frac{1}{\rho}\ln \left (\frac{\overline{A}-1}{(\gamma_{1}+\gamma_2) \Phi \eta_K} \frac{h(R_a(\gamma_1, \gamma_2))\overline{A}-1}{(\gamma_{1}+\gamma_2) \Phi \eta_K} \right )
+(\gamma_{1}+\gamma_2) \Lambda_1(\gamma_1, \gamma_2)\\
+  h(R_a(\gamma_1, \gamma_2))\overline{A}
+ \frac{\alpha}{\rho} \Big(\left\vert \gamma
_{1}-\widehat{\gamma }_{1}\right\vert^2 + \left\vert \gamma
_{2}-\widehat{\gamma }_{2}\right\vert^2 \Big )\Bigg \},
\end{multline}
where $\Lambda_1(\gamma_1, \gamma_2)$ is defined in (\ref{eq:defGamma1-robust-RP}) and $\Big ( R_a(\gamma_1, \gamma_2), R_b(\gamma_1, \gamma_2) \Big)$ is determined by solving the maximization problem (\ref{eq:Ra-Rb-RP-log}). The previous equation is corcive and goes to infinity when $(\gamma_1 + \gamma_2)$ goes to zero. So it has a minimum, and for this minimum point we have $\gamma_1+ \gamma_2 >0$ (observe that $R_a$ and $R_b$ are well defined for $\gamma_1+ \gamma_2 >0$). 
\end{proof}

As in the previous case, these values are generically unique. Due to the fact that $R_a$ and $R_b$ depend here on $\gamma_i$, we cannot obtain a closed expression as in Subsection \ref{subsub:special-robust-GP}. Still, the previous expression allows us to reduce the problem to a finite-dimensional minimization problem that one can treat numerically.

\subsection{The Nash robust control problem}
\label{supapp:robustNash}

In the Nash problem, each country takes as given the (robust) strategy of the
other country when choosing their best response. More precisely, we will consider one malevolent player for
each country.\footnote{This follows the ``soft constraint" formulation in \cite{vandenBroek2003}.} Formally, given $C_{2}(t)$, $B_{2}(t)$, $%
t\in \lbrack 0,\infty )$, country $1$ solves:

\begin{equation}
\max_{C_{1}(t),B_{1}(t), R_a(t), R_b(t)}\min_{\gamma _{1}(t)}U_{1}^{R}(\alpha)=\int_{0}^{\infty }e^{-\rho t}\left[ u_{1}\left( C_{1}(t)\right)
-\gamma _{1}(t)\left( S(t)-\overline{S}\right) +\alpha \left\vert \gamma
_{1}(t)-\widehat{\gamma }_{1}\right\vert^2 \right] dt
\end{equation}
subject to its feasibility constraint. 

Similarly, given $C_{1}(t)$, $B_{1}(t)$, $R_a(t)$, $R_b(t)$; $t\in \lbrack 0,\infty )$, country $2$ solves:
\begin{equation}
\max_{C_{2}(t),B_{2}(t)}\min_{\gamma _{2}(t)}U_{2}^{R}(\alpha)=\int_{0}^{\infty }e^{-\rho t}\left[ u_{2}\left( C_{2}(t)\right)
-\gamma _{2}(t)\left( S(t)-\overline{S}\right) +\alpha \left\vert \gamma
_{2}(t)-\widehat{\gamma }_{2}\right\vert^2 \right] dt
\end{equation}%
subject to its feasibility constraint.

\label{rm:gi>0}

Once again, in what follows we will restrict attention to the case of logarithmic payoffs and a constant $\gamma$. It can be shown that the minimax Theorem we used in the analysis of the planners' problems also holds in this case. Proceeding as in Theorem  \ref{th:maxmingeneral}, we can again exchange the max-min with the min-max in the previous problem for Country 1 to obtain:
\begin{multline}
\label{eq:functional-Nash-Rubust-Country1}
\min_{\gamma _{1}} \max_{C_{1}(t),B_{1}(t), R_a(t), R_b(t)} U^{R}_1(\alpha) =
\int_{0}^{\infty }e^{-\rho t}\left[ \ln\left( C_{1}(t)\right) -\gamma _{1}\left( S(t)-\overline{S}\right) +\alpha \left\vert \gamma_{1}-\widehat{\gamma }_{1}\right\vert^2 \right] dt  \\
= \min_{\gamma _{1}} \max_{C_{i}(t),B_{i}(t), R_a(t), R_b(t)} \bigg \{ \gamma_{1}\left[ \frac{\overline{S}}{\rho} - \frac{P(0)}{\rho} - \frac{T(0)}{\rho+\phi} \right] \\
+ \frac{\alpha}{\rho} (\left\vert \gamma_{1}-\widehat{\gamma }_{1}\right\vert^2 ) + \int_{0}^{\infty }e^{-\rho t} \left[\ln\left(C_1(t)\right) -\gamma_{1} G\left(t,K(t),B(t),R_b(t)\right)\right]dt \bigg \}.
\end{multline}
Similarly, Country 2 solves: 
\begin{multline}
\label{eq:functional-Nash-Rubust-Country2}
\min_{\gamma _{2}} \max_{C_{2}(t),B_{2}(t)} \bigg \{ \gamma_{2}\left[ \frac{\overline{S}}{\rho} - \frac{P(0)}{\rho} - \frac{T(0)}{\rho+\phi} \right] \\
+ \frac{\alpha}{\rho} (\left\vert \gamma_{2}-\widehat{\gamma }_{2}\right\vert^2 ) + \int_{0}^{\infty }e^{-\rho t} \left[\ln\left(C_2(t)\right) -\gamma_{2} G\left(t,K(t),B(t),R_b(t)\right)\right]dt \bigg \}.
\end{multline}

We then have the following.

\begin{Proposition}
\label{pr:robust-Nash}
Suppose Assumption \ref{hp:easycase} holds, payoffs are logarithmic, and $\hat{\gamma}_1, \hat{\gamma}_2 >0$.
Suppose that $(\gamma_1, \gamma_2)>0$ solve the following system:
\[
\left \{
\begin{array}{l}
G_1(\gamma_2)+\frac{2}{\rho}\alpha( \gamma_{1}-\widehat{\gamma }_{1})- \frac{1}{\rho} \frac{1}{\gamma_{1}}=0\\[8pt]
G_2(\gamma_1)+\frac{2}{\rho}\alpha( \gamma_{2}-\widehat{\gamma }_{2})- \frac{1}{\rho} \frac{1}{\gamma_{2}}=0
\end{array}
\right.
\]
where the expressions for $G_1(\gamma_2)$ and $G_2(\gamma_1)$ are given in Appendix B. Then $(\gamma_1, \gamma_2)$ form part of a solution to the problem in (\ref{eq:functional-Nash-Rubust-Country1})-(\ref{eq:functional-Nash-Rubust-Country2}).


\end{Proposition}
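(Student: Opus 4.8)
The plan is to treat each country's robust problem as a self‑contained max--min control problem, swap $\max$ and $\min$ by a minimax theorem, solve the inner maximization with the ITM, and then collapse the outer minimization over $\gamma_i$ to a first‑order condition that, when imposed simultaneously for both countries, gives the stated system.

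\textbf{Step 1: the minimax exchange.} I would first argue, exactly as in the proof of Theorem~\ref{th:maxmingeneral}, that for Country~$1$ and any fixed admissible strategy of Country~$2$ the functional in \eqref{eq:functional-Nash-Rubust-Country1} is concave in $(C_1(\cdot),B_1(\cdot),R_a(\cdot),R_b(\cdot))$ --- using concavity of $u_1=\ln$ together with convexity of $G$, which follows from Assumption~\ref{hp:easycase}(v): concavity of $R_b\mapsto g(R_b)^{1/(1-\theta_2)}$ makes $(R_b,B_2)\mapsto g(R_b)B_2^{\theta_2}$ jointly concave, hence $G$ convex --- and strictly convex in the scalar $\gamma_1$ because of the penalty $\alpha|\gamma_1-\widehat{\gamma}_1|^2$. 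Since the control set is convex, closed and nonempty and $\gamma_1$ ranges over the nonnegative cone of $L^2_\rho(\R_+)$, Proposition~2.2, p.~173 of \cite{ekeland1999convex} yields the interchange of $\max$ and $\min$; the same holds for Country~$2$. This legitimizes the passage to the min--max form displayed in \eqref{eq:functional-Nash-Rubust-Country1}--\eqref{eq:functional-Nash-Rubust-Country2}.

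\textbf{Step 2 (inner maximization) and Step 3 (outer reduction).} For each fixed $\gamma_1>0$ (resp. $\gamma_2>0$) Country~$1$'s (resp. $2$'s) inner maximization is of the form treated by Corollary~\ref{cor:rewriting+conditions-Nash}; reducing it via the ITM to the temporary game and reusing the computations in the proof of Proposition~\ref{pr:Nash-easy-log}, the maximizing controls are $R_a=0$, $C_1=\tfrac{\overline A-1}{\gamma_1\Phi\eta_K}$, $C_2=\tfrac{\overline A h(0)-1}{\gamma_2\Phi\eta_K}$, and $B_1,B_2,R_b$ equal to the $\gamma$‑independent quantities of \eqref{eq:RaRbN-log}--\eqref{eq:B1B2N-log} (so the internal $R_b$--$B_2$ fixed point is resolved once and for all, independently of the robustness layer --- a point worth verifying carefully). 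Substituting back, every integrand is constant in $t$, so each time integral equals $1/\rho$ times the constant; the decisive simplification is $\eta_K\tfrac{C_1}{\overline A-1}=\tfrac1{\gamma_1\Phi}$ and $\eta_K\tfrac{C_2}{\overline A h(0)-1}=\tfrac1{\gamma_2\Phi}$, so that $\Phi G=\tfrac1{\gamma_1}+\tfrac1{\gamma_2}+(\text{const})$. Hence Country~$1$'s value, as a function of $\gamma_1$ with $\gamma_2$ fixed, is of the form $\gamma_1 c_1(\gamma_2)+\tfrac\alpha\rho|\gamma_1-\widehat{\gamma}_1|^2-\tfrac1\rho\ln(\gamma_1\Phi\eta_K)+(\text{const})$ for an explicit $c_1(\gamma_2)$ containing the term $-\tfrac1{\rho\gamma_2}$. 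This function is strictly convex, coercive, and blows up as $\gamma_1\downarrow 0$, so its unique minimizer is characterized by its first‑order condition, which rearranges precisely to $G_1(\gamma_2)+\tfrac2\rho\alpha(\gamma_1-\widehat{\gamma}_1)-\tfrac1{\rho\gamma_1}=0$, with $G_1(\gamma_2)$ the aggregate of the $\gamma_1$‑free terms (the function of Appendix~B). Symmetrically one obtains the second equation for Country~$2$. If $(\gamma_1,\gamma_2)>0$ solves this system, then by Steps~2--3 $(\gamma_1,\{C_1,B_1,R_a,R_b\})$ is optimal for Country~$1$'s robust problem given Country~$2$'s equilibrium strategy, and symmetrically, so the whole profile is a Nash robust equilibrium --- which is the claim.

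\textbf{Main obstacle.} The genuinely non‑mechanical parts are Step~1 --- checking that the minimax theorem really applies player‑by‑player, i.e. verifying the concavity/convexity and the convexity of the admissible sets --- and the bookkeeping in Step~3 that isolates exactly which terms depend on $\gamma_1$ versus $\gamma_2$ so that the value function collapses to the displayed one‑dimensional form; once this is in place, necessity and sufficiency of the first‑order condition are immediate from convexity and coercivity, and the rest is the same ITM/FOC machinery already used for the planners.
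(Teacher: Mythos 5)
Your proposal is correct and follows essentially the same route as the paper's proof: justify the max--min/min--max exchange as in Theorem \ref{th:maxmingeneral}, use the ITM reduction of Proposition \ref{pr:Nash-easy-log} to see that $R_a=0$ and $B_1,B_2,R_b$ are $\gamma$-independent while $C_i=\tfrac{\cdot}{\gamma_i\Phi\eta_K}$, substitute back so each country's value collapses to a one-dimensional function of its own $\gamma_i$ that is convex, coercive and blows up at $0^+$, and read off the two first-order conditions as the stated system. The point you flag for careful verification (that the $R_b$--$B_2$ fixed point is independent of the robustness layer) is exactly the opening observation of the paper's proof.
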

\begin{proof}
Observe that the choices of $B_i$, $R_a=0$ and $R_b$ do not depend on $\gamma_1$ and on $\gamma_2$, while the values of $C_2$ and $K_2$ are independent of $\gamma_1$.  In the logarithmic case, the expression (\ref{eq:functional-Nash-Rubust-Country1}) simplifies and the problem of Country 1 reduces to:
\begin{multline}
\label{eq:min-in-gamma1}
\min_{\gamma _{1}} \bigg \{ \gamma_{1}\left[ \frac{\overline{S}}{\rho} - \frac{P(0)}{\rho} - \frac{T(0)}{\rho+\phi} - \frac{\eta_K\Phi}{\rho}\left ( \frac{1}{\overline{A}-1}(B_1  + R_b)+K_2(\gamma_2) \right) + \frac{\eta_B\Phi}{\rho} (B_1(t)^{\theta_1} + g(R_b) B_2(t)^{\theta_2}) \right] \\
+ \frac{\alpha}{\rho} (\left\vert \gamma_{1}-\widehat{\gamma }_{1}\right\vert^2 ) + \frac{1}{\rho}\ln \left ( \frac{\overline{A}-1}{\gamma_{1}\Phi \eta_K} \right ) - \frac{1}{\rho} \bigg \}.
\end{multline}
Country 2, in turn, solves:
\begin{multline}
\label{eq:min-in-gamma2}
\min_{\gamma _{2}} \bigg \{ \gamma_{2}\left[ \frac{\overline{S}}{\rho} - \frac{P(0)}{\rho} - \frac{T(0)}{\rho+\phi} - \frac{\eta_K\Phi}{\rho}\left (K_1(\gamma_1) +  \frac{1}{\overline{A}h(0)-1}B_2 \right) + \frac{\eta_B\Phi}{\rho} (B_1(t)^{\theta_1} + g(R_b) B_2(t)^{\theta_2}) \right] \\
+ \frac{\alpha}{\rho} (\left\vert \gamma_{2}-\widehat{\gamma }_{2}\right\vert^2 ) + \frac{1}{\rho}\ln \left ( \frac{\overline{A}h(0)-1}{\gamma_{2}\Phi \eta_K} \right ) - \frac{1}{\rho} \bigg \}.
\end{multline}
The expression to be minimized in (\ref{eq:min-in-gamma1}) is convex in $\gamma_1\in (0, +\infty)$, it is coercive, and it goes to $+\infty$ when $\gamma_1\to 0^+$. Thus, for any fixed $\gamma_2$, the point of minimum in $\gamma_1$ is unique, and similarly for the the expression (\ref{eq:min-in-gamma2}). The two first order conditions are necessary and sufficient and imply:
\[
G_1(\gamma_2)+\frac{2}{\rho}\alpha( \gamma_{1}-\widehat{\gamma }_{1})- \frac{1}{\rho} \frac{1}{\gamma_{1}}=0,
\]
\[
G_2(\gamma_1)+\frac{2}{\rho}\alpha( \gamma_{2}-\widehat{\gamma }_{2})- \frac{1}{\rho} \frac{1}{\gamma_{2}}=0,
\]
where we denoted
\[
G_1(\gamma_2):= \left[ \frac{\overline{S}}{\rho} - \frac{P(0)}{\rho} - \frac{T(0)}{\rho+\phi} - \frac{\eta_K\Phi}{\rho}\left ( \frac{1}{\overline{A}-1}(B_1  + R_b)+K_2(\gamma_2) \right) + \frac{\eta_B\Phi}{\rho} (B_1(t)^{\theta_1} + g(R_b) B_2(t)^{\theta_2}) \right],
\]
and
\[
G_2(\gamma_1):=\left[ \frac{\overline{S}}{\rho} - \frac{P(0)}{\rho} - \frac{T(0)}{\rho+\phi} - \frac{\eta_K\Phi}{\rho}\left (K_1(\gamma_1) +  \frac{1}{\overline{A}h(0)-1}B_2 \right) + \frac{\eta_B\Phi}{\rho} (B_1(t)^{\theta_1} + g(R_b) B_2(t)^{\theta_2}) \right],
\]
where
\[
K_1(\gamma_1) = \frac{1}{\overline{A} -1}(C_1^{N} + B_1^{N} + R_b^N) 
=
\frac{1}{\overline{A} -1} \left ( \left(\frac{\overline{A} -1}{\gamma_{1}\Phi \eta_K}\right)  + B_1^{N} + R_b^N \right ),
\]
where $B_1^N$ and $R_b^N$ are given in (\ref{eq:B1B2N-log}) and (\ref{eq:RaRbN-log}) 
and 
\[
K_2(\gamma_2) = \frac{1}{\overline{A}h(0) -1}(C_2^{N} + B_2^{N})
=
\frac{1}{\overline{A}h(0) -1} \left ( \left(\frac{\overline{A}h(0) -1}{\gamma_{2}\Phi \eta_K}\right) + B_2^{N} \right ),
\]
where $B_2^N$ is are given in (\ref{eq:B1B2N-log}).
This concludes the proof.

\end{proof}

\newpage

\section{The numerical example}
\label{app:Tables}

For our numerical example, we used the parameter values in the following:
\begin{table}[!htpb]
\begin{tabular}{rll}
\textbf{Parameter} & \textbf{Used in}                    & \textbf{Value}                            \\ \hline
$\rho$             & Discounting factor, Eq. (1)         & $-\log\left((0.96)^{10}\right)$           \\
$\alpha$           & Production, Eq. (16) + $\alpha$     & 1                                         \\
$\overline{A}$     & Technology, Eq. (16)                & 10                                        \\
$\sigma_1$         & Utility rich, Assn 3 (iv)            & $1$ (log) or $1.2$                        \\
$\sigma_2$         & Utility poor, Assn 3 (iv)            & $1$ (log) or $1.2$                        \\
$\gamma_1$         & damages from emission rich, Eq. (1) & $0.0125$ or $0.0075$                      \\
$\gamma_2$         & damages from emission poor, Eq. (1) & $0.0125$ or $0.0075$                      \\
$\phi$             & climate, Eq. (6)                    & 0.5                                       \\
$\phi_L$           & climate, Eq. (5)                    & 0.2                                       \\
$\phi_0$           & climate, Eq. (6)                    & 0.393                                     \\
$\theta_1$         & abatement technology, Assn 3 (i)     & 0.5                                       \\
$\theta_2$         & abatement technology, Assn 3 (i)     & 0.5                                       \\
$\Lambda_I$        & abatement technology, Assn 3 (i)     & 1                                         \\
$\Lambda_D$        & abatement technology, Assn 3 (i)     & 1                                         \\
$g(x)$             & abatement technology, Assn 3 (i),(v) & $g(x) = \frac{g(\infty) x + g(0)}{x+1}$   \\
$g(0)$             & abatement technology, Assn 3 (i),(v) & 0.2                                       \\
$g(\infty)$        & abatement technology, Assn 3 (i),(v) & 0.5                                       \\
$h(x)$             & abatement technology, Assn 3 (i),(v) & $h(x) = \frac{h(\infty) x + h(0)}{x+1}$   \\
$h(0)$             & abatement technology, Assn 3 (i),(v) & 0.5                                       \\
$h(\infty)$        & abatement technology, Assn 3 (i),(v) & 0.9                                       \\
\end{tabular}
\caption{Parameters used in the static optimization problems.}
\end{table}

\section{Robust Randomization}
\label{app:random}

Finally, we consider the case where $\hat{\gamma}_1$ and $\hat{\gamma}_2$ in the approximate model are drawn from anexponential distribution with parameters $1/\hat{\gamma}_1$ and $1/\hat{\gamma}$, respectively. The exponential distribution is “fat tailed,” implying a higher probability of extreme values than a normal distribution. We consider a fixed value for the parameter $\alpha$. We then proceed according to the following sequence for each $n = 1,\dots, N$, where $N$ is the number of randomizations:

\begin{enumerate}
\item Pick $\hat{\gamma}_1(\omega)$ and $\hat{\gamma}_2(\omega)$ from an exponential distribution with parameter $1/\hat{\gamma}_1$ and $1/\hat{\gamma}_2$, respectively.
\item The malevolent players choose: $\gamma_1(\omega)$ and $\gamma_2(\omega)$ 
\item For each regime (GP, RP, Nash) the decision-makers choose their choice variables depending on $\omega$. 
\item For each $n$ we compute the resulting global temperatures depending on  $\gamma_1(\omega)$ and $\gamma_2(\omega)$.
\item Once this is done, for each regime and for every temperature profile, we select the top $97.5\%$ and the bottom $2.5\%$ of temperature generated trajectories, as $\omega$ varies. This generates the upper and the lower confidence intervals seen in the Figure below, indicating a stark contrast between the non-cooperative solution and the two efficiency benchmarks. 
\end{enumerate}

\begin{figure}[!htpb]
\includegraphics[width=0.49\textwidth]{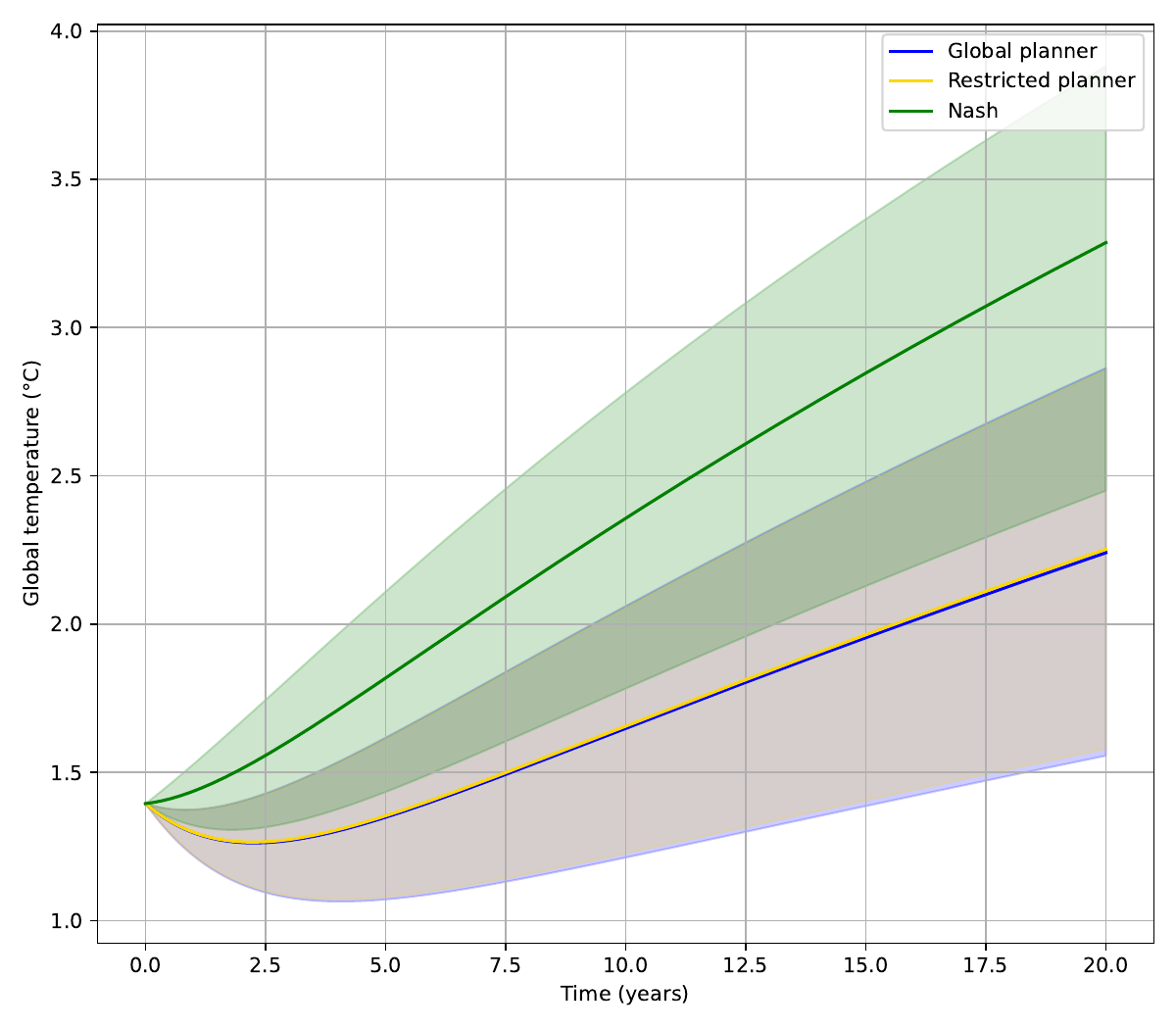}
\caption{Randomization of $\hat{\gamma}$ }
\label{fig:temperatureRandomizationGamma}
\end{figure}

\end{small}

\end{document}